\documentclass[a4paper,USenglish,cleveref, autoref, thm-restate, numberwithinsect]{lipics-v2021}

\hideLIPIcs  %

\nolinenumbers

\title{Independence and Domination \\ on Bounded-Treewidth Graphs:\\
	Integer, Rational, and Irrational Distances}

\titlerunning{Independence and Domination on Bounded-Treewidth Graphs}

\author{Tim A. {Hartmann}}{CISPA Helmholtz Center for Information Security, 
Germany}{tim.hartmann@cispa.de}{https://orcid.org/0000-0002-1028-6351}{}

\author{D\'{a}niel Marx}{CISPA Helmholtz Center for Information Security, 
Germany}{marx@cispa.de}{https://orcid.org/0000-0002-5686-8314}{}

\authorrunning{Tim A.\ {Hartmann} and D\'{a}niel Marx}

\Copyright{Tim A.\ {Hartmann} and D\'{a}niel Marx}

\ccsdesc[500]{Theory of computation~Discrete optimization}
\ccsdesc[500]{Mathematics of computing~Graph algorithms}
\ccsdesc[500]{Theory of computation~Problems, reductions and completeness}
\ccsdesc[500]{Theory of computation~Parameterized complexity and exact algorithms}

\keywords{independence, domination, irrationals, treewidth, SETH}

\category{} %

\relatedversion{} %

\EventEditors{John Q. Open and Joan R. Access}
\EventNoEds{2}
\EventLongTitle{42nd Conference on Very Important Topics (CVIT 2016)}
\EventShortTitle{CVIT 2016}
\EventAcronym{CVIT}
\EventYear{2016}
\EventDate{December 24--27, 2016}
\EventLocation{Little Whinging, United Kingdom}
\EventLogo{}
\SeriesVolume{42}
\ArticleNo{23}

\usepackage[utf8]{inputenc}
\usepackage[table]{xcolor}
\usepackage[textsize=tiny,textwidth=2cm]{todonotes}

\usepackage{latexsym}
\usepackage{amssymb}

\usepackage{bibentry}

\usepackage{tikz}
\usetikzlibrary{shapes}
\usetikzlibrary{calc}
\usetikzlibrary{patterns}
\usetikzlibrary{fit}

\tikzstyle{invertex} = [trapezium, scale=1.5,rotate=-90,draw]
\tikzstyle{outvertex} = [trapezium, scale=1.5,rotate=-90,draw]
\tikzstyle{inoutvertex} = [trapezium, scale=1.5,rotate=-90,draw]

\usepackage{amsmath}
\usepackage{amsthm}
\usepackage{thmtools}
\usepackage{mathtools} %
\usepackage{xspace} %
\usepackage{dsfont} %
\usepackage{tabularx} %
\usepackage{diagbox} %

\usepackage{hyperref}
\usepackage[all]{hypcap}
\hypersetup{hidelinks}

\usepackage{cleveref}
\renewcommand{\autoref}[1]{\cref{#1}}

\newcommand{\name}[1]{\textsc{#1}\xspace}

\newcommand{\problem}[1]{\textsc{#1}}

\newcommand{\dispersion}{\textsc{Dispersion}\xspace}
\newcommand{\disp}[1]{#1\text{-}\mbox{disp}}

\newcommand{\covering}{\problem{Covering}\xspace}
\newcommand{\cover}[1]{#1\text{-}\mbox{cover}}

\crefname{equation}{Equation}{Equations}

\newcommand{\Oh}{{O}}

\newcommand{\problemm}[1]{\textsc{#1}\xspace}

\newcommand{\fpt}{\name{FPT}}
\newcommand{\p}{\name{P}}
\newcommand{\np}{\name{NP}}

\newcommand{\wone}{\name{W[1]}}
\newcommand{\wtwo}{\name{W[2]}}

\newcommand{\xp}{\name{XP}}

\newcommand{\ETH}{\name{ETH}}

\newcommand{\define}[1]{\textit{#1}}

\newcommand{\backward}{(\( \Leftarrow \))\xspace}
\newcommand{\forward}{(\( \Rightarrow \))\xspace}

\newcommand{\problemdefSimple}[3]{
\medskip
\begin{tabularx}{\textwidth-0.5cm}{ r X p{0.5cm} }
\multicolumn{2}{l}{#1} \\
Input: & #2 \\	
Question: & #3
\end{tabularx}
\medskip
}

\makeatletter
\newcommand{\labeltext}[3][]{%
    \@bsphack%
    \csname phantomsection\endcsname%
    \def\tst{#1}%
    \def\labelmarkup{}%
    \def\refmarkup{}%
    \ifx\tst\empty\def\@currentlabel{\refmarkup{#2}}{\label{#3}}%
    \else\def\@currentlabel{\refmarkup{#1}}{\label{#3}}\fi%
    \@esphack%
    \labelmarkup{#2}%
}
\makeatother

	\DeclarePairedDelimiter{\ceil}{\lceil}{\rceil}
	\DeclarePairedDelimiter{\floor}{\lfloor}{\rfloor}

\newcommand{\devcomment}[1]{COMMENT}

\usepackage{caption}
\usepackage{subcaption}

\newcommand{\commentt}[1]{}

\newcommand{\refstar}{${(\star)}$}

\newcommand{\pw}{\operatorname{{\mathsf{pw}}}}
\newcommand{\tw}{\operatorname{{\mathsf{tw}}}}

\newcommand{\vc}{{\tau}}

\usepackage{pgfplots}
\pgfplotsset{%
    ,compat=1.12
    ,every axis x label/.style={at={(current axis.right of origin)},anchor=north west}
    ,every axis y label/.style={at={(current axis.above origin)},anchor=north east}
    }

\newcommand{\NNN}{\mathbb{N}}

\newcommand{\half}{\tfrac{1}{2}}

\usepackage{fixmath} %
\usepackage{graphics} %

\newcommand{\indset}{\problemm{Independent Set}}
\newcommand{\domset}{\problemm{Dominating Set}}

\makeatletter
\newcommand*{\textoverline}[1]{$\overline{\hbox{#1}}\m@th$}
\makeatother

\newcommand{\AAA}{\mathcal{A}}

\newcommand{\FF}{\mathcal{F}}
\newcommand{\GG}{\mathcal{G}}

\newcommand{\LL}{\mathcal{L}}

\newcommand{\SSS}{\mathcal{S}}
\newcommand{\TT}{\mathcal{T}}
\newcommand{\UU}{\mathcal{U}}

\newcommand{\dispersionstar}{\textsc{Dispersion}^{\star}\xspace}

\newcommand{\dombarr}{walk dominating\xspace}  %
\newcommand{\dombar}[1]{${#1}$-\dombarr}
\newcommand{\dombarsetvoid}{\dombarr set\xspace} 
\newcommand{\dombarset}[1]{${#1}$-walk dominating set} %
\newcommand{\dombarsets}[1]{${#1}$-\dombarr sets}
\newcommand{\dombares}[1]{${#1}$-walk dominates\xspace}
\newcommand{\dombared}[1]{${#1}$-walk dominated\xspace}

\newcommand{\DS}{\DomSet}
\newcommand{\DomSet}{\problem{Walk Dominating Set}\xspace}
\newcommand{\IndSet}{{\problem{Independent Set}}\xspace} %

\newcommand{\ColorfulClique}{\problem{Colorful Clique}\xspace}
\newcommand{\GridTiling}{$\leq$-\problem{Grid Tiling}\xspace}

\newcommand{\vv}{\text{v}}

\newcommand{\ogamma}{\overline{\gamma}}

\newcommand{\dr}{\mathsf{dr}}

\newcommand{\SETH}{\textsc{SETH}\xspace}
\newcommand{\ppseth}{\textsc{PWSETH}\xspace}

\newcommand{\ahalf}{\tfrac{a}{2}}

\tikzstyle{vertex}=[circle,draw,scale=0.5,fill=white]

\tikzstyle{superedge}=[ultra thick]

\newcommand{\ineditnote}[1]{}
\newcommand{\dm}[1]{\textcolor{red}{DM: #1}}

\begin{document}

\maketitle 

\begin{abstract}
  The distance-$d$ variants of \textsc{Independent Set} and \textsc{Dominating Set} problems have been extensively 
  studied from different algorithmic viewpoints. In particular, the complexity of these problems are well understood 
  on bounded-treewidth graphs [Katsikarelis, Lampis, and Paschos, \textit{Discret.\ Appl.\ Math} 2022][Borradaile and 
  Le, IPEC 2016]: given a tree decomposition of width $t$, the two problems can be solved in time $d^t\cdot 
  n^{O(1)}$ and $(2d+1)^t\cdot n^{O(1)}$, respectively. Furthermore, assuming the Strong Exponential-Time Hypothesis 
  (SETH), the base constants are best possible in these running times: they cannot be improved to $d-\varepsilon$ 
  and $2d+1-\varepsilon$, respectively,  for any $\varepsilon>0$. We investigate continuous versions of these 
  problems in a setting introduced by Megiddo and Tamir [SICOMP 1983], where every edge is modeled by a unit-length 
  interval of points. In the $\delta$-\dispersion problem, the task is to find a maximum number of points (possibly 
  inside edges) that are pairwise at distance at least $\delta$ from each other. Similarly, in the 
  $\delta$-\covering problem, the task is to find a minimum number of points (possibly inside edges) such that every 
  point of the graph (including those inside edges) is at distance at most $\delta$ from the selected point set. We 
  provide a comprehensive understanding of these two problems on bounded-treewidth graphs.
  \begin{enumerate}
  \item Let $\delta=a/b$ with $a$ and $b$ being coprime.
  If $a\le 2$, then  $\delta$-\dispersion is polynomial-time solvable. For $a\ge 3$, given a tree decomposition of 
  width $t$, the problem can be solved in time $(2a)^t\cdot n^{O(1)}$, and, assuming SETH, there is no 
  $(2a-\varepsilon)^t\cdot n^{O(1)}$ time algorithm for any $\varepsilon>0$.
  \item Let $\delta=a/b$ with $a$ and $b$ being coprime.
If $a = 1$,
	then $\delta$-\covering is polynomial-time solvable.
For $a\ge 2$, given a tree decomposition of width $t$,
	the problem can be solved in time $((2+2(b\bmod 2)) a)^t\cdot n^{O(1)}$,
	and, assuming SETH, there is no $((2+2(b\bmod 2))a -\varepsilon)^t\cdot n^{O(1)}$ time algorithm for any 
	$\varepsilon>0$.
 \item For every fixed irrational number $\delta>0$ satisfying some mild computability condition, both 
 $\delta$-\dispersion and $\delta$-\covering can be solved in time $n^{O(t)}$ on graphs of treewidth $t$.
We show a very explicitly defined irrational number
$\delta = (4\sum_{j =1}^{\infty} 2^{-2^j} )^{-1} \allowbreak \approx 0.790085$
 such that  $\delta$-\dispersion and $\delta/2$-\covering  are W[1]-hard parameterized by the treewidth $t$ of the 
 input graph, and, assuming ETH, cannot be solved in time $f(t) \cdot n^{o(t)}$.
 \end{enumerate}

 As a key step in obtaining these results, we extend earlier results on distance-$d$ versions of 
 \textsc{Independent Set} and \textsc{Dominating Set}: We determine the exact complexity of these problems in the 
 special case when the input graph arises from some graph $G'$ by subdividing every edge exactly $b$ times.
\end{abstract}

\newpage

\tableofcontents

\newpage

\section{Introduction}
\label{section:intro}

An independent set of a graph $G$ is a subset of vertices
	that have pairwise distance at least $2$.
A well-known generalization to higher distance is the notion of \emph{$a$-independent set} 
	for some integer $a$, which
	is a subset of vertices
	that have pairwise distance at least $a$.
Receiving extensive attention in the literature,
	e.g.~\cite{EtoGM14,MontealegreT16,EtoILM17,bacso2019,PilipczukSiebertz2021,KatsikarelisLP23},
	the problem seems reasonably well understood. The dual notion of \emph{distance-$d$ dominating set}, which is a 
	set $D$ of vertices such that every vertex of the graph is at distance at most $d$ from $S$, was also similarly 
	well studied. In this paper, we present an extensive study of both problems, focusing on their complexity on 
	subdivided and  bounded-treewidth graphs. Furthermore, we explore the generalization of these problems to 
	noninteger (even irrational!) distances in an appropriate continuous model \cite{Dearing1974,Shier1977} that 
	received renewed attention lately \cite{FreiGHHM2024,GrigorievHLW21,HartmannJanssen2024,HartmannL22,HartmannLW22}.

        \subsection{Independent Set and Dispersion}

        \subparagraph{Integer distances} Finding a maximum $a$-independent set is NP-hard for $a=2$ (as it is the 
        same as the classic \IndSet problem) and it is not difficult to show that it remains NP-hard for any fixed 
        $a\ge 2$.
In contrast, there are polynomial time algorithms
	for $a \in \{2,4\}$ when the input is a \emph{$2$-subdivided graph},
	that is a graph $G$ that resulted from a graph $G'$ by replacing every edge by a path of length two.

\begin{theorem}[Grigoriev et al.~\cite{GrigorievHLW21}]
\label{lemma:disp:p}
For $a\in\{2,4\}$, a maximum $a$-independent set on a $2$-subdivided graph
	can be found in linear time.\footnote{
	The original statement is about a continuous dispersion problem,
	but can be put as above using a connection of these two problem which we mention later.}
\end{theorem}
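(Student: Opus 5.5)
We have to show that for $a=2$ and $a=4$ a maximum $a$-independent set of a $2$-subdivided graph $G$, obtained from $G'=(V',E')$, can be found in linear time. We treat the two values separately and reduce each to a matching-type structure in $G'$. We may assume $G'$ is connected, since components can be handled independently.

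\textbf{Case $a=2$ (ordinary independent set).} The vertices of $G$ split into the original vertices $V'$ and the subdivision vertices, which are in bijection with $E'$. $G$ is bipartite with sides $V'$ and $E'$. We claim the optimum equals $\max(|E'|, |V'|)$ adjusted per component: in a connected component with $n'$ vertices and $m'$ edges, either take all subdivision vertices (size $m'$), or observe the following. An independent set $S$ containing a set $U\subseteq V'$ cannot contain any edge incident to $U$. We use an exchange argument: each original vertex in $S$ blocks all its incident edges, so $|S| \le |U| + |E'\setminus \delta(U)|$. We then bound $|E'(U\text{-incident})|\ge |U|$ unless the component is a tree and $U=V'$, giving $|S|\le m'$ if $m'\ge n'$ and $|S|\le n'=m'+1$ for trees. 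Concretely, the optimum is $m'$ if the component contains a cycle and $n'$ if it is a tree. For a tree we need an injection $U\to$ incident edges missing at most one vertex, which comes from rooting the tree and mapping each vertex to its parent edge. With a cycle, we orient the edges so that every vertex has an out-edge (unicyclic spanning subgraph), and this gives an injection. All steps are linear.

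\textbf{Case $a=4$.} Two vertices of $G$ at distance $\ge 4$ means: two original vertices must be non-adjacent in $G'$ (distance $2\deg$), two subdivision vertices must correspond to edges that share no endpoint (so the edges form a matching), and an original vertex $v$ and an edge $e$ need $v\notin e$ and, moreover, distance $\ge 4$ rules out $e$ adjacent to a neighbor of... we compute: $d(v,e)=1$ if $v\in e$, else $3$ if $e$ touches $N(v)$, else $\ge 5$. So the problem is to choose an independent set $U$ of $G'$ and a matching $M$ avoiding $N[U]$. This sounds hard, but we claim an optimum exists using only edges. Indeed, each chosen vertex $u$ can be replaced by some edge $uw$. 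If $u$ has a neighbor $w$, the edge $uw$ lies within $N[u]$, which is disjoint from all other chosen objects' vertices (other chosen vertices are at distance $\ge 2$, giving disjoint... we need $N[u]\cap N[u']$ possibly nonempty; but $uw$ touching $N[u']$ would only conflict with $u'$, which we replace simultaneously). To do this carefully we would replace vertices one at a time, choosing $w$ with care or arguing that $\{u\}\cup$ rest after replacement stays valid since $uw$ only meets $u$ and $w\in N(u)$, and $w$ is not in any matching edge. Isolated vertices of $G'$ are kept as they are. Hence the optimum is the number of isolated vertices plus a maximum matching, which is \emph{not} linear in general. So the claimed answer must be subtler: after replacement, vertices become edges, so the optimum is exactly a maximum matching of $G'$. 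The main obstacle is therefore achieving linear time. Here we would exploit the fact that $2$-subdivision distances between edge-midpoints give not just a matching but require pairwise distance $\ge 4$ between subdivision vertices, which is exactly vertex-disjointness. We would then look for a linear-time structural shortcut (for example, via the continuous dispersion reformulation in the footnote), and we expect this step to be the hardest part; failing that, we settle for the matching bound.
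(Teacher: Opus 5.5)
Your case $a=2$ is correct and self-contained. In each component of $G'$ with $n'$ vertices and $m'$ edges, the optimum is $n'$ for a tree (take all original vertices) and $m'$ otherwise (take all subdivision vertices). The bound $|S|\le |U|+m'-|\delta(U)|$ together with your injection $U\hookrightarrow\delta(U)$ proves optimality: parent edges in a tree rooted outside $U$, out-edges of a unicyclic orientation otherwise. Everything runs in linear time.

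The paper does not prove this theorem itself. It imports the $\delta\in\{1,2\}$ dispersion results of Grigoriev et al.\ through the discretization $\disp{a/b}(G)=\alpha_{2a}(G_{2b})$, i.e.\ $\alpha_2(G'_2)=\disp{1}(G')$ and $\alpha_4(G'_2)=\disp{2}(G')$. Your elementary formula is therefore a legitimate direct route for $a=2$.

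For $a=4$ there is a genuine gap. Your reformulation is right: an independent set $U$ of $G'$ plus a matching $M$ of $G'-N[U]$. But the claim that chosen vertices can be traded for incident edges is false.

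Vertices of $U$ only need to be non-adjacent, so they may share neighbours. Trading them all for edges requires an injective choice $u\mapsto w_u\in N(u)$, i.e.\ Hall's condition from $U$ into $N(U)$, and this fails in general. For $G'=K_{1,3}$ the three leaves are pairwise at distance $4$ in the subdivided graph, so the optimum is $3$ while $\nu(G')=1$. For $K_{2,r}$ the optimum is $r$ versus $\nu=2$. So both of your candidate values (a maximum matching, with or without isolated vertices) are wrong, and no algorithm for $a=4$ is actually given.

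What does hold is
$\alpha_4(G'_2)=\max_{X\subseteq V'}\bigl(i(G'-X)+\nu(G'-X)\bigr)$, where $i$ counts isolated vertices. For one direction, from $(U,M)$ take $X=N(U)$. Conversely, take $U$ to be the isolated vertices of $G'-X$ and $M$ a maximum matching of $G'-X$.

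Be aware that a matching computation cannot be avoided for $a=4$. Take any graph $F$, add for every $v$ two new vertices $x_v,y_v$ forming a triangle with $v$, and call the result $H$.
\begin{itemize}
\item Each triangle $\{v,x_v,y_v\}$ holds at most one object, i.e.\ one vertex of $U$ or one edge of $M$ inside it.
\item All other edges of $M$ form a matching of $F$.
\item Taking all $x_vy_v$ plus a maximum matching of $F$ is feasible.
\end{itemize}
Hence $\alpha_4(H_2)=|V(F)|+\nu(F)$, and $H_2$ has linear size. A linear-time algorithm for $a=4$ would therefore compute the maximum-matching size of arbitrary graphs in linear time, which is not known. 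The ``linear-time structural shortcut'' you hope for does not exist without such a breakthrough. The realistic target for $a=4$ is a polynomial-time, matching-based algorithm, which is what the cited dispersion result for $\delta=2$ provides.
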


Due to an important connection to the dispersion problem (see later in this section), we are particularly interested in the complexity of finding a maximum $a$-independent set on subdivided graphs, where every edge is replaced by a path of length $b$. Formally, let $\alpha_a(G)$ be the maximum cardinality of an $a$-independent set of a graph $G$.
For a graph class $\GG$,
	let $a$-$\IndSet(\GG)$ be the corresponding decision problem,
	which is, given a graph $G \in \GG$ and integer $k$,
	deciding whether $\alpha_a(G) \geq k$.
Let $\GG_b$ be class of graphs $G$ that are \emph{$b$-subdivisions},
	meaning $G$ results from a graph $G'$ by replacing every edge by a path of length~$b$.

       As a first contribution,  for every fixed integer $a,b$, 
	we settle the NP-hardness and the parameterized complexity of finding a maximum $a$-independent set
	when parameterized by the solution size (as color-coded in \cref{figure:ind:cells}).
If the ratio $\frac{a}{b}$ is smaller than $2$,
	then the problem is \fpt, and otherwise
	it is \wone-hard unless it is a polynomial time solvable case.

\commentt{
\begin{theorem}[\Cref{section:ind:parameterized}]
\label{i:lemma:ind:param}
Let $a,b \in \NNN_+$.
$a$-$\IndSet(\GG_b)$ is polynomial time solvable
	if $b$ is a multiple of $a$,
	or $\frac{a}{2}$ is even and $b$ is an odd multiple of $\frac{a}{2}$;
	else is \np-hard and, if $\frac{a}{b}<2$ or $\frac{a}{b}=2$, $b$ even,
	is \fpt for the solution size as parameter,
	else is \wone-hard.\dm{If we have the space, would be nice the rephrase to make it easier to parse.}
\end{theorem}
}

\newcommand{\lemmaTextIndParam}{$a$-$\IndSet(\GG_b)$ is
\begin{itemize}
\item
polynomial time solvable if $b=ca$ or if $b=c\frac{a}{2}$ and $\frac{a}{2}$ is even for some integer $c$;
	and \np-hard for all other integers $a,b$; and is
\item
fixed-parameter tractable for the solution size as parameter
	if $\frac{a}{b}<2$ or if $\frac{a}{b}=2$, $b$ even;
	and \wone-hard for all other integers $a,b$.
\end{itemize}}

\begin{theorem}[\Cref{section:ind:parameterized}]
\label{i:lemma:ind:param}
\lemmaTextIndParam
\ineditnote{\cref{lemma:ds:param}}
\end{theorem}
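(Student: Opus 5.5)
The plan is to split the statement into four parts: polynomial cases, NP-hardness, FPT cases, and W[1]-hard cases. Each part comes from a structural reduction on $b$-subdivisions.

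\textbf{Polynomial cases.} Let $G$ be the $b$-subdivision of $G'$.
\begin{itemize}
\item If $b=ca$, I first consider a single ``long'' edge of $G'$, which is a path of length $ca$. I expect an optimal $a$-independent set to be reorganizable so that it uses all original vertices of $G'$, together with the $c-1$ evenly spaced interior points on each edge. This should give $\alpha_a(G)=|V(G')|+(c-1)|E(G')|$.
\item The justification is a local exchange argument. Each edge path of length $ca$ contributes at most $c$ points counting endpoints with weight $\tfrac12$ each. The all-vertices solution attains this bound everywhere, since exactly $c-1$ interior points fit between two selected endpoints at distance $ca$.
\item For $b=c\frac a2$ with $\frac a2$ even, I reduce to \cref{lemma:disp:p} with $a=4$. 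I rescale by $\frac a4$: each path of length $b$ corresponds to a path of $2c$ ``units'' of length $\frac a4$. Parity arguments then show that an optimal solution may be assumed to use only points at multiples of $\frac a4$. This reduces the instance to a $4$-independent set on a $2c$-subdivision, and then via \cref{lemma:disp:p} (the case $c$ odd) or the previous bullet (the case $c$ even) to a linear-time computation.
\end{itemize}

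\textbf{NP-hardness and W[1]-hardness.} For all other pairs $(a,b)$ I reduce from \IndSet, using a gadget that simulates a vertex and an edge constraint.
\begin{itemize}
\item I normalize an optimal solution so that each subdivided edge carries a canonical number of points, determined by $b \bmod a$. Then whether the endpoints of an edge are ``selected'' or ``pushed'' can affect the count by exactly one. This encodes a conflict between adjacent vertices.
\item When $\frac ab>2$ (or $=2$ with $b$ odd), a selected point blocks a whole star around an original vertex. The solution size is then a function of $k$, so a parameter-preserving reduction from (multicolored) \IndSet yields W[1]-hardness.
\item Otherwise the added canonical points make $k$ large, which gives NP-hardness only.
\end{itemize}

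\textbf{FPT for $\frac ab<2$, and for $\frac ab=2$ with $b$ even.} Here every edge of $G'$ with $b$ large relative to $a$ can host a point at its middle without conflicts. The argument then goes as follows.
\begin{itemize}
\item If $|E(G')|$ is large, there are many pairwise far points, so the answer is yes when $k\le$ roughly $|E(G')|/\text{const}$. An edge-midpoint packing suffices, and for $\frac ab=2$ with $b$ even the midpoints are at distance exactly $b=a/2\cdot$... at least $a$ when the edges are disjoint. More precisely, I take a matching or a large-degree star.
\item Otherwise $|E(G')|$ is bounded by $O(k)$ once trivial components are removed. A kernel then follows, after arguing that isolated vertices and degree-$\le 1$ trees can be handled greedily.
\end{itemize}

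\textbf{Main obstacle.} I expect the hardest step to be the hardness gadgets. One needs, for every residue pattern of $(a,b)$ outside the polynomial cases, a canonical-form lemma that pins down solutions on each subdivided edge exactly. Proving that no fractional shifting beats it requires careful case analysis on $b\bmod a$.
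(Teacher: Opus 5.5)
\textbf{Polynomial cases.} Your proposal has a genuine error here. The identity $\alpha_a(G'_{ca})=|V(G')|+(c-1)|E(G')|$ is false in general whenever $a$ is even.

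\begin{itemize}
\item \emph{Counterexample.} Place points at distance $\tfrac a2,\tfrac{3a}{2},\dots,(c-\tfrac12)a$ from one endpoint on every subdivided edge. This is an $a$-independent set of size $c|E(G')|$, since two such points next to a common original vertex are at distance exactly $a$. It beats your formula as soon as $|E(G')|>|V(G')|$. For instance, with $a=b=2$ and $G'=K_4$, the six subdivision vertices are independent. This is why \cref{figure:ind:cells} lists $a=b=2,4,6$ as $1$-\textsc{Dispersion} rather than $|V|$.
\item \emph{Why the charging fails.} A selected original vertex collects weight $\tfrac12$ from every incident edge, so the per-edge bound $c$ does not bound $|S|$.
\item \emph{Where the formula holds.} It is correct for odd $a$: by \cref{lemma:alpha:sub} with odd multiplier $a$, $\alpha_a(G'_{ca})=\alpha_1(G'_c)$.
\item \emph{Fix for even $a$.} Use $\alpha_a(G'_{ca})=\alpha_2(G'_{2c})$, also from \cref{lemma:alpha:sub} (which rests on the $2b$-simple normalization of dispersed sets). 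Then apply \cref{lemma:disp:p}, since $G'_{2c}$ is $2$-subdivided.
\item \emph{Your second bullet.} The same direct appeal to \cref{lemma:disp:p} should replace your use of the first bullet in the case $c$ even.
\end{itemize}

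\textbf{Hardness.} This part is a plan rather than a proof. The per-edge ``canonical-form lemma'' with case analysis on $b\bmod a$ is exactly the missing idea, and you neither state nor prove it.

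\begin{itemize}
\item \emph{How the paper does it.} The paper does not argue edge by edge. It uses the translation identity $\alpha_a(G_b)+|E(G)|=\alpha_a(G_{a+b})$ for graphs without cycles shorter than $a$ (\cref{lemma:alpha:translate}), which is proved by a global construction on dispersed sets.
\item \emph{NP-hardness, odd gcd.} For coprime $a,b$, choose $yb=1+xa$. Then $G_{yb}=(G_y)_b\in\mathcal{G}_b$ and $\alpha_a(G_{1+xa})=\alpha_a(G)+x|E(G)|$. So NP-hardness of $a$-independent set on graphs without short cycles transfers to $\mathcal{G}_b$.
\item \emph{NP-hardness, even gcd.} These cases reduce the same way to $2a$-independent set on $2$-subdivided graphs.
\item \emph{W[1]-hardness.} The translation adds $x|E(G)|$ to the budget, so it is not parameter-preserving. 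For $\tfrac ab>2$ you therefore need explicit parameter-preserving constructions; the paper gives one for $\tfrac ab\in(2,3)$ and another for $\tfrac ab\ge3$. You supply neither. The case $\tfrac ab=2$ with $b$ odd is simply $\alpha_{2b}(G_b)=\alpha_2(G)$.
\end{itemize}

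\textbf{FPT part.} Your estimates are wrong as stated.

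\begin{itemize}
\item $\alpha_a(G_b)$ is not $\Omega(|E(G)|)$. Take $G=K_m$ with $b<a$ and assign each selected point to a nearest original vertex. Two points assigned to the same vertex are at distance at most $b<a$, so $\alpha_a\le m$.
\item Midpoints of adjacent edges conflict with each other.
\item A high-degree star gives nothing if its leaves are adjacent to one another.
\item \emph{What works.} You have $\alpha_a(G_b)\ge\nu(G)$, using midpoints of a maximum matching. So either the answer is trivially yes, or $G$ has a vertex cover of size below $2k$, hence treewidth below $2k$. In that case the $a^t n^{O(1)}$ algorithm of \cref{lemma:a:is:tw} applies.
\end{itemize}
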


\sloppy
Next, we consider the problem parameterized by treewidth. 
Intuitively, the $a$-$\IndSet$ problem is harder for larger $a$.
Indeed, for all $a\geq 2$, there is a matching upper and lower bound
	with $a$ in the base of an exponential run time
	for graphs of bounded treewidth,
	assuming the Strong Exponential Time Hypothesis (\SETH).

\begin{theorem}[Katsikarelis et al.~\cite{KatsikarelisLP2022}]
\label{lemma:a:is:tw}
For $a\geq 2$,
	given a tree decomposition of width $t$ of an $n$ vertex input graph,
	a maximum $a$-independent set can be found in time $a^t \cdot {n}^{O(1)}$.
Assuming \SETH,
	there is no $(a-\varepsilon)^{\tw(G)} \cdot n^{O(1)}$ time algorithm for any $\varepsilon> 0$,
	even for graphs without a cycle of length $<a$.\footnote{The restriction to graphs without short cycles is not 
	explicitly given, but easily observed.
	We will rely on this restriction later.}
\end{theorem}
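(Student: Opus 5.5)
This theorem is cited from Katsikarelis et al.; the only new part is the footnote claim that the lower bound survives when the graph has no cycle of length $<a$. My plan treats the algorithm and the lower bound separately and then checks the girth property in the known reduction.

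\emph{Algorithm.} I would run dynamic programming over a nice tree decomposition. For each bag vertex $v$, the state records a value in $\{0,1,\dots,a-1\}$. The intended meaning is roughly the distance from $v$ to the nearest selected vertex in the part of the graph processed so far, capped at $a-1$, with $0$ meaning $v$ is selected.

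A naive encoding needs to distinguish ``distance $\ge a$'' from the exact small distances, which would give $a+1$ states per vertex. The standard trick that brings this down to $a$ is as follows:
\begin{itemize}
\item Cap all distances at $\lceil a/2\rceil$-ish thresholds, and use that two selected vertices conflict only through a shortest path, whose midpoint can be charged to one side.
\item Equivalently, use the monotone (``at least'') relaxation of the states. The table entry for a state then means ``the nearest selected vertex is at distance at least the recorded value'' rather than ``exactly.''
\end{itemize}
Join nodes combine via a subset-convolution / fast zeta transform over the ordered state space. That gives $a^t n^{O(1)}$. Introduce, forget, and edge-introduce nodes just check consistency $|d(u)-d(v)|\le 1$ along edges and the independence constraint $d(u)+d(v)+1\ge a$ whenever the values witness two distinct selected vertices. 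The delicate step is proving that the relaxed ``at least'' semantics remain sound. Concretely, one must rule out that a state claiming large distance hides a conflict realized by a path through forgotten vertices. This is argued by taking a shortest violating path and locating its first bag crossing.

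\emph{Lower bound.} I would reduce from CNF-SAT with $n$ variables. Group the variables into blocks of size $\approx \log_2 a$, so each block is represented by one of $a$ states. The construction has $n/\log a$ long ``paths'' of gadgets, each carrying an $a$-valued state. The state is encoded by which of $a$ positions along a path of length about $a$ is selected, and a tight budget forces exactly one selection per segment. Clause gadgets are attached along the columns; each checks that some block state satisfies the clause, using paths of carefully chosen lengths so that distances encode equality or inequality of states. Pathwidth is $n/\log a+O(1)$, so an $(a-\varepsilon)^{t}$ algorithm would give $(2-\varepsilon')^n$ for SAT, contradicting \SETH.

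\emph{Girth, the main obstacle.} I expect this step to be the hardest. The plan is to inspect every cycle of the gadgets and note that each cycle uses either two distinct connector paths of length $\ge a/2$ or a full state path of length $\ge a$. Where a short cycle does occur, for example a triangle gadget for $a=2$, I would replace the offending edges by subdivided paths and rescale the distance threshold accordingly. Alternatively, I would observe that local cliques can be replaced by stars with long rays without changing which pairs are within distance $<a$. I would try the subdivision-and-rescale fix first, since it preserves treewidth up to an additive constant.
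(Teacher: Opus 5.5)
\paragraph{Gap: the girth restriction.} This is the only part of the statement that goes beyond the citation. The paper's justification is just the footnote: the cited construction, as given, can be checked to contain no cycle of length $<a$, with no modification. You instead rebuild the reduction and leave the gadget lengths unspecified (``paths of carefully chosen lengths''), so your cycle inspection has nothing concrete to inspect. Both of your fallback repairs also fail.

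Subdividing and rescaling cannot help. Passing from $G$ with threshold $a$ to $G_c$ with threshold $ca$ multiplies every cycle length and the threshold by the same factor. A cycle of length $\ell<a$ therefore becomes one of length $c\ell<ca$, so the condition is scale invariant. You would also then be proving a bound for $ca$-\IndSet, where a construction with $a$ states per path only yields base $a$, not $ca$. Indeed, by \cref{lemma:alpha:sub}, for odd $c$ the rescaled instance has the same optimum as the original. Subdividing only the offending edges instead changes exactly the distances that encode the states.

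Your triangle example for $a=2$ is moot: for $a\le 3$ the condition is vacuous on simple graphs. It matters for $a\ge 4$, in particular at any clique used to force that exactly one satisfying assignment per clause is selected.

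\paragraph{The hub replacement fails for independence.} Suppose you replace such a clique on $y_1,\dots,y_m$ by a hub with rays of length $r\le\lfloor (a-1)/2\rfloor$. This keeps the $y_i$ pairwise at distance $<a$, but it creates new selectable vertices. A solution can spend the clause's unit of budget on the hub or on an inner ray vertex. That vertex is at least $r$ farther from the rest of the graph than the nearest $y_i$, encodes no assignment, and generally forbids little or nothing on the variable paths. So the backward direction of the reduction breaks.

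For domination the hub trick is sound, because moving toward the hub only weakens a dominator; that is why the paper uses a hub in its walk-domination reduction. For independence the monotonicity goes the wrong way. You need a gadget whose inner vertices are neutralized by a budget argument. For even $a\ge 6$, an example is the super edges in the proof of \cref{lemma:is:ppseth:reduction}: a path of length $a-2$ with pendant paths of length $a-2$ at even positions, whose endpoints can be assumed to lie in a maximum solution. You must then also check that every cycle through a super edge has length at least $a$.

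\paragraph{Drift in the lower-bound sketch.} Separately, your sketch ignores drift. With one selected vertex per segment, the selected position can shift monotonically along a variable path. You therefore also need the usual repetition-and-pigeonhole argument, or Lampis's monotone multi-assignment CSP as used in the paper.
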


We refine Theorem~\ref{lemma:a:is:tw} by restricting the $a$-\IndSet problem to $b$-subdivided graphs and 
determining the optimal base of the exponent for 
all integers $a$ and $b$. We expect that larger $a$ makes the problem harder (as in Theorem~\ref{lemma:a:is:tw}) and larger $b$ makes the problem easier (as the graphs become more restricted), but it turns out that the optimal base depends on $a$ and $b$ in a very subtle way. 
Let $\gcd(a,b)$ denote the greatest common divisor of integers $a$ and $b$.

\newcommand{\aaa}{a'}
\newcommand{\bbb}{b'}
\newcommand{\aac}{a}
\newcommand{\bbc}{b}
\newcommand{\lemmaTextIsTwSummaryCoreItems}{
	\item
	If $\gcd(\aaa,\bbb)$ is odd:
If $\aac=1$,
		$\aaa$-$\IndSet(\GG_{\bbb})$ is in $\p$,
	else
	$\aaa$-$\IndSet(\GG_{\bbb})$ can be solved in time $\aac^{t} \cdot n^{O(1)}$
	but not in $(\aac-\varepsilon)^{\pw(G)} \cdot n^{O(1)}$.
	\item
	If $\gcd(\aaa,\bbb)$ is even:
	If $\aac \in \{1,2\}$,
		$\aaa$-$\IndSet(\GG_{\bbb})$ is in \p,
else
	$\aaa$-$\IndSet(\GG_{\bbb})$ can be solved in time $(2\aac)^{t} \cdot n^{O(1)}$
	but not in $(2\aac-\varepsilon)^{\pw(G)} \cdot n^{O(1)}$.}
\newcommand{\lemmaTextIsTwSummary}{
Let $\aaa,\bbb$ integers with $\gcd(\aaa,\bbb)=c$, $c\aac=\aaa$ and $c\bbc=\bbb$.
Assume \SETH, an $\varepsilon>0$ and that a tree decomposition of width $t$ is part of the input.
\begin{itemize}
	\lemmaTextIsTwSummaryCoreItems
\end{itemize}
}
\newcommand{\lemmaTextIsTwSummaryExtended}{
Let integers $\aaa,\bbb$ define $\gcd(\aaa,\bbb)=c$ and $c\aac=\aaa$ and $c\bbc=\bbb$.
Let $n$ be the number of vertices of the input graph.
Assume \SETH, an $\varepsilon>0$ and that a tree decomposition of width $t$ is part of the input.
\begin{itemize}
	\lemmaTextIsTwSummaryCoreItems
	\item
	If $\aaa \in \{1,2\}$,
	$\frac{\aaa}{\bbb}$-\dispersion is in \p;
	while if $\aaa \geq 3$
	can be solved in $(2\aac)^{t} \cdot n^{O(1)}$ time
	but not in time $(2\aac-\varepsilon)^{\pw(G)} \cdot n^{O(1)}$.
\end{itemize}
}
\begin{theorem}[\Cref{section:lower:bound}]
\label{i:lemma:is:tw:summary}
\lemmaTextIsTwSummary
\end{theorem}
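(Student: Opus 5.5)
The plan is to reduce everything to the coprime case and then to prove matching upper and lower bounds by a dynamic program and a SETH reduction from \cref{lemma:a:is:tw}. Write $a'=ca$ and $b'=cb$ with $\gcd(a,b)=1$.

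\textbf{Normalization.} On a $b'$-subdivided graph, call the original vertices branch vertices. Each path of length $b'$ between branch vertices has internal positions $0,\dots,b'$. Pairwise distances between selected vertices are sums of such path segments.

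The first observation is that only positions that are multiples of $c$ should matter, up to a small correction. I would try to show that an optimal $a'$-independent set can be shifted into a canonical form, and that this form encodes a $\frac{a}{b}$-dispersion-type solution on the $b$-subdivision. Whether the correction vanishes should depend on the parity of $c$. When $c$ is odd, positions can be rounded to multiples of $c$ with no loss. When $c$ is even, the half-way positions $c/2$ survive. These half-positions essentially double the number of relevant states, which is the source of the $2a$ versus $a$ dichotomy.

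Concretely, I would prove a lemma relating $\alpha_{a'}(G)$ for $G\in\GG_{b'}$ to an independence number on $\GG_b$ with distance $a$ (if $c$ is odd), or on $\GG_{2b}$ with distance $2a$ (if $c$ is even). This can be done by contracting each subdivision path by the factor $c$ (resp.\ $c/2$) and arguing by an exchange argument that rounding the selected vertices toward the nearer canonical position preserves feasibility.

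\textbf{Polynomial cases.} For $a=1$ with $c$ odd, the reduced problem is $1$-independence, which is trivial. For $c$ even and $a\in\{1,2\}$, the reduced problem is $2$- or $4$-independence on a graph in $\GG_{2b}$. Further subdivision parity arguments reduce it to $2$-subdivided graphs, and then \cref{lemma:disp:p} applies. One could alternatively invoke the polynomial cases of \cref{i:lemma:ind:param}.

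\textbf{Upper bounds.} I would use a tree-decomposition DP on the original graph $G'$, where bag vertices are branch vertices. For each branch vertex, the state records the distance (capped at $a$) from it to the nearest selected vertex, measured in units of the reduced subdivision. A selected branch vertex corresponds to distance $0$. Since the subdivisions make distances along a path determined by the endpoint states, each edge path can be handled locally. In the odd case there are $a$ meaningful distance classes, giving $a^t$. In the even case the half-positions give $2a$ classes, giving $(2a)^t$. Using fast subset-convolution-style join techniques, as in \cite{KatsikarelisLP2022}, yields the stated running times. The treewidth of the subdivided graph equals that of $G'$, up to a trivial case.

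\textbf{Lower bounds (main obstacle).} Here I would start from \cref{lemma:a:is:tw} applied with distance $a$ (resp.\ $2a$) on graphs of girth at least the distance. The task is to embed such instances into $b$-subdivided graphs while preserving pathwidth up to an additive constant. Each original edge has to be replaced by a gadget that is a tree of subdivided paths whose lengths realize the required distance $a$ modulo $b$. Since $\gcd(a,b)=1$, there exist path lengths combining to any residue, and the gadget should force exactly the intended distance constraint between its endpoints. For the $2a$ case, the gadget must additionally exploit the half-positions to encode $2a$ states per vertex. This step is the hardest. The approach I would try first is to modify the Katsikarelis et al.\ construction directly: build the SETH reduction from SAT with $2a$-ary groups of variables and path-like ``state-carrying'' chains. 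Each chain is a long subdivided path, and the offset of the selected vertex, among $2a$ possible residues, encodes a group assignment. Clause gadgets are attached as pendant subdivided trees whose lengths are chosen via B\'ezout so that they are satisfiable exactly when some offset matches. Correctness then requires a counting argument: every chain achieves the maximum number of selected vertices only at a consistent offset. This argument is where I expect the bulk of the work to lie.
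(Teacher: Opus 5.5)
Your normalization (essentially \cref{lemma:alpha:sub}), the polynomial cases and the upper bounds are fine. For the upper bounds you can simply run the algorithm of \cref{lemma:a:is:tw} on the normalized instance rather than designing a DP on branch vertices.

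The genuine gap is the lower bound. The missing idea is the translation identity of \cref{lemma:alpha:translate}: $\alpha_a(G_b)+|E(G)|=\alpha_a(G_{a+b})$ whenever $G_b$ has no cycle of length $<a$. Since $\gcd(a,b)=1$, there are $x,y\in\NNN$ with $yb=1+xa$. Apply the identity $x$ times to a hard instance $G$ of \cref{lemma:a:is:tw}, which has no cycles of length $<a$. This gives $\alpha_a(G_{yb})=\alpha_a(G)+x|E(G)|$. Moreover $G_{yb}$ is the $b$-subdivision of $G_y$, so it lies in $\GG_b$, and its pathwidth changes by at most an additive constant. So the odd case needs no gadget at all.

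Your ``tree of subdivided paths realizing $a$ modulo $b$'' is never specified. More importantly, you give no argument for the property that actually carries such a reduction. Each replaced edge must absorb exactly the same number of extra selected vertices, regardless of how the solution looks near its endpoints. Conversely, every solution on the subdivided graph must pull back to one of the correspondingly smaller size. That counting statement is exactly the nontrivial content of the translation lemma: a case analysis on the quantities $\dr_S$, which is where the no-short-cycle hypothesis is used. Saying the gadget ``should force the intended distance constraint'' does not supply it.

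For even $c$, the same translation lifts hardness from $\GG_2$ to $\GG_{2+2ax}=\GG_{2yb}\subseteq\GG_{2b}$. The inclusion only goes this way, which is why hardness on $\GG_2$ alone does not suffice. So what is actually needed is one base result on a fixed class: for every even $d\ge 6$, $d$-\IndSet has no $(d-\varepsilon)^{\pw(G)}\cdot n^{O(1)}$ algorithm on $2$-subdivided graphs without cycles of length $<d$. This is \cref{lemma:alpha:sub:lower:bound}, used with $d=2a$.

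The Katsikarelis et al.\ construction does not output $2$-subdivided graphs, so a new reduction is required. The paper derives it from Lampis's Pathwidth $a$-Ary CSP (\cref{lemma:is:ppseth:reduction}). It uses super edges and blockers, attaches gadgets only at even path positions, and uses two kinds of attachments so that odd positions are pinned down as well.

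Your plan instead targets a direct SAT reduction into $b$-subdivided graphs for every $b$, with clause-gadget lengths ``chosen via B\'ezout'', and explicitly leaves its correctness argument open. As written, neither lower bound is established. The odd case lacks the translation step, and the even case lacks both the translation step and a concrete hardness proof for a fixed subdivided class.
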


\newcommand{\inp}{\cellcolor{green!30}}
\newcommand{\infpt}{\cellcolor{orange!60}}
\newcommand{\inwt}{\cellcolor{red!60}}
\begin{figure}
\begin{center}
\begin{tabular}{r|c|c|c|c|c|c|}
\backslashbox{$a$}{$b$} &$1$&$2$&$3$&$4$&$5$&$6$ \\\hline
$1$ & \inp $|V|$ & \inp $|V|+|E|$ & \inp $|V|+2|E|$ & \inp $|V|+3|E|$ & \inp $|V|+4|E|$ & \inp $|V|+5|E|$ \\\hline
$2$ & \inwt IS & \inp $1$-\textsc{Disp} & \infpt IS$+|E|$ & \inp $1/2$-\textsc{Disp} & \infpt IS$+2|E|$ & \inp $1/3$-\textsc{Disp} \\\hline
$3$ & \inwt $3$-IS & \infpt & \inp $|V|$ & \infpt $3$-IS$+|E|$ & \infpt & \inp $|V|+|E|$ \\\hline
$4$ & \inwt $4$-IS & \inp $2$-\textsc{Disp} & \infpt & \inp $1$-\textsc{Disp} & \infpt $4$-IS$+|E|$ & \inp $2/3$-\textsc{Disp} \\\hline
$5$ & \inwt $5$-IS & \inwt & \infpt & \infpt & \inp $|V|$ & \infpt $5$-IS$+|E|$ \\\hline
$6$ & \inwt $6$-IS & \inwt $3$-\textsc{Disp} & \inwt IS & \infpt $3/2$-\textsc{Disp} & \infpt & \inp $1$-\textsc{Disp} \\\hline
\end{tabular}
\end{center}
\caption{
Some problems,
	such as
	\textsc{\textbf{I}ndependent\textbf{S}et} and \textsc{\textbf{Disp}ersion},
	(or solution sizes) corresponding to $a$-\IndSet on a graph $G_b$ for small values of $a,b$
	and $V=V(G)$ and $E=E(G)$.
A light green cell indicates a polynomial time solvable case,
	an orange \np-hardness \& \fpt,
	and a dark red \np-hardness \& \wone-hardness.
See also \cref{figure:ind:cells:big} for a bigger table.
		}
\label{figure:ind:cells}
\end{figure}

The proof heavily uses hidden symmetries of $a$-independent sets on $b$-subdivided graphs
	for different values of $a$ and $b$.
Such symmetries were explored first for a continuous version of $a$-independent set,
	in a series of work~\cite{GrigorievHLW21,HartmannL22,thesis}.
We show that these results hold in similar form for $a$-independent sets as well.

\subparagraph{Rational distances} 
As the distance between any two vertices in a graph is an integer, it makes no sense to consider $a$-independent 
sets for noninteger $a$. However, noninteger distances can be highly relevant if we consider the complexity of the 
said continuous version of $a$-independent. The continuous version, introduced by Dearing and Francis~\cite{Dearing1974}, 
is known as  \emph{$\delta$-dispersion} for a positive real distance $\delta$.
In this setting, instead of requiring a selection of vertices of a graph $G$,
	we allow the selection of \emph{points} that may be on a vertex or somewhere on the continuum of an edge.
We fix the length of the edges to $1$, which defines a distance relation of the points in the graph $G$.
A \emph{$\delta$-dispersed} set then is a subset of \emph{points} $S$
	where every distinct points $p,q\in S$ have distance at least $\delta$;
	as studied for example in~\cite{Tamir1991,GrigorievHLW21,HartmannL22}.
The problem $\delta$-\dispersion is the decision version asking for a $\delta$-dispersed set of size at least $k$,
	for some budget $k$ given in the input.
It turns out that the notion of $\frac{a}{b}$-dispersed sets is similar to $a$-independent sets on $b$-subdivided graphs. Indeed, a crucial connection between the two types of sets is that $\frac{a}{b}$-dispersed sets are in one-to-one correspondence
	to $2a$-independent sets on the $2b$-subdivided graph,
	as follows from a discretization argument by Grigoriev et al.~\cite{GrigorievHLW21}.
Particularly, the polynomial time solvable case of $a$-independent set,
	as stated in \cref{lemma:disp:p},
	follow from this discretization argument and
	a characterization of the polynomial time solvable cases of $\delta$-dispersion.
Finding a maximum $\delta$-dispersed set is polynomial time solvable
	if $\delta$ is a twice a unit-fraction (including $1$ and $2$),
	and all other cases are \np-hard~\cite{GrigorievHLW21}.
Further, $\delta$-dispersion when parameterized by the solution size is \fpt when $\delta \leq 2$
	and otherwise \wone-hard, as shown by Hartmann et al.~\cite{HartmannL22}.

With such connections and Theorem~\ref{i:lemma:is:tw:summary} at our hands, we can turn the results on $a$-independent set on $b$-subdivided graphs into tight results for $\delta$-\dispersion
	on bounded treewidth graphs
	for every fixed \emph{rational} $\delta$.

\begin{theorem}[\Cref{section:lower:bound}]
\label{i:lemma:disp:tw}
Let coprime $a,b$ define $\delta=\frac{a}{b}$.
If $a\le 2$, then  $\delta$-\dispersion is in \p.
For $a\ge 3$, given a tree decomposition of width $t$ of an $n$ vertex input graph,
	the problem can be solved in time $(2a)^t\cdot n^{O(1)}$,
	and, assuming \SETH, there is no $(2a-\varepsilon)^t\cdot n^{O(1)}$ time algorithm for any $\varepsilon>0$.
\end{theorem}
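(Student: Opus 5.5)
The plan is to reduce \cref{i:lemma:disp:tw} to \cref{i:lemma:is:tw:summary} through the discretization of Grigoriev et al.~\cite{GrigorievHLW21}. That discretization says the $\frac{a}{b}$-dispersed sets of a graph $G$ correspond one-to-one to the $2a$-independent sets of the $2b$-subdivision of $G$, and the correspondence preserves cardinality. So $\delta$-\dispersion on $G$ with budget $k$ is equivalent to $2a$-$\IndSet(\GG_{2b})$ on the $2b$-subdivided graph $G_{2b}$ with the same $k$. Both directions of the reduction are computable in polynomial time, because $a,b$ are fixed and $G_{2b}$ has $|V(G)|+(2b-1)|E(G)|$ vertices.

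\textbf{Upper bound.} Subdividing edges does not increase treewidth beyond $\max(t,2)$, and a tree decomposition of $G_{2b}$ of width $\max(t,2)$ can be built from one of $G$. For each edge $uv$ whose endpoints share a bag, attach a chain of bags covering the path replacing $uv$; each bag in the chain contains $u$, $v$ and two consecutive subdivision vertices, so it has size at most $4$. This keeps the width bounded by $\max(t,3)$, which suffices asymptotically. Set $a'=2a$ and $b'=2b$. Since $a,b$ are coprime, $\gcd(a',b')=2$, which is even, and the reduced values in \cref{i:lemma:is:tw:summary} are exactly $a$ and $b$. The even-gcd case of that theorem then gives the following. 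If $a\in\{1,2\}$, the problem $2a$-$\IndSet(\GG_{2b})$ is in \p, and hence so is $\delta$-\dispersion. If $a\ge 3$, it is solvable in time $(2a)^{t}\cdot n^{O(1)}$, and this transfers to $\delta$-\dispersion since $n$ grows only polynomially.

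\textbf{Lower bound.} The even-gcd case of \cref{i:lemma:is:tw:summary} rules out $(2a-\varepsilon)^{\pw}\cdot n^{O(1)}$ algorithms for $2a$-$\IndSet(\GG_{2b})$. Its hard instances are graphs $H=G_{2b}$ with $G$ arbitrary. Given such an $H$, first recover $G$: the branch vertices are the vertices of degree at least $3$ together with the endpoints of the subdivided paths. If the hardness proof uses a fixed $G$-structure, the reduction simply outputs $G$. Next, $\pw(G)\le\pw(H)$, since $G$ is a minor of $H$. Now suppose $\delta$-\dispersion had a $(2a-\varepsilon)^{\tw}\cdot n^{O(1)}$ algorithm. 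Running it on $G$ would decide $H$ in time $(2a-\varepsilon)^{\pw(H)}\cdot n^{O(1)}$, because $\tw(G)\le\pw(G)\le\pw(H)$. This contradicts \SETH.

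\textbf{Main obstacle.} The delicate point is getting the discretization correspondence exactly right: the parameters must be $2a$ and $2b$, and the correspondence must be cardinality-preserving in both directions. That is what places $\delta$-\dispersion in the even-gcd branch of \cref{i:lemma:is:tw:summary} and yields the base $2a$ rather than $a$. A second, smaller point is that recovering $G$ from $H$ must be unambiguous, including the degree-$2$ branch vertices. I would settle this by using the explicit instance $G$ produced in the hardness proof of \cref{i:lemma:is:tw:summary}, rather than recognizing it from $H$ after the fact.
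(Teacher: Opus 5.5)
Your proposal is correct and is essentially the paper's argument. The paper likewise uses \cref{lemma:is:to:disp}, $\disp{\tfrac{a}{b}}(G)=\alpha_{2a}(G_{2b})$, notes that $\gcd(2a,2b)=2$ is even with reduced pair $(a,b)$, and reads off the polynomial cases, the $(2a)^t$ algorithm and the $(2a-\varepsilon)$ lower bound from the even-gcd case of \cref{i:lemma:is:tw:summary}. Your handling of decompositions between $G$ and $G_{2b}$ spells out what the paper leaves implicit; just note the correspondence is an equality of optima via \cref{lemma:covering:simplify}, not a bijection of all dispersed sets.
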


\newcommand{\fracxy}{\frac{x}{y}}
\newcommand{\fracxyinv}{\frac{y}{x}}
\newcommand{\fracxystar}{\frac{x^\star}{y^\star}}
\newcommand{\fracxystarinv}{\frac{y^\star}{x^\star}}

\subparagraph{Irrational distances}
By Theorem~\ref{i:lemma:disp:tw}, for a fixed \emph{rational} $\delta=\frac{a}{b}$,
	finding a maximum size $\frac{a}{b}$-dispersed set is fixed-parameter tractable in the treewidth of the input 
	graph.
This is not necessarily the case for \emph{irrational} $\delta$.
Deciding $\delta$-\dispersion can be as hard as outputting the digits of $\delta$,
	which for some $\delta$ is not even computable.
Consider, for example, a path of length $\ell$.
Then there is a dispersed set of size $k+1$
	if and only if $\frac{\ell}{k} \leq \delta$.
Hence it is reasonable to consider the question of efficient algorithms only if $\delta$ is \emph{efficiently 
comparable} to rationals, meaning that there is an algorithm that, given $\fracxyinv$,
	decides whether $\fracxy \leq \delta$ in time polynomial in $\log x + \log y$.

For every fixed {efficiently comparable} $\delta$, it is possible to find a maximum $\delta$-dispersed set in 
an $n$-vertex graph in time $n^{O(\tw(G))}$,
	i.e., there is an \xp algorithm parameterized by treewidth.
This follows from a rounding procedure by Hartmann et al.~\cite{HartmannL22},
	by which for an $n$-vertex graph
	the dispersion number of $\delta$ equals to the dispersion number of the smallest rational $\frac{x}{y}$
	where $\delta \leq \fracxy$ with $x\leq 2n$.
Using that $\delta$ is efficiently comparable,
	we can find this rational in polynomial time since $x \leq 2n$ and
	$y$ is in the order of $n$ for a fixed $\delta$.
Then it remains to apply the algorithm of \cref{i:lemma:disp:tw} to find a maximum $\fracxy$-dispersed set.

In contrast, the above algorithm cannot be improved to a
	fixed-parameter tractable under standard complexity assumptions. 
As we show, there is a very explicitly defined and {efficiently comparable}
	irrational $\delta = (4\sum_{j =1}^{\infty} 2^{-2^j} )^{-1}\approx 0.790085$,
	for which computing the $\delta$-dispersion number is \wone-hard
	parameterized by the treewidth (in fact even for pathwidth),
	and an according lower bound holds under the Exponential Time Hypothesis (\ETH).

\newcommand{\lemmaTextDispIrrational}{There is an efficiently comparable irrational $\delta$
	for which $\delta$-\dispersion
	is \wone-hard in the pathwidth $\pw(G)$ of the $n$-vertex input graph $G$
	and, assuming \ETH, cannot be solved in time $f(\pw(G)) \cdot {n}^{o(\pw(G))}$
	for any computable function~$f$.
}
\begin{theorem}[\Cref{section:irrational}]
\label{i:lemma:disp:wone:pw}
\lemmaTextDispIrrational
\end{theorem}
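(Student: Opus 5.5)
The plan is to reduce from a W[1]-hard problem whose parameter is the number of ``choices'' $k$, such as \GridTiling or \ColorfulClique, to $\delta$-\dispersion on graphs of pathwidth $O(k)$, with $\delta=(4\sum_{j\ge1}2^{-2^j})^{-1}$. The ETH lower bound then follows, because \GridTiling with $k\times k$ grid has no $f(k)n^{o(k)}$ algorithm. The reason an irrational $\delta$ helps is this. For a rational $\delta=\frac{a}{b}$, a path's contribution to the dispersion number depends only on its length modulo a bounded period. This bounded number of ``states'' per boundary point is exactly what underlies the $(2a)^t$ algorithm of \cref{i:lemma:disp:tw}. For irrational $\delta$, a path of length $\ell$ admits $k+1$ points iff $\ell/k\ge\delta$. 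The rounding argument of Hartmann et al.\ shows that on $n$-vertex graphs $\delta$ behaves like its best rational approximation $\frac{x}{y}$ with $x\le 2n$. This has numerator $\Theta(n)$, so the number of relevant states per boundary point grows with $n$. The reduction should exploit exactly this unbounded state space to encode a value $i\in[N]$ as the offset of the point nearest to a vertex.

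First I would establish the number-theoretic tool. The binary expansion of $1/(4\delta)=\sum_j 2^{-2^j}$ is extremely sparse, so its truncations $s_m=\sum_{j\le m}2^{-2^j}$ are exceptionally good rational approximations. Consequently, for suitable path lengths $\ell$ one can compute $\lfloor \ell/\delta\rfloor$ exactly. Moreover, the ``slack'' $\ell-\delta\lfloor\ell/\delta\rfloor$ is controlled up to an additive error of order $2^{-2^{m+1}}$, which is negligible compared with the scale $2^{-2^m}$. Efficient comparability also follows here: given $\frac{x}{y}$, compare $y/(4x)$ with $s_m$ for $m\approx\log\log y$, and bound the tail. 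With this, I would build gadgets. A \emph{selector path} of carefully chosen length carries $k'$ points, with a slack of $N$ units of size $\eta\approx 2^{-2^m}$. The position of its end points then encodes a choice $i\in\{0,\dots,N-1\}$ at that granularity. Attaching long paths whose lengths are integers plus multiples of the slack lets one transmit and compare such offsets. This gives \emph{equality} or $\le$ constraints between the offsets of two boundary vertices, which is exactly what \GridTiling needs.

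Next comes the assembly. Use $k^2$ selector cells arranged in a grid. Each cell holds a pair $(x,y)$ encoded by two offsets. Horizontal neighbours are joined by paths enforcing $x\le x'$ through the dispersion distance constraint, and vertical neighbours are joined similarly for $y$. Cell membership in the allowed sets $S_{i,j}$ is enforced by a local gadget hanging off the cell. This gadget is a tree-like structure of paths that attains its maximum dispersion count only when the offset pair is in $S_{i,j}$; it has constant pathwidth beyond the cell's $O(1)$ boundary vertices. A column-by-column path decomposition then keeps $O(k)$ boundary vertices per bag, which gives pathwidth $O(k)$. Finally, set the budget to be the sum of the per-gadget maxima. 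Then a solution of that size forces every gadget to be tight, which in turn forces consistent offsets.

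I expect the main obstacle to be the precise arithmetic. The edge lengths are integers, so every offset must be produced by an integer length interacting with the irrational $\delta$. The difficulty is to guarantee that tight gadgets realize exactly the intended $N$ distinct offsets, with no spurious optima coming from rounding errors. I would handle this with the sparsity bounds: choose $m$ with $2^{2^m}\gg N\cdot n$, so that all error terms, each of size $2^{-2^{m+1}}$ per unit, summed over all edges, stay below one granularity step $\eta$. The graph size is polynomial, since path lengths are $O(2^{2^m})=\mathrm{poly}(N)$.
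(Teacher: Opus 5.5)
\textbf{There is a genuine gap.} Your outline is a plan rather than a reduction. The selector, transmission and comparison paths, and above all the membership gadget that ``attains its maximum dispersion count only when the offset pair is in $S_{i,j}$'', are asserted but never constructed. The closing step (``a solution of that size forces every gadget to be tight, which in turn forces consistent offsets'') assumes two things you do not establish: the exact optimum of each gadget at the irrational $\delta$, and that this optimum is attained only at the intended configurations. That is exactly where the difficulty lies. A maximum $\delta$-dispersed set is a continuous object, and its slack can be redistributed among gaps and across gadget boundaries. So you need an exact normalization statement for optimal solutions, not an error budget.

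The error budget also does not close as stated. The discrepancy between $\ell/\delta$ and $4\ell s_m$ is about $4\ell\cdot 2^{-2^{m+1}}$. For path lengths $\ell=\Theta(2^{2^m})$, the scale your gadgets use, this is already of the same order as your granularity $\eta\approx 2^{-2^m}$ for a single path, before any summation over edges. For lengths that are multiples of $2^{2^m-2}$, this term is not an error at all: it is exactly the intended slack $c\cdot 2^{-2^m}$, and the genuine error starts only at relative order $2^{-2^{m+1}}$. So even the arithmetic must be organized differently from what you describe. (Path lengths are integers, so ``integers plus multiples of the slack'' also needs reinterpreting.)

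\textbf{The paper's proof avoids all of this.} It never reasons about point positions at an irrational distance; all the combinatorics happen at the integer distance $c_i=2^{2^i}$. \cref{lemma:clique:leq:disp} reduces \ColorfulClique to $c_i$-\dispersion on graphs of pathwidth $O(k)$ with no cycles shorter than $c_i$. It has built-in leeway, so the same instance is also correct for $c_i-1$. The instance is then transported by exact identities, valid for all real distances on graphs without short cycles: $b_{i-1}$-fold translation (\cref{lemma:disp:translate:b}) followed by $a_{i-1}$-fold subdivision (\cref{lemma:disp:subdivide}). On one and the same output $(G'',k'')$, this sends $c_i$ to $\delta_i=a_i/b_i$ and $c_i-1$ to $\gamma_i$. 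Since $\gamma_i<\delta<\delta_i$ and $\disp{\delta}(G'')$ is non-increasing in $\delta$, the answer at $\delta$ is sandwiched, and no approximation error ever arises.

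The resulting $G''$ has all path lengths multiples of $a_{i-1}=2^{2^i-2}$ and budget $k'+b_{i-1}|E(G')|$. That is exactly the kind of instance your slack heuristic describes, and the translation lemma is the rigorous form of that bookkeeping. To turn your route into a proof, you would have to re-derive such an exact characterization for explicit gadgets. The simpler fix is to adopt the robustness-plus-sandwich argument.
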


\newcommand{\ocovering}{overlaying\xspace} %
\newcommand{\ocover}{overlay\xspace}
\newcommand{\ocov}[1]{#1\text{-}\mbox{overlay}}

\subsection{Domination Problems}

In addition to distance $a$-independent set, we perform a similar study of the dual domination problems.
As we show, the results for $a$-independence hold quite similarly for according domination problems.
We use a definition that unifies several concepts such as that of a dominating set and a vertex cover.

\smallskip

A \emph{distance-$d$ dominating set} $D$ is a subset of vertices
	such that every other vertex is at distance at most $d$ to a vertex in $D$.
The literature contains several more distance domination-like problems,
	which are often quite well understood on bounded treewidth graphs.
A well-studied example is \emph{mixed dominating set},
	for example~\cite{ZhaoKS2011,JainJPS2017,MadathilPSS2019,XiaoS2020,DubloisLP21} and under the name 
	\emph{total 
	covering}~\cite{AlaviBLN77,ErdosM1977,Meir1978,AlaviLWZ1992,PeledSun1994},
	which is
	(even though not directly phrased as such)
	equivalent to a distance-$2$ dominating set of the $2$-subdivision of a graph $G$.
Similarly, a \emph{vertex-edge dominating set} is a subset of vertices $D$
	such that every edge
	has one of its end vertices in distance at most $1$ to a vertex in $D$,
	as studied in~\cite{lewis2010vertex, zylinski2019, ZiemannZylinski20}.
More generally, a \emph{distance-$d$ vertex cover} (not to be confused with a \emph{$d$-path vertex cover})
	is a subset of vertices $D$
	such that every edge has one of its end vertices in distance at most $d$ to a vertex in $D$,
	as studied in~\cite{AlvaradoDR2015,DallardKM2021}.

We unify all above concepts
	by the notion of an \emph{\dombarset{a}} for an integer $a$,
	which is a subset of vertices $D$ such that
	for every edge $e\in E(G)$,
	there are (possibly identical) vertices $w_1,w_2 \in D$
	and a $w_1,w_2$-walk of length at most $a$ containing edge~$e$.

\begin{observation}
\label{lemma:ds:corresponding:problems}
For a graph $G$ without isolated vertices, the following notions coincide:
\begin{itemize}
\item a vertex cover and a \dombarset{2},
\item a dominating set and a \dombarset{3},
\item a vertex-edge dominating set and a \dombarset{4},
\item a distance-$d$ dominating set and a \dombarset{(2d+1)}, for every $d\geq 1$, 
\item a mixed dominating set in $G$ and a \dombarset{5} in the $2$-subdivision $G_2$, and
\item a distance-$d$ vertex cover and a \dombarset{(2d+2)}, for every $d\geq 0$.
\end{itemize}
\end{observation}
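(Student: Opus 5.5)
The plan is to prove one structural lemma about short walks through an edge and then read off each item of \cref{lemma:ds:corresponding:problems} as a special case. For an edge $e=uv$ and a vertex $w$, write $d(w,e)=\min\{d(w,u),d(w,v)\}$. The key lemma is this: an edge $e$ lies on a $w_1,w_2$-walk of length at most $a$, with $w_1,w_2\in D$, if and only if there are $w_1,w_2\in D$ (possibly equal) with $d(w_1,u)+1+d(v,w_2)\le a$ after suitably naming the endpoints $u,v$.

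\emph{Proof of the lemma.} For one direction, concatenate a shortest $w_1$--$u$ path, the edge $e$, and a shortest $v$--$w_2$ path. For the other direction, take any walk containing $e$ and split it at the first traversal of $e$; the two pieces are at least as long as the corresponding distances. A special case is $w_1=w_2=w$: the closed walk $w\to u\to v\to u\to w$ shows that $e$ is covered whenever $2d(w,e)+2\le a$.

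With the lemma in hand, each item reduces to comparing $a$ with the distances from $D$ to the endpoints of an edge.

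\textbf{Case $a=2$.} The condition $d(w_1,u)+d(v,w_2)\le 1$ forces one of the two distances to be $0$, that is, an endpoint of $e$ lies in $D$. Hence a \dombarset{2} is exactly a vertex cover.

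\textbf{Case $a=2d+2$.} Using $w_1=w_2$ gives exactly that some endpoint is within distance $d$ of $D$. Conversely, $d(w_1,u)+d(v,w_2)\le 2d+1$ forces one of the two terms to be at most $d$. So the notion is exactly a distance-$d$ vertex cover; for $d=1$ this is a vertex-edge dominating set.

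\textbf{Case $a=2d+1$.} There are two directions to check.
\begin{itemize}
\item Suppose $D$ is a \dombarset{(2d+1)}. Every vertex $x$ has an incident edge $xy$, since $G$ has no isolated vertices. Then $d(w_1,x)+d(y,w_2)\le 2d$ or $d(w_1,y)+d(x,w_2)\le 2d$. In the first subcase either $d(w_1,x)\le d$, or $d(y,w_2)\le d-1$ and hence $d(x,w_2)\le d$. The second subcase is symmetric. So $D$ is a distance-$d$ dominating set.
\item Suppose $D$ is a distance-$d$ dominating set and $e=uv$. Pick $w_1$ within distance $d$ of $u$ and $w_2$ within distance $d$ of $v$. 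Then $d(w_1,u)+1+d(v,w_2)\le 2d+1$.
\end{itemize}
For $d=1$ this is the statement for dominating sets.

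\textbf{Mixed dominating set.} A mixed dominating set $M\subseteq V\cup E$ dominates every vertex and every edge. In $G_2$, identify $M$ with the set of original vertices and subdivision vertices it contains. An edge $xm$ of $G_2$, with $m$ the subdivision vertex of $xy$, satisfies the $a=5$ condition iff $d(w_1,x)+d(m,w_2)\le 4$ for some labelling of the endpoints. The main obstacle is to translate this carefully into the two mixed-domination requirements: that $x$ is within $G_2$-distance $2$ of $D$, and that $m$ is within distance $1$ of $D$ or its edge is otherwise covered.

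My plan for that obstacle is to observe that the $5$-walk condition on both $G_2$-edges $xm$ and $my$ is equivalent to the following: both $x$ and $y$ are within distance $2$ of $D$, or $m$ is within distance $1$ of $D$. Then I compare this with ``edge $xy$ is dominated'' and ``each vertex is dominated''. I would check the residual cases (distances $0,1,2,3$) by hand, and I expect this case bookkeeping to be the only delicate point.
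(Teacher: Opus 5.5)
\textbf{The mixed dominating set item is not proved, and the one concrete claim in your plan for it is false.} The other items are fine.

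Your per-edge criterion is correct: an edge $uv$ is $a$-walk dominated iff $d(u,D)+d(v,D)\le a-1$. The items for $a=2$, $a=2d+1$ and $a=2d+2$ follow from it exactly as you describe. This is essentially characterization (D2) of \cref{i:lemma:ds:equivalent}; the paper states the observation without proof and implicitly relies on that.

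You assert that the $5$-walk condition on the two $G_2$-edges $xm$ and $my$ is equivalent to ``$d_{G_2}(x,D)\le 2$ and $d_{G_2}(y,D)\le 2$, or $d_{G_2}(m,D)\le 1$''. Take $G$ to be the path $w\,x\,y\,z$ and $D=\{w,z\}$.
\begin{itemize}
\item Then $d_{G_2}(x,D)=d_{G_2}(y,D)=2$, so your condition holds.
\item But $d_{G_2}(m_{xy},D)=3$, so by your own lemma the edge $\{x,m_{xy}\}$ needs a walk of length $2+1+3=6>5$.
\item Correspondingly, $\{w,z\}$ is not a mixed dominating set: the edge $xy$ is neither incident to $w$ or $z$ nor adjacent to an edge of the set.
\end{itemize}
The correct condition is that all three of $x,m,y$ lie within distance $2$ of $D$. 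Comparing your weaker condition with mixed domination would have produced a mismatch, not a proof.

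No case bookkeeping is needed. Since $G$ has no isolated vertices, neither does $G_2$. Your $a=2d+1$ case with $d=2$, applied to $G_2$, then shows that the $5$-walk dominating sets of $G_2$ are exactly its distance-$2$ dominating sets. Original and subdivision vertices are at odd distance in $G_2$, which pins down the radius-$2$ balls:
\begin{itemize}
\item around $v\in V(G)$ the ball is $\{v\}\cup\{m_e : v\in e\}\cup N_G(v)$;
\item around $m_e$ with $e=xy$ the ball is $\{m_e,x,y\}\cup\{m_f : f\neq e,\ f\cap e\neq\emptyset\}$.
\end{itemize}
These are precisely the closed neighbourhoods of $v$ and of $e$ in the mixed (total-graph) sense: a vertex is dominated by itself, its incident edges and its neighbours, and an edge by itself, its endpoints and its adjacent edges. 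So under your identification of $M\subseteq V(G)\cup E(G)$ with a vertex subset of $G_2$, distance-$2$ domination in $G_2$ is exactly mixed domination in $G$.
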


\subparagraph*{Integer distances}
Finding a minimum \dombarset{a} is \np-hard for $a=2$ (i.e., finding minimum vertex cover)
	and it is not difficult to show that it remains \np-hard for any fixed $a\geq 2$.
In some cases, the hardness also extends to when we restrict the input to $2$-subdivided graphs:
Finding a minimum \emph{mixed dominating set}, i.e., a \dombarset{5} of $2$-subdivided graphs,
	is \np-hard, as shown by Majumdar~\cite{Majumdar1992}.
In contrast, there are polynomial time algorithms for $a\in\{2,4\}$
	when the input is a $2$-subdivided graph.

\begin{theorem}[Hartmann et al.~\cite{HartmannLW22}]
\label{lemma:ds:p}
For $a \in \{2,4\}$, a minimum \dombarset{a}
	on a $2$-subdivided graph can be found in linear time.\footnote{
	The original statement is about a continuous covering problem,
	but can be phrased as here by using a discretization argument given in the same work.}
\end{theorem}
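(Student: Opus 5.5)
We treat the two values of $a$ separately. In both cases we first translate the walk-domination condition on the $2$-subdivision $G$ of $G'$ into a condition on $G'$. We then solve each connected component of $G'$ independently; isolated vertices of $G'$ are trivial. For an edge $e=uv$ of $G'$ write $s_e$ for its subdivision vertex, so $G$ has edges $us_e$ and $s_ev$.

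\emph{Case $a=2$.} By \cref{lemma:ds:corresponding:problems} a \dombarset{2} is a vertex cover of $G$. Let $D$ be a vertex cover and $X=D\cap V(G')$. For every $e=uv$, either $s_e\in D$ or both $u,v\in X$. Conversely, adding all $s_e$ with $e\not\subseteq X$ to an arbitrary $X$ gives a vertex cover. Hence the optimum equals $\min_X \bigl(|X|+|E'|-|E'(X)|\bigr)$, where $E'(X)$ is the set of edges of $G'$ with both endpoints in $X$. Inside a connected component $C$ of $G'$, take an optimal nonempty $X$ and add a vertex $w\notin X$ that has a neighbour in $X$. This changes the objective by $1-\deg_X(w)\le 0$. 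Repeating this grows $X$ to all of $V(C)$ without loss. So the optimum of $C$ is $\min(|V(C)|,|E(C)|)$, attained by $X=\emptyset$ or $X=V(C)$. This is computed in linear time, and it settles the case $a=2$.

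\emph{Case $a=4$.} By \cref{lemma:ds:corresponding:problems} this is a distance-$1$ vertex cover of $G$: every edge of $G$ must have an endpoint in $N[D]$. Unfolding this for the two half-edges of $e=uv$ gives the condition
\[
s_e\in N[D] \quad\text{or}\quad u,v\in N[D].
\]
\begin{itemize}
\item If $D$ consists only of subdivision vertices, say $D=\{s_f: f\in F\}$, the condition says that every edge of $G'$ lies in $F$ or has both endpoints covered by $V(F)$. So $F$ is essentially an edge cover of the non-isolated vertices, and the cost is $|V(C)|-\nu(C)$ per component, where $\nu(C)$ is the maximum matching size.
\item An original vertex $u\in D$ instead satisfies all edges at $u$ at once, like a star.
\end{itemize}
The plan is therefore a normal-form lemma proved by exchange arguments. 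First, one may assume that no two chosen points are adjacent. Second, one may assume that $D\cap V(G')$ together with the edges $\{f: s_f\in D\}$ forms a vertex-disjoint family of stars in $G'$ covering all non-isolated vertices. The cost of such a family is then the number of stars plus the number of leftover single edges. With this normal form in hand, the plan is to compute an optimal family greedily, bottom-up along a DFS/BFS spanning tree of each component, and to use the non-tree edges only to decide whether a local star can be absorbed into a larger one.

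I expect the main obstacle to be the case $a=4$, specifically proving the normal form tightly enough that a linear-time greedy suffices. The naive formulation reduces to edge cover, which would need matching and hence would not be linear time. So the key step is a counting lower bound per component, matched by a constructive rooted-tree procedure. What I would try first is a leaf-to-root greedy: a leaf edge $uv$ with $v$ a leaf is handled by putting its parent $u$ into $D$, which turns $u$'s pending subtree into a star. The exchange arguments would then be checked against the condition ``$s_e\in N[D]$ or $u,v\in N[D]$'' edge by edge.
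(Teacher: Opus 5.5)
Your argument for $a=2$ is correct and self-contained. It gives $\min(|V(H)|,|E(H)|)$ for each component $H$, which is exactly the $\tfrac12$-covering value. The case $a=4$, however, is only a plan, and as stated the plan does not work. Your unfolding is right: $D$ is feasible iff every edge of $G'$ has an endpoint in $X=D\cap V(G')$ or both endpoints in $V(F)$, where $F=\{f: s_f\in D\}$.

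Your star normal form, used as a search space, drops the constraint that a vertex outside $X\cup V(F)$ must have \emph{all} its neighbours in $X$. In the cycle $v_1v_2v_3v_4v_5$, the disjoint stars centred at $v_1$ (leaves $v_2,v_5$) and at $v_3$ (leaf $v_4$) cover every vertex at cost $2$. They leave $v_4v_5$ undominated, and the true optimum is $3$.

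The leaf-to-root rule also fails. On $K_4$ every spanning tree has a leaf, so the rule puts an original vertex into $D$. Every solution containing an original vertex costs at least $3$. Yet the two subdivision vertices of a perfect matching already form a \dombarset{4} of the $2$-subdivision.

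More fundamentally, optimizing $F$ for fixed $X$ (Gallai's edge-cover formula) gives, for a component $H$ with at least one edge,
\[
\ogamma_4(H_2) \;=\; |V(H)| - \max_{X\subseteq V(H)} \bigl( i(H-X) + \nu(H-X) \bigr),
\]
where $i(\cdot)$ counts isolated vertices. Suppose $H$ is regular. Then every isolated vertex of $H-X$ has all its neighbours in $X$, so Hall's theorem matches these vertices into $X$, disjointly from a maximum matching of $H-X$. Hence the maximum is attained at $X=\emptyset$, and the value is exactly the edge cover number $|V(H)|-\nu(H)$. So no per-component counting bound combined with a tree greedy can be correct: any correct procedure must in effect determine maximum matching sizes, at least on regular graphs.

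The paper does none of this combinatorics. It reads off $\ogamma_2(G'_2)=\cover{1/2}(G')$ and $\ogamma_4(G'_2)=\cover{1}(G')$ from Corollary~\ref{i:lemma:ds:to:cov}. It then imports the algorithm for unit-fraction covering radii from~\cite{HartmannLW22}. That cited result, specifically the case $\delta=1$, is the ingredient missing from your $a=4$ case. Your $a=2$ argument is a valid elementary replacement for the $\delta=\tfrac12$ part of it.
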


These examples give a glimpse into the complexity
	of finding an \dombarset{a} of a $b$-subdivided graphs for integers $a,b$.
This work settles, for every fixed integer $a,b$, whether
	finding a minimum \dombarset{a} of a $b$-subdivided graph is polynomial time solvable or \np-hard,
	and additionally settles the parameterized complexity for the solution size as parameter
	(as color-coded in \cref{figure:ds:cells}).
Formally, let $\ogamma_a(G)$ be the minimum cardinality of an \dombarset{a} of a graph $G$.
For a graph class $\GG$, let $a$-$\DS(\GG)$ be the according decision problem,
	which is, given a graph $G \in \GG$ and an integer $k$, deciding whether $\ogamma_a(G)\geq k$.

\begin{figure}
\begin{center}
\begin{tabular}{r|c|c|c|c|c|c|}
\backslashbox{$a$}{$b$} &$1$&$2$&$3$&$4$&$5$&$6$ \\\hline
$1$ & \inp $|V|$ & \inp $|V|+|E|$ & \inp $|V|+2|E|$ & \inp $|V|+3|E|$ & \inp 
$|V|+4|E|$ & \inp $|V|+5|E|$ \\\hline
$2$ & \infpt \textsc{VC} & \inp ${1/2}$-\textsc{Cover} & \infpt \textsc{VC}$+|E|$ & \inp $1/4$-\textsc{Cover} & \infpt \textsc{VC}$+2|E|$ & \inp $1/6$-\textsc{Cover} \\\hline
$3$ & \inwt \textsc{DS} & \infpt \textsc{DS}$(G_2)$ & \inp $|V|$ & \infpt \textsc{DS}+$|E|$ & \infpt {DS}$(G_2)+|E|$ & \inp $|V|+|E|$ \\\hline
$4$ & \inwt \text{VED} & \inp ${1}$-\textsc{Cover} & \infpt & \inp ${1/2}$-\textsc{Cover} & \infpt VED$+|E|$ & \inp $1/3$-\textsc{Cover} \\\hline
$5$ & \inwt $2$-\text{DS} & \infpt \textsc{MDS} & \infpt & \infpt & \inp $|V|$ & \infpt $2$-DS$+|E|$ \\\hline
$6$ & \inwt & \inwt $3/2$-\textsc{Cover} & \infpt \textsc{VC} & \infpt \textsc{DS}$(\GG_2)$ & \infpt & \inp ${1/2}$-\textsc{Cover} \\\hline
$7$ & \inwt $3$-\text{DS} & \inwt  & \infpt & \infpt & \infpt & \infpt \\\hline
$8$ & \inwt & \inwt $2$-\textsc{Cover} & \infpt & \inp ${1}$-\textsc{Cover} & \infpt & \infpt $2/3$-\textsc{Cover} \\\hline
$9$ & \inwt $4$-\text{DS} & \inwt  & \inwt \textsc{DS} & \infpt & \infpt & \infpt \textsc{DS}$(\GG_2)$ \\\hline
\end{tabular}
\end{center}
\caption{Some problems, such as
	\textsc{\textbf{V}ertex\textbf{C}over},
	\textsc{(\textbf{M}ixed)\textbf{D}ominating\textbf{S}et},
	\textsc{\textbf{V}ertex\allowbreak\textbf{E}dge\allowbreak\textbf{D}omination},
	(or solution sizes) corresponding to $a$-\DomSet of a graph 
$G_b$ for small values of $a,b$,
	where $V=V(G)$ and $E=E(G)$.
A light green cell indicates a polynomial time solvable case,
	an orange \np-hardness and \fpt,
	and a dark red \np-hardness \& \wtwo-hardness.
See \cref{figure:ds:cells:big} for a bigger table.
		}
\label{figure:ds:cells}
\end{figure}

\commentt{
\begin{lemma}[\cref{section:dom:parameterized}]
\label{lemma:ds:param}
Let $a,b \in \NNN_+$.
$a$-$\DS(\GG_b)$ is polynomial time solvable
	if $b$ is a multiple of $a$,
	or $\ahalf$ is even and $b$ is an odd multiple of $\ahalf$;
	else is \np-hard and, if $\frac{a}{b} < 3$, is \fpt for the solution size as parameter;
	else is \wone-hard
\end{lemma}
}

\newcommand{\lemmaTextDsParam}{$a$-$\DS(\GG_b)$ is
\begin{itemize}
\item polynomial time solvable if $b=ca$ or if $b=c\ahalf$ and $\ahalf$ is even for some integer $c$;
	and is \np-hard for all other integers $a,b$; and is
\item fixed-parameter tractable for the parameter solution size
	if $\frac{a}{b} < 3$; and \wtwo-hard for all other integers $a,b$.
\end{itemize}}

\begin{theorem}[\cref{section:dom:parameterized}]
\label{lemma:ds:param}
\lemmaTextDsParam
\end{theorem}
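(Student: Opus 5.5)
The proof splits into the polynomial/\np-hard classification and the \fpt/\wtwo classification. Throughout I use the scaling symmetry from the work on continuous covering \cite{HartmannLW22}, transferred to the discrete setting via the discretization. Concretely, I would show that minimum \dombarsets{a} of $G_b$ correspond to minimum \dombarsets{ca} of $G_{cb}$, with the size shifted by a fixed multiple of $|E|$ where needed. This reduces everything to coprime $(a,b)$, or to the normalized pairs realized by $\frac{a}{b}$-\covering. The cases $b=ca$ and $b=c\frac a2$ with $\frac a2$ even then reduce to the polynomial cases $a\in\{2,4\}$ on $2$-subdivided graphs of \cref{lemma:ds:p}, together with trivial cases such as $a=1$. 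In the trivial cases, a walk of length $\le a$ around each edge forces an essentially unique local structure, giving solution size $|V|+(b-1)|E|$ or $|V|+\frac{b-a}{a}|E|$, as in \cref{figure:ds:cells}.

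For \np-hardness in all remaining cases, I would reduce from Vertex Cover, Dominating Set, or Mixed Dominating Set (\cref{lemma:ds:corresponding:problems}, \cite{Majumdar1992}). The gadget is the standard one: replace each edge path of $G_b$ by a path of suitable length, using the periodicity of optimal walk dominating sets along long paths. Along a long subdivided path an optimal solution places vertices every $a$ steps, so adding $a$ subdivisions to an edge raises the optimum by exactly one. By adding multiples of $a$ to edge lengths we can therefore adjust $b$ modulo $a$, and we reduce to a residue class where a known hard problem lives. I would organize this as a case analysis on $b \bmod a$ together with the parity of $a$. The residues $0$ and (for $\frac a2$ even) $\frac a2$ are exactly the polynomial cases. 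All other residues encode either Vertex Cover, via a ``half-covered edge'' forcing a choice at an endpoint, or Dominating Set.

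For the parameterized part, if $\frac ab<3$, I would show the solution size $k$ bounds the input. Each selected vertex dominates, via walks of length $\le a<3b$, edges of $G_b$ lying on at most a bounded number of original edges around it, roughly its closed neighbourhood in $G'$. So either $k$ bounds a structure amenable to kernelization, or each solution vertex covers only its incident original edges. In the latter case we get a vertex-cover-like branching: every original edge of $G'$ needs a solution vertex within distance less than $1.5$ original edges. This yields a bounded-search-tree algorithm of $f(k)$ branches, as in the \fpt algorithm for covering with $\delta<3$ \cite{HartmannLW22}. For $\frac ab\ge 3$, I would reduce from Dominating Set parameterized by solution size (\wtwo-hard). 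Given $H$, build $G'$ by attaching a pendant path to each vertex of $H$ and subdividing edges suitably. Since $a\ge 3b$, a vertex placed at an original vertex covers all edges of its incident original paths and reaches neighbours, while pendant paths force every original vertex to be covered. Budget arithmetic then gives $k$ plus a constant multiple of $|E|$, with the additive $|E|$ term handled by the scaling trick above.

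The main obstacle is the exhaustive \np-hardness case analysis: we must verify that every residue pair other than the two polynomial families admits a gadget. I would first try to rely on the continuous result, namely that $\delta$-\covering is \np-hard unless $\delta$ is a unit fraction or twice one \cite{HartmannLW22}. Then only the pairs not arising from covering need direct reductions, in particular those with $\gcd$ issues where $b$ is odd.
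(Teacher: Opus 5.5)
Your polynomial-time cases are fine and match the paper (\cref{lemma:ds:sub}, \cref{lemma:trans:ogamma}, \cref{lemma:ds:p}). The other three parts each have a genuine gap.

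\emph{FPT for $a/b<3$.} Your claim that each solution vertex dominates only a bounded number of original edges is false. A solution vertex at an original vertex of large degree dominates all its incident edges, so branching on where the solution vertex near an edge $\{u,v\}$ sits has unbounded fan-out, and no bounded search tree follows. What your observation does give, and what the paper uses (\cref{lemma:ds:leq:matching}), is a structural bound. Since $a<3b$, the middle of every original edge $\{u,v\}$ needs a solution vertex within less than $1.5$ original edge lengths, hence within less than one of $u$ or $v$. So the original vertices at distance $<1$ from solution vertices form a vertex cover of $G$ of size at most $2k$ (equivalently $\nu(G)\le 2k$, otherwise answer no). The treewidth of $G_b$ is therefore $O(k)$, and a treewidth algorithm finishes the job: \cref{lemma:ds:lit:seth} for $a=2$ and odd $a$, and for even $a$ the new $a^t n^{O(1)}$ dynamic program of \cref{lemma:ds:classic} (or Courcelle's theorem).

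\emph{NP-hardness.} Adding multiples of $a$ to the subdivision (\cref{lemma:trans:ogamma}) never changes $b \bmod a$, so it cannot move you between residue classes. The residue-by-residue gadget analysis you flag as the main obstacle is therefore simply left open. The missing idea is to combine translation with the inclusion $\mathcal{G}_{yb}\subseteq\mathcal{G}_b$. For coprime $a,b$, pick $x,y$ with $yb=1+xa$. Then $x$ translations reduce the problem on general graphs $\mathcal{G}_1$ (Vertex Cover, distance-$d$ domination, distance-$d$ vertex cover) to $\mathcal{G}_{yb}$, which is a special case of $\mathcal{G}_b$; this is exactly the argument in the proof of \cref{lemma:ds:tw:summary}. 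The even-gcd case starts instead from $2a$-walk domination on $\mathcal{G}_2$ (hard for $a\ge 3$ by the reduction behind \cref{lemma:dom:lb:6}). Alternatively, use the covering result, which says $\delta$-covering is NP-hard for every $\delta$ that is not a unit fraction; the condition `twice a unit fraction' is the one for dispersion, not covering.

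\emph{W[2]-hardness for $a/b\ge 3$.} A target budget of $k+c|E|$ does not give a parameterized reduction, and translation does not repair it. Translation only relates the uniform $(b+ma)$-subdivision of a graph to its $b$-subdivision. So wherever it applies, your extra subdividing was vacuous, and you must argue about $G''_b$ with budget $k$ directly, as the paper does with $k'=k$.

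More seriously, a single pendant-path reduction from Dominating Set fails for $a/b\in[3,4)$, where the covering radius lies in $[1.5,2)$ original edge lengths.
\begin{itemize}
\item A pendant of integral length $\ell\ge 1$ at $u$ forces a solution vertex within distance $<1$ of $u$ (or on the pendant). A dominating set does not provide this when $u$ is dominated only by a neighbour, so the forward direction breaks.
\item Without pendants, a solution vertex at or near the middle of an edge $\{v,w\}$ can cover every edge at $v$ and at $w$. It then acts like two dominating vertices, so the backward direction breaks.
\end{itemize}
This is why the paper splits the range. For $a/b\in[3,4)$ it reduces from Colorful Dominating Set, adding per color class an edge $\{x_i,y_i\}$ joined to all of $V_i$ (forcing exactly one solution vertex near each class), together with triangles $u,u^\star,u^{\star\star}$ with cross edges. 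Only for $a/b\ge 4$ does a pendant path of length $\lfloor a/(2b)\rfloor-1$ suffice, plus an extra vertex $u_2$ when $\lfloor a/b\rfloor$ is odd.
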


We note that $a$-$\DS(\GG_b)$
	is polynomial time solvable for the same set of integers $a,b$
	where $a$-$\IndSet(\GG_b)$ is polynomial time solvable.
In contrast,
	the threshold which separates the fixed-parameter tractable cases from the \wone-hard/\wtwo-hard cases is 
	shifted,
	which should be expected
	as \textsc{Vertex Cover} and \textsc{Dominating Set} are to be separated by this threshold.

\smallskip

Further, we study the problem parameterized by treewidth.
Again, intuitively, the \dombarset{a} problem is harder for larger $a$.
Indeed, for many cases the notion of an \dombarset{a} corresponds to a known problem
	(as in \cref{lemma:ds:corresponding:problems})
	where the literature knows a matching upper and lower bound
	with $a$ in the base of an exponential run time for graphs of bounded treewidth, assuming \SETH.
This is also the case for $a=5$ on \emph{$2$-subdivided graphs}, as this corresponds to a mixed dominating set.

\begin{theorem}[\cite{Niedermeier06,RooijBR09,BorradaileL16,LokshtanovMS18,DubloisLP21}]
\label{lemma:ds:lit:seth}
For $a \in \{2\}\cup\{3,5,7,\dots\}$,
	given a tree decomposition of width $t$ of an $n$ vertex input graph,
	a minimum \dombarset{a} can be found in time $a^t \cdot n^{O(1)}$,
	and, assuming \SETH,
	there is no $(a-\varepsilon)^{t} \cdot n^{O(1)}$ time algorithm for any $\varepsilon>0$.
Moreover, for $a=5$ this even holds when the input graph is restricted to $2$-subdivided graphs.
\end{theorem}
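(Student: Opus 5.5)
This statement is assembled from the literature, so the plan is to translate each case into an existing result via \cref{lemma:ds:corresponding:problems}. I would then check that each cited algorithm and lower bound carries over to walk domination.

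\textbf{The case $a=2$.} A \dombarset{2} is exactly a vertex cover, assuming no isolated vertices. Isolated vertices can be discarded, since no edge must be walk dominated there. The classical $2^t\cdot n^{O(1)}$ dynamic program over a tree decomposition~\cite{Niedermeier06} gives the upper bound. The SETH lower bound of~\cite{LokshtanovMS18} for \textsc{Vertex Cover}/\textsc{Independent Set} gives the matching bound: a vertex cover is the complement of an independent set, so the two problems share their treewidth complexity.

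\textbf{Odd $a=2d+1$.} A \dombarset{(2d+1)} is exactly a distance-$d$ dominating set. I would verify this equivalence briefly. If every vertex is within distance $d$ of $D$, take an edge $uv$. Walk from a nearest $D$-vertex to $u$, cross $uv$, and walk back to a nearest $D$-vertex of $v$; the total length is at most $2d+1$. Conversely, a walk of length at most $2d+1$ through an edge with both ends in $D$ puts each endpoint within distance $d$ of $D$. For $d=1$, the $3^t$ algorithm of~\cite{RooijBR09} and the matching bound of~\cite{LokshtanovMS18} apply. For general $d$, Borradaile and Le~\cite{BorradaileL16} give a $(2d+1)^t\cdot n^{O(1)}$ algorithm, and Katsikarelis et al.\ give the matching SETH lower bound for distance-$d$ domination. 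Since $2d+1=a$, this gives exactly the claimed bound.

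\textbf{The case $a=5$ on $2$-subdivided graphs.} Here the statement is a mixed dominating set in $G$, by the fifth item of \cref{lemma:ds:corresponding:problems}. Dublois et al.~\cite{DubloisLP21} solve mixed domination in $5^t\cdot n^{O(1)}$ and prove that, under SETH, it cannot be solved in $(5-\varepsilon)^{t}$. Two points need care.

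The first point is that width parameters differ: $G$ versus its $2$-subdivision $G_2$. Subdividing does not increase treewidth beyond $\max(\tw(G),1)$, and any decomposition of $G$ is converted by attaching, for each edge $uv$, a bag $\{u,v,x_{uv}\}$. So the widths agree up to an additive constant. This transfers the algorithm (run on $G$ recovered from $G_2$) and the lower bound in both directions.

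The second point is the main obstacle: confirming that the cited lower bounds are stated for the same parameter and in a form that survives these translations. The subtle one is distance-$d$ domination, where I must ensure that the base $2d+1$ in the cited work matches $a$ exactly rather than up to constants. I would first check the reduction of~\cite{BorradaileL16}/Katsikarelis et al.\ as stated in the literature. Only if a mismatch appears would I adapt their gadget construction.
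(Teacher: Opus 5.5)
Your proposal is correct and follows the same route as the paper, which gives no separate argument but obtains the statement by combining \cref{lemma:ds:corresponding:problems} with the cited results (Borradaile and Le already prove both the $(2d+1)^t$ algorithm and the matching SETH bound with base exactly $2d+1$, so no gadget adaptation is needed). One simplification: for $a=5$, the algorithmic part on $2$-subdivided graphs follows directly from the general $a=5$ algorithm, so only the lower bound needs your translation of decompositions from $G$ to $G_2$.
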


We refine \cref{lemma:ds:lit:seth} by including also even distances $a\geq 4$
	and by considering the restriction of $a$-\DomSet to $b$-subdivided graphs,
	beyond the case $a=5$ and $b=2$.
We determine the optimal base of the exponent for all integers $a$ and $b$.
As it turns out, the optimal base depends on $a$ and $b$ in a very subtle way.

\begin{theorem}[\Cref{section:ds:treewdith}]
\label{lemma:i:ds:tw:domination}
Let $\aaa,\bbb$ integers with $\gcd(\aaa,\bbb)=c$ and $c \aac=\aaa$ and $c \bbc=\bbb$.
Let $n$ be the number of vertices of the input graph.
Assume \SETH, $\varepsilon>0$, and that a tree decomposition of width $t$ is part of the input.
\begin{itemize}
\item
If $\gcd(\aaa,\bbb)$ is odd:
If $\aac=1$, $\aaa$-$\DS(\GG_{\bbb})$ is in \p,
else
	$\aaa$-$\DS(\GG_{\bbb})$ can be solved in time $a^{t} \cdot n^{O(1)}$
	but not in $(a-\varepsilon)^{\pw(G)} \cdot n^{O(1)}$.
\item
If $\gcd(\aaa,\bbb)$ is even:
If $\aac\in\{1,2\}$, $\aaa$-$\DS(\GG_{\bbb})$ is in \p,
else
	$\aaa$-$\DS(\GG_{\bbb})$ can be solved in time $(2a)^{t} \cdot n^{O(1)}$
	but not $(2a-\varepsilon)^{\pw(G)} \cdot n^{O(1)}$.
\end{itemize}
\end{theorem}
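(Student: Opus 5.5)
Throughout, write $a'=ca$ and $b'=cb$ as in the statement. The plan follows the independent-set analogue, \cref{i:lemma:is:tw:summary}, in three steps.

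\textbf{Step 1: normalize via symmetries.} I will show that the optimum value of $a'$-$\DS(\GG_{b'})$ depends only on $(a,b)$ and on the parity of $c$. Concretely, I want to relate $\ogamma_{ca}(G_{cb})$ to $\ogamma_{a}(G_b)$ when $c$ is odd, and to $\ogamma_{2a}(G_{2b})$ when $c$ is even, where $G_b$ is the $b$-subdivision of a base graph $G'$.

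The tool is the discretization and rounding argument of Hartmann et al.\ for $\delta$-covering. A \dombarset{a'} of $G_{b'}$ corresponds to a $\frac{a'}{2b'}$-cover of $G'$ whose points lie at positions in $\frac{1}{b'}\mathbb{Z}$ along the edges. One then shifts points to a canonical grid: multiples of $\frac1b$ when $c$ is odd, and multiples of $\frac{1}{2b}$ when $c$ is even, because there half-integral offsets are forced. Both this shift and its converse preserve walk domination and cardinality, so the optimum depends only on $a/b$ and the parity class.

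Once this is established, the polynomial cases are easy. For $a=1$ we can take one point per grid location of spacing $\frac1b$ greedily, which gives a closed-form value. For $c$ even and $a\in\{1,2\}$ the problem reduces to \cref{lemma:ds:p} (the cases $a\in\{2,4\}$ on $2$-subdivided graphs), scaled via the same equivalence.

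\textbf{Step 2: upper bound.} I will design a dynamic program over a nice tree decomposition of $G'$, not of $G_{b'}$. Subdivision paths are handled inside the bags by a precomputed transfer function per edge.

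For each bag vertex $v$, the state is the signed distance information to the nearest solution point. It records either "a solution point at distance $i$ already covers $v$" or "$v$ still needs a point within distance $j$ through some unprocessed edge", with distances measured on the canonical grid. In the odd case this gives $a$ classes, as in the classical $(2d+1)$ distance-domination DP with $a=2d+1$. In the even case the grid is twice as fine, so there are $2a$ states.

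Join nodes need the fast subset-convolution / min-sum trick over ordered distance states, as in Borradaile--Le, to achieve $a^t$ and $(2a)^t$ rather than a quadratic blow-up. Edge-introduce nodes evaluate the subdivided path between the two endpoint states in $O(b')$ time.

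\textbf{Step 3: lower bound (expected main obstacle).} Reduce from SAT via the standard SETH framework. Group the variables into blocks encoded by $a$ (respectively $2a$) states per "path" of a pathwidth-bounded gadget chain, following Lokshtanov--Marx--Saurabh and the construction behind \cref{lemma:ds:lit:seth}.

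Every gadget must be realized within $b'$-subdivided graphs. So each state must correspond to the offset of the nearest solution point along a long subdivided path, with offsets taken in $\{0,\dots,a-1\}$ times the grid unit, or $2a$ offsets in the even case. The delicate parts are the following.
\begin{enumerate}
\item Build clause gadgets that dominate iff some state matches, using only subdivided edges and short-cycle-free structure.
\item Argue that no cheating solution can exceed the budget exactly. This uses a counting argument that each path segment needs at least a fixed number of points, with equality only for the $a$ (resp.\ $2a$) canonical configurations.
\end{enumerate}

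I would first build the odd case with $c=1$, and then obtain general $c$ from Step 1's equivalence. The even case is handled by working in $G_{2b}$ with distance $2a$, where the additional $a$ states arise from the half-offsets.

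The tightness proof for the even case, showing that all $2a$ offsets are realizable and mutually distinguishable within the subdivision constraint, is where I expect most of the work.
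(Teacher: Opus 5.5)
Your Steps 1 and 2 broadly follow the paper: \cref{lemma:ds:sub} gives the normalization, and an $a^t$ resp.\ $(2a)^t$ tree-decomposition DP gives the upper bound. Step 3, however, is only a plan. You propose to build SETH gadgets directly inside $b'$-subdivided graphs for every admissible $b'$, and you leave open exactly the hard part, so neither lower bound is established.

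The missing idea is the translation identity $\ogamma_a(G_b)+|E(G)|=\ogamma_a(G_{a+b})$ of \cref{lemma:trans:ogamma}. It follows from the simplicity-preserving covering translation (\cref{lemma:cover:translate:simplicity}) through the correspondence between walk dominating sets of $G_b$ and $b$-simple $\tfrac{a}{2b}$-covers of $G$. Since $\gcd(a,b)=1$, there are integers $x\ge 0$, $y\ge 1$ with $yb=1+xa$. Applying the identity $x$ times gives $\ogamma_a(G_{yb})=\ogamma_a(G)+x|E(G)|$, and $G_{yb}=(G_y)_b\in\GG_b$ has polynomial size and pathwidth $\pw(G)+O(1)$. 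Hence an $(a-\varepsilon)^{\pw(G)}\cdot n^{O(1)}$ algorithm for $a$-$\DS(\GG_b)$ would give one for $a$-\DS on arbitrary graphs, with budget $k+x|E(G)|$. For even $c$, the same computation with step $2a$ and $2+2ax=2yb$ carries $2a$-$\DS(\GG_2)$ into $2a$-$\DS(\GG_{2b})$.

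This collapses Step 3 to two base cases with no large-subdivision constraint. The first is $a$-\DS on general graphs, which \cref{lemma:ds:lit:seth} covers only for odd $a$ and $a=2$. The second is $2a$-\DS on $2$-subdivided graphs for $a\ge 3$. The even-distance base cases are not in the literature. The paper supplies them in \cref{lemma:dom:lb:6}, by a reduction from Lampis' pathwidth CSP in which gadgets may attach only at even positions of the variable paths.

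The same missing tool leaves a hole in your polynomial cases. The Step 1 equivalence preserves the ratio $a'/b'$. So for even $c$, $a\in\{1,2\}$ and $b>1$, it cannot move $2a$-$\DS(\GG_{2b})$ to $\GG_2$, where \cref{lemma:ds:p} applies. You need the translation, or the polynomial algorithm for unit-fraction covering via $\ogamma_{2a}(G_{2b})=\cover{\tfrac{a}{2b}}(G)$.

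There is also a smaller gap in Step 2. Your analogy with the $(2d+1)$-distance-domination DP covers only odd $a$. For odd $c$ and even $a$, a direct distance labelling needs $a+1$ labels, because seen and unseen distance exactly $\frac a2$ must be distinguished. Getting down to $a$ labels requires characterization (D2), with a single label $\frac a2$ that only asserts no neighbour also carries $\frac a2$. This is the new DP of \cref{lemma:ds:classic}.
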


The proof of \cref{lemma:i:ds:tw:domination}
	heavily uses hidden symmetries of \dombarsets{a} on $b$-subdivided graphs,
	which are of similar nature as for independent sets.
Such symmetries were explored first for a continuous version of \dombarset{a}~\cite{HartmannLW22}.
We show that these results hold in similar form for \dombarset{a} as well.

\smallskip

Regarding distance-$d$ domination, our results so far imply the following.

\begin{corollary}
Let $n$ be the number of vertices of the input graph.
Finding a minimum distance-$d$ dominating set in $b$-subdivided graphs
\begin{itemize}
\item is polynomial time solvable if $b$ is a multiple of $2d+1$, otherwise is \np-hard;
\item if $b$ is not a multiple of $2d+1$, with $\gcd(2d+1,b)=c$
	can be solved in time $((2d+1)/c)^t \cdot n^{O(1)}$ if a tree decomposition of width $t$ is part of the input,
	and, assuming \SETH, cannot be solved in time $((2d+1)/c-\varepsilon)^{\pw(G)} \cdot n^{O(1)}$; and
\item fixed-parameter tractable for the parameter solution size
	if $\frac{2d+1}{b} < 3$; and \wtwo-hard for all other values of $d,b$.
\end{itemize}
\end{corollary}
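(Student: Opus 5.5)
The corollary follows by specializing the earlier results to $a' = 2d+1$ and $b' = b$, using the fourth item of \cref{lemma:ds:corresponding:problems}. By that observation, on a graph without isolated vertices a distance-$d$ dominating set is exactly a \dombarset{(2d+1)}. We may assume the input $b$-subdivided graph has no isolated vertices, since such vertices must be in any solution and can be removed while adjusting $k$. Finding a minimum distance-$d$ dominating set on $\GG_b$ is therefore the same problem as $(2d+1)$-$\DS(\GG_b)$, and the three items are read off from \cref{lemma:ds:param} and \cref{lemma:i:ds:tw:domination} with $a'=2d+1$, $b'=b$.

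\textbf{First item.} By \cref{lemma:ds:param}, the problem is polynomial when $b = c(2d+1)$, or when $b = c\tfrac{2d+1}{2}$ with $\tfrac{2d+1}{2}$ an even integer. The second condition never holds, because $2d+1$ is odd, so $\tfrac{2d+1}{2}$ is not an integer. Hence the problem is polynomial exactly when $(2d+1)\mid b$, and \np-hard otherwise.

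\textbf{Second item.} Since $2d+1$ is odd, $c=\gcd(2d+1,b)$ is odd, so the first bullet of \cref{lemma:i:ds:tw:domination} applies with reduced numerator $a = (2d+1)/c$.
\begin{itemize}
\item If $(2d+1)\nmid b$, then $c<2d+1$, so $a\neq 1$.
\item Moreover $a\ge 3$, because $a$ is odd and greater than $1$.
\item That bullet then gives the upper bound $((2d+1)/c)^t\cdot n^{O(1)}$ and the SETH lower bound $((2d+1)/c-\varepsilon)^{\pw(G)}\cdot n^{O(1)}$.
\end{itemize}

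\textbf{Third item.} This is the FPT/\wtwo dichotomy of \cref{lemma:ds:param} with $a/b = (2d+1)/b$: the problem is FPT in the solution size when $(2d+1)/b<3$, and \wtwo-hard otherwise.

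The only step needing care is the reduction from distance-$d$ domination to walk domination. One must check that the observation's equivalence holds for every graph in $\GG_b$ and not just generically, and handle the isolated-vertex case noted above. The rest is parity bookkeeping: $2d+1$ odd rules out the "even half" cases and forces $\gcd(2d+1,b)$ into the odd branch of \cref{lemma:i:ds:tw:domination}.
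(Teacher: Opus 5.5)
Your proposal is correct and matches the paper's intended derivation. The paper states this corollary as immediate from \cref{lemma:ds:param} and \cref{lemma:i:ds:tw:domination}, specialized to $a'=2d+1$ through the equivalence in \cref{lemma:ds:corresponding:problems}; the oddness of $2d+1$ rules out the half-integer polynomial case and forces $\gcd(2d+1,b)$ into the odd branch, and your handling of isolated vertices is a correct detail the paper leaves implicit.
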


\subparagraph{Rational distances}

The continuous version of \dombarset{a},
	as introduced by Shier~\cite{Shier1977},
	is known as \emph{$\delta$-covering} for a positive real distance~$\delta$.
Similarly to $\delta$-dispersion,
	we allow the selection of \emph{points} that may be on a vertex or somewhere on the continuum of an edge.
We fix the length of the edges to $1$, which defines a distance relation of the points in the graph $G$.
A \emph{$\delta$-cover} is a set of points $S$
	that covers every point $p$ in the graph,
	that is there is a point $q \in S$ such that $p,q$ have distance at most $\delta$;
	as studied for example in~\cite{MegiddoTamir1983,Tamir1987}
	and receiving renewed attention lately \cite{HartmannJanssen2024,HartmannLW22}.
The problem $\delta$-\covering is the decision version
	asking for a $\delta$-cover of size at most $k$, for some budget $k$ given in the input.
The notion of a $\frac{a}{b}$-cover is quite similar to an \dombarset{a} of a $b$-subdivided graph;
	though the connection is more subtle compared to the independence problems.
Based on a discretization argument by Hartmann et al.~\cite{HartmannLW22} we show that
	$\frac{a}{b}$-covers are in one-to-one correspondence to \dombarsets{2a} on $b$-subdivided graphs,
	if $b$ is even;
	while if $b$ is odd,
	$\frac{a}{b}$-covers are in one-to-one correspondence to \dombarsets{4a} on $2b$-subdivided graphs.
Particularly, we obtain the polynomial time solvable cases of $\delta$-covering
	based on this connection.
Finding a minimum $\delta$-cover is polynomial time solvable if $\delta$ is a unit-fraction
	and otherwise $\np$-hard~\cite{HartmannLW22}.
By the same work, $\delta$-\covering parameterized by the solution size is \fpt in case $\delta < \frac{3}{2}$
	an otherwise \wtwo-hard.
(We observe a similar dichotomy for $a$-independent sets on $b$-subdivided graphs,
	as stated in \cref{lemma:ds:param}.)

With such connections and \cref{lemma:i:ds:tw:domination} at our hands,
	we can turn the results on $a$-independent set on $b$-subdivided graphs
	into tight results for $\delta$-\covering on bounded treewidth graphs
	for every fixed \emph{rational} $\delta$.

\begin{theorem}[\Cref{section:ds:treewdith}]
\label{lemma:i:ds:tw:covering}
Let $\aaa,\bbb$ integers with $\gcd(\aaa,\bbb)=c$ and $c \aac=\aaa$ and $c \bbc=\bbb$.
Let $n$ be the number of vertices of the input graph.
Assume \SETH, $\varepsilon>0$, and that a tree decomposition of width $t$ is part of the input.
\begin{itemize}
\item
$\frac{\aaa}{\bbb}$-\covering for $\aac =1$ is in $\p$;
	if $\aac \geq 2$ and $\bbc$ is odd,
	can be solved in time $(4\aac)^{t} \cdot n^{O(1)}$
	but not in $(4\aac-\varepsilon)^{\pw(G)} \cdot n^{O(1)}$;
	if $\aaa \geq 2$ and $\bbc$ is even,
	can be solved in time $(2\aac)^{t} \cdot n^{O(1)}$
	but not in $(2\aac-\varepsilon)^{\pw(G)} \cdot n^{O(1)}$.
\end{itemize}
\end{theorem}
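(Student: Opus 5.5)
Note that $\frac{a'}{b'}=\frac{a}{b}$ with $a,b$ coprime, so every statement depends only on the reduced fraction $\delta=\frac{a}{b}$. The plan is to translate $\delta$-\covering into a \DomSet instance on a subdivided graph and then apply \cref{lemma:i:ds:tw:domination}. The first step is the discretization lemma based on Hartmann et al.~\cite{HartmannLW22}, stated in the introduction. If $b$ is even, $\frac{a}{b}$-covers of $G$ correspond (with equal cardinality, and in both directions in polynomial time) to \dombarsets{2a} of the $b$-subdivision $G_b$. If $b$ is odd, $\frac{a}{b}$-covers correspond to \dombarsets{4a} of the $2b$-subdivision $G_{2b}$. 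Subdividing edges does not increase the treewidth or pathwidth (for $t\ge 2$, resp.\ up to an additive constant), and it only blows up $n$ polynomially. Hence algorithms and lower bounds transfer in both directions, provided the lower-bound instances of \cref{lemma:i:ds:tw:domination} lie in the image of the correspondence, i.e.\ are of the form $G'_{b'}$ for an arbitrary graph $G'$. This holds by definition of $\GG_{b'}$.

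Next, compute the gcd in each case and read off \cref{lemma:i:ds:tw:domination}.
\begin{itemize}
\item \emph{Case $b$ even.} Then $a$ is odd and $\gcd(2a,b)=2\gcd(a,b/2)=2$, which is even. The reduced numerator is $a$. For $a=1$ the problem is in \p, and for $a\ge 3$ the bound is $(2a)^t$ with a matching SETH lower bound. The theorem's phrasing treats $a'\ge 2$ with $b$ even; since $a$ is odd there, this is the case $a\ge 3$.
\item \emph{Case $b$ odd.} We have $\gcd(4a,2b)=2\gcd(2a,b)=2\gcd(a,b)=2$, which is even, with reduced numerator $2a$. If $a=1$, the reduced numerator is $2\in\{1,2\}$, so the problem is in \p. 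If $a\ge 2$, the reduced numerator $2a\ge 4>2$ gives running time $(2\cdot 2a)^t=(4a)^t$ and the lower bound $(4a-\varepsilon)^{\pw}$.
\end{itemize}
This covers all cases, since $a=1$ is in \p regardless of parity.

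The main obstacle is the discretization correspondence itself, particularly for odd $b$. There, optimal cover points may be forced onto half-integral positions of the $b$-subdivision. I would prove that some optimal $\frac{a}{b}$-cover has all points at multiples of $\frac{1}{2b}$ (for odd $b$), resp.\ $\frac{1}{b}$ (for even $b$), along edges. The plan is to shift points via the standard argument of \cite{HartmannLW22}: take a minimal cover and push each point toward a grid position while preserving coverage, using that all critical distances are multiples of the grid unit.

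Then one verifies that, on the subdivided graph, a set of grid vertices $D$ covers all points within distance $\delta$ iff every edge of $G_{b'}$ lies on a walk of length at most $a'$ between points of $D$. An edge segment is covered from both sides exactly when the two covering balls meet across it, i.e.\ the total walk length is at most $2\delta$ in original units, which is $2a$ steps of $G_b$ or $4a$ steps of $G_{2b}$. For the odd case, the parity point is that $2\delta\cdot b=2a$ is an integer, but covering midpoints requires the doubled grid. I would check that the correspondence is exact in both directions, so that lower bounds transfer without loss.
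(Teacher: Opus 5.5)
Your proposal is correct and follows the paper's proof. Both rewrite $\frac{a}{b}$-\covering via \cref{lemma:ds:to:cov}: as $2a$-$\DS(\GG_{b})$ when $b$ is even, and as $4a$-$\DS(\GG_{2b})$ when $b$ is odd. Both then observe that the greatest common divisor is $2$ in each case and read the bounds $(2a)^t$ and $(4a)^t$, and the polynomial cases for $a=1$, off the walk-domination classification (\cref{lemma:i:ds:tw:domination}).

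One small remark on the even-$b$ case: the statement's condition ``$a'\ge 2$'' is a typo for $a\ge 2$, since for example $\delta=\frac{2}{4}=\frac{1}{2}$ is polynomial. Your sentence ``$a'\ge 2$, hence $a\ge 3$'' does not follow as written. Your final remark that $a=1$ is in \p regardless of parity does resolve the issue correctly.
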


\subparagraph{Irrational distances}

By \cref{lemma:i:ds:tw:covering}, for every fixed \emph{rational} $\delta = \frac{a}{b}$,
	finding a minimum $\frac{a}{b}$-cover is fixed-parameter tractable parameterized by the treewidth of the input graph.
As is the case for $\delta$-covering, this is not necessarily true for \emph{irrational} $\delta$.
Deciding $\delta$-\covering can be as hard as outputting the digits of $\delta$,
	which for some $\delta$ is not even computable.
For a path of length $\ell$, there is a covering set of size $k$
	if and only if $\delta \geq \frac{\ell}{2k}$.
Hence it is reasonable to consider only $\delta$ which are efficiently comparable.

For every fixed \emph{efficiently comparable} $\delta$,
	it is possible to find a minimum $\delta$-cover in time $n^{O(\tw(G))}$,
	i.e., there is an \xp algorithm parameterized by treewidth.
This follows from a rounding procedure by Tamir~\cite{Tamir1987},
	by which for an $n$-vertex graph the covering number of $\delta$
	equals to the covering number of the largest rational $\fracxy \leq \delta$ with $x \leq 2n$.
Using that $\delta$ is efficiently comparable, we can find this rational in polynomial time
	since $x < 2n$ and $y$ is in the order of $n$ for a fixed $\delta$.
Then it remains to apply the algorithm of \cref{lemma:i:ds:tw:covering} to
	find a minimum $\fracxy$-covering set.

In contrast, the above algorithm cannot be improved to a fixed-parameter tractable
	under standard complexity assumptions.
As we show, there is a very explicitly defined and efficiently comparable irrational
	$\delta' = (2\sum_{j =1}^{\infty} 2^{-2^j} )^{-1}\approx 0.395043$,
	for which computing the $\delta'$-covering number is \wone-hard parameterized by the treewidth
	(in fact even for pathwidth),
	and an according lower bound under \ETH.

\newcommand{\lemmaTextCoverIrrational}{
There is an efficiently comparable irrational $\delta'$
	such that $\delta'$-\covering is \wone-hard in the pathwidth $\pw(G)$
	of the $n$ vertex input graph $G$
	and, assuming \ETH, cannot be solved in time $f(\pw(G)) \cdot {n}^{o(\pw(G))}$
	for any computable function $f$.}

\begin{theorem}[\cref{section:cov:irrational}]
\label{lemma:cover:irrational}
\lemmaTextCoverIrrational
\end{theorem}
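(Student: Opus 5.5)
We reduce from the dispersion result, Theorem~\ref{i:lemma:disp:wone:pw}, which is proved for the irrational $\delta = (4\sum_{j\ge 1}2^{-2^j})^{-1}$. We take $\delta'=\delta/2$. Two graphs $G$ and $G'$ with the same pathwidth (up to an additive constant) will be related by a covering/packing duality. The approach is to transfer the whole construction rather than the statement: we re-run the \wone-hardness reduction behind Theorem~\ref{i:lemma:disp:wone:pw}, but with covering gadgets in place of dispersion gadgets.

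\textbf{Source of hardness.} We start from the same source problem as the dispersion proof, presumably \GridTiling or \ColorfulClique with $k$ groups. Each of the $k$ choices is encoded by a long path, or ``selection gadget'', whose attachment points have lengths tuned to the digits of $\sum_j 2^{-2^j}$. The key numerical fact is the following. For a path of length $\ell$, $k$ points suffice to $\delta'$-cover it if and only if $\ell\le 2k\delta'$, that is, if and only if $\ell\le k\delta$. This is exactly the threshold for a $\delta$-dispersed set of size $k+1$ on that path. Hence the ``slack'' that a selection path has at a given budget is the same quantity that drives the dispersion reduction. We also rely on the sparse binary expansion of $1/\delta$ (ones at positions $2^j$): it makes $\lfloor m/\delta\rfloor$-type slacks for suitable integers $m$ behave like independent bits. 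That is how $n$ different choices can be distinguished with only $O(1)$ paths per group, which keeps the pathwidth $O(k)$.

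\textbf{Construction and local analysis.} First, we replace each dispersion gadget by a covering gadget. Selection paths become cycles or paths whose optimal $\delta'$-covers have a one-parameter family of shifts, and the chosen shift encodes a value. Second, we connect gadgets by short paths so that two adjacent covers can jointly cover a connector of a given length only if their shifts satisfy the $\le$ (or equality) constraint. Third, we set the budget to the sum of the individual lower bounds. Under this budget every gadget is tight, so a solution exists exactly when all shift constraints can be met, i.e.\ when the source instance is a yes-instance. Finally, since all lengths are rationals with polynomially bounded numerators and denominators (truncations of the series), we realize them by subdividing edges. By Tamir's rounding, covering with $\delta'$ on an $n$-vertex graph equals covering with the largest rational $x/y\le\delta'$ with $x\le 2n$. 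This also gives efficient comparability, since $\delta'$ has an explicitly computable binary expansion.

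\textbf{Conclusion and main obstacle.} The pathwidth is $O(k)$ because we sweep column by column through gadgets of constant width. This gives \wone-hardness and, from the $f(k)n^{o(k)}$ \ETH\ lower bound for the source problem, the claimed bound. The main obstacle is the tightness argument in the local analysis. Covering constraints are ``one-sided the other way'' compared to dispersion: covering pushes points apart only until gaps reach $2\delta'$. We must show that no cover can save a point in one gadget by exploiting overlap across a connector, and that any deviation from canonical shifts costs at least one extra point. I would first try to prove an exchange lemma: any optimal cover can be normalized so that each gadget contains exactly its canonical number of points. The irrationality of $\delta'$, via the irrational slack, should then make the fractional leftover of each gadget determine the encoded value uniquely.
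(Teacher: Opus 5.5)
There is a genuine gap. Your plan never explains how the yes/no answer of the constructed instance can be certified at the irrational radius $\delta'$ itself. The three ingredients you lean on are not a mechanism: gadgets with lengths ``tuned to the digits'' of $\sum_j 2^{-2^j}$, slacks that ``behave like independent bits'', and an ``irrational slack'' that pins down the encoded value. The exchange/tightness lemma you defer to the end is exactly the whole difficulty.

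The missing idea is a sandwich argument. First build a reduction (the paper uses $\le$-\textsc{Grid Tiling}, giving pathwidth $O(k)$) whose output has the \emph{same} answer for two consecutive integer radii $c_i$ and $c_i-1$. This is in the normalization where a $\delta$-cover$^\star$ is a $\delta/2$-cover. The leeway is built into ordinary integer-distance gadgets, not into digits of $\delta'$.

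Then apply a single transformation: subdivide every edge $a_{i-1}$ times and raise the budget by $b_{i-1}|E(G')|$. This is justified for both radii at once, by the subdivision identity and by $b_{i-1}$-fold use of the translation identity $\cover{\delta}(G)+|E(G)|=\cover{\tfrac{\delta}{1+2\delta}}(G)$. It sends radius $c_i$ to $\delta_i=a_i/b_i$ and radius $c_i-1$ to $\gamma_i=(a_i-a_{i-1})/(b_i-b_{i-1})$. Since $\gamma_i<\delta<\delta_i$ and the covering number is monotone in the radius, the answer at $\delta$ is forced; for ordinary covering this is the radius $\delta'=\delta/2$.

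The double-exponential choice $c_i=2^{2^i}$ has only two jobs. It makes $a_{i-1},b_{i-1}$ polynomial in $c_i$, so the subdivision has polynomial size, and it makes $\delta$ efficiently comparable.

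There are also secondary problems.
\begin{itemize}
\item Your opening appeal to a ``covering/packing duality'' between $\delta$-dispersion and $\delta/2$-covering has no counterpart on general graphs. So the dispersion theorem cannot serve as a black box, and the paper indeed restarts the covering chain from grid tiling.
\item Tamir's rounding does not rescue the construction. It replaces $\delta'$ by the largest $x/y\le\delta'$ with $x\le 2n$, which depends on the size $n$ of your output graph. You would then need hardness for a rational radius with numerator $\Theta(n)$ that your own construction determines; this is circular and not supplied. The fixed-rational results cannot help, since for every fixed rational the problem is FPT in treewidth.
\item ``Explicitly computable binary expansion'' does not by itself give efficient comparability. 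You need that a rational $y/x$ must disagree with $1/\delta$ within $O(\log x+\log y)$ bits. This comes from the ever longer zero runs in the binary expansion of $1/\delta$.
\end{itemize}
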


\subsection{Organization of this Work}

\cref{section:intro} to \cref{section:overview:domination} have the form of an extended abstract.
Eventually, we provide all details and full proofs in \cref{section:detaill:independence}
	and \cref{section:domination:and:covering} regarding independence and domination, respectively.

In more detail, the extended abstract part continues with preliminaries in \cref{section:preliminaries}.
Then, \cref{section:ind:and:disp} to \cref{section:irrational} concern independent set and covering.
That is \cref{section:ind:and:disp} explores the relationship between independent sets and dispersed sets;
\cref{section:ind:parameterized} studies the parameterized complexity of independent set with the solution size as parameter;
\cref{section:lower:bound} derives upper and lower bounds under SETH for independence;
and \cref{section:irrational} derives the hardness result of $\delta$-\dispersion for an irrational $\delta$.
Finally, \cref{section:overview:domination} overviews dominating set and covering.

\section{Preliminaries}
\label{section:preliminaries}
All our graphs are simple and undirected.
Usually we assume that our graphs as input do not contain isolated vertices,
	as they can easily be preprocessed for the studied problems.
	
\subparagraph{$b$-Subdivision}
For a graph $G$ and an integer $b$,
	the \emph{$b$-subdivision} of $G$, denoted as $G_b$,
	results from $G$ by replacing every edge $\{u,v\} \in E(G)$ by a $u,v$-path
	of length $b$.
For example, $G=G_1$.
For an edge $\{u,v\} \in E(G)$ and $\beta \in \{0,\dots,b\}$,
	let $\vv(u,v,\beta)$ be the unique vertex on the unique shortest $u,v$-path in $G_b$
	with distance $\beta$ to $u$ and distance $b-\beta$ to $v$.
Let $\GG_b$ be the class of every graph $H$ that is the $b$-subdivision $G_b$ of a graph $G$.

\subparagraph{Point space}
For a graph $G$, we assume that its edges have unit length.
Let $p(u,v,\lambda)$, for an edge $\{u,v\}\in E(G)$ and a real $\lambda \in [0,1]$,
	denote the \emph{point} on the edge $\{u,v\}$ with distance $\lambda$ to $u$
	and distance $1-\lambda$ to $v$.
Hence $p(u,v,\lambda)$ coincides with $p(u,v,1-\lambda)$,
	and the point $p(u,v,0)$ coincides with the vertex $u$.
By $P(G)$ we denote the set of points of a graph $G$.
Let $d(p,q)$, for two points $p,q \in P(G)$, denote the distance of $p,q$
	of the underlying metric space on $P(G)$.

\subparagraph{Graph parameters}

We use the following well known relation of graph parameters.
By $\nu(G)$ we denote the maximum size of a matching in a graph $G$.
A tree decomposition $(T,\beta)$ of $G$
	consists of a tree $T$
	and a mapping $\beta$ from the vertices of $T$ (referred to as \emph{nodes})
	to subsets of $V(G)$ (referred to as \emph{bags}),
	such that (1) $\bigcup_{n \in V(T)}=V(G)$, (2) $\{u,v\}\in E(G)$ implies a node $n \in V(T)$ with $\{u,v\}\subseteq \beta(n)$, and (3) for nodes $n_1,n_2,n_3\in V(T)$ where $n_2$ lies on the path from $n_1$ to $n_2$,
	we have $\beta(n_1) \cap \beta(n_3) \subseteq \beta(n_2)$.
The width of a tree decomposition is the maximum of $|\beta(n)|-1$ for all $n \in V(T)$.
A path decomposition of $G$ is a tree decomposition $(P,\beta)$ where $P$ is a path.
We let $(\beta(n_1),\dots,\beta(n_t))$ denote the path decomposition using a path $(n_1,\dots,n_t)$.
The \emph{treewidth} $\tw(G)$ of a graph is the minimum width of a tree decomposition,
	and likewise the \emph{pathwidth} $\pw(G)$ is the minimum width of a path decomposition.
	
It is well known that $2 \nu(G) \geq \pw(G) \geq \tw(G) $. %
Hence a parameterized algorithm with the treewidth as parameter is more general result
	than using the pathwidth (assuming a respective decomposition is given in the input).
On the other hand, a lower bound for the parameter pathwidth also holds for the parameter treewidth.

\subparagraph{Efficiently comparable real}
A real $\delta$ is \emph{efficiently comparable} if
	there is an algorithm that, given $\fracxyinv$,
	decides whether $\fracxy \leq \delta$ in time polynomial in $\log x + \log y$.

\section{Independent and Dispersed Sets}
\label{section:ind:and:disp}
\ineditnote{\cref{section:ds:and:covering}}

This section explores the close relationship between
	$a$-independent sets and $\delta$-dispersed sets.
Our goal is to establish the following two transformations of $\delta$-dispersed sets
	also for independent sets.
The first relates the dispersion number to the subdividing the graph.

\begin{lemma}[\cite{GrigorievHLW21}]
\label{lemma:disp:subdivide}
For every real $\delta > 0$ and integer $c \geq 1$,
$\disp{\delta}(G) = \disp{c\delta}(G_c)$.
\end{lemma}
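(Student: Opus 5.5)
The plan is to exhibit a distance-scaling bijection between the point spaces $P(G)$ and $P(G_c)$, and then observe that $\delta$-dispersed sets in $G$ correspond exactly to $c\delta$-dispersed sets in $G_c$.

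First, define the map $\phi\colon P(G)\to P(G_c)$. Take a point $p(u,v,\lambda)$ on an edge $\{u,v\}\in E(G)$. In $G_c$ this edge is replaced by the $u,v$-path through the vertices $\vv(u,v,0),\dots,\vv(u,v,c)$. We send $p(u,v,\lambda)$ to the point on this path at distance $c\lambda$ from $u$. Concretely, write $c\lambda=i+\mu$ with $i\in\{0,\dots,c\}$ and $\mu\in[0,1)$ (setting $\mu=0$ if $i=c$). The image is then the point at parameter $\mu$ on the edge $\{\vv(u,v,i),\vv(u,v,i+1)\}$.

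We check that $\phi$ is well defined and bijective.
\begin{itemize}
\item The descriptions $p(u,v,\lambda)$ and $p(v,u,1-\lambda)$ of the same point yield the same image, since $c(1-\lambda)=c-c\lambda$ measures distance from the other end of the same path.
\item Vertices of $G$ map to the corresponding original vertices of $G_c$.
\item Every point of $G_c$ lies on exactly one subdivided path, or is an original vertex. Hence $\phi$ is a bijection.
\end{itemize}

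The key step is to show that $d_{G_c}(\phi(p),\phi(q))=c\cdot d_G(p,q)$ for all $p,q\in P(G)$. The metric on $P(G)$ is the shortest-path (geodesic) metric of the metric graph with unit edge lengths. The metric graph of $G_c$ with unit edges is isometric to the metric graph of $G$ with every edge of length $c$, because a path of $c$ unit edges is just a length-$c$ segment. Uniformly scaling all edge lengths by $c$ scales every curve length, and hence every geodesic distance, by exactly $c$.

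To make this rigorous, we would argue via the standard formula for distances between points. For $p=p(u,v,\lambda)$ and $q=p(x,y,\mu)$ on distinct edges, $d(p,q)$ is the minimum over choices of endpoints $s\in\{u,v\}$ and $t\in\{x,y\}$ of the distance from $p$ to $s$, plus $d(s,t)$, plus the distance from $t$ to $q$; for points on the same edge one also compares with the direct distance along the edge. Each summand scales by $c$:
\begin{itemize}
\item the distances from $p$ to its endpoints become $c\lambda$ and $c(1-\lambda)$;
\item vertex distances satisfy $d_{G_c}(s,t)=c\,d_G(s,t)$, since shortest paths between original vertices in $G_c$ traverse whole subdivided paths.
\end{itemize}

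Given the scaling identity, a set $S$ is $\delta$-dispersed in $G$ if and only if $\phi(S)$ is $c\delta$-dispersed in $G_c$. Since $\phi$ is a bijection, maximum sizes agree, so $\disp{\delta}(G)=\disp{c\delta}(G_c)$. This holds for infinite values too, although for $\delta>0$ on finite graphs both sides are finite.

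The main obstacle is making the distance formula precise for points that are not vertices, especially two points on the same edge or on the same subdivided path. There one must check that the geodesic may stay inside the edge or leave through the endpoints, and that both options scale identically. We handle this by treating the two points on one edge as a separate case in the minimum.
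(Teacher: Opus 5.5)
Your proof is correct. The map sending $p(u,v,\lambda)$ to the point at distance $c\lambda$ from $u$ on the subdivided $u,v$-path is a well-defined bijection $P(G)\to P(G_c)$ that multiplies every distance by exactly $c$. This uses $d_{G_c}(s,t)=c\,d_G(s,t)$ for original vertices, which holds because inner subdivision vertices have degree $2$. Hence $\delta$-dispersed sets of $G$ correspond exactly to $c\delta$-dispersed sets of $G_c$.

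The paper gives no proof and only cites Grigoriev et al.; your rescaling argument is the standard justification for this lemma.
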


The second relates the dispersion number of the same graph but of different distances.
For certain distances the solution size differs by exactly one point for every edge.

\begin{lemma}[\cite{HartmannL22,thesis}]
\label{lemma:disp:translate}
$\disp{\delta}(G) + |E(G)| = \disp{\tfrac{\delta}{1+\delta}}(G)$
	for every real $\delta>0$ and graph $G$ without cycles of length $<\delta$.
\end{lemma}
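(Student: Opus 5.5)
We prove the two inequalities $\disp{\delta}(G)+|E(G)|\le \disp{\tfrac{\delta}{1+\delta}}(G)$ and $\disp{\tfrac{\delta}{1+\delta}}(G)\le \disp{\delta}(G)+|E(G)|$ separately, writing $\delta'=\tfrac{\delta}{1+\delta}$. The guiding picture is a local rescaling: shrinking every edge from length $1$ to length $\tfrac{1}{1+\delta}$ turns distances $\delta$ into $\delta'$. This frees a segment of length $\tfrac{\delta}{1+\delta}=\delta'$ on each edge, and one additional point fits there. The girth assumption (no cycle of length $<\delta$) ensures that shortest paths between points in the same ``local'' region are the obvious ones, so this local rescaling does not create unexpected short connections.

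\textbf{Lower bound.} Start from a $\delta$-dispersed set $S$. First normalize it: we may assume no point of $S$ lies in the open interior of an edge except in a canonical position. Concretely, orient each edge $\{u,v\}$, and whenever a point $p(u,v,\lambda)\in S$ lies inside the edge, slide it toward an endpoint as long as this keeps dispersion. Then map each point $p(u,v,\lambda)\in S$ to $p(u,v,\lambda/(1+\delta))$, measured from the chosen end. On each edge we insert one new point near the opposite end, at parameter $1-\delta'/2$-ish, adjusted so that it is at distance exactly $\ge\delta'$ from everything. To verify $\delta'$-dispersion, take any path between two image points. Each full edge traversed has length $1$, whereas in the original picture it corresponded to length $1+\delta$ scaled by $1/(1+\delta)$. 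A path of length $L$ in the new configuration between old points therefore corresponds to original length at least $(1+\delta)L$ minus the gaps. Here the girth condition is used: a path starting and ending on the same edge through a cycle has length $\ge\delta$, so it is never shorter than the direct route.

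\textbf{Upper bound.} Given a $\delta'$-dispersed set $S'$, show that each edge contains, in a suitable charging, at most one ``surplus'' point. Delete one point per edge (an edge-interior point if present, else charge the edge to a vertex point), and stretch back by the inverse map. The standard tool is a canonical form: there is a maximum $\delta'$-dispersed set in which, on every edge, the interior points form an arithmetic progression with gap exactly $\delta'$ pushed toward one endpoint. From that form the deletion-and-stretch step is a direct computation.

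\textbf{Main obstacle.} The main obstacle is making the upper bound rigorous. Points of $S'$ distributed across many edges near a high-degree vertex must be charged injectively to edges. I would handle this by assigning each point to the edge it lies in, or, if it is a vertex, to one incident edge chosen by an orientation in which every vertex has out-degree at most one among used edges. This is an acyclic-orientation/pseudoforest argument, and it is where the girth hypothesis must be invoked again. If that direct approach stalls, I would instead pass through \cref{lemma:disp:subdivide} to rational $\delta$ and a discrete $2b$-subdivision, and prove the identity combinatorially there.
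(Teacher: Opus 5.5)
\textbf{There is a genuine gap in each direction.} The skeleton is right: two inequalities, compression by $\rho=\tfrac{1}{1+\delta}$, and one extra point per edge in the freed room $1-\rho=\delta\rho=\delta'$.

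\textbf{Lower bound.} The only thing you actually check is the easy part: images of old points stay $\delta'$ apart. Keep vertex points fixed, since your orientation map would otherwise move $v=p(u,v,1)$ off the vertex. Then every path between two images is at least $\rho$ times as long as before. The real content is placing the $|E(G)|$ new points, and ``$1-\delta'/2$-ish, adjusted so that it is at distance $\ge\delta'$ from everything'' neither gives a rule nor shows that all new points can be placed simultaneously.
\begin{itemize}
\item On an edge carrying points of $S$, the new point should sit at distance $\dr_S(v,u)\rho$ from $v$. This makes $\dr_{S'}=\rho\,\dr_S$ in both directions, and the orientation plays no role.
\item On an edge $\{u,v\}$ whose closed edge carries no point of $S$, the side and offset must depend on $\dr_S$ at both endpoints, not on an orientation. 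The paper places the point at distance $\max\{\delta\rho/2,(\delta-\dr_S(u,w_u))\rho\}$ from the endpoint $u$ with $\dr_S(u,w_u)\le\dr_S(v,w_v)$.
\item Putting it $\delta\rho/2$ from the other end fails. Take $\delta>2$ and a point of $S$ at tiny distance $\varepsilon$ from $u$ whose image lies at $\varepsilon\rho$ from $u$. The two points are then at distance $1-\delta\rho/2+\varepsilon\rho<\delta\rho$.
\item New points on adjacent point-free edges also constrain one another through the shared vertex.
\end{itemize}
What you are missing is the local characterization the paper's proof rests on. If $G$ has no cycle of length $<\delta$, then $S$ is $\delta$-dispersed iff it is edge internally $\delta$-dispersed and $\dr_S(u,v)+\dr_S(u,w)\ge\delta$ for all distinct edges $\{u,v\},\{u,w\}$ with $u\notin S$. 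This is where the girth hypothesis enters: it guarantees that the nearest points seen from $u$ along two different edges are distinct. It does not enter through your remark about cycles versus the direct route. With this characterization, correctness of $S'$ reduces to checking the vertex condition via a case analysis over directed edges that are positive ($\dr_{S'}\ge\rho\,\dr_S$), neutral, or negative.

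\textbf{Upper bound.} This is not yet an argument, and the charging you describe fails as stated. Take the path $a,b,c,d$ and $\delta=9$, so $\delta'=\tfrac{9}{10}$. The set $S'=\{a,\,p(a,b,0.95),\,p(c,d,0.05),\,d\}$ is a maximum $\delta'$-dispersed set, of size $4=\disp{9}(G)+|E(G)|$. Yet the closed edge $\{b,c\}$ contains no point of $S'$, so neither an interior point of it nor a vertex point incident to it can pay for it.
\begin{itemize}
\item A per-edge canonical form (interior points in arithmetic progression) cannot create points on such edges. You would first need a global normalization of maximum $\delta'$-dispersed sets, or a non-local charging.
\item After deleting and stretching by $1/\rho$, you must re-verify the vertex condition. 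This matters especially on edges whose only point was deleted, and at vertices whose point was deleted, where $\dr$ changes on every incident edge. That is not a direct computation.
\item The pseudoforest orientation and its use of girth are left unspecified.
\item The fallback through subdivisions only reaches rational $\delta$, whereas the lemma is claimed for every real $\delta>0$. The discrete identity it would need is no easier: the paper derives it from this continuous construction, not conversely.
\end{itemize}
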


In a key result later, we will apply \cref{lemma:disp:translate} multiple time, stated as follows.
(The proof thereof and of other statements marked with \refstar\ are deferred to \cref{section:detaill:independence} and \cref{section:domination:and:covering}.)

\newcommand{\lemmaTextDispTranslateB}{
$\disp{\delta}(G) = \disp{\tfrac{\delta}{1+b \cdot \delta}}(G) - b|E(G)| $
	for an integer $b\geq1$, real $\delta>0$ and graph $G$ without cycles of length $<\delta$.}
\begin{corollary}\hyperref[a:lemma:disp:translate:b]{\refstar}
\label{lemma:disp:translate:b}
\lemmaTextDispTranslateB
\end{corollary}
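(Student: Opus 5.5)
We prove \cref{lemma:disp:translate:b} by induction on $b$, applying \cref{lemma:disp:translate} once per step. Define $\delta_0=\delta$ and $\delta_{j}=\frac{\delta}{1+j\delta}$ for $j\ge 0$. The key algebraic fact is
\[
\frac{\delta_j}{1+\delta_j}=\frac{\delta/(1+j\delta)}{(1+j\delta+\delta)/(1+j\delta)}=\frac{\delta}{1+(j+1)\delta}=\delta_{j+1},
\]
so applying the map $x\mapsto \frac{x}{1+x}$ $b$ times to $\delta$ yields exactly $\frac{\delta}{1+b\delta}$.

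The claim for $b$ is $\disp{\delta}(G)=\disp{\delta_b}(G)-b|E(G)|$. The base case $b=1$ is exactly \cref{lemma:disp:translate}. For the inductive step, assume the claim for $b$ and apply \cref{lemma:disp:translate} with the real $\delta_b>0$ in place of $\delta$. This gives $\disp{\delta_b}(G)+|E(G)|=\disp{\delta_{b+1}}(G)$. Substituting into the induction hypothesis gives $\disp{\delta}(G)=\disp{\delta_{b+1}}(G)-(b+1)|E(G)|$, which is the claim for $b+1$.

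The only thing to check is the side condition of \cref{lemma:disp:translate} at each application: $G$ must have no cycle of length $<\delta_j$. Since $\delta_j\le\delta$ for every $j\ge0$, a cycle of length $<\delta_j$ would also have length $<\delta$. The hypothesis that $G$ has no cycle of length $<\delta$ therefore transfers to every intermediate distance, and each application is valid. The graph $G$ itself never changes, so $|E(G)|$ is the same in every step.
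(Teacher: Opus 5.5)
Your proof is correct and is essentially the paper's argument: induction on $b$, applying \cref{lemma:disp:translate} once per step together with the identity $\frac{\delta/(1+j\delta)}{1+\delta/(1+j\delta)}=\frac{\delta}{1+(j+1)\delta}$. The only difference is how the side condition is justified: you inherit it from $\delta_j\le\delta$, while the paper notes that the intermediate distance is below $1$; either suffices.
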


To obtain \cref{lemma:disp:subdivide} and \cref{lemma:disp:translate} in terms of $a$-independent sets,
	we consider certain normalized dispersed sets.
To that end, recall that a point $p(u,v,\lambda) \in P(G)$ is \emph{$c$-simple}
	if $\lambda$ is a multiple of $c$.
Further, let $S \subseteq P(G)$ be $c$-simple if all its points are $c$-simple.
The good news is that there is a direct correspondence of $b$-simple $\tfrac{a}{b}$-dispersed sets in a graph $G$
	and $a$-independent sets in the $b$-subdivision $G_b$.

\begin{observation}
\label{lemma:is:iff:disp}
Let $k\in\NNN$.
There is an $a$-independent set of size $k$ in $G_b$,
	if and only if there is a $b$-simple $\tfrac{a}{b}$-dispersed set of size $k$ in $G$.
\end{observation}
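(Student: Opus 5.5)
The plan is to build an explicit, distance-preserving bijection between the vertices of $G_b$ and the $b$-simple points of $P(G)$, scaled by $b$. Here ``$b$-simple'' is read in the sense needed for the statement: $\lambda$ is a multiple of $\tfrac{1}{b}$, so the $b$-simple points on an edge are exactly the positions of the $b-1$ subdivision vertices together with the two endpoints. With this correspondence, the observation follows by transporting sets through the bijection.

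\emph{Step 1: the map.} Define $\phi\colon V(G_b)\to P(G)$ by $\phi(\vv(u,v,\beta)) = p(u,v,\beta/b)$ for $\{u,v\}\in E(G)$ and $\beta\in\{0,\dots,b\}$. This is well defined: $\vv(u,v,\beta)=\vv(v,u,b-\beta)$ and $p(u,v,\lambda)=p(v,u,1-\lambda)$, and for $\beta\in\{0,b\}$ both sides are the original vertex $u$ or $v$ regardless of the edge chosen. Its image is precisely the set of $b$-simple points, and it is injective. Hence $\phi$ is a bijection onto the $b$-simple points.

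\emph{Step 2: distances scale by $b$.} The core claim is $d_{G_b}(x,y) = b\cdot d(\phi(x),\phi(y))$ for all $x,y\in V(G_b)$.
\begin{itemize}
\item For the direction ``$\le$'', take a shortest curve in the metric space $P(G)$ between $\phi(x)$ and $\phi(y)$. It consists of a partial segment on the edge of $\phi(x)$, a walk through original vertices of $G$, and a partial segment on the edge of $\phi(y)$. If both points lie on the same edge, the curve may also be a direct segment within that edge. Every piece has length a multiple of $1/b$ and corresponds to a path in $G_b$ of exactly $b$ times that length.
\item For the direction ``$\ge$'', a path in $G_b$ maps under $\phi$ (extended linearly along subdivision edges) to a curve in $P(G)$ of length $1/b$ times the number of edges.
\end{itemize}
The main care point is this case analysis, since the shortest route leaves an edge through one of its two endpoints or stays within the edge. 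Both options are mirrored exactly in $G_b$, so the analysis goes through.

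\emph{Step 3: conclude.} By Step 2, a set $I\subseteq V(G_b)$ has pairwise distances at least $a$ if and only if $\phi(I)$ has pairwise distances at least $a/b$. Since $\phi$ is a bijection onto the $b$-simple points, $a$-independent sets of size $k$ in $G_b$ correspond exactly to $b$-simple $\tfrac{a}{b}$-dispersed sets of size $k$ in $G$, which proves both directions of the observation.
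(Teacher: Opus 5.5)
Your proposal is correct and takes the same approach as the paper, which likewise identifies $\vv(u,v,\beta)$ with the $b$-simple point $p(u,v,\beta/b)$ and notes that distances scale by the factor $\tfrac{1}{b}$. You also spell out what the paper leaves implicit: that the correspondence is well defined, and the two inequalities that make up the distance scaling.
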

\begin{proof}
The vertex $\vv(u,v,\beta)$ for an edge $\{u,v\}\in E(G)$ and $\beta \in \{0,\dots,b\}$
	corresponds to the $b$-simple point $p(u,v,\frac{i}{b})$ and vice versa.
The distance between vertices corresponds to the same distance of corresponding points multiplied by the factor $\frac{1}{b}$.
\end{proof}

Unfortunately, there may be no minimum $\frac{a}{b}$-dispersed set that is $b$-simple.
On the positive side, a minimum $\frac{a}{b}$-dispersed set $S$ can be modified
	to a \emph{$2b$-simple} dispersed set $S^\star$ of same size.
Actually, one can observe that $S^\star$ is not $b$-simple only because of certain points in $S$.

\begin{lemma}[\cite{GrigorievHLW21}]
\label{lemma:covering:simplify}
For an $\tfrac{a}{b}$-dispersed set $S$,
	the following set $S^\star = \{p^\star \mid p \in S\}$ is $2b$-simple and $\tfrac{a}{b}$-dispersed.
For each point $p(u,v,\lambda)$
	with $\lambda \in (\tfrac{i}{b}-\tfrac{1}{2b}, \tfrac{i}{b}+\tfrac{1}{2b})$
	for some integer $i\geq 0$,
	let $p^\star \coloneqq p(u,v,\tfrac{i}{b})$;
for all other $p\in S$, let $p^\star = p$.
\end{lemma}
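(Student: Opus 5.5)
The plan is to verify the two claims, $2b$-simplicity and $\tfrac{a}{b}$-dispersion, after rescaling by a factor $2b$. This turns all distances between $2b$-simple points into integers.

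For a point $p(u,v,\lambda)$ put $x=2b\lambda\in[0,2b]$. The rule of the lemma then becomes the following map $g$ on $x$:
\begin{itemize}
\item if $x$ lies in an open interval $(2i-1,2i+1)$, then $g(x)=2i$, i.e.\ $x$ is rounded to the nearest even integer;
\item otherwise $x$ is an odd integer and $g(x)=x$.
\end{itemize}
The complement of these open intervals consists exactly of the odd multiples of $\tfrac{1}{2b}$. Hence every $p^\star$ has $\lambda$ a multiple of $\tfrac{1}{2b}$, which gives $2b$-simplicity.

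Before the main step I record a few properties of $g$:
\begin{itemize}
\item $g$ is nondecreasing.
\item $g(x+2k)=g(x)+2k$ for every integer $k$.
\item $g(2b-x)=2b-g(x)$. Since $2b$ is even, this shows that the rule is well defined regardless of whether we write the point as $p(u,v,\lambda)$ or $p(v,u,1-\lambda)$.
\item $g(0)=0$ and $g(2b)=2b$, so points on vertices stay put.
\item $|g(x)-x|<1$, and $g(x)$ is odd only if $g(x)=x$.
\end{itemize}

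The main step is to show $d(p^\star,q^\star)\ge \tfrac{a}{b}$ for distinct $p,q\in S$. I will express $2b\cdot d(p,q)$ as a minimum over finitely many expressions. If $p,q$ lie on different edges, the terms have the form $x_p+x_q+C$, where $x_p,x_q$ are the scaled offsets to the chosen endpoints and $C=2b\cdot d_G(\cdot,\cdot)$ is an even integer. If they lie on the same edge, we additionally have the term $|x_p-x_q|$. Snapping keeps each point on the same edge, so $2b\cdot d(p^\star,q^\star)$ is the minimum of the same expressions with $x$ replaced by $g(x)$. It therefore suffices to show that each expression stays at least $2a$.

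Write $M=2a-C$, which is an even integer.
\begin{itemize}
\item \emph{Sum terms.} We have $x+y\ge M$ and want $g(x)+g(y)\ge M$.
  \begin{itemize}
  \item If both $g(x),g(y)$ are even, then $g(x)+g(y)>M-2$ is even, hence at least $M$.
  \item If $g(x)=x$ is odd and $g(y)$ is even, then $g(y)>y-1\ge M-x-1$. The right-hand side is even, so $g(y)\ge M-x+1$.
  \item If both are odd, nothing changes.
  \end{itemize}
\item \emph{Difference term.} By symmetry take $x\ge y+2a$. Monotonicity and the shift rule give $g(x)\ge g(y+2a)=g(y)+2a$.
\end{itemize}
Since $\tfrac{a}{b}>0$, the points $p^\star,q^\star$ remain distinct, so $|S^\star|=|S|$.

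I expect the main obstacle to be the bookkeeping of the distance formula rather than the arithmetic. The difficulty is making sure that every shortest path between two points is captured by these expressions with an even constant $C$, including when $p$ and $q$ share an edge, where a path around through $d_G(u,v)$ may also compete. This is where the parity of the scaled constants ($2b$ times integers) is essential, and I would state it as a small preliminary claim before the case analysis.
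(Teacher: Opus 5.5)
Your proof is correct. Note that the paper gives no proof of this lemma: it is imported from \cite{GrigorievHLW21}. Your argument is therefore a self-contained verification, not a reconstruction of an in-paper proof.

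The rescaling $x=2b\lambda$ reduces everything to parity. Every vertex-to-vertex contribution $C=2b\,d_G(s,t)$ is even. The map $g$ fixes the odd integers and sends every other $x$ to an even integer with $|g(x)-x|<1$. These two facts are exactly what make your three cases for the sum terms and the monotonicity/shift argument for the difference term work. I checked each of them, together with well-definedness under $\lambda\leftrightarrow 1-\lambda$ via $g(2b-x)=2b-g(x)$, and injectivity of $p\mapsto p^\star$ from $d(p^\star,q^\star)\ge \tfrac{a}{b}>0$.

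The step you flag as the main obstacle is routine.
\begin{itemize}
\item For points on a common unit edge, the direct segment is always shortest. A detour leaving $p(u,v,\lambda)$ through $u$ and re-entering at $p(u,v,\mu)$ through $v$ has length at least $\lambda+1+(1-\mu)\ge 1\ge|\lambda-\mu|$. A detour leaving and re-entering through the same endpoint has length at least $\lambda+\mu$ or $(1-\lambda)+(1-\mu)$, both $\ge|\lambda-\mu|$. In any case, including these extra terms in the minimum is harmless, since each is the length of an actual walk.
\item For points on different edges, every path exits and enters through endpoints, and vertex distances in $P(G)$ equal graph distances. So every constant $C$ is indeed even, also after snapping, because snapped points stay on their edge.
\end{itemize}

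For comparison, the paper's proof of the covering analogue (Lemma~\ref{lemma:ocover:simple}) first passes to a subdivided graph to normalize the denominator to $1$. It then rounds each point to its nearer endpoint except at the midpoint, and argues via a split point on each edge. Your version keeps $G$ fixed and encodes the subdivision in the coordinate scaling. The even/odd bookkeeping then handles all pairs of points uniformly, without case analysis on how an edge is shared between two points.
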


\begin{corollary}
\label{lemma:is:to:disp}
$\disp{\tfrac{a}{b}}(G) = \alpha_{2a}(G_{2b})$ %
	for every $a,b \in \NNN_+$ and graph $G$.
\end{corollary}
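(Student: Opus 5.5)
The plan is to prove the two inequalities $\disp{\tfrac{a}{b}}(G) \geq \alpha_{2a}(G_{2b})$ and $\disp{\tfrac{a}{b}}(G) \leq \alpha_{2a}(G_{2b})$ separately. We rely on \cref{lemma:is:iff:disp} applied with parameters $2a$ and $2b$. That observation says $2a$-independent sets of size $k$ in $G_{2b}$ correspond exactly to $2b$-simple $\tfrac{2a}{2b}$-dispersed sets of size $k$ in $G$. Since $\tfrac{2a}{2b}=\tfrac{a}{b}$ as real distances, these are precisely the $2b$-simple $\tfrac{a}{b}$-dispersed sets in $G$.

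For the direction $\alpha_{2a}(G_{2b}) \le \disp{\tfrac{a}{b}}(G)$, take a maximum $2a$-independent set in $G_{2b}$. Via \cref{lemma:is:iff:disp} it becomes a $2b$-simple $\tfrac{a}{b}$-dispersed set of the same size in $G$. Every such set is in particular $\tfrac{a}{b}$-dispersed, so the dispersion number is at least its size.

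For the reverse direction, take a maximum $\tfrac{a}{b}$-dispersed set $S$ in $G$. A maximum exists because the dispersion number is finite: points at pairwise distance at least $\delta>0$ in a finite graph of unit edges are bounded in number. Apply \cref{lemma:covering:simplify} to obtain $S^\star$, which is $2b$-simple and $\tfrac{a}{b}$-dispersed. Then apply \cref{lemma:is:iff:disp} backwards to $S^\star$ to get a $2a$-independent set in $G_{2b}$ of size $|S^\star|$.

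The one point needing care is that $|S^\star| = |S|$, i.e.\ that the map $p\mapsto p^\star$ is injective on $S$. If two distinct points $p\neq q$ of $S$ had $p^\star=q^\star$, then $S^\star$ would contain one point where $S$ had two. We argue this cannot happen. Each point is moved by strictly less than $\tfrac{1}{2b}$, so $d(p,q) < \tfrac{1}{b} \le \tfrac{a}{b}$ because $a \ge 1$. This contradicts $S$ being $\tfrac{a}{b}$-dispersed. Hence $|S^\star|=|S|$, and the two inequalities give equality.

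There is also a subtlety in \cref{lemma:covering:simplify}: a point on a vertex can be described via several incident edges, and the map $p\mapsto p^\star$ must be well defined. This is inherited from the cited lemma, so I expect no genuine obstacle; the injectivity check above is the only step requiring an argument.
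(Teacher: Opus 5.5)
Your proof is correct and is exactly the argument behind this corollary, which the paper states without proof as an immediate consequence of \cref{lemma:covering:simplify} and \cref{lemma:is:iff:disp} (applied with $2a$ and $2b$). Your injectivity check for $p\mapsto p^\star$ fills in a detail the paper leaves implicit: each point moves by less than $\tfrac{1}{2b}$, so a collision would put two points of $S$ within distance $<\tfrac{1}{b}\le\tfrac{a}{b}$.
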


This already implies a subdivision argument almost as for $a$-independent sets (\cref{lemma:disp:subdivide}).

\begin{lemma}
\label{lemma:alpha:sub}
For every graph $G$,
$\alpha_a(G_b) = \alpha_{ca}(G_{cb})$ if $c$ is odd, or $a,b$ are even.
\end{lemma}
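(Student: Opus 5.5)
The plan is to reduce both sides to dispersion numbers via \cref{lemma:is:to:disp} and \cref{lemma:disp:subdivide}, and to treat the two cases separately according to whether the factor $2$ appearing in \cref{lemma:is:to:disp} is available.

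\emph{Case $a,b$ even.} Write $a=2a'$ and $b=2b'$. By \cref{lemma:is:to:disp},
\[
\alpha_a(G_b)=\alpha_{2a'}(G_{2b'})=\disp{\tfrac{a'}{b'}}(G).
\]
Likewise, $ca=2(ca')$ and $cb=2(cb')$, so
\[
\alpha_{ca}(G_{cb})=\disp{\tfrac{ca'}{cb'}}(G)=\disp{\tfrac{a'}{b'}}(G).
\]
The two quantities agree, and this holds for every $c\geq 1$.

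\emph{Case $c$ odd.} Here I would argue directly with simple point sets, using \cref{lemma:is:iff:disp} and \cref{lemma:covering:simplify}. By \cref{lemma:is:iff:disp}, $\alpha_a(G_b)$ is the maximum size of a $b$-simple $\tfrac ab$-dispersed set in $G$. Likewise, $\alpha_{ca}(G_{cb})$ is the maximum size of a $cb$-simple $\tfrac{ca}{cb}=\tfrac ab$-dispersed set in $G$.

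The inequality $\le$ is immediate, because every $b$-simple set is also $cb$-simple.

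For $\ge$, take a $cb$-simple $\tfrac ab$-dispersed set $S$ and apply the rounding of \cref{lemma:covering:simplify}. It moves each point with $\lambda\in(\tfrac ib-\tfrac1{2b},\tfrac ib+\tfrac1{2b})$ to $\tfrac ib$. The result $S^\star$ is still $\tfrac ab$-dispersed and has the same size. Every point of $S^\star$ is either a multiple of $\tfrac1b$ or an unmoved point with $\lambda=\tfrac ib\pm\tfrac1{2b}$ (up to the boundary convention). Such a point is a multiple of $\tfrac1{cb}$ only if $\tfrac{c}{2}$ is an integer, since $\tfrac{2i\pm1}{2b}=\tfrac{j}{cb}$ forces $c(2i\pm1)=2j$. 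This is impossible for $c$ odd. Hence no half-offset points survive, $S^\star$ is $b$-simple, and the inequality follows.

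The main obstacle is the exact behaviour of the rounding map at the open interval endpoints. I must check that \cref{lemma:covering:simplify} indeed leaves only points exactly at odd multiples of $\tfrac{1}{2b}$ unmoved, and that the argument works for any dispersed set, not only maximum ones. If the lemma does not give this directly, the fallback is a short direct argument. Distances between points are additive along edges, and rounding to the nearest multiple of $\tfrac1b$ does not decrease distances below the threshold $\tfrac ab$ when all pairwise distances are multiples of $\tfrac1{cb}$ that avoid the half-offset positions.
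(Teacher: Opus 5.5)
Your proposal is correct and is essentially the paper's proof. The even case is the same chain through \cref{lemma:is:to:disp}, and the odd case is the same rounding via \cref{lemma:covering:simplify}; you perform it in $G$ at threshold $\tfrac{a}{b}$ with rounding to multiples of $\tfrac{1}{b}$, while the paper performs it in $G_b$ at integer threshold $a$ with rounding to integer positions, which are equivalent views under the subdivision correspondence. The obstacle you flag is already settled by the lemma as stated: it applies to any $\tfrac{a}{b}$-dispersed set and leaves unmoved exactly the points at odd multiples of $\tfrac{1}{2b}$, so no fallback argument is needed.
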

\begin{proof}
First consider that $a$ and $b$ are even.
Then $\alpha_a(G_b) = \alpha_{2a'}(G_{2b'})$ for some integers $a',b'$.
Then $\alpha_{2a'}(G_{2b'}) = \disp{\frac{a'}{b'}} = \disp{\frac{c a'}{c b'}} = \alpha_{2ca'}(G_{2cb'}) = \alpha_{ca}(G_{cb})$, because of \cref{lemma:is:to:disp}.

Now consider that $c$ is odd.
Clearly, an $a$-independent set in $G_b$ corresponds to a $ca$-independent set in $G_{cb}$.
For the reverse direction,
	let $I \subseteq V(G_{cb})$ be a $ca$-independent set of $G_{cb}$.
Consider the corresponding $a$-dispersed set $S \subseteq V(G_b)$ in $G_b$,
	which is $bc$-simple.
We apply the construction of \autoref{lemma:covering:simplify}.
As $c$ is odd,
	there is no point $p \in S$ with edge position $\tfrac{1}{2}$.
Hence the construction only produces points with edge position $0$ or $1$.
That is, the constructed set $S^\star$ is $1$-simple,
	and hence $S^\star$ corresponds to an $a$-independent set in $G_b$.
\end{proof}

Next, we obtain the second connection (\cref{lemma:disp:translate})
	quite similarly for dispersed sets.
That is, we relate $a$-independent sets in the subdivided graphs $G_b$ and $G_{a+b}$.
The basic idea is to translate the independent set to a dispersed set and apply \cref{lemma:disp:translate}.
However, the construction of \cref{lemma:disp:translate}
	does not preserve simplicity.
Hence we adapt the construction slightly
	such that it always maps $b$-simple inputs to $(a+b)$-simple outputs.
As we show in \cref{section:detaill:independence},
	the correctness follows with almost the same proof as for \cref{lemma:disp:translate}.

\newcommand{\lemmaTextDispTranslateBSimple}{
Let $G$ be a graph without a cycle of length $<\tfrac{a}{b}$.
A $b$-simple $\tfrac{a}{b}$-dispersed set $S$
	implies an $(a+b)$-simple $\tfrac{a}{a+b}$-dispersed set of size $|S|+|E(G)|$.
Further, an $(a+b)$-simple $\tfrac{a}{a+b}$-dispersed set $S'$
	implies a $b$-simple $\tfrac{a}{b}$-dispersed set of size $|S'|-|E(G)|$.}

\begin{lemma}
(\cref{a:section:disp:translation})
\label{lemma:disp:translate:b:simple}
\ineditnote{\cref{lemma:cover:translate}}
\lemmaTextDispTranslateBSimple
\end{lemma}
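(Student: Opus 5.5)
We work entirely in the discrete setting via \cref{lemma:is:iff:disp}. A $b$-simple $\tfrac{a}{b}$-dispersed set of $G$ is the same as an $a$-independent set $I$ of $G_b$, and an $(a+b)$-simple $\tfrac{a}{a+b}$-dispersed set is the same as an $a$-independent set $I'$ of $G_{a+b}$. So the lemma reads $\alpha_a(G_b)+|E(G)| = \alpha_a(G_{a+b})$, proved in two directions. The plan mirrors the construction behind \cref{lemma:disp:translate}: stretch every edge by a segment of $a$ units and place exactly one new point in each inserted segment. Because all moves are by integer multiples of $\tfrac{1}{a+b}$, simplicity is preserved automatically.

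\emph{Forward direction.} Fix an arbitrary orientation $(u,v)$ of every edge of $G$. On the $u,v$-path of length $b$ in $G_b$, let $s$ be the vertex of $I$ closest to $u$ among $I\cap\{\vv(u,v,\beta)\mid 0\le\beta\le b\}$, if such a vertex exists. The path is cut just after $s$, a path of length $a$ is inserted, and a new solution vertex is put at its far end. Concretely, $\vv(u,v,\beta)$ maps to $\vv(u,v,\beta)$ for $\beta\le\beta_s$ and to $\vv(u,v,\beta+a)$ for $\beta\ge\beta_s$, and we add $\vv(u,v,\beta_s+a)$. So $s$ is duplicated into two copies at distance exactly $a$, and everything beyond $s$ is shifted by $a$. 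If the path contains no vertex of $I$ at all, we insert the segment at $u$ and put the new vertex at $\vv(u,v,a)$. Its distance to $v$ is then $b$, and to anything beyond $v$ it is at least $a$, because the old distance from $u$ to that vertex through $v$ was already at least $a$ after we accounted for the vertex of $I$ that it witnesses. We then verify $a$-independence pairwise. Distances between old vertices never decrease, since every path only got longer. For a new vertex $x$ on edge $uv$ and any other solution vertex $y$, we compare $d(x,y)$ with the old distance $d(s,y)$ (or with $d(u,y)$ in the empty case). Every path from $x$ leaves through one end of the stretched edge, and in either direction we gain at least what we lose: toward $v$, $x$ sits where the shifted copy of $s$ sits; toward $u$, it is $a$ further than $s$.

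\emph{Where the cycle condition enters.} A shortest path between two solution vertices might leave an edge and re-enter it, or traverse several stretched edges. The hypothesis that $G$ has no cycle of length $<\tfrac{a}{b}$ (i.e.\ no cycle of $G_b$ of length $<a$) guarantees the following: any walk from a new vertex back to a vertex on its own edge that goes around the rest of the graph already has length at least $a$ before stretching. Hence the only relevant comparisons are along the edge itself, and these are exactly controlled by the choice of $s$ as the first solution vertex.

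\emph{Reverse direction.} Given $I'$ in $G_{a+b}$, first normalize. On each edge path of length $a+b$, we want at least one solution vertex that is removable in the sense that deleting it together with $a$ consecutive subdivision units adjacent to it yields an $a$-independent set of $G_b$. We argue edge by edge. If the interior of the path $u\to v$ contains a vertex of $I'$, take the one closest to $u$, say at position $\gamma$, and contract the segment $[\gamma-a,\gamma]$ or $[\gamma,\gamma+a]$, whichever lies within the edge, removing that vertex. If the interior contains no vertex of $I'$, then we can add one without violating independence, because the path has length $a+b>a$ and only its endpoints may carry solution vertices. This may only increase $|I'|$, so we may assume w.l.o.g.\ that every edge carries an interior point. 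Distances after contraction are shown to stay at least $a$ by the same case analysis as before, run backwards, again using the cycle condition.

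I expect the main obstacle to be this verification in both directions, specifically the endpoint cases. In the forward direction, the empty-edge case must be checked when only $u$, only $v$, or neither carries a solution vertex. In the reverse direction, the solution vertex closest to $u$ may be within distance $a$ of both endpoints, so the contracted segment has to be chosen carefully. My first attempt would be to copy the argument for \cref{lemma:disp:translate} in \cite{HartmannL22} line by line, replacing real offsets with multiples of $\tfrac1b$ and $\tfrac{1}{a+b}$, and to check at each step that every point the construction produces lies on this grid.
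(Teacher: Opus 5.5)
Your handling of edges that already carry a vertex of $I$ is sound. Duplicating the first solution vertex and shifting the rest by $a$ keeps both endpoint-to-nearest-solution distances on that edge, measured in $G_{a+b}$, equal to the old ones in $G_b$. This agrees with the paper's treatment of such edges (first and last points moved to $\lambda\rho$ and $1-\mu\rho$).

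The genuine gap is the rule for edges containing no vertex of $I$. Placing the new vertex at distance exactly $a$ from the tail $u$ leaves it at distance $b$ from $v$. You would then need every solution vertex reached from $v$ through another edge to lie at distance at least $a-b$ from $v$. Your justification for this is not valid, and the claim fails for every choice of orientation.

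Take $a=3$, $b=1$, $G$ the path $y_1c_1c_2y_2$ (no cycles at all), and $I=\{y_1,y_2\}$. You must produce $|I|+|E(G)|=5$ vertices pairwise at distance at least $3$ in $G_4$, which is a path of length $12$. The only such set is $\{0,3,6,9,12\}$. So on the empty edge $\{c_1,c_2\}$ (positions $4$ to $8$), the new vertex must sit at distance $2$ from \emph{both} endpoints. Your rule gives $0,3$ and $9,12$ on the outer edges and $5$ or $7$ in the middle, which creates a pair at distance $2$. Already for the path $uvw$ with $I=\{w\}$ and orientation $(u,v)$, you get $\{3,5,8\}$.

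This also shows that your remark ``the only relevant comparisons are along the edge itself'' misplaces the difficulty. With no short cycles, the binding constraints are between the nearest solution points on two edges sharing an endpoint (condition (A2) in the paper). That is exactly where a rigid placement breaks.

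The missing idea is an adaptive placement on empty edges. The paper, following \cite{HartmannL22}, takes the endpoint $u$ with the smaller value $\dr_S(u,w_u)$. It puts the new point at $\max\{(\frac{\delta\rho}{2})^\star,\ \delta\rho-\dr_S(u,w_u)\rho\}$ from $u$. In $G_{a+b}$ this is distance $\max\{\lceil a/2\rceil,\ a-D_u\}$ from $u$, where $D_u=b\cdot \dr_S(u,w_u)$. In the example this gives $\max\{2,3-1\}=2$ from $c_1$, as required. Rounding $\frac{\delta\rho}{2}$ up to the grid is the only change needed for $(a+b)$-simplicity. Correctness is then checked through the local characterization (A1)/(A2), which is where the no-short-cycle hypothesis enters. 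The check is a case analysis on positive, neutral and negative directed edges, with $b\ge 1$ absorbing the rounding.

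In the reverse direction, your step ``an edge with no interior solution vertex can always receive one'' is false. For $a=4$, $b=1$, with solution vertices at distance $1$ beyond both ends of the edge, an interior position $t$ on the length-$5$ path would need $t\ge 3$ and $t\le 2$. The step is also unnecessary, since such edges simply need not lose a vertex. Your contraction rule is also undefined when the first vertex sits at $\gamma$ with $b<\gamma<a$, as you suspected. The paper does not redo this direction: it reuses the backward construction of \cite{HartmannL22} and observes that it already maps $(a+b)$-simple sets to $b$-simple ones.
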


Equipped with \cref{lemma:disp:translate:b:simple},
	we obtain an result analogous to \cref{lemma:disp:translate}. %

\begin{theorem}
\label{lemma:alpha:translate}
$\alpha_a(G_b) + |E(G)| = \alpha_a(G_{a+b})$ if graph $G_b$ contains no cycle of length $< a$.
\end{theorem}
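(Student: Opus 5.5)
The plan is to pass through $b$-simple and $(a+b)$-simple dispersed sets in $G$ via \cref{lemma:is:iff:disp}, and then to apply \cref{lemma:disp:translate:b:simple} in both directions. This gives two inequalities, which together yield the claimed equality.

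For the inequality $\alpha_a(G_b) + |E(G)| \le \alpha_a(G_{a+b})$, I take a maximum $a$-independent set $I$ of $G_b$. By \cref{lemma:is:iff:disp}, $I$ corresponds to a $b$-simple $\tfrac{a}{b}$-dispersed set $S$ in $G$ with $|S|=|I|$.

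\cref{lemma:disp:translate:b:simple} needs $G$ to have no cycle of length $<\tfrac{a}{b}$. This follows from the hypothesis: a cycle of length $\ell$ in $G$ becomes a cycle of length $b\ell$ in $G_b$, so $b\ell \ge a$. The lemma then produces an $(a+b)$-simple $\tfrac{a}{a+b}$-dispersed set of size $|S|+|E(G)|$.

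Applying \cref{lemma:is:iff:disp} again, now with subdivision number $a+b$, this set is an $a$-independent set in $G_{a+b}$. Hence $\alpha_a(G_{a+b}) \ge \alpha_a(G_b)+|E(G)|$.

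For the converse inequality, I take a maximum $a$-independent set $I'$ of $G_{a+b}$. By \cref{lemma:is:iff:disp} it is an $(a+b)$-simple $\tfrac{a}{a+b}$-dispersed set $S'$ in $G$. The second part of \cref{lemma:disp:translate:b:simple} yields a $b$-simple $\tfrac{a}{b}$-dispersed set of size $|S'|-|E(G)|$. Translating this back gives an $a$-independent set of $G_b$ of that size, so $\alpha_a(G_b) \ge \alpha_a(G_{a+b}) - |E(G)|$.

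Given \cref{lemma:disp:translate:b:simple}, the argument is essentially bookkeeping. The only points needing care are the translation of the cycle-length hypothesis between $G_b$ and $G$, and the fact that \cref{lemma:is:iff:disp} is used with two different subdivision numbers. The real obstacle, namely preserving simplicity under the translation of \cref{lemma:disp:translate}, is already absorbed into \cref{lemma:disp:translate:b:simple}. If that lemma's statement turns out to need slightly different hypotheses, I would check them directly against the construction deferred to \cref{a:section:disp:translation}.
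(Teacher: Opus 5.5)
Your proposal is correct and follows the paper's proof essentially step for step. Both directions go through \cref{lemma:is:iff:disp}, used once with subdivision number $b$ and once with $a+b$, together with the simplicity-preserving translation \cref{lemma:disp:translate:b:simple}, and the cycle hypothesis is transferred from $G_b$ to $G$ exactly as you do.
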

\begin{proof}
Let $I$ be an $a$-independent set $I$ in $G_b$.
Then $I$ corresponds to a $b$-simple $\tfrac{a}{b}$-dispersed set $S$ in $G$, by \cref{lemma:is:iff:disp}.
\cref{lemma:disp:translate:b:simple} maps $S$ to an $(a+b)$-simple $\tfrac{a}{a+b}$-dispersed set $S'$ in $G$
	of size $|I|+|E(G)|$.
This construction is applicable since $G_b$ contains no cycle of length $<a$,
	and hence $G$ contains no cycle of length $<\tfrac{a}{b}$.
Then the $(a+b)$-simple set $S'$ corresponds to an $a$-independent set in $G_{a+b}$
	of size $|I|+|E(G)|$, by \cref{lemma:is:iff:disp}.
That means $\alpha_a(G_b) + |E(G)| \leq \alpha_a(G_{a+b})$.

Vice versa, let $I$ be an $a$-independent set in $G_{a+b}$.
Then $I'$ corresponds to an $(a+b)$-simple $\tfrac{a}{a+b}$-dispersed set $S'$ in $G$ of size $|I|$,
	by \cref{lemma:is:iff:disp}.
\cref{lemma:disp:translate:b:simple} maps $S$ to an $a$-simple $\tfrac{a}{b}$-dispersed set $S$ in $G$
	of size $|I'|-|E(G)|$.
Then the $b$-simple set $S$ corresponds to an $a$-independent set in $G_b$ of size $|I'|-|E(G)|$,
	by \cref{lemma:is:iff:disp}.
That means $\alpha_a(G_b) + |E(G)| \geq \alpha_a(G_{a+b})$.
\end{proof}

Now with these two relation of independent sets established
	(\cref{lemma:is:to:disp} and \cref{lemma:alpha:translate}),
	we easily obtain the integers $a,b$
	for which finding a maximum $a$-independent set on $b$-subdivided graphs
	is polynomial time solvable.
A simple case is, that $a$ is odd and $b$ is a multiple of $a$.
By \cref{lemma:alpha:sub}, this is equivalent to finding a maximum
	$1$-independent set on a $\frac{a}{b}$-subdivided graph,
	which trivially consist of all vertices.
In the case that $a$ is even and $b$ is a multiple of $a$,
	and that $\ahalf$ is even and $b$ is a multiple of $\ahalf$,
	\cref{lemma:alpha:sub} and \cref{lemma:alpha:translate}
	allow to reduce the problem to a polynomial time solvable case from \cref{i:lemma:ind:param}.
\cref{lemma:alpha:translate} is applicable as in above cases $\frac{a}{b} \leq 2$.

\begin{theorem}
\label{lemma:ind:poly}
$a$-\IndSet on $b$-subdivided graph is polynomial time solvable
	if $b$ is a multiple of $a$,
	or $\ahalf$ is even and $b$ is an odd multiple of $\ahalf$.
\end{theorem}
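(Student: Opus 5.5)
Throughout we write $b=ca$ in the first case and $b=c\ahalf$ with $c$ odd in the second, and we use the two transformation results \cref{lemma:alpha:sub} and \cref{lemma:alpha:translate} to move the problem to a base case that can be solved directly. Since $b$ is a multiple of $a$ (resp.\ of $\ahalf$), the ratio satisfies $\frac{a}{b}\le 2$. Hence $G_b$ has no cycle of length $<a$, because every cycle in $G_b$ has length at least $3b\ge a$. This means \cref{lemma:alpha:translate} may always be applied.

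\emph{Case $b=ca$ with $a$ odd.} Here $\gcd$-scaling by the odd factor $a$ is allowed by \cref{lemma:alpha:sub}, so $\alpha_a(G_{ca})=\alpha_1(G_c)=|V(G_c)|$. The answer is therefore trivial.

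\emph{Case $b=ca$ with $a$ even.} Write $a=2a'$. First we strip off multiples of $a$ from $b$ using \cref{lemma:alpha:translate} read backwards:
\[
\alpha_a(G_{ca})=\alpha_a(G_{a})+(c-1)|E(G)|.
\]
Then $\alpha_a(G_a)=\alpha_{2a'}(G_{2a'})$, and by \cref{lemma:is:to:disp} this equals $\disp{1}(G)$. Equivalently, \cref{lemma:alpha:sub} (both even) turns it into $\alpha_2(G_2)$, which is polynomial by \cref{lemma:disp:p}.

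\emph{Case $b=c\ahalf$ with $c$ odd and $\ahalf$ even.} Write $\ahalf=2m$. Again we apply \cref{lemma:alpha:translate} repeatedly to reduce $b$ by $a$ while $b>a$. Since $c$ is odd, each step subtracts $2\ahalf$, so we reach $b=\ahalf$ and
\[
\alpha_a(G_b)=\alpha_a(G_{\ahalf})+\tfrac{c-1}{2}|E(G)|.
\]
Now $a=4m$ and $\ahalf=2m$ are both even, so \cref{lemma:alpha:sub} with factor $m$ gives $\alpha_{4m}(G_{2m})=\alpha_4(G_2)$. This is polynomial by \cref{lemma:disp:p}.

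The only point requiring care is the applicability of \cref{lemma:alpha:translate} at each intermediate step, namely that the intermediate graph $G_{b'}$ has no cycle shorter than $a$. This holds because every intermediate $b'$ is at least $\ahalf$ (or at least $a$ in the second case), so cycles have length at least $3b'\ge \tfrac{3a}{2}>a$. We also need to check the parity bookkeeping, that the reduction lands exactly on $\ahalf$ rather than on $0$; this uses that $c$ is odd. The overall running time is polynomial, since computing $|E(G)|$ and recovering $G$ from $G_b$ are both linear.
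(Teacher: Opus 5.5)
Your proof is correct and follows the same route as the paper. For odd $a$ you collapse to the trivial $\alpha_1$ case via \cref{lemma:alpha:sub}. For the even cases you combine \cref{lemma:alpha:translate} with \cref{lemma:alpha:sub} to reach the polynomial base cases $\alpha_2(G_2)$ and $\alpha_4(G_2)$ of \cref{lemma:disp:p}, where \cref{lemma:alpha:translate} applies because $\frac{a}{b}\le 2$ rules out cycles shorter than $a$. Your version makes explicit the bookkeeping the paper only sketches: the number of translation steps, why the odd multiplier lands on $b=\ahalf$, and the cycle-length check at each intermediate step.
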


\section{Independent Set with Parameter Solution Size}
\label{section:ind:parameterized}

This section settles the parameterized complexity
	of finding a maximum $a$-independent set on $b$-subdivided graphs
	with the solution size as parameter,
	for every integer $a,b$.
These results are also color-coded in \cref{figure:ind:cells}
	and summarized as follows.

\begin{theorem}[\Cref{i:lemma:ind:param} restated]
\label{lemma:ind:param}
\ineditnote{\cref{lemma:ds:param}}
\lemmaTextIndParam
\end{theorem}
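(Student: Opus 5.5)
The plan is to reduce every pair $(a,b)$ to one of a few normal forms using the transfer results already established, and then to invoke the known dichotomies for $\delta$-\dispersion. The first step is to rewrite the problem as a dispersion question. By \cref{lemma:is:to:disp} we have $\alpha_{2a}(G_{2b}) = \disp{\tfrac{a}{b}}(G)$. Combined with \cref{lemma:alpha:sub}, whenever $a$ and $b$ are both even, $a$-$\IndSet(\GG_b)$ is exactly $\tfrac{a}{b}$-\dispersion up to an affine change of the budget. This gives both directions of the classification directly from the literature:
\begin{itemize}
\item \p\ when $\tfrac{a}{b}$ is twice a unit fraction, and \np-hard otherwise \cite{GrigorievHLW21};
\item \fpt\ when $\tfrac{a}{b}\le 2$, and \wone-hard otherwise \cite{HartmannL22}.
\end{itemize}
One checks that "$\tfrac{a}{b}=\tfrac{2}{c}$" corresponds to $b=ca$, or to $b=c\tfrac{a}{2}$ with $\tfrac a2$ even (after cancelling by the gcd), and that "$\tfrac{a}{b}=2$, $b$ even" lands in the \fpt\ side.

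For general $(a,b)$ I would set $g=\gcd(a,b)$, $a=g a_0$, $b=g b_0$. If $g$ is odd, \cref{lemma:alpha:sub} reduces the problem to $(a_0,b_0)$ coprime. If $g$ is even, it reduces to $(2a_0,2b_0)$, which is the dispersion case above. It then remains to handle coprime $(a,b)$.

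\begin{itemize}
\item \emph{$a$ odd and $b\ge a$.} Apply \cref{lemma:alpha:translate} repeatedly with $b\mapsto b-a$; this is legitimate since cycles in $G_{b-a}$ have length at least $3(b-a)$. The instance reduces to $b<a$ with an additive shift of $|E(G)|$ per step, which preserves the parameterized status because $k\mapsto k+c|E|$ keeps everything polynomial. This only works for the \np-hardness direction; for the parameterized direction I would rather go the other way.
\item \emph{Hardness for $a/b$ large.} Use the reverse direction of \cref{lemma:alpha:translate} as a reduction from the \wone-hard problem $a$-\IndSet on $G_1$ (\cref{lemma:a:is:tw}), which holds for graphs without short cycles. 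This gives $a$-$\IndSet(\GG_{b})$ for $b\equiv 1\pmod a$ with the parameter shifted by $|E|$. That shift destroys parameterized hardness, so I need a different source.
\end{itemize}

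The more robust route for coprime $(a,b)$ is to embed into dispersion. An $a$-independent set in $G_b$ equals a $b$-simple $\tfrac ab$-dispersed set. For $b$ odd, \cref{lemma:covering:simplify} shows that $2b$-simple optimal solutions never need half positions except at edge midpoints, and for odd $b$ these are not $\tfrac{1}{b}$-multiples. I would therefore try to show directly that for coprime $a,b$ with $a$ odd, $\alpha_a(G_b)=\alpha_{2a}(G_{2b})=\disp{\tfrac ab}(G)$: a midpoint point forced by \cref{lemma:covering:simplify} can be shifted by $\tfrac{1}{2b}$ without violating distance $\tfrac ab$, using the parity argument in \cref{lemma:alpha:sub}. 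The case $a$ even, $b$ odd is more delicate. Here $\tfrac a2$ and $b$ may share no factor, and the answer genuinely differs from dispersion. In that case I would reduce to $(a,b)$ with $b\in\{1,\dots,a-1\}$ via \cref{lemma:alpha:translate} for the \np-hardness and \fpt\ algorithm, and give direct reductions from \IndSet\ (for \wone-hardness, when $a/b>2$) by replacing each original edge gadget appropriately.

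I expect the main obstacle to be this normalization for coprime pairs, that is, showing that restricting dispersion to $b$-simple solutions loses nothing, or loses only in the controlled even-$a$ case. The parameter-preserving \wone-hardness reductions when $\tfrac{a}{b}>2$ are the secondary difficulty. For \fpt\ membership when $\tfrac ab<2$, I would adapt the kernel/branching of \cite{HartmannL22}: a large graph with small ratio always contains many dispersed points, giving a trivial yes-instance bound in terms of $k$.
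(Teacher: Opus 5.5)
Your even--even case is correct: there $\alpha_a(G_b)=\disp{\tfrac ab}(G)$ by \cref{lemma:is:to:disp}, the known dispersion dichotomies give exactly the claimed classification, and the $\gcd$ normalisation via \cref{lemma:alpha:sub} is also fine. The gap is in the coprime case, which carries all the content.

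\textbf{The identity you plan to prove is false.} You want $\alpha_a(G_b)=\alpha_{2a}(G_{2b})=\disp{\tfrac ab}(G)$ for coprime $a,b$ with $a$ odd. The parity argument of \cref{lemma:alpha:sub} only applies to $c$-simple sets with $c$ odd, which never contain edge midpoints. Optimal $2b$-simple sets can genuinely need midpoints. Take $a=3$, $b=1$ and $G$ the Petersen graph. Since $G$ has diameter $2$, $\alpha_3(G)=1$. In the Kneser model $K(5,2)$, the edges joining $12$ with $34$ and $13$ with $24$ are at distance $2$, so their midpoints are at distance $6$ in $G_2$, giving $\alpha_6(G_2)\ge 2$. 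Consistently, the paper's \SETH\ bounds give different bases ($a$ versus $2a$) for the two problems.

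\textbf{The translation steps also fail.} Reducing $b$ to $b-a$ needs $G_{b-a}$ to have no cycle shorter than $a$. The bound $3(b-a)\ge a$ only guarantees this for $b\ge\frac{4a}{3}$; for example $a=5$, $b=6$ fails on triangles. The budget shift by $|E(G)|$ also destroys the parameter, as you note yourself.

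\textbf{Two cases are left open.} For $a$ even and $b$ odd, the \wone-hardness is deferred to unspecified ``direct reductions''. The boundary $\frac ab=2$ with $b$ odd, e.g.\ $(2,1)$, is never treated, and there the answer (\wone-hard) differs from dispersion ($2$-\dispersion is in \p).

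What is missing is an argument directly on $a$-independent sets that works uniformly for all $(a,b)$; this is what the paper does.
\begin{itemize}
\item For $\frac ab<2$: place one vertex near the middle of each edge of a maximum matching, or take $V(G)$ if $\frac ab\le 1$. This gives $\nu(G)\le\alpha_a(G_b)$. Either $k\le\nu(G)$ and the answer is yes, or $\tw(G)\le 2\nu(G)<2k$ and the $a^t n^{O(1)}$ algorithm of \cref{lemma:a:is:tw} runs in \fpt\ time. A yes-bound alone, as in your last sentence, is not an algorithm: you need this second branch, and ``large'' must be measured by $\nu(G)$, not by the graph size.
\item For $\frac ab\in(2,3)$: a parameter-preserving reduction from \IndSet\ on general graphs. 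Each vertex $u$ becomes an edge $u_1u_2$, all edges $u_iv_j$ are added for $uv\in E(G)$, and the result is $b$-subdivided.
\item For $\frac ab\ge 3$: a second such reduction. Take $G_2$, turn the subdivision vertices into a clique, attach pendant paths of length $\lceil\frac a{2b}\rceil-2$ (with a twin end vertex when $\lceil\frac ab\rceil$ is even), and $b$-subdivide.
\item For $\frac ab=2$: this follows from \cref{lemma:alpha:sub}. Odd $b$ reduces to $(2,1)$, i.e.\ \IndSet; even $b$ is polynomial.
\end{itemize}
With these ingredients, no identification with dispersion is needed for coprime pairs.
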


The polynomial time solvable cases are already settled by \cref{lemma:ind:poly}.
As is well-known, $2$-\IndSet (on general graphs, hence $1$-subdivided graphs)
	is \wone-hard~\cite{DowneyFellows1995}.
This also puts all cases where $\frac{a}{b}=2$ and $b$ odd to \wone-hard,
	by applying \cref{lemma:alpha:sub};
	while all over cases with $\frac{a}{b}=2$ are polynomial time solvable.
The fixed-parameter tractable cases rely on bounding the solution size below
	by the size of a maximum matching in a graph $G$, denoted as $\nu(G)$;
	similarly as for the covering problem~\cite{HartmannL22}.

\begin{lemma}
\label{lemma:ind:leq:matching}
\ineditnote{\cref{lemma:ds:leq:matching}}
$\nu(G) \leq \alpha_a(G_b)$
	for every graph $G$ and integers $a,b$ with $\frac{a}{b} < 2$.
\end{lemma}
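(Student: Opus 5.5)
Given a maximum matching $M$ of $G$ with $|M|=\nu(G)$, we place one vertex of $G_b$ on each matching edge and show that these vertices are pairwise at distance at least $a$ in $G_b$. That is, we construct an explicit $a$-independent set of size $|M|$ in $G_b$.

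For each edge $\{u,v\}\in M$ we select the vertex $\vv(u,v,\floor{b/2})$, a middle vertex of the subdivision path. If it is more convenient, we argue in the point space of $G$ instead. There we would take the midpoint $p(u,v,\tfrac12)$ of each matching edge and use \cref{lemma:is:to:disp} together with \cref{lemma:covering:simplify} to pass between dispersed sets and independent sets. Working directly in $G_b$ avoids parity issues, so we choose to do that.

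The key step is the distance bound. Take two distinct matching edges $\{u,v\}$ and $\{x,y\}$. Since $M$ is a matching, they share no endpoint. Any path in $G_b$ between the two selected vertices must first leave the subdivision path of $\{u,v\}$ through $u$ or $v$. This costs at least $\min(\floor{b/2},\ceil{b/2})=\floor{b/2}$ steps. The path must similarly enter the subdivision path of $\{x,y\}$ through $x$ or $y$, again costing at least $\floor{b/2}$ steps. Because the endpoints are disjoint, the path also has to traverse at least one full subdivided edge between $\{u,v\}$ and $\{x,y\}$, contributing $b$ more steps. The total distance is therefore at least $2\floor{b/2}+b\ge 2b-1$.

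It remains to compare this with $a$. From $\tfrac ab<2$ we get $a\le 2b-1$ for integers, so the distance $2b-1$ already suffices. Note that the cruder bound $b+2\cdot\tfrac b2=2b$ for even $b$ gives more than needed. The selected vertices are distinct, so they form an $a$-independent set of size $\nu(G)$, and hence $\alpha_a(G_b)\ge\nu(G)$.

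The main obstacle is the careful lower bound on the distance, namely justifying that a full subdivided edge of length $b$ must be traversed between two disjoint matching edges, and handling odd $b$ with the floor. Using the integrality $a\le 2b-1$ resolves the odd case exactly.
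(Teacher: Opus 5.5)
Your proposal is correct and is essentially the paper's proof: choose $\vv(u,v,\floor{b/2})$ on each edge of a maximum matching and bound pairwise distances by $b+2\floor{b/2}\ge 2b-1\ge a$. The paper also handles $\frac{a}{b}\le 1$ separately by taking $V(G)$, which your argument shows is unnecessary, and it fixes an orientation $u\prec v$ of each matching edge so that the chosen vertex is well defined, a detail you left implicit.
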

\begin{proof}
If $\frac{a}{b}\leq 1$, then $V(G)$ is an $a$-independent set in $G_b$.
Since $|V(G)| \geq \nu(G)$, the statement follows for $\frac{a}{b}\leq 1$.
Otherwise, consider a maximum matching $M \subseteq E(G)$.
Then the two vertices $\vv(u,v,\floor{\frac{b}{2}})$ and $\vv(u',v',\floor{\frac{b}{2}})$
	for distinct matching edges $\{u,v\},\{u',v'\}\in M$
	have distance at least $b+2 \floor{\frac{b}{2}} \geq b + ({b-1}) = 2b-1\geq a$.
The last inequality holds since otherwise $2b \leq a$ in contradiction to $\frac{a}{b} < 2$.
In conclusion $\{ \vv(u,v,\floor{\frac{b}{2}}) \mid \{u,v\}\in M, u \prec v\}$ is an $a$-independent set of $G_b$,
	where $\prec$ is an arbitrary ordering of $V(G)$.
\end{proof}

As the maximum size of a matching in a graph upper bounds the treewidth,
	an \fpt-algorithm results from a win-win situation.
Either the input asks for an independent set that is relatively large
	compared to $\nu(G)$ and hence also compared to the treewidth,
	in which case we can use \cref{lemma:a:is:tw},
	or the answer is trivially `yes'.

\newcommand{\lemmaTextIndFpt}{For every $a,b$ with $\frac{a}{b} < 2$,
	$a$-$\IndSet(\GG_b)$ is \fpt for the parameter solution size.}
\begin{lemma}\hyperref[a:lemma:ind:fpt]{\refstar}
\label{lemma:ind:fpt}
\lemmaTextIndFpt
\end{lemma}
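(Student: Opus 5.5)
The plan is the win-win argument sketched just before the statement. We may assume without loss of generality that the input $H \in \GG_b$ is given as the $b$-subdivision $G_b$ of a graph $G$. The underlying graph $G$ can be recovered in polynomial time. For $b \geq 2$, the branch vertices of $G_b$ are exactly those of degree $\neq 2$, except on cycle components. Those components are isolated, are handled separately, and each is solvable trivially. For $b=1$ there is nothing to recover.

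Given the budget $k$, we first compute a maximum matching $M$ of $G$ in polynomial time. If $|M| = \nu(G) \geq k$, then by \cref{lemma:ind:leq:matching} (applicable since $\frac{a}{b}<2$) we have $\alpha_a(G_b) \geq \nu(G) \geq k$. In that case we answer `yes', and the explicit set from that proof serves as a certificate.

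Otherwise $\nu(G) < k$, and we bound the treewidth of $G_b$ in terms of $k$. Since $\pw(G) \leq 2\nu(G) < 2k$, and subdividing edges does not increase treewidth beyond $\max\{\tw(G),2\}$, we get $\tw(G_b) \leq \max\{2k,2\}$. A tree decomposition of $G_b$ of width $O(k)$ can be obtained explicitly. The set $V(M)$ is a vertex cover of $G$ of size less than $2k$. Put $V(M)$ into every bag of a path decomposition of $G$ that handles the remaining independent vertices one at a time. This gives a path decomposition of $G$ of width at most $2k$. Each subdivided path of an edge $\{u,v\}$ is then attached as a chain of bags containing $u$, $v$ and two consecutive subdivision vertices, hanging off a bag that contains both $u$ and $v$. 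The result is a decomposition of $G_b$ of width $O(k)$, computable in polynomial time.

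We then run the algorithm of \cref{lemma:a:is:tw} on $G_b$ with this decomposition. It computes $\alpha_a(G_b)$ in time $a^{O(k)} \cdot n^{O(1)}$, and we compare the result with $k$. If $a=1$ the problem is trivial anyway, since every vertex set is $1$-independent, so we may assume $a \geq 2$ as required by \cref{lemma:a:is:tw}.

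In both branches the running time is $f(k)\cdot n^{O(1)}$, so the problem is \fpt. The only step needing care is producing the tree decomposition of $G_b$ with width bounded in $k$, rather than merely knowing that small treewidth exists. The vertex-cover-based construction above resolves this directly, so I do not expect a real obstacle. Recovering $G$ from $G_b$ is routine; alternatively, we can assume $G$ is given along with $G_b$.
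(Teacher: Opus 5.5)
Your proposal is correct and follows the paper's win-win argument: answer `yes' when $\nu(G)\ge k$ by \cref{lemma:ind:leq:matching}, and otherwise use $\tw(G)\le 2\nu(G)<2k$ and run the algorithm of \cref{lemma:a:is:tw} on $G_b$, with $a=1$ being trivial. The only difference is that the paper computes a tree decomposition of $G$ with the FPT treewidth algorithm of Korhonen and Lokshtanov, whereas you build one explicitly in polynomial time from the vertex cover $V(M)$ and attach chains of bags for the subdivision paths, which is more elementary and makes the width bound for $G_b$ explicit.
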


It remains to show \wone-hardness if $\frac{a}{b} > 2$,
	which follows from two parameter preserving reductions from \indset
	showing \wone-hardness,
	the first for $\frac{a}{b} \in (2,3)$, the second for $\frac{a}{b} \geq 3$;
	similarly as for the covering problem~\cite{HartmannL22}.

\newcommand{\lemmaTextIndTwoHardness}{For integers $a,b$ where $\frac{a}{b} \in (2,3)$, $a$-$\IndSet(\GG_b)$ is 
\wone-hard.}

\newcommand{\lemmaTextIndThreeHardness}{
For integers $a,b$ where $\frac{a}{b}\geq 3$,
	$a$-$\IndSet(\GG_b)$ is \wone-hard.}

\newcommand{\lemmaTextIndHardness}{For integers $a,b$ where $\frac{a}{b} > 2$, $a$-$\IndSet(\GG_b)$ is \wone-hard.}

\begin{lemma}\hyperref[a:lemma:ind:hardness]{\refstar}
\label{lemma:ind:hardness}
\lemmaTextIndHardness
\end{lemma}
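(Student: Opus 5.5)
We prove \cref{lemma:ind:hardness} by two parameter-preserving reductions from \indset, which is \wone-hard parameterized by the solution size. The first covers $\frac{a}{b}\in(2,3)$ and the second covers $\frac{a}{b}\geq 3$. Both follow the template used for covering in~\cite{HartmannL22}. In each case, from an instance $(H,k)$ of \indset we build a graph $G$ and a budget $k'=k+f(H)$, where $f(H)$ is computable and the construction does not depend on $k$. The goal is to have $\alpha_a(G_b)\geq k'$ if and only if $\alpha_2(H)\geq k$. This is a polynomial-time reduction with parameter blow-up only by an additive term; to keep the parameter bounded we will arrange that $f(H)$ is a function of $k$ alone. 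One way is to only add gadgets attached to a constant number of structures per selected vertex. If that fails, we reduce instead from \problem{Multicolored Clique} with $k$ colour classes, where gadgets are built per colour class and per colour pair, so $f$ depends only on $k$.

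\textbf{Case $\frac{a}{b}\geq 3$.} Take $G$ from $H$ by attaching to each vertex $v$ a pendant structure $T_v$, for example a pendant path of suitable length $\ell$ measured in $G$. Replace each edge $uv$ of $H$ by a path of length $m$ in $G$. The lengths $\ell$ and $m$ should be chosen so that in $G_b$ at most one selected vertex fits per $T_v$, and so that selecting the tip of $T_v$ is \emph{blocked} exactly when a neighbour tip is selected. Concretely, we want the tips of $T_u$ and $T_v$ to be at distance $<a$ in $G_b$ when $uv\in E(H)$, and at distance $\geq a$ otherwise. Since $\frac{a}{b}\ge3$, a single subdivided edge of length $b$ is short compared to $a$. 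This lets us use small $\ell$ and $m$ and add padding via the tips.

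Instead of the per-vertex approach, the cleaner route is the multicolored version with colour classes $V_1,\dots,V_k$. All vertices of a class $V_i$ attach to a common hub $h_i$, so any two tips of the same class are at distance $<a$ and at most one can be chosen per class. For each non-edge between different classes we keep the tips far apart, and for each edge... The correct orientation is in fact the complement: two vertices of different classes get close tips iff they are \emph{non}-adjacent in $H$, so that $k$ pairwise far tips correspond to a clique. To realise these closeness relations we connect tips of non-adjacent pairs by a short connector. We must then check that vertices inside connectors and hubs cannot be used profitably. Here $\frac{a}{b}\geq3$ is exactly what makes each hub-plus-connector region have diameter below $a$. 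The budget is $k'=k$.

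\textbf{Case $\frac{a}{b}\in(2,3)$.} Here distances are tighter. Single vertices of $G$ are pairwise at distance $b<a$ only along an edge, and two hops give $2b<a$, while three hops give $3b>a$. So two vertices of $G$ conflict iff they are at $G$-distance $\le2$, modulo interior subdivision vertices. We take the $2$-independent-set instance $H$ and form $G$ by subdividing each edge of $H$ once, adding a pendant edge at each original vertex, and so on. By the ``$2b<a<3b$'' window, a selected original vertex then conflicts exactly with its $H$-neighbours. Interior vertices of $G_b$ give extra candidates. We handle them by gadgets that contribute a fixed, $k$-independent number of forced points, or we argue by an exchange that any solution can be shifted onto original vertices.

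\textbf{Main obstacle.} The main obstacle in both cases is this exchange argument. We must show that an optimal $a$-independent set of $G_b$ can be normalised so that its points lie on designated vertices without loss, and that gadget interiors contribute exactly the expected amount. We would try a local replacement: map each chosen vertex to the nearest designated vertex along its edge, mimicking \cref{lemma:covering:simplify}. We would then verify pairwise distances using girth and length bounds, with careful casework on $a \bmod b$.
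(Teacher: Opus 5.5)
Your split at $\frac{a}{b}=3$ and your choice of \IndSet as the source problem match the paper. However, both concrete constructions fail, and they fail exactly at the step you defer as the main obstacle.

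\textbf{Case $\frac{a}{b}\in(2,3)$.} Take your $G$: $H$ with each edge subdivided once, plus a pendant edge $up_u$ at each original vertex. Any two pendant ends satisfy $d_G(p_u,p_v)\geq 1+2+1=4$. Since $4b>a$, the set $\{p_u : u\in V(H)\}$ is an $a$-independent set of $G_b$ of size $|V(H)|$.

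Dropping the pendants does not help. The subdivision vertices of any matching of $H$ are pairwise at $G$-distance at least $4$, so $\alpha_a(G_b)\geq\nu(H)$. No exchange argument repairs this. Compensating with ``forced'' gadget points would cost about $|V(H)|$ and destroy the parameter. In fact, without pendants your graph is $H_{2b}$ with $\frac{a}{2b}<2$, which is \fpt by \cref{lemma:ind:fpt}, so this route cannot give \wone-hardness at all.

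The missing idea is to represent each vertex $u$ by an \emph{edge} $u_1u_2$. Join $u_i$ to $v_j$ for all $i,j$ whenever $uv\in E(H)$, $b$-subdivide, and keep $k'=k$.
\begin{itemize}
\item Forward: an independent set maps to the vertices $\vv(u_1,u_2,\floor{b/2})$. These are pairwise at distance at least $2b+2\floor{b/2}\geq 3b-1\geq a$.
\item Backward: every vertex lies in a ball of radius $b/2$ around some $\{u_1,u_2\}$. Each ball has diameter at most $2b<a$. Points of the balls of $u$ and $v$ are within $2b<a$ of each other whenever $uv\in E(H)$. Hence any $a$-independent set injects into an independent set of $H$, with no normalization step.
\end{itemize}

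\textbf{Case $\frac{a}{b}\geq 3$.} The hub design fails outright. Put tips at distance $\ell$ from their hub and use connectors of length $c$. Two compatible tips $x\in V_i$, $y\in V_j$ are joined by the route $x\to h_i\to x'\to y$ for any $x'\in V_i$ non-adjacent to $y$. So you need $(2\ell+c)b\geq a$. But every $h_i$--$h_j$ path also has length at least $(2\ell+c)b$. Therefore the hubs $h_1,\dots,h_k$ already form an $a$-independent set of size $k'=k$, also on no-instances.

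Your per-vertex variant is not pinned down either. With pendant paths and unsubdivided $H$-edges, the tips of adjacent vertices and of distance-$2$ vertices differ by a single edge. So a fixed $\ell$ only handles $\frac{a}{b}\in(2\ell+1,2\ell+2]$, and you never show that non-tip vertices cannot be exploited.

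The paper proceeds as follows, again with $k'=k$:
\begin{itemize}
\item It $2$-subdivides $H$ and turns all subdivision vertices into a clique. Original vertices are then at distance $2$ if adjacent and exactly $3$ otherwise, and the middle part has small diameter.
\item It attaches pendant paths of length $\ceil{\frac{a}{2b}}-2$.
\item When $\ceil{\frac{a}{b}}$ is even, it adds a twin $u_2$ of the pendant end $u_1$; the midpoint of $u_1u_2$ supplies the half-edge offset.
\end{itemize}
It then concludes with the same ball argument.
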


\section{Dispersion for Rational Distances}
\label{section:lower:bound}

This section derives the upper and lower bounds under \SETH for
	finding a minimum $a$-independent set on $b$-subdivided graphs
	for the parameter treewidth, for all integers $a,b$.
All lower bounds follow from the mere lower bound for even distance $a\geq 6$.

\newcommand{\lemmaTextAlphaSubLB}{
Let $a\geq6$ be even.
Let $n$ be the number of vertices of the input graph.
Assuming \SETH,
	$a$-$\IndSet$ has no $(a-\varepsilon)^{\pw(G)} \cdot n^{O(1)}$ time algorithm
	for any $\varepsilon > 0$,
	even when the input is restricted to $2$-subdivided graphs
	without a cycle of length $<a$.}
\begin{theorem}\hyperref[a:lemma:is:ppset:lb]{\refstar}
\label{lemma:alpha:sub:lower:bound}
\lemmaTextAlphaSubLB
\end{theorem}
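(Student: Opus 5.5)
We reduce from the SETH lower bound of \cref{lemma:a:is:tw}, which already gives hardness for $a$-\IndSet on general graphs without short cycles, and we convert the instance into a $2$-subdivided instance for distance $a$ while preserving the pathwidth up to an additive constant. Write $a=2a'$ with $a'\ge 3$. By \cref{lemma:alpha:sub} and \cref{lemma:is:to:disp}, $\alpha_{2a'}(G_2)=\disp{a'}(G)$, so it is enough to prove a $(2a'-\varepsilon)^{\pw}$ lower bound for $a'$-\dispersion on graphs without short cycles. This is the base $2a'$ claimed in \cref{i:lemma:disp:tw}.

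The reduction itself should follow the standard SETH framework, as in the $a^t$ lower bound of Katsikarelis et al. Starting from a CNF formula on $n$ variables, we group the variables into blocks of size $\approx \log_2 (2a')^{p}$. Each block is encoded by $p$ long ``paths'' (one row per unit of pathwidth), and each row carries one of $2a'$ states. The states are the $2a'$ possible half-integral offsets of the dispersed points along the path modulo $a'$, that is, positions in $\{0,\tfrac12,1,\dots,a'-\tfrac12\}$ modulo $a'$ in the $2$-subdivided graph. The basic gadget is a long path of unit edges on which a tight $a'$-dispersed set packs one point every $a'$ units. Its phase (the position of the first point modulo $a'$, a multiple of $\tfrac12$) gives $2a'$ distinct states, and any phase shift costs a point in the budget. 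Consequently, over the whole construction the phases can only drift monotonically a bounded number of times, and a column repetition argument fixes a consistent phase per row. Clause gadgets attach to the rows at one column per clause. Each is a small tree-like gadget (a star of paths of carefully chosen lengths) that admits one extra point exactly when at least one state of the attached rows encodes an assignment satisfying the clause. The gadgets will be designed so that no cycle shorter than $a$ appears in $G_2$: every cycle passes through long path segments.

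In order, the steps are:
\begin{enumerate}
\item Design the row paths and prove that every maximum solution has exactly one point per period, with a phase in $\{0,\tfrac12,\dots\}$. The half-integral structure comes from \cref{lemma:covering:simplify}.
\item Prove the monotonicity of phases along a row, which yields the bound on the number of phase changes.
\item Build the clause gadget that reads one state from each of $O(1)$ rows, namely from the rows encoding the variable block of each literal, and give the exact budget accounting.
\item Bound the pathwidth by $p+O(1)$ via a sweep over the columns.
\item Compute $(2a')^{p}$ versus $2^n$ to derive the contradiction with SETH for an algorithm running in $(2a'-\varepsilon)^{\pw}$.
\end{enumerate}

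The main obstacle is step 1 together with the clause gadget. We must show that all $2a'$ phases, including the half-integral ones that are possible only because of the $2$-subdivision, are genuinely realizable and pairwise distinguishable by gadgets made of unit edges. At the same time, no gadget may let a solution gain points by placing them at non-canonical positions. My first attempt would be to test each phase with a pendant path whose length is tuned so that it admits an extra point iff the row point sits at a specific offset. This works because a pendant path of length $\ell$ admits $\lfloor(\ell+d)/a'\rfloor$ points, where $d$ is the distance to the nearest row point. Choosing $\ell$ with half-integral granularity then separates the phases, and the clause gadget ORs these tests through a central vertex whose single point can be placed iff some test fires.
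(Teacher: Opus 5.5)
There is a genuine gap, and it sits exactly in the step you flag as the main obstacle. Your framework is a legitimate alternative to the paper's route: a direct SETH reduction with variable blocks, rows carrying $2a'$ phases, and a drift/repetition argument. The paper instead reduces from Lampis's \textsc{Pathwidth $a$-Ary CSP} under the weaker \textsc{PWSETH}, so grouping, monotonicity and consistency are handled by the CSP. Your opening sentence suggests a black-box conversion of the instances of \cref{lemma:a:is:tw} into $2$-subdivided ones. No such conversion is available (e.g.\ \cref{lemma:alpha:sub} does not apply for $c=2$, $b=1$), but you do not actually use one.

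The concrete fix you suggest cannot work. In $G$, gadgets attach to a row only at vertices, i.e.\ at integer positions, and consist of unit edges, so $\ell$ cannot have half-integral granularity. More fundamentally, consider a test in which a gadget point sits at a vertex of $G$, or one that counts points on a pendant path of integer length. Each is a conjunction of thresholds $d\ge r$ with $r\in\mathbb{Z}$, where $d$ is the distance from an integer attachment position to a row point. The phases rejected by such a threshold form a union of open intervals with integer endpoints. Any such interval containing an integer $m$ also contains $m\pm\tfrac12$. Hence no conjunction of these tests can accept $m+\tfrac12$ while rejecting $m$: you can never pin a half-integral phase, which is precisely what separates base $2a'$ from base $a'$.

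The paper's missing ingredient is a switch whose budget-forced vertex may sit at an odd position of $G_2$ (a midpoint, in your language). Concretely, $p^\alpha$ is joined by a super edge to $\pi^\alpha$, and $\pi^\alpha$ by a super edge to $\hat\pi^\alpha$. In addition, $\pi^\alpha$ carries a path $\pi^\alpha,\pi^{\alpha,1},\pi^{\alpha,2}$ with a blocker (an even cycle of length $2a-2$) on $\pi^{\alpha,2}$. The budget forces one of $\pi^\alpha,\pi^{\alpha,1},\hat\pi^\alpha$. Since $\pi^{\alpha,1}$ lies at odd distance $a-1$ from $p^\alpha$, choosing it realizes the test ``the row vertex is not $p^\alpha$'', a half-integral threshold. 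The analogous switches $p^\alpha$--$q^\alpha$--$\hat q^\alpha$ realize ``distance at least $2$ from $p^\alpha$''. Together they isolate every position in the window, including odd ones.

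Your combination step also fails as described. A point at a single central vertex must be at distance at least $a'$ from all other points, so its feasibility is a conjunction, not an OR. Moreover, $2a'$ is in general not a power of two, so a block's assignment is encoded jointly by its $p$ rows, and a clause is not an OR of single-row tests. Even pinning one row's phase is already an AND of several thresholds.

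What is needed, and what the paper does with the vertices $y_{\tau,\sigma}$, is one selector per satisfying local assignment. The selectors of a constraint are pairwise joined by super edges, so the budget forces exactly one of them. Super edges are paths of length $a-2$ with pendant paths whose forced points make them act as long conflict edges, and they also keep every cycle of length at least $a$. Each selector is joined by super edges to the far ends $\hat q^\alpha,\hat\pi^\alpha$ of all switches it needs. This forces the near-end choices, and hence all row tests for that assignment, simultaneously.

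Finally, ``any phase shift costs a point'' contradicts your own drift argument. In such constructions shifts in one direction are free and in the other impossible, which is what the monotone multi-assignment of the CSP, or your repetition argument, has to absorb.
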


In fact, we show this lower bound assuming the Primal Pathwidth SETH,
	recently introduced by Lampis~\cite{Lampis2024pwseth}.
We provide the details in \cref{a:section:lower:bound}.

\begin{theorem}[\cref{i:lemma:is:tw:summary}, \cref{i:lemma:disp:tw} combined]
\label{p:lemma:is:tw:summary}
\ineditnote{\cref{lemma:ds:tw:summary}}
\lemmaTextIsTwSummaryExtended
\end{theorem}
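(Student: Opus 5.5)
Throughout, write $a',b'$ for the input integers, $c=\gcd(a',b')$, $a=a'/c$ and $b=b'/c$, so that $a,b$ are coprime. The plan is to reduce every case to a few canonical ones using the identities already established: \cref{lemma:alpha:sub} (rescaling by $c$), \cref{lemma:alpha:translate} (shifting $b$ by $a$), and \cref{lemma:is:to:disp} (passing to dispersion).

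\textbf{Odd $c$.} By \cref{lemma:alpha:sub} we have $\alpha_{a'}(G_{b'})=\alpha_a(G_b)$. Every $G_{b'}$ is the $c$-subdivision of some $G_b$, and conversely $G_b$ can be produced from $G_{b'}$ (or vice versa) in polynomial time. Both operations keep the treewidth up to an additive constant, since subdividing edges does not increase treewidth beyond $\max(t,2)$. Hence it suffices to treat coprime $a,b$.
\begin{itemize}
\item If $a=1$, all vertices form a solution, so the problem is in \p.
\item For the upper bound, the standard DP of \cref{lemma:a:is:tw} gives $a^t$ in the coprime case.
\item For the lower bound, the hardness for $a$-\IndSet from \cref{lemma:a:is:tw} holds on graphs without cycles of length $<a$. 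I would apply \cref{lemma:alpha:translate} repeatedly to move from $G_1$ to $G_{1+ka}$; this changes the optimum only by an additive $k|E(G)|$ and barely changes pathwidth. The difficulty is that it only reaches $b\equiv 1 \pmod a$. For general $b$ coprime to $a$, I would first subdivide by an odd factor (again \cref{lemma:alpha:sub}). Choose an odd $m$ with $mb\equiv r\pmod{ma}$ for a suitable $r$, or go in the other direction: start from hardness for $ma$ on $G_1$ and reduce $b$ modulo $a$ by translating. This step needs care and will be checked against the paper's bigger tables. Even $a$ with odd $c$ is handled by the same shifting, starting from $a$-\IndSet on general graphs.
\end{itemize}

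\textbf{Even $c$.} Write $c=2c'$. Then \cref{lemma:alpha:sub} reduces to $\alpha_{2a}(G_{2b})$, which equals $\frac{a}{b}$-dispersion by \cref{lemma:is:to:disp}. This also gives the dispersion item of the statement.
\begin{itemize}
\item \textbf{Polynomial cases.} $a\in\{1,2\}$ are exactly twice-unit-fraction distances, so they are in \p by \cref{lemma:ind:poly}.
\item \textbf{Upper bound $(2a)^t$.} Run the $2a$-\IndSet DP on $G_{2b}$. A tree decomposition of $G$ of width $t$ yields one of $G_{2b}$ of width $\max(t,2)$. The DP on $G_{2b}$ in time $(2a)^{t}n^{O(1)}$ needs the refined bookkeeping of \cref{lemma:a:is:tw}, where each bag vertex stores its distance to the solution, capped at $2a$.
\item \textbf{Lower bound.} Use \cref{lemma:alpha:sub:lower:bound}: $2a$-\IndSet on $2$-subdivided graphs without short cycles has no $(2a-\varepsilon)^{\pw}$ algorithm for $2a\ge 6$. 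This is the case $b=1$, i.e.\ $\frac{a}{1}$-dispersion. To reach $\frac{a}{b}$ for general $b$ coprime to $a$, apply \cref{lemma:disp:translate:b}: $\delta\mapsto\frac{\delta}{1+k\delta}$ turns $\frac{a}{1}$ into $\frac{a}{1+ka}$ with an additive shift. Again this covers only $b\equiv1\pmod a$.
\end{itemize}

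\textbf{Main obstacle.} In both parities, the hard step is the lower bound for residues $b\not\equiv 1\pmod a$. My first attempt would be to redo the proof of \cref{lemma:alpha:sub:lower:bound} with paths of length $b$ between original vertices. The gadgets there presumably rely only on distances modulo $a$, and $b$ coprime to $a$ makes $b$ invertible modulo $a$, so all residues remain reachable. If that fails, I would compose translations with odd rescalings. Under $\delta\mapsto\delta/(1+\delta)$ together with $\delta\mapsto c\delta$ and $b\mapsto b+a$, the residue of $b$ modulo $a$ can be changed, but only by multiplying with odd numbers. That leaves a gap in the even-$a$ subcase, which the refined gadget would have to close.
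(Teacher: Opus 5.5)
\textbf{Gap: the lower bound for $b\not\equiv 1 \pmod a$.} This is exactly the step you flag as the main obstacle, and neither remedy you offer closes it.

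\begin{itemize}
\item Rescaling via \cref{lemma:alpha:sub} never changes the reduced ratio. Translation sends the reduced ratio $\frac{a}{1+xa}$ to $\frac{a}{1+(x+1)a}$. Hence every pair reachable from the hard base case ($a$ on $G_1$, resp.\ $2a$ on $G_2$) by these two moves has reduced form $\frac{a}{1+xa}$.
\item So composing odd rescalings with translations cannot reach any coprime $b\not\equiv 1\pmod a$, for either parity of $a$, not only in the even-$a$ subcase.
\item Starting from $ma$ instead changes the reduced numerator, and with it the base of the bound.
\item Redesigning the gadgets of \cref{lemma:alpha:sub:lower:bound} with paths of length $b$ is not worked out.
\end{itemize}
As written, your lower bounds cover only $b\equiv 1 \pmod a$.

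\textbf{The missing idea.} You never need to hit $b$ itself, only some multiple of it. Since $G_{yb}=(G_y)_b$, we have $\mathcal{G}_{yb}\subseteq\mathcal{G}_b$, and a lower bound on the subclass is a lower bound for $a$-\textsc{Independent Set} on $\mathcal{G}_b$.

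\begin{itemize}
\item \emph{Choosing the multiple.} Because $\gcd(a,b)=1$ and $a\ge 2$, pick $y\in\{1,\dots,a-1\}$ with $yb\equiv 1\pmod a$, i.e.\ $yb=1+xa$ with $x\ge 0$. Both $x$ and $y$ are constants.
\item \emph{Odd $c$.} Apply \cref{lemma:alpha:translate} $x$ times to the hard instances of \cref{lemma:a:is:tw}. These have no cycle of length $<a$, and subdividing preserves this, so each application is legal. This gives $\alpha_a(G)+x|E(G)|=\alpha_a(G_{1+xa})=\alpha_a((G_y)_b)$. The size grows by a constant factor and the pathwidth by an additive constant. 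Hence an $(a-\varepsilon)^{\pw(G)}\cdot n^{O(1)}$ algorithm on $\mathcal{G}_b$ would contradict \cref{lemma:a:is:tw}.
\item \emph{Even $c$.} Do the same with distance $2a$, starting from the $2$-subdivided instances of \cref{lemma:alpha:sub:lower:bound}: $G_{2+2xa}=G_{2yb}=(G_y)_{2b}$. In dispersion language, $\frac{a}{1+xa}$-\textsc{Dispersion} on $G$ is $\frac{a}{b}$-\textsc{Dispersion} on $G_y$ by \cref{lemma:disp:subdivide}.
\end{itemize}
This is precisely the paper's argument. It needs no parity case distinction and no new gadgets.

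A minor point: the $(2a)^t$ upper bound is just \cref{lemma:a:is:tw} applied to $G_{2b}$, with the width increasing by at most a constant. No refined bookkeeping is required.
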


\begin{proof}
First, we consider that $c = \gcd(\aaa,\bbb)$ is odd.
Then $ca$-$\IndSet(\GG_{cb})$ is equivalent to $\aac$-$\IndSet(\GG_{\bbc})$
	by \cref{lemma:alpha:sub}.
In case $\aac=1$, then $1$-$\IndSet(\GG_{\bbc})$ has the trivial $1$-independent set $|V(G)|$.
Otherwise, $\aac$-$\IndSet(\GG_{\bbc})$ can be solved in time $\aac^{t} \cdot n^{O(1)}$ using \cref{lemma:a:is:tw}.

For the lower bound we use that $y \bbc = 1+x \aac$ for some integers $x,y$.
Assume \SETH.
Then we know from \cref{lemma:a:is:tw} that
	$\aac$-$\IndSet(\GG_1)$ has no $(\aac-\varepsilon)^{\pw(G)} \cdot n^{O(1)}$ time algorithm for any $\varepsilon 
	> 0$.
Particularly, this lower bound relies on graphs without a cycle of length $<\aac$.
Then \cref{lemma:alpha:translate} applied $x$ times
	yields that
	$\aac$-$\IndSet(\GG_{1+xa})$,
	and equivalently $\aac$-$\IndSet(\GG_{y \bbc})$,
	also has no $(\aac-\varepsilon)^{\pw(G)} \cdot n^{O(1)}$ time algorithm for any $\varepsilon >0$.
Thus especially $\aac$-$\IndSet(\GG_{\bbc})$
	has no $(\aac-\varepsilon)^{\pw(G)} \cdot n^{O(1)}$ time algorithm.
This settles the cases for independent set with odd $\gcd(\aaa,\bbb)$.

Next, we consider that $c = \gcd(\aaa,\bbb)$ is even, hence that $c=2\hat c$ for some integer $\hat c$.
Then $(2\hat ca)$-$\IndSet(\GG_{2\hat cb})$ is equivalent to $2a$-$\IndSet(\GG_{2b})$,
	by \cref{lemma:alpha:sub}. %
Again, by \cref{lemma:a:is:tw},
	$2a$-$\IndSet(\GG_{2b})$ can be solved in time $(2a)^{t} \cdot n^{O(1)}$.

If $\aac \in \{1,2\}$,
	then $\aac$-\dispersion is polynomial time solvable, by \cref{lemma:disp:p}.
Further, as $\frac{a}{b}\leq 2$, applying \cref{lemma:disp:translate} yields that
	also $\frac{\aaa}{\bbb}$-\dispersion
	is polynomial time solvable (as also observed in~\cite{GrigorievHLW21}).
By \cref{lemma:is:to:disp}, $2a$-$\IndSet(\GG_{2b})$ is equivalent to $\frac{\aaa}{\bbb}$-\dispersion,
	hence polynomial time solvable.

In case $\aac \ge 3$, and assuming \SETH,
	\cref{lemma:alpha:sub:lower:bound} provides that
	$2a$-$\IndSet(\GG_2)$ has no $(2a-\varepsilon)^{\pw(G)} \cdot n^{O(1)}$ time algorithm
	for any $\varepsilon>0$.
Particularly, this lower bound does not rely on graphs with a cycle of length $<\aac$.
Since $\aac,\bbc$ are co-prime,
	again \cref{lemma:alpha:translate} applies,
	and we obtain that
	$2a$-$\IndSet(\GG_{2b})$ has no $(2a-\varepsilon)^{\pw(G)} \cdot n^{O(1)}$ for any $\varepsilon>0$.
This settles the cases for independent set with even $\gcd(\aaa,\bbb)$.

Finally, $\frac{\aaa}{\bbb}$-\dispersion is equivalent to $\frac{\aac}{\bbc}$-\dispersion by definition.
Then $\frac{\aac}{\bbc}$-\dispersion is equivalent to $2a$-$\IndSet(\GG_{2b})$
	by \cref{lemma:is:to:disp}.
Since $\aac,\bbc$ are co-prime, $2a,2b$ have greatest common devisor $2$.
By the discussion for an greatest common devisor which is even,
	we follow that $\frac{\aac}{\bbc}$-\dispersion has an $(2a)^{t} \cdot n^{O(1)}$ time algorithm,
	and, assuming \SETH, has no $(2a-\varepsilon)^{t} \cdot n^{O(1)}$ time algorithm
	for any $\varepsilon > 0$.
\end{proof}

\section{Dispersion for Irrational Distance}
\label{section:irrational}
\newcommand{\SubgraphIso}{\textsc{SubgraphIso}\xspace}
\newcommand{\PartSubgraphIso}{\textsc{PartSubgraphIso}\xspace}
\newcommand{\CSP}{\textsc{CSP}\xspace}

This section derives the hardness result for computing a maximum $\delta$-dispersed set
	for the \emph{efficiently comparable} irrational $\delta = (\sum_{j \in [i+1]} 
2^{2-2^j} )^{-1} \approx 0.790085$.

\begin{theorem}[\cref{i:lemma:disp:wone:pw} restated]
\label{lemma:disp:wone:pw} %
\lemmaTextDispIrrational
\end{theorem}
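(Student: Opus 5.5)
We reduce from a W[1]-hard problem with a known ETH lower bound tied to the parameter, e.g.\ Multicolored Clique or Partitioned Subgraph Isomorphism (\PartSubgraphIso). For \PartSubgraphIso there is no $f(k)\cdot n^{o(k/\log k)}$ algorithm, but to obtain the tight $n^{o(\pw)}$ bound we use a CSP or \GridTiling-type source with $k$ variables over domain $[n]$ and no $f(k)n^{o(k)}$ algorithm. We build an instance of $\delta$-\dispersion of pathwidth $O(k)$ and polynomial size. The construction uses $k$ long ``variable paths'', one per variable, each running through the whole construction. Along each path there is a sequence of gadget positions, so a path decomposition sweeping from left to right has width $O(k)$.

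The key idea is how the irrational $\delta$ encodes a value. Since $\delta^{-1}=4\sum_{j\ge1}2^{-2^j}$ has a sparse binary expansion, a path of length $\ell$ holds $\lfloor \ell/\delta\rfloor+1$ points, and the slack $\ell-\delta\lfloor \ell/\delta\rfloor$ depends on $\ell$ in a structured, essentially aperiodic way. We choose path lengths $\ell_1,\dots,\ell_n$ so that the leftover slacks $s_1<s_2<\dots<s_n$ are distinct and well separated. These lengths come from truncations of the series, i.e.\ integers near multiples $m\delta$ with $m$ a sum of powers $2^{2^j}$. The value $i\in[n]$ of a variable is encoded by how far the last point of a variable path is shifted, namely by the offset $s_i$. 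Concretely, we first prove a number-theoretic lemma: for the chosen integers $m_i$, the fractional parts $\{m_i\delta\}$ (more precisely the quantities $\lceil m_i\delta\rceil-m_i\delta$) are distinct and monotone in $i$, separated by gaps larger than any error introduced elsewhere. We also prove that no other integer combination of the relevant size yields a spurious offset. The double-exponential sparsity of the expansion makes this precise, because truncating after term $j$ leaves an error of about $2^{-2^{j+1}}$, far smaller than the gap at scale $2^{-2^j}$.

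We then proceed in three steps. First, a \emph{selection gadget} forces each variable path, in any solution meeting the budget $k^\star$, to carry a tight packing whose endpoint offset equals some $s_i$. Second, \emph{propagation gadgets} copy the offset forward along the path unchanged; tightness plus counting ensures the offset can only stay the same or lose a point. Third, for each constraint between two variables, a \emph{check gadget} attaches a connector, a short path between the two variable paths. The connector admits one extra point exactly when the two offsets $s_i,s_{i'}$ satisfy $s_i+s_{i'}+\lambda\ge\delta$ for a suitably tuned connector length $\lambda$. This gives threshold (``$\le$''-type) constraints, so we reduce from \GridTiling, whose constraints are exactly of that form and which has the required ETH lower bound $f(k)n^{o(k)}$. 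The budget is then the sum over all gadgets of the maximal point counts, and we verify completeness directly. For soundness, a counting argument shows that every gadget must be tight, which forces consistent offsets.

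We expect the main obstacle to be the number-theoretic core. We must show that these lengths, which are polynomial in $n$ since $m_i\le\mathrm{poly}(n)$ requires using only $O(\log\log n)$ terms of the series, produce distinct, ordered, well-separated offsets. We must also show that adversarial solutions cannot exploit non-tight configurations or cycles to gain points. For the latter we will rely on the fact that the gadget graphs have no short cycles, together with the rounding normalization of \cref{lemma:covering:simplify} adapted locally. Finally, we must verify that $\delta$ is efficiently comparable, meaning that comparing $x/y$ with $\delta$ reduces to evaluating the series to $O(\log(xy))$ bits.
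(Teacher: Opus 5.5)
\paragraph{Gaps in the direct construction.}
Your plan is a direct gadget construction at the irrational distance. The steps that carry all the difficulty are asserted rather than proved, and as described they do not go through.

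First, a value cannot be stored as the ``endpoint offset'' of a tight packing. For irrational $\delta$ and integer length $\ell\ge 1$, a path carrying $\lfloor \ell/\delta\rfloor+1$ points always has positive slack $\ell-\lfloor \ell/\delta\rfloor\delta$, and this slack can be distributed continuously. Nothing in your construction pins the offset to a finite set $\{s_1,\dots,s_n\}$. Also, since the graph is fixed, a solution cannot ``select'' one of several path lengths $\ell_1,\dots,\ell_n$. The normalization you intend to adapt, \cref{lemma:covering:simplify}, rounds points to multiples of $\frac{1}{2b}$ for $\delta=\frac{a}{b}$. There is no such grid for irrational $\delta$, so your plan has no discretization mechanism at all.

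Second, your connectors only realize threshold relations $s_i+s_{i'}+\lambda\ge\delta$. \GridTiling (like any clique-type source) also requires enforcing membership $(s_{i,j},t_{i,j})\in S_{i,j}$ for arbitrary sets, and you give no gadget for that. Hence the selection, propagation and check gadgets, and the soundness counting, all remain unsubstantiated.

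\paragraph{The missing idea.}
The paper never reasons about the irrational distance directly. It first reduces \ColorfulClique to integer-distance dispersion with built-in leeway, so that one instance $\langle G',k'\rangle$ has the same answer for distance $c_i=32n$ and for $c_i-1$ (\cref{lemma:clique:leq:disp}). Discreteness comes in only here, via \cref{lemma:is:to:disp}.

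It then applies the translation lemma $b_{i-1}$ times and one $a_{i-1}$-subdivision (\cref{lemma:disp:translate:b}, \cref{lemma:disp:subdivide}). These are exact identities between dispersion numbers. They turn that single instance into one whose answer at $\delta_i=\frac{a_i}{b_i}$ equals the answer at $c_i$, and whose answer at $\gamma_i=\frac{a_i-a_{i-1}}{b_i-b_{i-1}}$ equals the answer at $c_i-1$ (\cref{lemma:disp:reduction:delta:gamma}).

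The dispersion number is non-increasing in the distance, and $\gamma_i<\delta<\delta_i$ (\cref{lemma:disp:gamma:delta}). So the answer at $\delta$ is forced to agree with both endpoints.

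The only arithmetic needed is that every window $[\gamma_i,\delta_i]$ contains $\delta$, and that $a_{i-1},b_{i-1}$ are polynomial in $c_i$; the choice $c_i=2^{2^i}$ guarantees both. No separation of fractional parts $\{m\delta\}$ and no exclusion of spurious offsets is required.
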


\begin{figure}
\begin{center}
\begin{tikzpicture}[every text node part/.style={align=center}]

\node (b) at (4,0) {$\ColorfulClique$\\ $k$ color classes \\ each of size $n$};
\node[rotate=35] (ab1) at (6,0.75) {$\leq_p$};
\node[rotate=-35] (ab2) at (6,-0.75) {$\leq_p$};
\node (c1) at (8,1.25) {$c_i$-$\dispersion$\\ pathwidth $\Oh(k)$};

\node (c12) at (7.5,0) {same mapping, \\ \cref{lemma:clique:leq:disp}};
\draw[->] (c12) -- (ab1) {};
\draw[->] (c12) -- (ab2) {};

\node (c2) at (8,-1.25) {$(c_i-1)$-$\dispersion$\\ pathwidth $\Oh(k)$};
\node (cd1) at (10,1.25) {$\leq_p$};
\node (cd12) at (10,0) {$\quad$same mapping,\\ \cref{lemma:disp:reduction:delta:gamma}};

\node (cd122) at (12.5,0) {$\gamma_i \leq \delta \leq \delta_i$};

\node (cd2) at (10,-1.25) {$\leq_p$};
\draw[->] (cd12) -- (cd1) {};
\draw[->] (cd12) -- (cd2) {};

\node (d1) at (12,1.25) {$\delta_i$-$\dispersion$\\ pathwidth $\Oh(k)$};
\node (d2) at (12,-1.25) {$\gamma_i$-$\dispersion$\\ pathwidth $\Oh(k)$};
\node[rotate=-35] (ed1) at (14,0.75) {$=$};
\node[rotate=35] (ed2) at (14,-0.75) {$=$};
\draw[->] (cd122) -- (ed1) {};
\draw[->] (cd122) -- (ed2) {};

\node (e) at (15,0) {$\delta$-$\dispersion$\\ pathwidth $\Oh(k)$};
\end{tikzpicture}
\end{center}
\caption{Reductions used for the proof of \cref{lemma:disp:wone:pw}.
\cref{lemma:clique:leq:disp} is simultaneously a reduction to $c_i$-\dispersion and to $(c_i-1)$-\dispersion.
'Agnostic' to whether the distance is $c_i$ or $c_i-1$,
	\cref{lemma:disp:reduction:delta:gamma} reduces to \dispersion with distance $\gamma_i$ respectively $\delta_i$.
These rationals $\gamma_i$ and $\delta_i$ form an approximation of $\delta$ from below and above.
Thus the combined reduction reduces to $\delta$-\dispersion.
}
\end{figure}

Our proof is based on two main reductions.
The first one is fairly standard:
It is a reduction from a colorful clique problem to $c_i$-\dispersion, where $c_i=2^{2^i}$ is a sufficiently large 
integer (polynomially bounded in the number of vertices of the colorful clique problem).
The reduction is robust in the sense that it is simultaneously a reduction to $(c_i-1)$-\dispersion as well, i.e., 
the yes/no answer does not change if we reduce the radius by 1.
The second reduction is the main nontrivial part of the proof:
We reduce $c_i$-\dispersion to 
$\delta_i$-\dispersion for some rational $\delta_i$ in a robust way. That is, the reduction can be interpreted also 
as a reduction from $(c_i-1)$-\dispersion to $\gamma_i$-\dispersion for some $\gamma_i<\delta_i$. Thus if the 
source instance has the same yes/no answer for radius $c_i$ and $c_i-1$, then the target problem has the same 
answer for any radius $\delta\in [\gamma_i,\delta_i]$.
We manage 
to define $\gamma_i,\delta_i$ in such a way that there is an irrational $\delta$ that is in $[\gamma_i,\delta_i]$ 
for every $i$. Thus for every $i$, the problem can be reduced to $\delta$-\dispersion.

\smallskip

Our main tools so far are subdividing (using \cref{lemma:disp:subdivide}),
	and translating (using \cref{lemma:disp:translate}).
Translating only applies to instances $\langle \delta,G,k\rangle$
	where the graph $G$ contains no cycle of length $<\delta$.
Hence, for convenience, let $\dispersionstar$ be the $\dispersion$ problem
	restricted to instances where the graph $G$ contains no cycle of length $<\delta$.

The starting point is \ColorfulClique where, given a graph $G$ and an integer $k$
	and a proper $k$-coloring of $G$,
	the task is to decide whether $G$ contains a $k$-clique
	that contains exactly one vertex of each color.	
It is known~\cite{bookParameterized} that \ColorfulClique is \wone-hard parameterized by the solution size $k$ and,
	assuming \ETH, has no $f(k) \cdot n^{o(k)}$ time algorithm for any computable function $f$.

The first reduction is based on a reduction
	from \ColorfulClique to the task of finding a maximum $a$-independent set with $a$ as part of the input,
	as given by Katsikarelis et al.~\cite{KatsikarelisLP2022}.
We output a graph with enough leeway
	such that the $\delta$-dispersion number does not change for $\delta$ in the interval $[c,c-1]$ for some integer 
	$c$.
Doing so, our construction constitutes a reduction from \ColorfulClique to $c$-\dispersion
	and, at the same time, a reduction from \ColorfulClique to $(c-1)$-\dispersion.
Further, we make sure that the construction does not introduce any short cycles.

\begin{lemma}\hyperref[a:lemma:clique:leq:disp]{\refstar}
\label{lemma:clique:leq:disp}
\ineditnote{\cref{lemma:gt:leq:cover}.}
There is a polynomial time reduction that,
	given a \ColorfulClique-instance $\langle G,k\rangle$ with $k$ color classes of size $n$,
	outputs a graph $G'$ of pathwidth $\Oh(k)$ and integer $k'$, such that: %
$\langle G,k\rangle$ is a yes-instance of \ColorfulClique
	if and only if $\langle G', k'\rangle$ is a yes-instance of $32n$-$\dispersionstar$
	if and only if $\langle G', k'\rangle$ is a yes-instance of $(32n-1)$-$\dispersionstar$.
\end{lemma}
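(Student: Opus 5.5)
I would adapt the reduction of Katsikarelis et al.\ from \ColorfulClique to $a$-\IndSet with $a$ part of the input. In that reduction, the choice of the vertex of color class $j$ is encoded by the position of a selected point along a long path. Clique edges are checked by gadgets whose distances let two selections coexist exactly when the chosen pair is an edge of $G$. The plan is to scale all path lengths by a constant factor, so that every ``critical'' distance in the construction is either at least $32n$ or at most $32n-2$ (more precisely, bounded away from the window $(32n-2,32n)$). Then replacing the threshold $32n$ by $32n-1$ changes no pairwise comparison that matters for any intended or normalized solution.

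\textbf{Construction.} For each color class $j\in[k]$ I take a selection gadget: a long path, or a cycle of length well above $32n$ so as not to create short cycles. It is built so that an optimal dispersed set places a prescribed number of points on it, and the only remaining freedom is an offset $s_j\in\{0,\dots,n-1\}$, realized in steps of size $16$ or so. For each pair $j<j'$ I attach a verification gadget. It consists of pendant paths hanging from positions on the two selection gadgets, with lengths chosen so that offsets $(s_j,s_{j'})$ admit one extra point in the gadget iff $\{v_{j,s_j},v_{j',s_{j'}}\}\in E(G)$. Concretely, for each edge of $G$ between classes $j,j'$ there is an attachment, and the budget forces exactly one extra point per pair. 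The gadgets for different pairs hang off the $k$ selection paths. The selection paths are laid out in parallel and cut into $O(n^2)$ ``columns'', so a path decomposition sweeps along them with $O(k)$ active selection vertices plus $O(1)$ gadget vertices per bag, giving pathwidth $O(k)$. I set $k'$ to be the total count of base points plus $\binom{k}{2}$.

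\textbf{Correctness.} For the forward direction, a colorful clique gives offsets, and I exhibit an explicit integral-position point set. I then check that all pairwise distances are at least $32n$, hence also at least $32n-1$. For the backward direction, assume a $(32n-1)$-dispersed set of size $k'$, which is the weaker hypothesis. A counting/packing argument shows that each selection gadget holds its maximum number of points and each pair gadget exactly one extra point. The slack of $1$ is absorbed because each gadget's capacity is given by floor expressions like $\lfloor \ell/(32n-1)\rfloor=\lfloor \ell/32n\rfloor$, valid by choosing gadget lengths $\ell$ accordingly. From the packing structure, points on the selection path are forced to be nearly equispaced, which defines offsets $s_j$ up to rounding by less than one step. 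The extra point in the pair gadget is then only possible if the corresponding edge exists. Since a $32n$-dispersed set is in particular $(32n-1)$-dispersed, this chain yields all three equivalences. Girth at least $32n$ is ensured by making every cycle pass through at least one full-length path segment.

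\textbf{Main obstacle.} The hardest step is the backward direction under threshold $32n-1$. There, points may sit at non-integral positions and could exploit the one unit of slack to encode ``in-between'' offsets. I would handle this with the simplification of \cref{lemma:covering:simplify} to restrict to half-integral points. I would then choose the step size between consecutive offsets (e.g.\ $16$) much larger than the slack, so that rounding each selection point to the nearest step loses nothing and every verification constraint fails by a margin of at least $2$ whenever the pair is a non-edge.
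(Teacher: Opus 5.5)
The proposal has a genuine gap. The problem is your architecture: long selection paths carrying many solution points, cut into $\Theta(n^2)$ columns, with each class's choice propagated along its path. This is exactly what the unit of leeway between $32n$ and $32n-1$ cannot survive, and nothing in the proposal repairs it.

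\textbf{Consistency of the offset.} To offer $n$ offsets in steps of about $16$, a selection path needs roughly $16n$ units of slack. So your claim that the points are ``nearly equispaced, which defines offsets $s_j$ up to rounding by less than one step'' is unjustified. Nothing forces the pair gadget in one column to see the same offset as another pair gadget of the same class in a different column. Even at threshold exactly $32n$ you would need a consistency argument (monotonicity plus repetition of gadget blocks), and none is given.

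\textbf{Accumulated slack at threshold $32n-1$.} Here it is worse. Every gap between consecutive points on the path may shrink by one unit, so the offset can drop by one per column. The length freed this way can be spent on an upward jump elsewhere. Over $16t$ columns the encoded index can rise and fall again by $t$, so different columns see different, non-monotone choices for the same class. Your safeguard (step $16$ versus slack $1$) only handles slack local to one gadget, not slack accumulated along a path.

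\textbf{Capacity claim.} The same accumulation breaks it. With $q=\lfloor \ell/32n\rfloor$, the equality $\lfloor \ell/(32n-1)\rfloor=q$ forces $q<32n-1$, and $q$ below roughly $16n$ once the offset slack is added. A selection path with more points than that admits extra points at threshold $32n-1$, so the counting step of your backward direction fails.

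\textbf{The missing idea.} Pathwidth $O(k)$ needs no propagation at all. The paper works in the $2$-subdivided scale, comparing $64n$- and $(64n-2)$-independent sets via \cref{lemma:is:to:disp}.
\begin{itemize}
\item Each class $i$ gets one short path of length $46n$ between hubs $a^i,b^i$, with pendant paths of length $44n$ whose leaves may be assumed selected. This leaves a single free vertex, intended at distance $20n+6j$ from $a^i$.
\item For every edge $\{u^i_j,u^{i'}_{j'}\}$, a vertex $\hat w^{i,i'}_{j,j'}$ is joined to $a^i$ and $b^i$ by paths of complementary lengths $30n-6j$ and $24n+6j$ (analogously for $i'$). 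It carries a pendant vertex $w^{i,i'}_{j,j'}$ at distance $14n$.
\item All edge gadgets of a pair hang from a common vertex $w^{i,i'}$ with its own forced leaf. The budget $3k+2\binom{k}{2}$ then leaves exactly one free vertex per class and one per pair.
\end{itemize}
The distance from $w^{i,i'}_{j,j'}$ to the class vertex at index $j''$ is $\min\{64n+6(j''-j),\,64n+6(j-j'')\}=64n-6|j-j''|$. This equals $64n$ on a match and is at most $64n-6<64n-2$ on a mismatch. Every path has length $O(n)$ and holds $O(1)$ solution vertices, so the leeway never accumulates. Pathwidth $O(k)$ follows because deleting the $2k$ hubs $a^i,b^i$ leaves trees of constant pathwidth.
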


Next, let us define $\delta$ in an abstract sense.
Distance $\delta$ is approximated by values $\delta_i$ and $\gamma_i$ from below and above with increasing precision.
The idea is to define distance $\delta_i$ by a fraction $\frac{a_i}{b_i}$
	that results from applying translation (\cref{lemma:disp:translate}) and subdivision 
	(\cref{lemma:disp:subdivide})
	to a (quite large) distance $c_i$.
Later, our second reduction then reduces to $\delta_i$-\dispersion and to $\gamma_i$-\dispersion
	by applying the according translation and subdivision.

\begin{definition}
\label{i:def:delta}
Let $c_1,c_2,\dots \in \NNN_+$ be an increasing integer sequence.
Then $a_0 = b_0 = 1$ and, for $i\geq 1$,
	$$a_i \coloneqq a_{i-1} c_i, \quad\quad b_i \coloneqq b_{i-1} c_i + 1, \quad\quad
	\delta_i \coloneqq \frac{a_i}{b_i}, \quad\quad \gamma_i \coloneqq \frac{a_i - a_{i-1}}{b_i - b_{i-1}}
	.$$
This defines $\delta \coloneqq \lim_{i\to\infty}\delta_i$.
\end{definition}

The sequence is decreasing and bounded from below, hence the limit $\delta\coloneqq \lim_{i\to\infty}\delta_i$ is 
well defined.

\begin{lemma}
\hyperref[a:lemma:disp:gamma:delta]{\refstar}
\label{lemma:disp:gamma:delta}
For $i\geq2$, and $\gamma_i$, $\delta$, $\delta_i$ as defined in \Cref{i:def:delta}, we have
$
\gamma_{i-1} < \gamma_i < \delta < \delta_i < \delta_{i-1}
$
\end{lemma}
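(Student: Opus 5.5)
The plan is to rewrite every quantity from \Cref{i:def:delta} so that the inequalities reduce to comparing denominators over a common numerator. For $i\ge 1$ we have $a_i-a_{i-1}=a_{i-1}(c_i-1)$ and $b_i-b_{i-1}=b_{i-1}(c_i-1)+1$. Whenever $c_i\ge 2$ this gives
\[
\delta_i=\frac{a_{i-1}}{b_{i-1}+1/c_i},\qquad
\gamma_i=\frac{a_{i-1}}{b_{i-1}+1/(c_i-1)},\qquad
\delta_{i-1}=\frac{a_{i-1}}{b_{i-1}}.
\]
Since $c_1<c_2<\dots$ are positive integers, $c_i\ge 2$ holds for all $i\ge 2$. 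All numerators are positive, so comparisons reduce to comparing denominators.

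\textbf{Step 1 (the $\delta$-chain).} From the three formulas we get $0<1/c_i<1/(c_i-1)$, hence $\gamma_i<\delta_i<\delta_{i-1}$ for every $i\ge 2$. This also shows that $(\delta_j)_{j\ge 1}$ is strictly decreasing. Alternatively, $\delta_i$ is the mediant of $\gamma_i$ and $\delta_{i-1}$. The sequence is bounded below by $0$, so $\delta=\lim_j\delta_j$ exists and satisfies $\delta<\delta_i$, because $\delta\le\delta_{i+1}<\delta_i$.

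\textbf{Step 2 (monotonicity of $\gamma$).} This is the only step needing a real computation. Substitute $a_{i-1}=c_{i-1}a_{i-2}$ and $b_{i-1}=c_{i-1}b_{i-2}+1$ into the formula for $\gamma_i$ and divide numerator and denominator by $c_{i-1}$. This gives
\[
\gamma_i=\frac{a_{i-2}}{b_{i-2}+\frac{1+1/(c_i-1)}{c_{i-1}}}.
\]
If $c_{i-1}\ge 2$, then $\gamma_{i-1}=a_{i-2}/\bigl(b_{i-2}+1/(c_{i-1}-1)\bigr)$, so $\gamma_{i-1}<\gamma_i$ is equivalent to
\[
\frac{1+1/(c_i-1)}{c_{i-1}}<\frac{1}{c_{i-1}-1}.
\]
This simplifies to $(c_{i-1}-1)/(c_i-1)<1$, which holds because $c_{i-1}<c_i$. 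The only remaining boundary case is $i=2$ with $c_1=1$. Then $\gamma_1=(a_1-a_0)/(b_1-b_0)=0/1=0<\gamma_2$, since $\gamma_2>0$.

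\textbf{Step 3 (placing $\delta$).} By Step 2, $(\gamma_j)$ is strictly increasing. By Step 1, $\gamma_j<\delta_j$ for all $j\ge 2$. Therefore $\gamma_i<\gamma_{j}<\delta_j$ for all $j>i$. Letting $j\to\infty$ gives $\gamma_i<\gamma_{i+1}\le\delta$, so $\gamma_i<\delta$ strictly.

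Combining Steps 1–3 yields $\gamma_{i-1}<\gamma_i<\delta<\delta_i<\delta_{i-1}$ for $i\ge 2$. The main obstacle is the algebra in Step 2, together with the degenerate case $c_1=1$. Everything else is monotone-limit bookkeeping.
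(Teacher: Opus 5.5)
Your proof is correct and follows essentially the same route as the paper. Both arguments prove $\delta_i<\delta_{i-1}$, $\gamma_i<\delta_i$ and $\gamma_{i-1}<\gamma_i$ directly and then place $\delta$ using monotone limits; the paper gets the three inequalities by cross-multiplying, while you rewrite with a common numerator and compare denominators, and your chain $\gamma_i<\gamma_{i+1}\le\delta\le\delta_{i+1}<\delta_i$ makes explicit the strictness of $\gamma_i<\delta<\delta_i$, which the paper only records as $\gamma_i\le\delta\le\delta_i$.
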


We obtain nice computational properties if we use the double-exponential sequence for $c_i$.

\begin{lemma}
\label{lemma:disp:computing:a:b}
Using sequence $c_i \coloneqq 2^{2^i}$ for \cref{i:def:delta},
	integers $a_i, b_i$ are polynomial-time computable given $c_i$,
		and $a_{i}$ is polynomial in $c_i$.
	Further, $\delta$ is efficiently comparable.
\end{lemma}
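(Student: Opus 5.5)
The plan is to first obtain closed forms for $a_i$ and $b_i$ when $c_i = 2^{2^i}$, and then use the approximations $\gamma_i<\delta<\delta_i$ from \cref{lemma:disp:gamma:delta} to decide comparisons.

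\textbf{Closed forms.} Unrolling \cref{i:def:delta} gives $a_i=\prod_{j=1}^i c_j = 2^{\sum_{j=1}^i 2^j} = 2^{2^{i+1}-2}$. Since $c_i^2 = 2^{2^{i+1}}$, this is $a_i = c_i^2/4$, which is polynomial in $c_i$. Dividing the recursion for $b_i$ by $a_i$ gives $b_i/a_i = b_{i-1}/a_{i-1} + 1/a_i$. Hence
\[ \frac{b_i}{a_i} = 1 + \sum_{j=1}^{i} 2^{2-2^{j+1}}, \]
and $b_i = a_i + \sum_{j=1}^{i} a_i/a_j$. Each term $a_i/a_j$ is a power of two. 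For the computation we iterate the recursion $i=\log\log c_i$ times, with all numbers bounded by $a_i\le c_i^2$ in size. This is polynomial in $\log c_i$, and certainly polynomial given $c_i$ (and $i$ itself is read off from $c_i$). It also identifies $\delta$ as the limit $1/(1+\sum_{j\ge1}2^{2-2^{j+1}})$, which matches the explicit expression stated in the introduction.

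\textbf{Efficient comparability.} Given $x,y$, we want to decide whether $x/y\le\delta$.
\begin{enumerate}
\item Since $\delta$ is irrational (its binary expansion is non-periodic, with ones at positions $2^{j+1}-2$ and gaps growing), equality never holds.
\item By \cref{lemma:disp:gamma:delta}, $\delta\in(\gamma_i,\delta_i)$ for every $i$. So for increasing $i$ we compute $\gamma_i$ and $\delta_i$. If $x/y\le\gamma_i$, we answer yes; if $x/y\ge\delta_i$, we answer no.
\item It remains to bound how large $i$ must get before $x/y$ falls outside $(\gamma_i,\delta_i)$.
\end{enumerate}

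\textbf{Bounding the number of iterations (the main obstacle).} The key is a bound on the gap $\delta-\gamma_i$ and $\delta_i-\delta$, and I would work with $1/\delta$ instead. We have $a_i/b_i$ versus $1/\delta = \lim b_j/a_j$, where $b_j/a_j$ converges with tail $\sum_{j>i}2^{2-2^{j+1}}\approx 4/c_{i+1}=4/c_i^2$. This tail is much smaller than $1/(b_i y)$ once $c_i^2 \gg b_i y\approx c_i^2 y/4\cdot\ldots$.

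Here a concern arises: the tail is not obviously small relative to $1/b_i$, so instead I compare with Liouville-style reasoning.
\begin{enumerate}
\item Any rational $x/y\neq a_i/b_i$ satisfies $|x/y-a_i/b_i|\ge 1/(yb_i)\approx 4/(yc_i^2)$.
\item Meanwhile $|\delta-\delta_i| = O(1/c_{i+1}) = O(c_i^{-4})$, using that $1/\delta$ and $b_i/a_i$ differ by the tail, which is of order $2^{-2^{i+2}}=c_i^{-4}$.
\item So once $c_i^2 > C y$, the number $x/y$ cannot lie strictly between $\delta$ and $\delta_i$ unless it equals $\delta_i$. A similar argument applies for $\gamma_i$, with denominator $b_i-b_{i-1}\le b_i$.
\item If $x/y$ equals an endpoint, we already know the answer.
\end{enumerate}

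Hence $i=O(\log\log y)$ suffices. Each iteration handles numbers of bit length $O(2^i)=O(\log y)$, so the total running time is polynomial in $\log x+\log y$.

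The delicate point is getting the tail estimate $|\delta-\delta_i|=O(c_i^{-4})$ (and its analogue for $\gamma_i$) with explicit constants. I would derive it from the $1/\delta$ series, together with the fact that $\delta$ is bounded away from $0$.
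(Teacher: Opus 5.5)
Your closed forms match the paper's: your $b_i/a_i=1+\sum_{j\le i}2^{2-2^{j+1}}$ is the paper's $b_i=4a_i\eta_{i+1}$. Your efficient-comparability argument is correct in substance, but it takes a different route.

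\begin{itemize}
\item \textbf{Paper's route.} It inverts and compares binary digits of $\tfrac{y}{x}$ with those of $\delta^{-1}=1.010001\,0000\,0001\ldots$ (in binary). A fraction with denominator $x$ never contains $0^{\lceil\log x\rceil}1$ after the point, while $\delta^{-1}$ contains such a block within its first $O(\log x)$ digits. So the two expansions must differ early.
\item \textbf{Your route.} You use the brackets $(\gamma_i,\delta_i)$ from \cref{lemma:disp:gamma:delta} together with a Liouville bound.
\end{itemize}

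Both rest on the same fact: $b_i/a_i$ is exactly a binary truncation of $\delta^{-1}$, so $\delta$ is extremely well approximated by these fractions. Yours works with exact rational comparisons and reuses the bracketing lemma. The paper's is self-contained and never needs $\gamma_i$.

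There are two slips in your estimates.
\begin{itemize}
\item Since $c_{i+1}=c_i^2$, the tail $\sum_{j>i}2^{2-2^{j+1}}$ is about $4/c_{i+2}=4c_i^{-4}$, not $4/c_{i+1}$. Also $O(1/c_{i+1})=O(c_i^{-2})$, not $O(c_i^{-4})$. A bound of order $c_i^{-2}$ would \emph{not} beat $1/(yb_i)\approx 4\delta/(yc_i^2)$, so the sharper order is essential.
\item On the $\gamma_i$ side the gap is $\delta-\gamma_i=\Theta(c_i^{-3})$, not $O(c_i^{-4})$. This is still enough, but only once $c_i\gtrsim y$.
\end{itemize}
Minor: the expansion with ones at positions $2^{j+1}-2$ is that of $\delta^{-1}$, not of $\delta$.

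You can drop the ``delicate'' tail estimates altogether. The proof of \cref{lemma:disp:gamma:delta} gives exactly $\delta_i-\gamma_i=a_{i-1}/(b_i(b_i-b_{i-1}))$. Suppose $\gamma_i<\tfrac{x}{y}<\delta_i$. Then $\tfrac{1}{y(b_i-b_{i-1})}\le\tfrac{x}{y}-\gamma_i<\delta_i-\gamma_i$, which gives $b_i<ya_{i-1}$. This is impossible once $y\le c_i$, because $b_i>a_i=c_ia_{i-1}$. So take the least $i\ge 2$ with $c_i\ge y$; all numbers involved then have $O(\log y)$ bits. A single comparison of $\tfrac{x}{y}$ with $\gamma_i$ then decides the question, and irrationality of $\delta$ follows for free.
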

\begin{proof}
We observe that $a_i = \prod_{j=1}^{i} c_j = 2^{2^1} \cdot 2^{2^2} \cdots 2^{2^i} = 2^{2^{i+1}-2} = 2^{2^{i+1}}/4 = 
{c_i}^2/4$,
	hence that $a_i$ is polynomial-time computable given $c_i$ and is polynomial in $c_i$.
Further, $b_i = (\dots((c_1+1)c_2+1)\dots)c_i+1 = \sum_{j=1}^{i} c_{j}c_{j+1}\dots c_i
	\leq  i \cdot  a_i = \log\log c_i \cdot c_i^2/4$.
Hence $b_i$ is polynomial-time computable given $c_i$, by at most $2i$ multiplications and additions
	of integers that are polynomial in $c_i$.

Let us determine $\delta$.
We let $\eta_i = \sum_{j=1 }^i 2^{-2^j}$.
Then
$$ \textstyle
b_i \;=\; \sum_{j=1}^{i+1} \; \prod_{k=j}^{i} c_k %
	\;=\; \prod_{k=1}^{i} c_k \sum_{j=1}^{i+1} \; \prod_{k=1}^{j-1} c_k^{-1}
	\;=\; a_i \sum_{j=1}^{i+1 } 2^{-(2^j-2)}
	\;=\; 4 a_i \eta_{i+1}. %
$$
This yields
$
\delta = \lim_{i \to \infty} \frac{a_i}{b_i}
	= \left(4 \sum_{j=1}^{\infty} 2^{-2^j} \right)^{-1} \approx 0.790085.
$

We show that $\delta$ is efficiently comparable,
	that is there is an algorithm that, given a rational $\fracxy$,
	decides whether $\fracxy < \delta$ in time polynomial in $\log x + \log y$.
Our algorithm first checks whether $\half < \fracxy < 1$,
	and if not can conclude that $\fracxy < \delta$ or $\fracxy > \delta$.
Instead of comparing $\fracxy$ with $\delta$,
	we compare their inverses $\fracxyinv$ and $\delta^{-1}$, and output the negated answer.
In base $2$, we obtain that $\delta^{-1} = 1.010001 0000 0001 \dots$,
	which is that the $i$-th digit $1$ succeeds the $(i-1)$-st digit $1$
	in $2^i$ steps.
Hence the first $j$ digits (after the dot) of $\delta^{-1}$ can be output in time polynomial in 
$j$.
We may also output the first $j$ digits (after the dot) of $\fracxyinv$ in time polynomial in $j$.
If there is a position where the digits differ,
	we can conclude whichever is larger.
It remains to show that there will be a difference in the first $j =O(\log x +\log y)$ digits
	of $\delta^{-1}$ and $\fracxyinv$,
	hence that comparing the first $j$ digits suffices.
Indeed, the digits of $\fracxyinv$ as a string
	cannot contain the substring $0^{\ceil{ \log x}}1$ after the dot.
Otherwise $\fracxyinv+\fracxyinv$ contains the substring $0^{\ceil{ \log x}-1}1$ after the dot,
	and by induction $\fracxyinv \cdot x = y$
	contains the substring $1$ after the dot,
	in contradiction that $y$ is integer.
In contrast, the first $O(\log x)$ digits of $\delta^{-1}$ do include the substring $0^{\ceil{ \log x}}1$.
Thus it suffices to compare the fist $\log x$ digits of $\delta^{-1}$ and $\fracxyinv$,
	which concludes the proof.
\end{proof}

The following lemma lies at the heart of our result:
the definition of the sequences $a_i$, $b_i$, $c_i$ allows us to reduce $c_i$-\dispersion to $\delta_i$-\dispersion 
and, at the same time, $(c_i-1)$-\dispersion to $\gamma_i$-\dispersion with the same reduction.
\begin{lemma}
\label{lemma:disp:reduction:delta:gamma}
Let sequence $(c_i)_{i\geq 1}$ be as in \Cref{i:def:delta}.
There is a polynomial-time reduction that,
	given integers $c_i,k'$ and a graph $G'$,
	outputs a subdivision $G''$ of $G'$ and integer $k''$, such that:
$\langle G',k'\rangle$ is a yes-instance of $c_i$-$\dispersionstar$,
	if and only if the output $\langle G'',k''\rangle$ is a yes-instance of $\delta_i$-$\dispersionstar$.
Also, $\langle G',k'\rangle$ is a yes-instance of $(c_i-1)$-$\dispersionstar$,
	if and only if the output $\langle G'',k''\rangle$ is a yes-instance of $\gamma_i$-$\dispersionstar$.
\end{lemma}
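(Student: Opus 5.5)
The reduction combines the two transformations available for dispersion: translation, in the iterated form of \cref{lemma:disp:translate:b}, followed by subdivision (\cref{lemma:disp:subdivide}). The point is to pick the parameters so that the radii $c_i$ and $c_i-1$ land exactly on $\delta_i$ and $\gamma_i$.

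Concretely, given $\langle G',k'\rangle$ and $c_i$, I would compute $a_{i-1}$ and $b_{i-1}$, which is possible in polynomial time by \cref{lemma:disp:computing:a:b}. I would output the subdivision $G'' \coloneqq G'_{a_{i-1}}$ together with the budget $k'' \coloneqq k' + b_{i-1}|E(G')|$. Note that $k''$ does not depend on whether we regard the source radius as $c_i$ or $c_i-1$, which is what makes the mapping "agnostic".

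\emph{Step 1 (translation).} Since $G'$ has no cycle of length $<c_i$, \cref{lemma:disp:translate:b} with $b=b_{i-1}$ gives
\[
\disp{c_i}(G') + b_{i-1}|E(G')| = \disp{\frac{c_i}{1+b_{i-1}c_i}}(G').
\]
The same corollary applies to radius $c_i-1$, because in a $(c_i-1)$-$\dispersionstar$ instance the graph has no cycle of length $<c_i-1$. It gives $\disp{c_i-1}(G') + b_{i-1}|E(G')| = \disp{\frac{c_i-1}{1+b_{i-1}(c_i-1)}}(G')$.

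\emph{Step 2 (subdivision).} By \cref{lemma:disp:subdivide} with $c=a_{i-1}$, these values equal the dispersion numbers of $G''=G'_{a_{i-1}}$ at the radii multiplied by $a_{i-1}$.

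\emph{Step 3 (arithmetic).} It remains to check the radii. For the first,
\[
\frac{a_{i-1}c_i}{1+b_{i-1}c_i} = \frac{a_i}{b_i} = \delta_i .
\]
For the second, the numerator is $a_{i-1}(c_i-1) = a_i - a_{i-1}$, and the denominator is $1+b_{i-1}c_i - b_{i-1} = b_i - b_{i-1}$, so the radius is $\gamma_i$.

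Hence $\disp{c_i}(G') \ge k'$ if and only if $\disp{\delta_i}(G'') \ge k''$, and likewise $\disp{c_i-1}(G') \ge k'$ if and only if $\disp{\gamma_i}(G'') \ge k''$. The target instances are automatically $\dispersionstar$ instances: $\delta_i, \gamma_i < 1$ by \cref{lemma:disp:gamma:delta} and the observation that $\delta$ is below $1$, while every cycle has length at least $3$.

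The main obstacle is finding the right order of operations, namely translate first and then subdivide. Subdividing first would require a non-integral number of translation steps, $b_{i-1}/a_{i-1}$. One also has to ensure that the translation hypothesis about short cycles is applied in the original graph $G'$, where the $\dispersionstar$ promise holds.

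Polynomiality follows because $a_{i-1}$ and $b_{i-1}$ are bounded by a polynomial in $c_i$, so both $G''$ and $k''$ have polynomial size.
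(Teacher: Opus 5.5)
Your argument is the paper's proof. It uses the same output $G''=G'_{a_{i-1}}$ with $k''=k'+b_{i-1}|E(G')|$, translates $b_{i-1}$ times via \cref{lemma:disp:translate:b} and then subdivides via \cref{lemma:disp:subdivide}, and does the same arithmetic to identify the radii as $\delta_i$ and $\gamma_i$.

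The one thing you skip is the paper's preliminary handling of trivial inputs ($k'=0$, and $c_i>|V(G')|$). That step is what makes the reduction polynomial. $a_{i-1}=c_i/4$ is polynomial in $c_i$, but not in the input size if $c_i$ is given in binary, so without it $G''$ could be exponentially large. Your sentence ``bounded by a polynomial in $c_i$, so \dots polynomial size'' conflates these two things.

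The fix is short. If $c_i>|V(G')|$, no two points of the same component are at distance $\ge c_i-1$. So both the $c_i$- and the $(c_i-1)$-dispersion numbers equal the number of connected components, and you can output a fixed trivial instance with that answer. After this, $c_i\le|V(G')|$ and your size bound is genuinely polynomial. Alternatively, observe that in the application $c_i=32n$ is polynomial in the \ColorfulClique instance anyway.
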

\begin{proof}
Let $a_i, b_i$ and sequence $(c_i)_{i\geq 1}$ be defined as in \Cref{i:def:delta}.
Our algorithm begins by addressing some border cases.
If $k'=0$, we output a trivial simultaneous yes-instance of $\delta_i$-$\dispersionstar$ and 
$\gamma_i$-$\dispersionstar$.
Else, if $c_i$ exceeds $|V(G')|$ and $k'\geq 1$, we output a trivial simultaneous no-instance.
Else, we output an $a_{i-1}$-subdivision of the input graph $G'$ as $G''$
	and as budget $k'' = k' + b_{i-1} |E(G')|$.
Hence $G'$ and $G''$ have the same pathwidth up to subdividing the edges.
Any number of subdivisions of edges may increase the pathwidth only by a total of one.
To compute $g_i$ with $c_i$ as part of the input,
	we use \cref{lemma:disp:computing:a:b} to compute $a_{i-1}$ and $b_{i-1}$ in polynomial time.
In particular, $a_{i-1}$ is polynomial in $c_i$ and hence polynomial in $|V(G')|$,
	such that we may output $G''$, the $a_{i-1}$-subdivision of $G'$, in polynomial time.

We have $\disp{c_i}(G') = \disp{\frac{c_i}{1+c_i}}(G') - |E(G')|$ by \cref{lemma:disp:translate}
	and as $G'$ contains no cycle of length $<a$.
Applying this translation not only once but $b_{i-1}$ times, by \cref{lemma:disp:translate:b},
	we obtain $\disp{c_i}(G') = \disp{\frac{c_i}{ 1+b_{i-1}c_i }}(G') - b_{i-1}|E(G')|$.
Then by an $a_{i-1}$-subdivision of the input graph we have
	$\disp{\frac{c_i}{ 1+b_{i-1}c_i} }(G') =  \disp{\frac{a_{i-1} c_i}{ 1+b_{i-1}c_i }}(G'_{a_{i-1}})
	= \disp{\frac{a_i}{b_i}}(G'') = \disp{\delta_i}(G'')$ by \cref{lemma:disp:subdivide}.
Thus the input $\langle G',k'\rangle$ is a yes-instance of $c_i$-$\dispersionstar$,
	if and only if the output $\langle G'',k'' \rangle$ is a yes-instance of $\delta_i$-$\dispersionstar$.

The analogous transformations yields that
	$\disp{(c_i-1)}(G') = \disp{\frac{a_{i-1}(c_i-1)}{ 1 + b_{i-1}(c_i-1) }}(G'')  - b_{i-1}|E(G')|$.
We observe that the numerator of the latter is $a_{i-1}(c_i-1)=a_{i-1}c_i - a_{i-1} = a_i - a_{i-1}$,
	while the denominator is $1 + b_{i-1}(c_i-1) = 1 + b_{i-1}c_i - b_{i-1} = b_i - b_{i-1}$.
Hence this rational is equal to $\gamma_i$.
Thus $\langle G',k'\rangle$ is a yes-instance of $(c_i-1)$-$\dispersionstar$,
	if and only if the output $\langle G'',k'' \rangle$ is a yes-instance of $\gamma_i$-$\dispersionstar$.
\end{proof}

\begin{proof}[Proof of \cref{lemma:disp:wone:pw}]
Let $\delta$ be defined by integer sequence $c_i = 2^{2^i}$ for $i\geq 1$.
Then $\delta$ is efficient comparable by \cref{lemma:disp:computing:a:b}.
Consider a \ColorfulClique-instance $\langle G,k\rangle$ with color classes of size $\hat n$.
Let $i$ be such that $\hat n \leq c_i/32=2^{2^i-5} \eqqcolon n$, hence $32n=c_i$.
We note that $c_{j+1} = {c^2_j}$, for $j\geq 0$,
	and hence $n \leq {\hat n}^2$.
Thus $n$ and $c_i$ are polynomial-time computable, and $n,c_i$ are polynomial in $\hat n$.
We extend the color-classes of $\langle G,k\rangle$ with $n-\hat n$ isolated vertices each,
	resulting in color classes of size~$n$.
Next, we apply the reduction of \cref{lemma:clique:leq:disp} on $\langle G,k \rangle$,
	now with $c_i$ color classes,
	which outputs $\langle G',k'\rangle$.
In turn, we apply the reduction of \cref{lemma:disp:reduction:delta:gamma} on $\langle G',k'\rangle$
	which outputs $\langle G'',k''\rangle$,
	forming our final output.

We note that the reductions of \cref{lemma:clique:leq:disp} and \cref{lemma:disp:reduction:delta:gamma}
	are polynomial-time computable.
The former outputs a graph of pathwidth $\Oh(k)$,
	the latter does not change the pathwidth up to a constant.
Hence overall we output a graph $G''$ of pathwidth $\Oh(k)$.

By \cref{lemma:clique:leq:disp} and \cref{lemma:disp:reduction:delta:gamma},
	$\langle G,k\rangle$ is a yes-instance of \ColorfulClique,
	if and only if $\langle G'',k''\rangle$ is a yes-instance of $\delta_i$-\dispersion,
	if and only if $\langle G'',k''\rangle$ is a yes-instance of $\gamma_i$-\dispersion.
Since $\gamma_i < \delta < \delta_i$, by \cref{lemma:disp:gamma:delta},
	the dispersion numbers satisfy $\disp{\gamma_i}(G'')=\disp{\delta}(G'')=\disp{\delta}(G'')$.
Thus the output $\langle G'',k'' \rangle$ is a yes-instance of $\delta$-\dispersion,
	if and only if $\langle G,k \rangle$ is a yes-instance of \ColorfulClique.

Since \ColorfulClique is \wone-hard parameterized by $k$,
	also $\delta$-\dispersion is \wone-hard parameterized by the pathwidth of the input graph.
For the lower bound under \ETH, assume an $f(\pw(G)) \cdot n^{o(\pw(G))}$ time algorithm for $\delta$-\dispersion
	for a computable function~$f$.
Then using the above reduction on a \ColorfulClique-instance
	yields an $f(k) \cdot n^{o(k)}$ time algorithm for \ColorfulClique, in contradiction to \ETH.
\end{proof}

\section{Domination and Covering}
\label{section:overview:domination}

\newcommand{\definitionsWalkDomination}{
Let $G$ be a graph without isolated vertices.
For an integer $a$,
	a subset $D \subseteq V(G)$
	$a$-\emph{walk dominates} some subset of edges $E' \subseteq E(G)$,
	defining $V' \coloneqq V(G[E'])$,
	if:
\begin{itemize}
\item \labeltext{$($D1$)$}{def:ds:1}
For every edge $e\in E'$,
	there are (possibly identical) vertices $w_1,w_2 \in D$
	and a $w_1,w_2$-walk in $G$ of length at most $a$ that contains $e$; or
\item \labeltext{$($D2$)$}{def:ds:2}
Every vertex $u \in V(G[E'])$ has $d(u,D) \leq \ahalf$,
	and the set vertices $u \in V(G[E'])$ where $d(u,D)=\ahalf$ forms an independent set.
\item \labeltext{$($D3$)$}{def:ds:3}
$D \subseteq V(G)$ $a$-dominates $V'\cup E'$ in the $2$-subdivision $G_2$ of $G$
	(when identifiying an edge $\{u,v\}$ with the vertex with neighborhood $\{u,v\}$ in $G_2$).
That is, for every vertex $u \in V'\cup E'$ ,
	there is a vertex $w \in D$ with $d_{G_2}(u,w)\leq a$.
\end{itemize}
An $a$-\emph{walk dominating set} of $G$ is a subset $D \subseteq V(G)$
	that $a$-\emph{walk dominates} $E(G)$.
}

Finally, we turn to the domination problems and covering as their continuous counterpart.
This section outlines the connection of \dombarset{a} and $\delta$-covers.
We establish tools that relate \dombarset{a} on $b$-subdivided graphs for different values of $a,b$,
	similarly as we did for the independent set problem.
These tools then allow to derive the complexity results for $a$-\DS on $b$-subdivided graphs
	and $\delta$-\covering as stated in the introduction.
The details thereof are deferred to \cref{section:domination:and:covering}.

The notion of an \dombarset{a} can be defined in three different ways, (D1), (D2) and (D3),
	which are useful for different kind of proofs.
\definitionsWalkDomination

\newcommand{\lemmaTextDSequivalent}{
Conditions \ref{def:ds:1}, \ref{def:ds:2}, \ref{def:ds:3} are equivalent.} 

\begin{lemma}\hyperref[lemma:ds:equivalent]{\refstar}
\label{i:lemma:ds:equivalent}
\lemmaTextDSequivalent
\end{lemma}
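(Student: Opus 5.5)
The plan is to reduce all three conditions to one numerical condition on each edge, stated in terms of the distances $d(u,D)$ and $d(v,D)$ of its endpoints. Fix $D$ and $E'$, and write $d(x)\coloneqq d(x,D)$, an integer or $\infty$. Every vertex of $V'=V(G[E'])$ is an endpoint of some edge of $E'$. Hence each condition can be checked edge by edge over $e=\{u,v\}\in E'$. The target condition is
\begin{equation*}
(\ast)\qquad d(u)+d(v)\le a-1 \quad\text{for every } \{u,v\}\in E'.
\end{equation*}
I will show that each of \ref{def:ds:1}, \ref{def:ds:2} and \ref{def:ds:3} is equivalent to $(\ast)$. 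Throughout I use that adjacent vertices satisfy $|d(u)-d(v)|\le 1$.

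\emph{\ref{def:ds:1} $\Leftrightarrow(\ast)$.} Suppose a $w_1,w_2$-walk of length at most $a$ traverses $e$, say from $u$ to $v$. Its length is at least $d(w_1,u)+1+d(v,w_2)\ge d(u)+d(v)+1$, so $(\ast)$ holds. Conversely, take $w_1\in D$ nearest to $u$ and $w_2\in D$ nearest to $v$. Concatenating a shortest $w_1,u$-path, the edge $e$, and a shortest $v,w_2$-path gives a walk of length $d(u)+d(v)+1\le a$ that contains $e$.

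\emph{\ref{def:ds:3} $\Leftrightarrow(\ast)$.} In $G_2$, $d_{G_2}(x,w)=2d(x,w)$ for original vertices $x,w$. For the subdivision vertex $x_e$ of $e=\{u,v\}$, $d_{G_2}(x_e,w)=1+2\min(d(u,w),d(v,w))$. So \ref{def:ds:3} says two things: $d(x)\le a/2$ for every $x\in V'$, and $\min(d(u),d(v))\le (a-1)/2$ for every $\{u,v\}\in E'$. Given $(\ast)$, together with $|d(u)-d(v)|\le1$, we get $\max(d(u),d(v))\le a/2$ and $\min(d(u),d(v))\le (a-1)/2$. Conversely, use integrality. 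If $a$ is odd, both distances are at most $\lfloor a/2\rfloor=(a-1)/2$. If $a$ is even, one distance is at most $a/2-1$ and the other at most $a/2$. In both cases $d(u)+d(v)\le a-1$.

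\emph{\ref{def:ds:2} $\Leftrightarrow$ \ref{def:ds:3}.} Given $d(x)\le a/2$ on $V'$, the per-edge requirement $\min(d(u),d(v))\le(a-1)/2$ fails exactly when $a$ is even and $d(u)=d(v)=a/2$. So it is equivalent to saying that no edge of $E'$ joins two vertices at distance exactly $a/2$. This is the independence requirement of \ref{def:ds:2}; for odd $a$ that set is empty, so the requirement is vacuous.

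The only delicate point is how to read ``independent set'' in \ref{def:ds:2}. The equivalence holds when it means independent in $G[E']$, i.e.\ with respect to the edges of $E'$. I will state it that way; for $E'=E(G)$, the case used for walk dominating sets, the two readings coincide. Apart from this, the argument is the integrality and $|d(u)-d(v)|\le 1$ bookkeeping above, and I expect no real obstacle.
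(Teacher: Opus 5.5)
Your proof is correct, and it is organized differently from the paper's. The paper proves the cycle \ref{def:ds:1}$\Rightarrow$\ref{def:ds:2}$\Rightarrow$\ref{def:ds:3}$\Rightarrow$\ref{def:ds:1} and closes it with a parity argument in $G_2$. Since $G_2$ is bipartite with $D$ on one side, distances to $D$ change by exactly one along $G_2$-edges. That leaves two cases for the triple $u,\{u,v\},v$, and in each case concatenating shortest paths gives a walk of length at most $2a$ in $G_2$, i.e.\ at most $a$ in $G$. You instead prove each condition equivalent to the single edge criterion $d(u,D)+d(v,D)\le a-1$ on $E'$. You translate $G_2$-distances into $G$-distances by explicit formulas and replace the parity case analysis by integrality.

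The cycle needs only three implications and stays closer to the original definitions. Your hub gives a closed-form characterization and makes each step a two-line check. It also shows exactly where the parity of $a$ enters: the independence clause of \ref{def:ds:2} is vacuous for odd $a$. It shows that the no-isolated-vertices hypothesis is not needed, and it makes immediate the simplified condition the paper later uses for even $a$ (some endpoint of every edge at distance at most $a/2-1$ from $D$).

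Your caveat on \ref{def:ds:2} is warranted, not pedantic. Take $a=2$, the path $w,u,v,w'$, $D=\{w,w'\}$ and $E'=\{\{w,u\},\{v,w'\}\}$. This satisfies \ref{def:ds:1} and \ref{def:ds:3}, yet $u,v$ are adjacent in $G$ and both lie at distance $a/2$ from $D$. The paper's own \ref{def:ds:1}$\Rightarrow$\ref{def:ds:2} step tacitly uses your reading, since it applies \ref{def:ds:1} to an arbitrary pair of adjacent vertices.
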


We have the following two transformation
	of $\delta$-dispersed sets for different values of $\delta$,
	as shown by Hartmann et al.~\cite{HartmannLW22}.

\begin{lemma}[\cite{HartmannLW22}]
\label{lemma:cover:subdivide}
For every real $\delta > 0$ and integer $c\geq 1$,
	$\cover{\delta}(G)=\cover{c\delta}(G_c)$.
\end{lemma}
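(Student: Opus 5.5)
We prove $\cover{\delta}(G)\le\cover{c\delta}(G_c)$ and the reverse inequality with a single map between point sets. The map is the canonical isometry-up-to-scaling $\varphi\colon P(G)\to P(G_c)$. Let $e=\{u,v\}\in E(G)$ and $\lambda\in[0,1]$, and write $\beta=\lfloor c\lambda\rfloor$. In $G_c$ the edge $e$ is the path $\vv(u,v,0),\dots,\vv(u,v,c)$. We set
$\varphi(p(u,v,\lambda))=p(\vv(u,v,\beta),\vv(u,v,\beta+1),c\lambda-\beta)$, with the obvious convention when $c\lambda=c$. The first step is to check that $\varphi$ is well defined, i.e.\ that $p(u,v,\lambda)=p(v,u,1-\lambda)$ and $p(u,v,0)=u$ are consistent with it. Then we check that $\varphi$ is a bijection: every point of $G_c$ lies on a subdivision edge of a unique original edge, or is an original vertex.

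The key step is to show that $d_{G_c}(\varphi(p),\varphi(q))=c\cdot d_G(p,q)$ for all $p,q\in P(G)$. The metric on $P(G)$ is the shortest-path metric of the metric graph in which every edge is a segment of length $1$. Likewise, $P(G_c)$ is the metric graph in which each original edge becomes a concatenation of $c$ unit segments, i.e.\ a segment of length $c$. So $P(G_c)$ is exactly $P(G)$ with all lengths multiplied by $c$, and $\varphi$ is the identity on the underlying topological space.

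Concretely, we can argue as follows. Let $p$ lie on edge $e$ and $q$ on edge $f$, with $e\neq f$. Then $d_G(p,q)$ is the minimum, over the endpoints $x$ of $e$ and $y$ of $f$, of $d(p,x)+d_G(x,y)+d(y,q)$. We have $d_{G_c}(x,y)=c\,d_G(x,y)$ for original vertices $x,y$, because every path in $G_c$ between original vertices passes through whole subdivided edges. The within-edge distances also scale by $c$. The same-edge case $e=f$ reduces to comparing the direct distance $|\lambda-\lambda'|$ with detours through the endpoints, and these detours also scale by $c$.

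Given the isometry, the rest is immediate. A point set $S\subseteq P(G)$ is a $\delta$-cover, meaning every point lies within distance $\delta$ of $S$, if and only if $\varphi(S)$ is a $c\delta$-cover of $G_c$. This uses surjectivity of $\varphi$ on points and the scaling of distances. Since $\varphi$ is a bijection, $|\varphi(S)|=|S|$, so the minimum sizes agree. The same argument also reproves \cref{lemma:disp:subdivide}.

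The main obstacle is only the bookkeeping: showing that distances computed in $G_c$ are realized by paths that traverse whole original edges between original vertices. This holds because internal subdivision vertices have degree $2$.
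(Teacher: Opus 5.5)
Your proof is correct. The map $\varphi$ is a size-preserving bijection $P(G)\to P(G_c)$ that multiplies all distances by $c$, so $\delta$-covers of $G$ correspond exactly to $c\delta$-covers of $G_c$. The paper gives no proof of its own and simply cites \cite{HartmannLW22}; elsewhere (e.g., in the proof of Observation~\ref{lemma:is:iff:disp}) it relies on the same correspondence with distances scaled by the subdivision factor, so your argument matches the reasoning the paper takes for granted.
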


\begin{lemma}[\cite{HartmannLW22}]
\label{lemma:cover:translate}
$\cover{\delta}(G) + |E(G)| = \cover{\tfrac{\delta}{1+2\delta}}(G)$.
\end{lemma}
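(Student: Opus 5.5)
The plan is to prove the two inequalities separately. In both directions I would construct an explicit cover, treating every edge as a unit interval on which we cut or insert a short segment. Throughout I would use $\delta' \coloneqq \tfrac{\delta}{1+2\delta}$ and the scale factor $s \coloneqq \tfrac{\delta'}{\delta} = \tfrac{1}{1+2\delta}$. These satisfy the key identity $1 = s + 2\delta'$: under radius $\delta'$, an edge behaves like an edge of length $s$ rescaled from the $\delta$-world, plus one extra segment of length $2\delta'$ that costs exactly one additional cover point. Since $\delta' < \tfrac12$, the midpoint of every edge is at distance $>\delta'$ from both endpoints. Hence every $\delta'$-cover contains at least one point in the interior of every edge, which is where the additive term $|E(G)|$ comes from.

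\emph{($\le$) From a $\delta'$-cover to a $\delta$-cover.} Let $S'$ be a $\delta'$-cover. On every edge $e$ pick one point $p_e \in S' \cap e$. If needed, first normalize $S'$ (in the spirit of \cref{lemma:covering:simplify}) so that each $p_e$ lies at distance $\ge \delta'$ from both endpoints; then its ball is a segment $I_e$ of length $2\delta'$ inside $e$. Contract each $I_e$ to a single point and rescale the remaining length $s$ of $e$ back to $1$. The points of $S' \setminus \{p_e\}_e$ map to a set $S$ of size $|S'| - |E(G)|$. Now take a point $q \notin \bigcup_e I_e$ and a point $p \in S'$ with $d(p,q) \le \delta'$, where $p$ is not one of the removed points (a removed point covers only its own segment). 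Contraction does not lengthen the $p$--$q$ path, and rescaling multiplies its length by $1/s$. So the images of $p$ and $q$ are at distance at most $\delta'/s = \delta$. The contracted points are limits of covered points, so they are covered as well.

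\emph{($\ge$) From a $\delta$-cover to a $\delta'$-cover.} Let $S$ be a $\delta$-cover. Shrink each edge by the factor $s$ and insert into it a segment of length $2\delta'$ at a carefully chosen position $x_e$. Place a new cover point at the centre of each inserted segment; it covers exactly that segment. This gives $|S| + |E(G)|$ points. For an old point $q$ with cover point $p \in S$, $d(p,q) \le \delta$, the scaled distance is $\le \delta'$, \emph{provided} the shortest $p$--$q$ path does not run through an inserted segment. So $x_e$ must be chosen so that no needed covering path crosses it. Concretely, I would take $x_e$ to be a ``handover point'' of $e$: the part of $e$ on one side of $x_e$ is covered by cover points reached without crossing $x_e$, and likewise for the other side. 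If $e$ contains a point of $S$ in its interior, take $x_e$ to be that point (or the relevant one). Then I would check that coverage on both sides survives with the shift of $\delta'$ absorbed by the new centre point. A neat way to argue existence is to look at, for each point $y \in e$, whether $y$ is covered via $u$ or via $v$ (or from inside $e$), and use connectivity of $[0,1]$.

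The main obstacle is exactly this choice of insertion point in the ($\ge$) direction, together with the matching normalization in the ($\le$) direction. Here I would have to do a careful case analysis on whether $e$ contains points of $S$ and how the coverage intervals coming from $u$ and from $v$ overlap. If the direct argument becomes messy, my fallback is to reduce via \cref{lemma:cover:subdivide} to a discrete setting. For rational $\delta$ one can work with simple points and the walk-domination characterization (D3). There the insertion corresponds to subdividing each edge by a fixed number of extra vertices and adding one vertex per edge, and the claim becomes a combinatorial statement about distances in $G_2$. The irrational case would then follow by a rounding argument in the spirit of Tamir.
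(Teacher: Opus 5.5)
\paragraph{Gap in the ($\ge$) direction.} The construction cannot be repaired by a careful choice of $x_e$. In it, each new point covers exactly its own inserted segment, with zero slack at the segment's ends. So every old position must be covered by an old point along a path that avoids all inserted segments. Once $\delta>1$ this fails in general: covering paths must traverse whole edges, and such a path crosses that edge's inserted segment wherever you place it. The lemma is needed exactly in this regime, since the paper applies it with $\delta=\tfrac{a}{2b}$ arbitrary and with $\delta=c_i/2$ in the irrational reduction.

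Concretely, take the path $v_0v_1v_2v_3v_4$, $\delta=2$, and the minimum cover $S=\{v_2\}$, so $s=\tfrac15$ and $\delta'=\tfrac25$. For any insertion points, the edge $\{v_0,v_1\}$ keeps non-inserted parts of total length $\tfrac15$. These parts are not covered:
\begin{itemize}
\item the centre point of $\{v_0,v_1\}$ covers only its own segment;
\item the centre point on $\{v_1,v_2\}$ reaches at most $v_1$;
\item every other point is at distance at least $1$.
\end{itemize}
The unique $\tfrac25$-cover of size $5$ is $\{0.4,1.2,2,2.8,3.6\}$ (positions along the path), and it shows what is missing. The extra point on $\{v_1,v_2\}$ lies at distance $0.2=s\,d(v_1,S)$ from $v_1$, and the one on $\{v_0,v_1\}$ at distance $0.4=s\,d(v_0,S)$ from $v_0$. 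The new points must act as relays that guarantee $d(x,S')\le s\,d(x,S)$ for every vertex $x$.

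\paragraph{The missing construction.} With this invariant the proof becomes local. Coverage of a point on $\{u,v\}$ depends only on $d(u,\cdot)$, $d(v,\cdot)$ and the interior points of that edge, and $s+2\delta'=1$ converts each coverage inequality exactly. The per-edge rules are:
\begin{itemize}
\item On an edge with interior points, put the points nearest $u$ and $v$ at $s$ times their old distances from $u$ and $v$, and spread the remaining old points plus one new point evenly between them. A single interior point is thus duplicated at both ends of the gap. Your rule of inserting the segment at that point loses its coverage on one side.
\item On an edge $\{u,v\}$ without interior points and with $d(v,S)=1+d(u,S)$, put the extra point at distance $s\,d(v,S)\le\delta'$ from $v$.
\item Otherwise put it in the middle of the uncovered gap, whose length is $s(1+d(u,S)+d(v,S))\le 2\delta'$.
\end{itemize}
The paper only cites this lemma. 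It describes the cited proof as such a constructive argument, in the spirit of the $\mathsf{dr}$-based proof of \cref{a:lemma:disp:translate:b:simple}.

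\paragraph{The ($\le$) direction and the fallback.} The same example refutes your normalization. The $\tfrac25$-cover above is forced, since five intervals of length $\tfrac45$ must tile a path of length $4$. On $\{v_1,v_2\}$ its only interior point is at distance $0.2<\delta'$ from $v_1$. So its ball spills into $\{v_0,v_1\}$, and ``a removed point covers only its own segment'' fails. Normalization is guaranteed only when $\delta'\le\tfrac14$. Instead, delete one point per edge and prove $d(x,S)\le d(x,S')/s$ by induction on $d(x,S')$.

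Your fallback is circular within this paper. The discrete identity $\ogamma_a(G_b)+|E(G)|=\ogamma_a(G_{a+b})$ (\cref{lemma:trans:ogamma}) is derived from this very lemma, so passing to walk domination only restates the difficulty.
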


Aiming to translate these modifications to the realm of \dombarset{a} on $b$-subdivided graphs,
	we observe the following connection.

\begin{observation}\hyperref[lemma:dom:iff:cov]{\refstar}
\label{i:lemma:dom:iff:cov}
Let $k \in \NNN$.
There is a $b$-simple $\tfrac{a}{2b}$-covering set of size~$k$ of a graph $G$ without isolated vertices,
	if and only if there is an \dombarset{a} of $G_{b}$ of size $k$.
\end{observation}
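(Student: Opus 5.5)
The plan is to use the natural bijection from \cref{lemma:is:iff:disp}. The vertex $\vv(u,v,\beta)$ of $G_b$ corresponds to the $b$-simple point $p(u,v,\tfrac{\beta}{b})$ of $G$. Under this map, distances in $G_b$ are exactly $b$ times distances in $P(G)$. So a $b$-simple set $S\subseteq P(G)$ is identified with a vertex set $D\subseteq V(G_b)$ of the same size, and $d_{P(G)}(p,q)=\tfrac1b d_{G_b}(\cdot,\cdot)$ for corresponding elements. After rescaling by $b$ (as in \cref{lemma:cover:subdivide}), the statement reduces to the case $b=1$: a $1$-simple set $D\subseteq V(H)$ with $H=G_b$ is an $\tfrac a2$-cover of $P(H)$ if and only if $D$ is an \dombarset{a} of $H$. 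We use that the point space of $G_b$ is isometric to that of $G$ scaled by $b$. Since $G$ has no isolated vertices, neither does $G_b$.

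For $b=1$, I would use characterization \ref{def:ds:1} from \cref{i:lemma:ds:equivalent}. Fix an edge $e=\{x,y\}$ of $H$ and a point $p=p(x,y,\lambda)$ on it. For $w\in D$, $d(p,w)=\min(\lambda+d(x,w),\,1-\lambda+d(y,w))$. Write $\alpha=d(x,D)$ and $\beta=d(y,D)$. Every point of $e$ is within $\tfrac a2$ of $D$ if and only if $\max_\lambda\min(\lambda+\alpha,1-\lambda+\beta)\le \tfrac a2$. The inner minimum is maximized where $\lambda+\alpha=1-\lambda+\beta$, provided that $\lambda$ lies in $[0,1]$. This gives value $\tfrac{\alpha+\beta+1}{2}$ when $|\alpha-\beta|\le1$. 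Adjacency forces $|\alpha-\beta|\le 1$, so the edge is fully covered if and only if $\alpha+\beta+1\le a$.

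On the walk side, $\alpha+\beta+1\le a$ holds if and only if there is a walk $w_1\leadsto x\to y\leadsto w_2$ of length at most $a$ with $w_1,w_2\in D$ that contains $e$. The backward direction needs slightly more care. Any walk from $w_1$ to $w_2$ containing $e$ splits as $w_1\leadsto x$, then $e$, then $y\leadsto w_2$, up to orientation. Its length is therefore at least $d(x,D)+1+d(y,D)$, or the same with the roles of $x$ and $y$ swapped, and both give the same sum. Hence condition \ref{def:ds:1} for $e$ is equivalent to $e$ being fully covered. Covering of vertices is implied because every vertex lies on some edge. Ranging over all edges gives the equivalence, and cardinalities are preserved by the bijection.

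The main obstacle is the maximization over $\lambda$ in the second paragraph. One must check that the maximizing point lies inside the edge, which uses $|\alpha-\beta|\le 1$. One must also check that the covering distance from $D$ to a point in the interior of $e$ is realized through an endpoint of $e$. This holds because $D$ consists of vertices only, which is exactly where $b$-simplicity is used. The rest is bookkeeping of the scaling by $b$.
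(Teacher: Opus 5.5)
Your proof is correct and takes essentially the paper's route: identify $b$-simple points of $G$ with vertices of $G_b$, rescale distances by $b$, and then check each edge $\{x,y\}$ of $G_b$ locally in terms of $d(x,D)$ and $d(y,D)$. The only difference is that you handle both directions with the single criterion $d(x,D)+d(y,D)+1\le a$, using (D1) alone and computing the farthest point on the edge explicitly. The paper instead invokes (D2) for one direction and splits into the cases $d(x,D),d(y,D)\le\frac{a-1}{2}$ versus $\{d(x,D),d(y,D)\}=\{\frac a2,\frac a2-1\}$.
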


A minimum $\tfrac{a}{b}$-cover $S$ can be assumed to be $b$-simple~\cite{HartmannLW22}.
Actually, if $b$ is even, we observation can be improved.
For example, a $\half$-covering set implies a $2$-simple $\half$-covering set of same size.

\begin{lemma}\hyperref[lemma:ocover:simple]{\refstar}
\label{i:lemma:ocover:simple}
Let $S$ be an $\tfrac{a}{b}$-cover of a graph $G$ for integers $a,b \in \NNN$.
Then there is an $\tfrac{a}{b}$-cover $S^\star$ of $G$ of size $|S^\star|=|S|$
	that is $2b$-simple and, if $b$ is a multiple of $2$, is $b$-simple.
	
Assuming that $S$ contains no point at a position $\tfrac{2i-1}{2b}$ for $i \in \NNN$,
	then $S^\star$ is $b$-simple,
	and, if additionally $b$ is a multiple of $2$,
	$S^\star$ is $\frac{b}{2}$-simple.
\end{lemma}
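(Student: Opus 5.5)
We will construct $S^\star$ by a pointwise rounding map $p\mapsto p^\star$, in the spirit of \cref{lemma:covering:simplify} for dispersion but in the opposite direction, since covering is monotone under moving points towards the structure they must cover. Concretely, for a point $p(u,v,\lambda)$ with $\lambda\in[\tfrac{j}{2b},\tfrac{j+1}{2b}]$ we round $\lambda$ to the nearest multiple of $\tfrac{1}{2b}$, breaking ties in a fixed way. Multiples of $\tfrac1b$ act as attractors: points in the open interval $(\tfrac{i}{b}-\tfrac{1}{2b},\tfrac{i}{b}+\tfrac{1}{2b})$ are sent to $\tfrac{i}{b}$, and points exactly at odd multiples $\tfrac{2i-1}{2b}$ are kept. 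This makes $S^\star$ $2b$-simple, and if $S$ has no point at an odd position $\tfrac{2i-1}{2b}$, then $S^\star$ is $b$-simple. Since $p^\star$ depends only on $p$, we get $|S^\star|\le|S|$; if two points collide we add arbitrary points to restore equality.

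The core step is to show that $S^\star$ is still an $\tfrac ab$-cover. Our plan is a local exchange argument on the discretised graph. Using \cref{i:lemma:dom:iff:cov} in reverse, we view the problem in the $2b$-subdivision $G_{2b}$. There the points of $S^\star$ are vertices, and covering with radius $\tfrac ab$ corresponds to covering with radius $2a$ in the subdivided metric. Fix an arbitrary point $q$ covered by some $p\in S$, so $d(p,q)\le\tfrac ab$. A shortest $p,q$-path leaves $p$'s edge through one endpoint $w$. Moving $p$ to $p^\star$ changes $d(p,w)$ by less than $\tfrac1{2b}$, so $d(p^\star,q)<\tfrac ab+\tfrac1{2b}$. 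Points strictly within the edge segment between $p$ and $p^\star$ are handled directly.

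The hard part is turning this strict bound into $d(p^\star,q)\le \tfrac ab$. Our plan is not to argue pointwise, but to use that $S$ covers \emph{all} points, and to exploit the discreteness of the endpoints of the uncovered region. The set $U$ of points left uncovered by $S^\star$ is open. Its boundary lies at positions of the form $\lambda^\star\pm(\tfrac ab-\text{integer})$, which are multiples of $\tfrac1{2b}$ because $a,b$ are integers. Hence $U$, if nonempty, contains an entire open segment between two consecutive multiples of $\tfrac{1}{2b}$ in a suitable shift. Points in the middle of that segment are at distance at least $\tfrac1{2b}$ beyond what any rounding could gain, and this contradicts that $S$ covers them. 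We expect this parity and boundary bookkeeping to be the main obstacle, in particular checking that rounding toward the nearest multiple of $\tfrac1b$ never loses coverage on the side it moves away from. The key inequality to verify is that a point at distance exactly $\tfrac ab$ in the $2b$-grid remains covered.

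For the even-$b$ refinement, write $b=2b'$. The radius is then $\tfrac{a}{2b'}$, so a radius-$\tfrac ab$ cover is a cover with radius a multiple of $\tfrac1{2b'}\cdot\tfrac12$, and the same grid argument works with grid $\tfrac1b=\tfrac{1}{2b'}$ in place of $\tfrac1{2b}$. We apply the construction again with $b'$ in the role of $b$, so multiples of $\tfrac1{b'}$ are the attractors. This gives $b$-simplicity in general. Under the extra hypothesis that no point of $S$ lies at an odd multiple of $\tfrac{1}{2b}$ (after the first rounding, no point sits at an odd multiple of $\tfrac1{b}$), the second rounding has no tie points and yields $\tfrac b2$-simplicity. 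Correctness in both rounds is the same boundary-discreteness argument as above.
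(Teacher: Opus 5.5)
\textbf{Verdict: gap.} The construction matches the paper's, but the covering property of $S^\star$, which is the whole content of the lemma, is not proved, and one step in the even-$b$ refinement is false.

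\textbf{The rounding.} Your precise description (points with $\lambda\in(\frac ib-\frac1{2b},\frac ib+\frac1{2b})$ go to $\frac ib$, points at odd multiples of $\frac1{2b}$ stay) is the paper's construction, and the simplicity claims follow from it. Your first description, rounding to the nearest multiple of $\frac1{2b}$, is a different map. It does not give $b$-simplicity under the extra hypothesis, so drop it.

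\textbf{The covering step.} Your sketch does not establish that $S^\star$ is an $\frac ab$-cover. Suppose an open cell $J$ of length $\frac1{2b}$ is left uncovered by $S^\star$, with midpoint $m$. Distances from $2b$-simple points to $m$ lie in $\frac1{4b}+\frac1{2b}\mathbb{Z}$, so you only get $d(S^\star,m)\ge\frac ab+\frac1{4b}$. Your only quantitative tool is $d(S^\star,m)<d(S,m)+\frac1{2b}$, which is fully compatible with $d(S,m)\le\frac ab$. So the discreteness of $\partial U$ plus the movement bound yields no contradiction.

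What is missing is how this particular rounding acts on distances. Let $p$ sit at offset $\lambda<\frac1{2b}$ from the nearer endpoint $s$ of its sub-edge $\{s,t\}$ of $G_b$, and let $x$ be any vertex of $G_b$. Then either $d(p,x)=d(s,x)+\lambda$, or $d(s,x)=d(t,x)+\frac1b$ and $d(p,x)=d(s,x)-\lambda$. In both cases $d(p^\star,x)=d(s,x)$ is $d(p,x)$ rounded to the nearest multiple of $\frac1b$.

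It follows that on every sub-edge, the prefix and suffix lengths $\frac ab-d(p,x)$ covered through its endpoints are rounded to the nearest multiple of $\frac1b$. Each changes by less than $\frac1{2b}$, and this rounding is monotone and fixes $0$ and $\frac1b$. So two such lengths with sum at least $\frac1b$ keep a rounded sum that exceeds $0$ and is a multiple of $\frac1b$, hence is still at least $\frac1b$. Two small cases remain: points kept at odd multiples of $\frac1{2b}$ need a one-line parity check, and a point on the sub-edge itself covers the whole sub-edge since $\frac ab\ge\frac1b$.

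The paper argues in the same spirit after rescaling to $b'=1$ with half-integral radius. It rounds to the nearest vertex, keeps midpoints, and checks each edge via the split point $\mu$ between the two points covering it. It uses that no point is moved across a half-integral position and that the radius is half-integral.

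\textbf{The even-$b$ refinement.} Your parenthetical claim that after the first rounding no point sits at an odd multiple of $\frac1b$ is false. A point of $S$ at position $\frac1b=\frac1{2b'}$ satisfies the hypothesis, is fixed by both of your roundings, and is exactly a tie point of the $b'$-rounding.

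In fact the last clause cannot be proved as literally stated. Take $a=1$, $b=2$ and $G$ a single edge. The midpoint is a $\frac12$-cover with no point at any position $\frac{2i-1}{4}$, yet no single vertex is a $\frac12$-cover. What your rounding in $G_{b/2}$ yields, and likewise the paper's rescaled argument, is $\frac b2$-simplicity under the hypothesis that $S$ has no point at an odd multiple of $\frac1b$.
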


With this connection at hand, we can state a refined connection of minimum $\frac{a}{b}$-covers of a graph $G$
	and minimum \dombarset{a} on $b$-subdivided graphs.

\begin{corollary}
\label{i:lemma:ds:to:cov}
$\cover{\tfrac{a}{b}}(G) = \ogamma_{4a}(G_{2b}) = \ogamma_{4ca}(G_{2cb})$;
$\cover{\tfrac{a}{2b}}(G) = \ogamma_{2a}(G_{2b}) = \ogamma_{2ca}(G_{2cb})$,
	for any $a,b,c \in \NNN$ and graph $G$ without isolated vertices.
\end{corollary}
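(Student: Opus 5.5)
We will obtain all four equalities from \cref{i:lemma:dom:iff:cov}, \cref{i:lemma:ocover:simple} and \cref{lemma:cover:subdivide}, in the same way \cref{lemma:is:to:disp} was obtained from \cref{lemma:covering:simplify}. The key observation is that \cref{i:lemma:dom:iff:cov} states $\ogamma_{a'}(G_{b'})$ equals the minimum size of a $b'$-simple $\tfrac{a'}{2b'}$-cover of $G$. It therefore suffices to show, for each claimed equality, that some minimum cover of the relevant distance is simple enough.

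\emph{First equality.} Take $a' = 4a$ and $b' = 2b$. Then $\tfrac{a'}{2b'} = \tfrac{4a}{4b} = \tfrac{a}{b}$, so $\ogamma_{4a}(G_{2b})$ is the minimum size of a $2b$-simple $\tfrac{a}{b}$-cover. Every $2b$-simple cover is a cover, which gives $\cover{\tfrac{a}{b}}(G) \le \ogamma_{4a}(G_{2b})$. Conversely, \cref{i:lemma:ocover:simple} turns a minimum $\tfrac{a}{b}$-cover into a $2b$-simple one of the same size, giving equality.

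\emph{Second equality.} Take $a' = 2a$ and $b' = 2b$. Then $\tfrac{a'}{2b'} = \tfrac{a}{2b}$, so $\ogamma_{2a}(G_{2b})$ is the minimum size of a $2b$-simple $\tfrac{a}{2b}$-cover. Apply \cref{i:lemma:ocover:simple} with denominator $2b$, which is even. It yields a minimum $\tfrac{a}{2b}$-cover that is $2b$-simple; this is exactly the ``if $b$ is a multiple of $2$'' clause, applied to the denominator $2b$. So $\cover{\tfrac{a}{2b}}(G) = \ogamma_{2a}(G_{2b})$.

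\emph{Scaled versions.} The fraction is unchanged under scaling: $\tfrac{ca}{cb} = \tfrac{a}{b}$ and $\tfrac{ca}{2cb} = \tfrac{a}{2b}$. Applying the two equalities just proved with $(ca, cb)$ in place of $(a, b)$ gives
\[
\ogamma_{4ca}(G_{2cb}) = \cover{\tfrac{ca}{cb}}(G) = \cover{\tfrac{a}{b}}(G)
\qquad\text{and}\qquad
\ogamma_{2ca}(G_{2cb}) = \cover{\tfrac{a}{2b}}(G).
\]
No odd/even case distinction is needed, unlike in \cref{lemma:alpha:sub}, because we always pass through the continuous problem.

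\emph{Hypotheses.} \cref{i:lemma:dom:iff:cov} requires $G$ to have no isolated vertices, and this is assumed in the statement. I do not need \cref{lemma:cover:subdivide} here, since the simplicity lemma already works directly at every scale.

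\emph{Expected obstacle.} The one place needing care is the second equality: I must confirm that \cref{i:lemma:ocover:simple} produces a cover that is $b'$-simple with $b' = 2b$, where $b'$ is the subdivision parameter. A $b$-simple point would be even better, since it is also $2b$-simple. I must also check that ``$c$-simple'' refers to positions in multiples of $\tfrac1c$, so that the notions match those used in \cref{i:lemma:dom:iff:cov}.
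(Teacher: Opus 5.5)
Your proof is correct and follows the paper's intended route. The paper gives no separate proof and treats the corollary as an immediate consequence of \cref{i:lemma:dom:iff:cov} (instantiated with $(4a,2b)$ and $(2a,2b)$) together with \cref{i:lemma:ocover:simple}, using its even-denominator clause for $\tfrac{a}{2b}$. Your re-instantiation with $(ca,cb)$ is the obvious one-line way to get the scaled equalities.
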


Now we can put the earlier stated transformation of $\delta$-dispersed sets
	in terms of \dombar{a} on $b$-subdivided graphs.

\newcommand{\lemmaTextDsSub}{
	$\ogamma_{a}(G_b) = \ogamma_{ca}(G_{cb})$ when $c$ is odd, or $a,b$ are even.}
\begin{theorem}\hyperref[a:lemma:ds:sub]{\refstar}
\label{lemma:ds:sub}
\lemmaTextDsSub
\end{theorem}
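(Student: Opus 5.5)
We follow the proof of \cref{lemma:alpha:sub}, with the covering correspondences of \cref{i:lemma:dom:iff:cov}, \cref{i:lemma:ocover:simple} and \cref{i:lemma:ds:to:cov} in place of the dispersion ones, and treat the two cases separately.

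\emph{Case $a,b$ even.} Write $a=2a'$ and $b=2b'$. The second identity of \cref{i:lemma:ds:to:cov} reads $\cover{\tfrac{a'}{2b'}}(G)=\ogamma_{2a'}(G_{2b'})=\ogamma_{2ca'}(G_{2cb'})$ for every $c\in\NNN$. This is exactly $\ogamma_a(G_b)=\ogamma_{ca}(G_{cb})$, so this case needs no further work.

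\emph{Case $c$ odd.} We prove the two inequalities separately.

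For $\ogamma_{ca}(G_{cb})\le\ogamma_a(G_b)$, take an \dombarset{a} $D$ of $G_b$. Its vertices are also vertices of $G_{cb}$, which is the $c$-subdivision of $G_b$. We check that $D$ is an \dombarset{ca} of $G_{cb}$ via \ref{def:ds:1}. Each edge $e$ of $G_b$ has a $w_1,w_2$-walk of length at most $a$ containing it. Replacing every edge of that walk by its $c$-path gives a walk in $G_{cb}$ of length at most $ca$. This walk contains every subdivision edge of $e$, so all edges of $G_{cb}$ are covered.

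For $\ogamma_a(G_b)\le\ogamma_{ca}(G_{cb})$, take a minimum \dombarset{ca} $D'$ of $G_{cb}$. By \cref{i:lemma:dom:iff:cov}, applied with the base graph $G_b$ and subdivision $c$, $D'$ corresponds to a $c$-simple $\tfrac{ca}{2c}=\tfrac{a}{2}$-cover $S$ of $G_b$ with $|S|=|D'|$. We then want to turn $S$ into a $1$-simple $\tfrac a2$-cover of $G_b$ of the same size, because \cref{i:lemma:dom:iff:cov} with subdivision $1$ turns that back into an \dombarset{a} of $G_b$.

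To get $1$-simplicity, apply the second part of \cref{i:lemma:ocover:simple} with $\tfrac{a}{2}$ written as $a/b$ with $b=2$ (any representation with denominator $2$ works). That part requires that $S$ contains no point at position $\tfrac{2i-1}{4}$. Every point of $S$ has position in $\tfrac1c\mathbb Z$ with $c$ odd, and $\tfrac{2i-1}{4}\notin\tfrac1c\mathbb Z$ because $4\nmid c$. So the lemma applies and yields a $2$-simple cover $S^\star$ of the same size. Since the denominator $2$ is even, the lemma further makes $S^\star$ $\tfrac{2}{2}=1$-simple, i.e., all its points are vertices of $G_b$. This gives an \dombarset{a} of $G_b$ of size $|D'|$.

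The main obstacle is this last step: using exactly the right instance of \cref{i:lemma:ocover:simple}. If $a$ is odd, $\tfrac a2$ is in lowest terms with even denominator and the argument is as above. If $a$ is even, $\tfrac a2$ is an integer, and we can either still write it with denominator $2$ or work directly with the integer distance $\tfrac a2$. In that case the natural fallback is the simplification idea of \cref{lemma:covering:simplify}: round each point to its nearest vertex of $G_b$. This never leaves a tie at position $\tfrac12$, since $\tfrac12\notin\tfrac1c\mathbb Z$ for odd $c$. We would then need to check that the covering property survives, which should hold because the distances involved are half-integers or integers and the rounding moves points only to nearer grid positions. Proving that \cref{i:lemma:ocover:simple} indeed covers the integer case is the part that needs care.
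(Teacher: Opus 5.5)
Your proposal is correct and follows the paper's proof. The even case is read off from \cref{i:lemma:ds:to:cov}; for odd $c$, your (D1) walk-stretching argument makes the paper's ``clearly'' explicit for one inclusion, and the other goes through the $c$-simple $\tfrac a2$-cover of $G_b$ from \cref{i:lemma:dom:iff:cov} and the rounding construction of \cref{i:lemma:ocover:simple}.

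The one thing to adjust is which hypothesis of that construction you check. After its reduction to denominator $1$ it needs that no point of $S$ sits at edge position $\tfrac12$ of $G_b$. The printed condition on positions $\tfrac{2i-1}{4}$ is too weak: the midpoint of a single edge avoids $\tfrac14,\tfrac34$ and $\tfrac12$-covers the edge, yet no single vertex does. The correct condition holds because $\tfrac12\notin\tfrac1c\mathbb Z$ for odd $c$, which is exactly what the paper checks and what you observe in your last paragraph. Since the rounding argument works for every half-integral radius, integers included, the case of even $a$ needs no separate treatment.
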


\newcommand{\lemmaTextDsTranslation}{
$\ogamma_a(G_b) + |E(G)| = \ogamma_{a}(G_{a+b})$.}

\begin{theorem}\hyperref[lemma:trans:ogamma]{\refstar}
\label{i:lemma:trans:ogamma}
\lemmaTextDsTranslation
\end{theorem}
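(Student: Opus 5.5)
The plan is to mirror the proof of \cref{lemma:alpha:translate}: pass to the continuous covering problem, apply the known translation \cref{lemma:cover:translate}, and come back to walk domination via \cref{i:lemma:dom:iff:cov}. The difference is that for covering the translation changes $\delta$ to $\tfrac{\delta}{1+2\delta}$. So with $\delta=\tfrac{a}{2b}$ we get $\tfrac{a}{2b+2a}=\tfrac{a}{2(a+b)}$, which matches $\ogamma_a(G_{a+b})$ exactly.

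\textbf{Step 1 (reduce to covering values).} First I would show that $\ogamma_a(G_b)=\cover{\tfrac{a}{2b}}(G)$ for all $a,b$. The inequality $\geq$ is immediate from \cref{i:lemma:dom:iff:cov}, since a walk dominating set of $G_b$ gives a $b$-simple $\tfrac{a}{2b}$-cover. For $\leq$, take a minimum $\tfrac{a}{2b}$-cover and apply \cref{i:lemma:ocover:simple} with denominator $2b$, which is even. This yields a $2b$-simple cover of the same size. To conclude, it still has to be $b$-simple, and this is the subtle point. If $a$ is even, then \cref{i:lemma:ds:to:cov} already gives $\cover{\tfrac{a'}{2b}}(G)=\ogamma_{2a'}(G_{2b})$ with $a=2a'$, and \cref{lemma:ds:sub} (both indices even) gives $\ogamma_{2a'}(G_{2b})=\ogamma_{a}(G_{b})$ after halving. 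If $a$ is odd, I would instead use the second part of \cref{i:lemma:ocover:simple}. Points at positions $\tfrac{2i-1}{4b}$ can be pushed aside, because $a$ odd means the relevant interval endpoints are never at odd multiples of $\tfrac{1}{4b}$. From this we obtain a $b$-simple cover. Alternatively, one can try the identity $\cover{\tfrac{a}{2b}}(G)=\ogamma_{4a}(G_{4b})$ from \cref{i:lemma:ds:to:cov}, together with \cref{lemma:ds:sub} in a form that removes the factor $4$. However, \cref{lemma:ds:sub} only removes odd factors or factors with $a,b$ even, so for odd $a$ the direct simplification is needed.

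\textbf{Step 2 (translate).} By Step 1 applied twice and \cref{lemma:cover:translate},
\[
\ogamma_a(G_b)+|E(G)| = \cover{\tfrac{a}{2b}}(G)+|E(G)| = \cover{\tfrac{a}{2b+2a}}(G) = \ogamma_a(G_{a+b}).
\]
Here $G$ has no isolated vertices, by the standing assumption. Unlike dispersion, no girth condition is needed, which agrees with the statement.

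\textbf{Main obstacle.} The hard part is Step 1 for odd $a$: showing that a minimum $\tfrac{a}{2b}$-cover can be made $b$-simple and not merely $2b$-simple. If the simplification lemma does not directly give this, I would fall back on a combinatorial proof. I would take a walk dominating set of $G_{2b}$ for distance $2a$ (using \cref{i:lemma:ds:to:cov}) and shift each vertex that sits at an odd position of a subdivided edge to an adjacent even position. I would then verify, using the characterization \ref{def:ds:2}, that walk domination is preserved. Since $a$ is odd, the half-distance thresholds $\tfrac{a}{2}$ in $G_b$ are non-integral, so every edge in $G_b$ is covered with slack. This slack is what lets the shift go through.
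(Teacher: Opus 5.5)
Your Step~1 is false in general, and both equalities of your Step~2 chain depend on it. The identity $\ogamma_a(G_b)=\cover{\tfrac{a}{2b}}(G)$ claims that some minimum $\tfrac{a}{2b}$-cover is $b$-simple, but optimal covers can genuinely need half-positions.

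Two examples show this:
\begin{itemize}
\item Take $a=3$, $b=1$ and $G$ a double star (adjacent centers $u,v$, each with two pendant leaves). Then $\ogamma_3(G)=\gamma(G)=2$, but the midpoint of $\{u,v\}$ alone is a $\tfrac32$-cover, so $\cover{\tfrac32}(G)=1$.
\item Take $a=2$, $b=1$, $G=K_4$. Then $\ogamma_2(K_4)=\tau(K_4)=3$, but the midpoints of a perfect matching form a $1$-cover, so $\cover{1}(K_4)=2$. This matches the paper's classification: vertex cover is NP-hard, while $1$-\covering is polynomial.
\end{itemize}
For $a=2$, $b=1$ both outer equalities of your chain fail: $\cover{\tfrac13}(K_4)=8$ but $\ogamma_2((K_4)_3)=9$. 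The theorem still holds only because the two errors cancel.

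What \cref{i:lemma:ds:to:cov} actually gives is $\cover{\tfrac{a}{2b}}(G)=\ogamma_{2a}(G_{2b})$. This equals $\ogamma_a(G_b)$ via \cref{lemma:ds:sub} only when $a$ and $b$ are both even. Each branch of your Step~1 breaks:
\begin{itemize}
\item Your even branch mixes up indices. $\ogamma_{2a'}(G_{2b})$ halves to $\ogamma_{a'}(G_b)$, not $\ogamma_a(G_b)$, and halving needs $a',b$ even.
\item Your odd branch pushes aside points at $\tfrac{2i-1}{4b}$. The real obstruction is points at $\tfrac{2i-1}{2b}$, such as the double-star midpoint.
\item Your fallback shift fails on the same example. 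Moving the midpoint vertex of $G_2$ to $u$ leaves $v$'s leaves at distance $4>3$ in $G_2$.
\end{itemize}
This failure occurs exactly where the theorem is later applied, e.g.\ $\gcd(a,b)$ odd in \cref{lemma:ds:tw:summary}, so it is not a side case.

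The missing idea is to translate sets rather than optimal values. The paper uses the simplicity-preserving form of the covering translation, \cref{lemma:cover:translate:simplicity}. For $\delta=\tfrac{a}{2b}$, the construction of \cite{HartmannLW22} rescales edge positions by $\rho=(1+2\delta)^{-1}=\tfrac{b}{a+b}$. So it maps a $b$-simple $\tfrac{a}{2b}$-cover to an $(a+b)$-simple $\tfrac{a}{2(a+b)}$-cover with $|E(G)|$ more points, and conversely. Combine this with the exact correspondence \cref{i:lemma:dom:iff:cov} between walk dominating sets of $G_b$ and $b$-simple $\tfrac{a}{2b}$-covers. Applying it to an arbitrary walk dominating set of $G_b$, and to one of $G_{a+b}$, gives both inequalities. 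At no point do you need a minimum cover to be simple.
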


These two transformation lay the groundwork
	for show \cref{lemma:i:ds:tw:domination}, \cref{lemma:i:ds:tw:covering}
	and \cref{lemma:cover:irrational}.
For details, we refer to \cref{section:domination:and:covering}.

\newpage

\newcommand{\tikzmark}[1]{}
\begin{figure}[h]
\centering
\resizebox{\columnwidth}{!}{%

}
\caption{Complexity of $a$-\IndSet on a graph $G_b$.
The cell in the $a$-th row and $b$-th column represents $a$-\IndSet on a graph $G_b$.
A light green cell indicates a polynomial time solvable case,
	an orange \np-hardness \& \fpt,
	and a dark red \np-hardness \& \wone-hardness.
See \cref{figure:ind:cells} for a smaller excerpt of this table.}
\label{figure:ind:cells:big}
\end{figure}

\newpage

\begin{figure}[h]
\centering
\resizebox{\columnwidth}{!}{%
%
}
\caption{Complexity of $a$-\DomSet on a graph $G_b$.
The cell in the $a$-th row and $b$-th column represents $a$-\DomSet on a graph $G_b$.
A light green cell indicates a polynomial time solvable case,
	an orange \np-hardness \& \fpt,
	and a dark red \np-hardness \& \wtwo-hardness.
See \cref{figure:ds:cells} for a smaller excerpt of this table.}
\label{figure:ds:cells:big}
\end{figure}

\newpage

\section{Details for Independence and Dispersion}
\label{section:detaill:independence}

This section complements the main part regarding independent set and dispersion.
First, \cref{a:section:lower:bound} gives the SETH lower bound on bounded treewidth graphs.
Then \cref{a:section:ind:parameterized} concerns the parameterized complexity in the solution size.
Further, \cref{a:section:irrational} completes the computational hardness proof
	for the fixed irrational distance $\delta$.
Finally, \cref{a:section:disp:translation} proves the translation lemma
	as stated in \cref{lemma:disp:translate:b:simple}.

\smallskip

Before we begin, let us make a general observation about $a$-independent sets.
There is a maximum independent set that contains every leaf of $G$
	that does not neighbor another leaf.
A similar statement holds for maximum $a$-independent sets.

\begin{observation}
\label{lemma:ind:leaf}
Let $G$ be a graph and $a\geq2$.
For every maximum $a$-independent $I$ set of $G$
	the set $I'$ resulting from $I$ as follows is still a maximum $a$-independent set:
For every leaf $u^\star \in U$
		where $B^{\ceil{a/2}}(u^\star)$ is a path of length $\ceil{\ahalf}$,
	replace any vertex in $B^{a-1}(u^\star)$ by $u^\star$.
\end{observation}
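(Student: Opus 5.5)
The plan is to handle one leaf at a time and to show that a single replacement step maps a maximum $a$-independent set to a maximum $a$-independent set. Applying the steps one after another, each to the current set, then proves the observation by induction on the number of replaced leaves. Since every step preserves both size and $a$-independence, the order does not matter for the claim as stated.

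Fix a leaf $u^\star$ such that $B^{\ceil{a/2}}(u^\star)$ is a path $u^\star = x_0, x_1, \dots, x_{\ceil{a/2}}$. Put $w \coloneqq x_{\ceil{a/2}}$. The first key step is to show that any maximum $a$-independent set $I$ contains \emph{at most one} vertex of $B^{a-1}(u^\star)$. Take distinct $x,y \in I \cap B^{a-1}(u^\star)$ and split into cases by how far each lies from $u^\star$.
\begin{itemize}
\item If both lie on the path, i.e.\ within distance $\ceil{a/2}$ of $u^\star$, then $d(x,y) \le \ceil{a/2} < a$, using $a \ge 2$.
\item If $x$ lies on the path and $y$ lies beyond $w$, then every shortest $u^\star,y$-path runs through $x$. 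Hence $d(x,y) = d(u^\star,y) - d(u^\star,x) \le a-1$.
\item If both lie beyond $w$, then both are reached through $w$. Hence $d(x,y) \le d(x,w) + d(w,y) \le 2(a-1-\ceil{a/2}) < a$.
\end{itemize}
Every case contradicts $a$-independence, so there is at most one such vertex.

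Next, I argue that there is exactly one such vertex, so the replacement is well defined. If $I \cap B^{a-1}(u^\star) = \emptyset$, then every vertex of $I$ is at distance at least $a$ from $u^\star$. So $I \cup \{u^\star\}$ is $a$-independent, contradicting maximality.

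Let $z$ be the unique vertex of $I$ in the ball, and set $I' = (I \setminus \{z\}) \cup \{u^\star\}$. Then $|I'| = |I|$. Every other vertex of $I$ lies outside $B^{a-1}(u^\star)$ and hence has distance at least $a$ to $u^\star$. Therefore $I'$ is $a$-independent and still maximum.

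I expect the main obstacle to be the bookkeeping in the uniqueness case analysis, in particular that shortest paths into the ball must pass through the pendant path. This holds because the first $\ceil{a/2}$ levels of the ball form a path, so $x_0, \dots, x_{\ceil{a/2}-1}$ have degree at most $2$ and $w$ is the only exit. A second point to check is the interaction between several leaves, where a later replacement might remove a previously inserted leaf. This does not harm the statement: each step only uses that the current set is maximum $a$-independent, which the induction maintains.
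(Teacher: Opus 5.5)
Your proof is correct, and it is the exchange argument the paper relies on implicitly: the paper states this as an observation without writing out a proof. The argument is that $B^{a-1}(u^\star)$ has all pairwise distances below $a$, so it meets every $a$-independent set in at most one vertex, and in exactly one for a maximum set, which can then be swapped for $u^\star$. Your sequential treatment of several leaves is also the right reading of the statement, because both endpoints of a short path component (for example a path of length $2$ when $a=4$) can qualify while lying within distance $a-1$ of each other; the final set then contains only one of them, but it is still maximum, which is all the observation claims.
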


Also, we include here the omitted proof of \cref{section:ind:and:disp}.

\begin{corollary}[\cref{lemma:disp:translate:b} restated]
\label{a:lemma:disp:translate:b}
\lemmaTextDispTranslateB
\end{corollary}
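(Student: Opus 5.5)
We prove the statement by induction on $b$, applying \cref{lemma:disp:translate} once per step. For $b=1$ the claim $\disp{\delta}(G) = \disp{\tfrac{\delta}{1+\delta}}(G) - |E(G)|$ is exactly \cref{lemma:disp:translate}, since $G$ has no cycle of length $<\delta$.

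For the inductive step, assume the identity for $b-1$:
\[
\disp{\delta}(G) = \disp{\tfrac{\delta}{1+(b-1)\delta}}(G) - (b-1)|E(G)|.
\]
Set $\delta' \coloneqq \tfrac{\delta}{1+(b-1)\delta}$ and apply \cref{lemma:disp:translate} to $\delta'$. A short calculation gives
\[
\tfrac{\delta'}{1+\delta'} = \tfrac{\delta}{1+(b-1)\delta+\delta} = \tfrac{\delta}{1+b\delta}.
\]
So \cref{lemma:disp:translate} yields $\disp{\delta'}(G) = \disp{\tfrac{\delta}{1+b\delta}}(G) - |E(G)|$. Substituting this into the inductive hypothesis gives the claimed formula with $b|E(G)|$.

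The only point that needs care is the hypothesis of \cref{lemma:disp:translate} at distance $\delta'$: we need that $G$ has no cycle of length $<\delta'$. This holds because $\delta' \le \delta$: a cycle of length $<\delta'$ would also have length $<\delta$, which is excluded by assumption. Hence the condition is inherited at every step, and the induction goes through. No other obstacle is expected. Equivalently, one can unroll the argument as $b$ successive applications of the map $x\mapsto \tfrac{x}{1+x}$, whose $b$-fold iterate sends $\delta$ to $\tfrac{\delta}{1+b\delta}$, because it acts on reciprocals by $1/x \mapsto 1/x + 1$.
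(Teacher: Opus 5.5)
Your proof is correct and matches the paper's: induction on $b$, one further application of \cref{lemma:disp:translate} at $\delta' = \tfrac{\delta}{1+(b-1)\delta}$, and the same simplification of $\tfrac{\delta'}{1+\delta'}$ to $\tfrac{\delta}{1+b\delta}$. The one small difference is how the cycle hypothesis is justified at $\delta'$. You use $\delta'\le\delta$, whereas the paper observes that $\delta'<1$, so no cycle can have length below $\delta'$; either argument works.
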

\begin{proof}
For $b=1$, the statement coincides with \cref{lemma:disp:translate}.
For $b\geq2$, by induction, we have $\disp{\delta}(G) = \disp{\frac{\delta}{\delta + (b-1)}} - (b-1)|E(G)|$.
Since then $\frac{\delta}{\delta + (b-1)}<1$, \cref{lemma:disp:translate} applies again,
	such that $\disp{\delta}(G) = \disp{\gamma}(G) - b|E(G)|$ with
$$
\gamma
\,=\; \frac{ \frac{\delta}{1+(b-1)\cdot\delta} }{ 1+\frac{\delta}{1+(b-1)\delta} }
\;=\; \frac{ \frac{\delta}{1+(b-1)\cdot\delta} }{\frac{1+b\cdot\delta}{1+(b-1)\delta} }
\;=\; \frac{\delta}{1+ b \cdot \delta}
$$
\end{proof}

\subsection{Treewidth Lower Bound for Independent Set}
\label{a:section:lower:bound}
\label{section:is:ppseth:lb}

Here, we complement \cref{section:lower:bound}.
That is, for every even $a\geq 6$,
	we provide the lower bound under \SETH of computing a maximum $a$-independent set
	in graphs of bounded pathwidth,
	which even holds if the input is restricted to $2$-subdivided graphs without a cycle of length $<a$.
Actually, we show the statement
	assuming the Primal-Pathwidth \SETH (\ppseth),
	which was recently introduced by Lampis~\cite{Lampis2024pwseth}.
It is a weaker prerequisite than \SETH.
At the same time, using the \ppseth allows a nicer presentation
	by deferring many technicalities to a special intermediate
	Constraint Satisfaction Problem (CSP) problem.

\begin{theorem}
\label{a:lemma:is:ppset:lb}
Assume \ppseth and an $\varepsilon>0$.
Then, for every even $a\geq 6$,
	 $a$-\IndSet
	has no $(a-\varepsilon)^{\pw(G)} \cdot n^{O(1)}$ time algorithm,
	even when restricted to $2$-subdivided graphs without a cycle of length $<a$.
\end{theorem}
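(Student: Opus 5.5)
We reduce from Lampis' intermediate CSP under \ppseth: for every alphabet size $q\ge 2$ and $\varepsilon>0$ there is no $(q-\varepsilon)^{\pw}\cdot n^{O(1)}$ algorithm for a binary CSP over alphabet $[q]$ whose primal graph comes with a path decomposition of width $\pw$. We may assume each constraint is a relation on two variables, or a unary/``at-most-one''-type check, as in Lampis' framework. We take $q=a$ and build a $2$-subdivided graph $H$ of pathwidth $\pw+O(1)$ together with a budget $k$. The requirement is that $H$ has an $a$-independent set of size $\ge k$ if and only if the CSP is satisfiable. Since $\pw(H)\le\pw+O(1)$, an $(a-\varepsilon)^{\pw(H)}$ algorithm would then contradict \ppseth.

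\textbf{Encoding a variable.} We follow the gadget idea of Katsikarelis et al.\ behind \cref{lemma:a:is:tw}, adapted to $2$-subdivisions. A variable $x$ is represented by a long path $P_x$ in $H$ along which we place selected vertices at spacing exactly $a$. The value of $x$ is the phase of this pattern modulo $a$. Because $a$ is even and $H$ is $2$-subdivided, all $a$ residues must be realizable. This holds since every vertex of $H$, original or subdivision, can carry a selected vertex, so the phase ranges over $\{0,\dots,a-1\}$. This gives base $a$ rather than base $a/2$, which is exactly where $2$-subdivided graphs behave like general graphs for even $a$.

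\textbf{Enforcing tight spacing.} We attach pendant paths (``budget leaves'') along $P_x$. By \cref{lemma:ind:leaf}, such leaves can be assumed to be selected. The budget $k$ is set to the maximum possible count, so that any solution of size $k$ must be tight everywhere. In particular, the phase is copied consistently through each bag-occurrence of $x$ in the path decomposition: the copies of $P_x$ run in parallel, one per consecutive bag-group. Each bag holds $O(1)$ vertices per variable, so the pathwidth stays $\pw+O(1)$.

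\textbf{Constraint gadgets.} For a constraint $R(x,y)$ we attach, at a fixed position of $P_x$ and $P_y$, a gadget of $a$-independence tests. For each forbidden pair $(\alpha,\beta)\notin R$ we add a pendant structure whose tip lies at distance $<a$ from both the phase-$\alpha$ vertex on $P_x$ and the phase-$\beta$ vertex on $P_y$. The tip must be taken to meet the budget exactly when it does not conflict. Equivalently, we add a small ``choice'' gadget that selects one allowed pair, and it can contribute its full quota only if its chosen pair agrees with both phases. Distances are computed in $H$, where an original edge has length $2$, and lengths are tuned so that the thresholds are at distance exactly $a-1$ versus $a$.

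\textbf{Girth.} To keep all cycles of length $\ge a$, every gadget is attached via paths of length $\ge a/2$. Any cycle through a gadget therefore has length at least $a$.

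\textbf{Main obstacle.} The delicate part is the quantitative bookkeeping: proving that every $a$-independent set of size $k$ is \emph{canonical}. This means it must be phase-periodic on each $P_x$, consistent across copies, and must use one allowed pair per constraint. The budget slack is zero, so any deviation loses at least one vertex. We would do this by a local counting lemma: on each path segment of length $L$, at most $\lfloor L/a\rfloor+1$ vertices can be chosen, with equality only for exact-spacing patterns, and each gadget attains its maximum only in the consistent configurations. Summing these local bounds gives the global bound $k$, with equality only in the canonical case.

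The evenness of $a$ and $a\ge 6$ enter in the gadget design. For $a\le4$ the problem is polynomial by \cref{lemma:disp:p}, and there the phase cannot be sufficiently isolated. For $a\ge6$ there is enough room between consecutive selected vertices, namely $a-1\ge5$ intermediate vertices, to hang a distinguishing pendant for each of the $a$ phases without creating unintended conflicts.
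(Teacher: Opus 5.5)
There are genuine gaps, and they sit exactly where the paper's construction does its real work.

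\textbf{Tightness does not give consistency.} A zero-slack budget forces one selected vertex per block of $a$ consecutive vertices of $P_x$, but it does not force a fixed offset. If one block uses offset $\gamma$, the next block may use any offset $\ge\gamma$ without losing a vertex. The $a-1$ positions of slack that let $P_x$ encode $a$ values are precisely what permit this drift. So your counting lemma (``equality only for exact spacing'') cannot hold on segments that carry information.

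From a size-$k$ solution you therefore only obtain values that change monotonically along $P_x$; the paper's backward direction concludes exactly this. Against an ordinary CSP your backward direction fails, because different constraints may read different values of $x$.

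The missing idea is to reduce from Lampis' \textsc{Pathwidth $a$-Ary CSP} (\cref{lemma:lampis:technical}). That problem asks only for a monotone satisfying multi-assignment and requires consistency only on $V_2$, which has $O(a\log p)$ variables per bag. Variables in $V_1$ get long paths, where drift is harmless. Each variable in $V_2$ gets a single path of length $3a$ shared by all bags, so these paths together with the constraint gadgets add only $O(a^2\log p+a^4)$ to the width.

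Your width accounting also fails. ``$O(1)$ vertices per variable in each bag'' yields width $O(\pw)$, not $\pw+O(1)$. To keep base $a$, each variable path may contribute only one vertex to a bag at a time, with everything else additive.

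\textbf{The parity obstacle is in reading offsets, not in encoding them.} In a $2$-subdivided graph every vertex of degree $\neq 2$ is original, and distances between original vertices are even. Let $p^\gamma$ be a vertex of $P_x$ at an odd position. If a selected original vertex $g$ conflicts with $p^{\gamma-1}$ or $p^{\gamma+1}$, that distance is even and $<a$, hence $\le a-2$, so $g$ also conflicts with $p^\gamma$.

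Consequently, gadget vertices that are original cannot exclude both even neighbours of an odd offset without excluding the odd offset itself. Pendant tips, which are what \cref{lemma:ind:leaf} puts into the solution, are of this kind. No tuning of lengths changes this.

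The paper overcomes this with two attachment types, both at even offsets $\alpha'$. The vertex $q^\alpha$, at distance $a-2$ from $p^\alpha$, can be selected only if $|\alpha'-\gamma|\ge 2$. The vertex $\pi^{\alpha,1}$ is a subdivision vertex at distance $a-1$; it becomes the selected vertex because a blocker keeps $\pi^{\alpha,2}$ out of the solution, and it can be selected only if $\gamma\neq\alpha'$. Together these two tests pin down every offset, odd or even.

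Finally, your forbidden-pair tip does not work as stated. A tip within distance $<a$ of the phase-$\alpha$ vertex on $P_x$ and of the phase-$\beta$ vertex on $P_y$ is blocked as soon as $x=\alpha$ \emph{or} $y=\beta$, because conflicts are disjunctive. Only the choice-gadget version can work. In the paper these are the vertices $y_{\tau,\sigma}$, made mutually exclusive by super edges so as not to create short cycles, and they need exactly the two test types above.
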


A first lower bound under \SETH was given by Katsikarelis et al.~\cite{KatsikarelisLP2022},
	which, however, does not apply to $2$-subdivided graphs.
Later, Lampis showed the same lower bound assuming only the \ppseth instead of the \SETH~\cite{Lampis2024pwseth}.
We adapt this construction of Lampis in two ways.
We make sure that we output a $2$-subdivided graph $G$,
	and that $G$ does not contain a cycle of length $<a$.

The basic idea is to deploy for every variable $i$ a long path
	where an $a$-independent set can contain at most every $a$-th vertex of that path
	and thereby encoding an assignment of variable $i$ to $[a]$.
For every constraint, we attach a gadget to these paths that enforces an encoding that satisfies this constraint. 
The crucial difference to the proof of Lampis \cite{Lampis2024pwseth}
	is that as we need to construct a $2$-subdivided graph,
	we may only attach such a gadget to the even positions of a long path.
We use two kinds of attachments,
	that together can enforce an encoding at even and odd positions of such a long path.

\smallskip

To introduce the mentioned intermediate problem (from~\cite{Lampis2024pwseth},
	let $a\geq 2$ and
	consider a $4$-CSP instance $\varphi$ on variables $[n]$, with clauses $[m]$ and alphabet $[a]$;
	that is for $\mu \in [m]$
	a $4$-variable set $X_\mu \subseteq [n]$ and
	a set $C_\mu$ of assignments from $X_\mu$ to $[a]$.
The \emph{primal graph} of $\varphi$ is the graph on the set of variables $[n]$
	with an edge $\{i,j\}$ if the variables $i,j$ occur together in some constraint of $\varphi$.
Let $(B_1,\dots,B_t)$ be a path decomposition of the primal graph of given~$\varphi$.
Let $f$ be an injective mapping $f: [m] \to [t]$
	such that $X_\mu \subseteq B_{f(m)}$ for every constraint $\mu \in [m]$.
A \emph{multi-assignment} is a mapping $\sigma: [n] \times [t] \to [a]$.
\begin{itemize}
\item $\sigma$ \emph{satisfies} $\varphi$ (relative to $(B_1,\dots,B_t)$ and $f$)
	if for every constraint $\mu \in [m]$,
	the mapping $\sigma$ with the second argument fixed to $f(\mu)$, that is
	the mapping $\sigma_\mu: [n] \to [a], i \mapsto \sigma(i, f(\mu))$, satisfies the constraint $C_\mu$.
\item $\sigma$ is \emph{monotone} decreasing (respectively, increasing)
	if for every variable $i \in [n]$ and $\tau_1 < \tau_2$ (respectively, $\tau_1 > \tau_2$)
	with $i \in B_{\tau_1} \cap B_{\tau_2}$
	we have $\sigma(i,\tau_1) \preceq \sigma(i,\tau_2)$.
\item $\sigma$ is \emph{consistent} for a variable $i \subseteq [n]$
	if for every $\tau_1,\tau_2 \in [t]$ where $i \in B_{\tau_1} \cap B_{\tau_2}$
	we have $\sigma(i,\tau_1)=\sigma(i,\tau_2)$.
\end{itemize}

\newcommand{\technicalProblem}{\textsc{Pathwidth $a$-Ary CSP}\xspace}
\newcommand{\technicalProblemLong}{\textsc{Pathwidth $a$-Ary CSP}, $a \geq 2$}
\problemdefSimple{\technicalProblemLong}{ %
A $4$-CSP instance $\varphi$ with alphabet $[a]$, a partition of the variables into $V_1,V_2$,
a width $p$ path decomposition $(B_1,\dots,B_t)$ of the primal graph of $\varphi$
	where each bag contains at most $O(a \log p)$ variables of $V_2$,
and an injective mapping $f: [m] \to [t]$
	such that $X_\mu \subseteq B_{f(m)}$ for every constraint $\mu \in [m]$.
}{
Is there a monotone decreasing and satisfying multi-assignment $\sigma$
	that is consistent for every variable in $V_2$?
}

\begin{lemma}[Lampis~\cite{Lampis2024pwseth}] %
\label{lemma:lampis:technical}
Let $a \geq 2$ and $\varepsilon > 0$.
If \technicalProblem\ can be solved in time
	$O( (a-\varepsilon)^{p} \cdot |\varphi|^{O(1)} )$, then \ppseth is false.
\end{lemma}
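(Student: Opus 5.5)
The plan is a reduction from the base problem of the Primal Pathwidth SETH: a 3-SAT formula $\psi$ given with a path decomposition of its primal graph of width $p_0$, which by \ppseth cannot be solved in time $(2-\varepsilon')^{p_0}\cdot|\psi|^{O(1)}$. Given $\varepsilon>0$ with $a-\varepsilon>1$, I would show that an $O((a-\varepsilon)^{p}|\varphi|^{O(1)})$ algorithm for \technicalProblem\ yields such an algorithm for 3-SAT.

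\textbf{Step 1: compress the alphabet.} Fix a large constant $q$ and an integer $r$ with $a^{r}\ge 2^{q}$ and $r\le q\log_a 2+1$. Along the path decomposition of $\psi$, partition the boolean variables of each bag into groups of $q$ variables. Represent each group by $r$ variables over $[a]$; these form $V_1$. Each boolean clause touches at most three groups. It is therefore simulated by constraints that read the encoding variables of these groups. Each such constraint is split into $4$-ary constraints using auxiliary $V_2$ variables. These auxiliary variables carry partial evaluations of the clause, or indices naming which group member is meant, and need only $O(\log p)$ of them per bag. This is exactly why $V_2$ is allowed $O(a\log p)$ variables per bag. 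The width becomes $p\approx p_0\cdot r/q\approx p_0\log_a 2\,(1+1/q)$ plus lower-order terms. Hence $(a-\varepsilon)^{p}=2^{p_0(\log_a(a-\varepsilon))(1+o(1))}$, which is $(2-\varepsilon')^{p_0}$ once $q$ is large enough. The same accounting handles the $V_2$ variables, since $O(a\log p)$ per bag adds only a sub-exponential factor.

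\textbf{Step 2: consistency via monotonicity.} The key difficulty is that $V_1$ variables only need to be monotone, not consistent, across bags. I would take the width-$p$ decomposition, duplicate the whole sequence of constraint checks $L=a\cdot p\cdot N$ times, and concatenate the copies. Here $N$ is the number of groups ever alive, or simply polynomial in $|\psi|$. Each $V_1$ variable keeps the same identity throughout, so monotone decreasing means it can change value at most $a-1$ times overall. In total there are fewer than $a\cdot(\#V_1\text{ variables})$ changes, so by pigeonhole some copy sees no change at all. In that copy the multi-assignment is consistent, so it decodes to a satisfying assignment of $\psi$. Conversely, a genuine satisfying assignment gives a constant, hence monotone, multi-assignment. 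The width grows only by $O(1)$, because copies are glued sequentially and the live variables are those of one bag. The size grows polynomially.

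\textbf{Main obstacle.} The delicate part is the interplay of these steps with the path-decomposition structure. In a path decomposition variables are introduced and forgotten, so a variable is not alive in all copies. I would handle this by repeating each bag-segment of the original decomposition $L$ times locally, rather than the whole path. Monotonicity then restricts changes within a variable's lifetime, and pigeonhole must be applied per segment. This yields a locally consistent window for every segment simultaneously. I would then check that consecutive windows agree on shared variables, again by monotonicity squeezed between two equal values. I expect this to be the step needing the most care, together with ensuring that the injective map $f$ places each constraint in a distinct bag, which is done by adding dummy copies of bags.
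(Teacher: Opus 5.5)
The paper does not prove this lemma; it imports it from Lampis~\cite{Lampis2024pwseth}. Your sketch therefore has to stand on its own, and it does not: Step~2 fails, and that step carries essentially all of the difficulty.

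\textbf{The global version.} A $V_1$ variable that keeps its identity across all $L$ concatenated copies must belong to every bag between its first and last occurrence (connectivity of path decompositions). So every variable is alive at each junction, and the width becomes the total number of variables, not $p+O(1)$.

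\textbf{The local version.} Pigeonhole over the $L$ repetitions of segment $j$ does give a window where the variables of $B_j$ are constant, say with values $\sigma_j$. But for $x\in B_j\cap B_{j+1}$, monotonicity yields only an inequality between $\sigma_j(x)$ and $\sigma_{j+1}(x)$. There are no two equal values to squeeze against: nothing in the instance ties the chosen window of segment $j$ to that of segment $j+1$. The variable $x$ may spend one of its $a-1$ changes exactly in that gap.

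Concretely, take $(x\vee u),(\neg u)$ with $B_{j_1}=\{x,u\}$, and $(\neg x\vee w),(\neg w)$ with $B_{j_2}=\{x,w\}$, $j_2=j_1+1$. This formula is unsatisfiable, and each pair of clauses can only be checked in its own segment. Suppose some codeword of $x$'s group with $x$ true is coordinatewise comparable to one with $x$ false; this holds for the usual base-$a$ encoding with the other group members false. Use the codeword that comes first in the allowed direction of change throughout segment $j_1$, and the other one afterwards; swap $x$ and $\neg x$ in the formula if needed. This is a monotone multi-assignment satisfying every constraint of every copy, and no choice of $L$ excludes it.

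\textbf{The missing ingredient.} You need a mechanism that detects a change of a $V_1$ variable strictly inside its lifetime; repetition cannot provide this in a path decomposition. This is the natural use of the $O(a\log p)$ consistent $V_2$ variables per bag. Partial clause evaluation does not need that budget: a check touching three groups needs only $O(r)=O_\varepsilon(1)$ auxiliary variables.

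One construction that works is a counter threaded through the decomposition:
\begin{itemize}
\item Store it in $O(\log_a(ap))=O(\log p)$ digits of $V_2$ variables, with carries, so every constraint stays $4$-ary.
\item Add the value of each $V_1$ variable $x$ at an entry check placed before all constraints on $x$.
\item Subtract it at an exit check placed after all of them.
\item Require the counter to end at its initial value.
\end{itemize}
By monotonicity, all differences $\sigma(x,\mathrm{in})-\sigma(x,\mathrm{out})$ have the same sign, so they all vanish. Hence each $x$ is constant on the span of its constraints, and no repetition is needed. The extra $O(\log p)$ width costs only $(a-\varepsilon)^{O(\log p)}=p^{O(1)}$.

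\textbf{Step~1 is also incomplete.} A group must be a fixed set of Boolean variables for the whole lifetime of its encoding variables. A static grouping can leave $\Theta(p_0)$ partially alive groups in a single bag. To keep $p\approx p_0\,r/q$ you must repack surviving variables into fresh groups via copy constraints.
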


We give a reduction from
	\technicalProblem to $a$-\IndSet that outputs a $2$-subdivided graph without a cycle of length $<a$
	and of pathwidth at most $p+O(a^2 \log p + a^4)$.
Then, assuming \ppseth and an $(a-\varepsilon)^{\pw(G)} \cdot n^{O(1)}$ time algorithm for an $\varepsilon>0$,
	$a$-\DomSet yields a contradiction to \cref{lemma:lampis:technical}.
This then completes the proof of \cref{a:lemma:is:ppset:lb}.

\begin{lemma}
\label{lemma:is:ppseth:reduction}
\ineditnote{\cref{lemma:ds:ppseth:reduction}}
Let $a\geq 6$ be even.
There is a polynomial time reduction from
	\technicalProblem to $a$-\IndSet
	that outputs a $2$-subdivided graph $G$
	without a cycle of length $a$ and of pathwidth at most $p + O(a^2 \log p + a^4)$.
\end{lemma}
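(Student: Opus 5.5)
We follow the reduction of Lampis for distance independent set under the \ppseth, and modify it so that the output is $2$-subdivided and has girth at least $a$. Given the instance $\varphi$ with decomposition $(B_1,\dots,B_t)$ and mapping $f$, we create for every variable $i\in[n]$ a long \emph{variable path} $P_i$ that spans the bags containing $i$. It consists of $t$ consecutive blocks of $a$ vertices each, plus padding. Block $\tau$ encodes $\sigma(i,\tau)\in[a]$ by the offset of the selected vertex within the block. Each end of the path receives pendant paths of length about $a/2$ whose leaves are forced into the solution by \cref{lemma:ind:leaf}. This makes a path of length $L$ contribute at most about $L/a$ vertices. The budget $k$ is set so that every path contributes exactly its maximum, which forces exactly one selected vertex per block, spaced exactly $a$ apart. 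The offsets can then only drift in one direction along the path, which gives monotonicity, and the direction can be chosen to match ``monotone decreasing''. The $V_2$ variables, of which each bag holds only $O(a\log p)$, are handled as in Lampis. We add $O(a^2\log p)$ extra paths or edges per bag that detect any drift and penalize it; equivalently, consistency is enforced by counting. This accounts for the $O(a^2\log p)$ pathwidth overhead.

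For every constraint $\mu$ we place a gadget at block $f(\mu)$ of the four paths $P_i$, $i\in X_\mu$. The gadget has one \emph{choice} vertex set per allowed assignment in $C_\mu$. Each choice is attached to the four paths through paths whose lengths are calibrated so that it lies at distance $\ge a$ from the selected vertex of $P_i$ exactly when the offset of $P_i$ equals the one prescribed by the assignment. A central clique-like structure, realized with subdivided stars to keep the graph $2$-subdivided, allows exactly one choice to be selected only if it is compatible. The budget charges one vertex per gadget. Each gadget has size $O(a^4)$ because there are at most $a^4$ assignments, and gadgets are processed one bag at a time, so the pathwidth is $p+O(a^2\log p+a^4)$. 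Completeness follows by selecting the vertices prescribed by a satisfying monotone multi-assignment. Soundness follows from the tight budget, which forces every path to be saturated and every gadget to contribute one vertex compatible with all four encoded values.

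The main obstacle is the $2$-subdivision requirement. Every edge of the original graph becomes a path of length $2$, so gadget attachments can only land on \emph{even} positions of $P_i$, that is, on original vertices. An attachment path of integer length $\ell$ then only distinguishes offsets of a fixed parity. We therefore use two kinds of attachments. The first is a direct path of length $\ell$ from an even position. The second goes through a pendant structure from the neighbouring even position, which shifts the effective distance by one. Combining the two lets us test both even and odd offsets in $[a]$. This is where the assumption that $a\ge6$ is even is needed: it gives enough room for both attachment types and keeps all distances consistent with $a/2$ being an integer.

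Finally, we verify girth. Every cycle passes through a gadget and returns via some path $P_i$, or goes through two gadgets. All attachment paths have length at least about $a/2$, so every cycle has length at least $a$. We make this explicit by lengthening padding where necessary, which only costs a constant in pathwidth.
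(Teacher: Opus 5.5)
There is a genuine gap in the constraint gadget, at exactly the step you flag as the main obstacle. A $2$-subdivision is bipartite, with all original vertices on one side. Your choice vertex $y_\sigma$ has degree at least $3$, so it is an original vertex. Hence $d(y_\sigma,p_i^x)\equiv x \pmod 2$, and consecutive path vertices have distances to $y_\sigma$ differing by exactly $1$, however the attachments are routed. A pendant detour can lengthen a route but can never change the parity of a distance, so your second attachment type cannot ``shift the effective distance by one''.

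Concretely, suppose $\sigma(i)$ is encoded at an odd position $x$ and $y_\sigma$ is at distance $\ge a$ from $p_i^x$. By parity this distance is at least $a+1$, so $y_\sigma$ is also at distance $\ge a$ from $p_i^{x-1}$ and $p_i^{x+1}$. The check therefore cannot distinguish $\sigma(i)$ from $\sigma(i)\pm1$. Your soundness argument rests on the property ``distance $\ge a$ exactly when the offset equals $\sigma(i)$''; this is unattainable for half of the values, whatever lengths you calibrate. Your budget pays for only one selected vertex per gadget, so there is no other mechanism to fall back on.

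The paper avoids this by never testing distances from $y_{\tau,\sigma}$ to a variable path. At every even offset $\alpha$ it adds two budget-tight switches:
\begin{itemize}
\item $q^\alpha$, joined to $p^\alpha$ by a super edge and hence at distance $a-2$, versus $\hat q^\alpha$;
\item $\pi^{\alpha,1}$, the subdivision vertex at distance $a-1$ from $p^\alpha$, versus $\hat\pi^\alpha$. A blocker on $\pi^{\alpha,2}$ removes the free alternative at distance $a$.
\end{itemize}
The budget accounts for exactly one vertex per switch. It therefore forces $\hat q^\alpha$ whenever the offset is within $1$ of $\alpha$, and $\hat\pi^\alpha$ whenever the offset equals $\alpha$. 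Super edges run from $\hat q^\alpha$ (resp.\ $\hat\pi^\alpha$) to every $y_{\tau,\sigma}$ with $|\alpha-\sigma(i)|\ge2$ (resp.\ $=1$). These leave only the $y_{\tau,\sigma}$ with $\sigma(i)=x$: via $\hat q^{x-1},\hat q^{x+1}$ for odd $x$, and via $\hat q^{x},\hat\pi^{x}$ for even $x$. Parity is broken because the information passes through a selected vertex on the subdivision side.

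Two smaller points also need repair.
\begin{itemize}
\item Your subdivided-star selector lets the budget be paid by the star centre or a leg vertex. Such a vertex is farther from the variable paths than any $y_\sigma$, so it constrains nothing. The paper instead uses pairwise super edges among the $y_{\tau,\sigma}$, whose inner vertices are saturated by forced pendant leaves.
\item ``Detect drift and penalize it'' for $V_2$ is not a construction. The paper gives each $V_2$ variable a single path of length $3a$ with its switches, shared by all constraints containing it and kept in every bag where the variable occurs. This makes consistency automatic at width cost $O(a)\cdot O(a\log p)$.
\end{itemize}
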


Before we begin with the actual construction,
	let us introduce two useful graph gadgets:
The \emph{super edge} forbids that both of its end vertices are selected
	and helps to avoid short cycles.
The \emph{blocker} attached to a vertex $u$ lets us assume that $u$ is \emph{not} part of the solution. The blocker 
is useful if we intend to attach an odd length path to a vertex,
	but such a path is not allowed in a $2$-subdivided graph.
Instead, we may add an even length path that extends to one more vertex $u$, and add a blocker on $u$.
Hence a blocker helps to construct a $2$-subdivided graph.
	
A \emph{super edge} $S_{u,v}$,
	for some vertices $u,v$,
	is a path $(u_0,\dots,u_{a-2})$ of length $a-2$ between $u_0=u$ and $u_{a-2}=v$
	where, for $i \in \{2,4,\dots,a-4\}$, we attach a path of length $a-2$ to $u_i$ and ending in some vertex 
	$u_i'$.
(We use here that $a \geq 6$.) The vertices of $S_{u,v}$ other than $u$ and $v$ are the \emph{inner} vertices of 
$S_{u,v}$.
We will use super edges $S_{u,v}$ in the construction of a graph $G$ in such a way that the inner vertices are not 
adjacent to any vertex outside $S_{u,v}$.
Observe that every inner vertex $u'_i$ is at distance at least $a$ to $u$ and $v$,
	hence they do not constrain the selection of vertices outside of $S_{u,v}$.
Thus, for a maximum $a$-independent set $I$, we may assume that
	$I$ contains the inner vertices $u'_i$ for $i \in \{2,4,\dots,a-4\}$
	and no other inner vertex.
Further, we may assume that $I$ contains at most one vertex from the closed neighborhood of $u$ and $v$
	that is not an internal vertex of the super edge.
        
A \emph{blocker} on a vertex $u$ is a cycle $C$ of even length $2a-2$ where one vertex coincides with $u$.
Then, similarly to \cref{lemma:ind:leaf},
	we may assume that a maximum independent set $I$ contains the unique vertices of $C$
	with distance $a-1$ to $u$.
As a result, we have that a maximum independent set $I$ of $G$
	does not contain $u$.
Now we are ready for the actual construction, see also \cref{figure:ind:seth:lb}.
\smallskip

\begin{figure}
\begin{center}
\begin{tikzpicture}[scale=0.4]

\newcommand{\drawEvenEncorcer}[1]{	\node[vertex] (q#1) at (\horshift+\qshift+#1,\vershift+4) {};
	\node[vertex] (qh#1) at (\horshift+\qshift+#1,\vershift+8) {};
	\draw[superedge] (p#1) -- (q#1);
	\draw[superedge] (q#1) -- (qh#1);}

\newcommand{\drawOddEncorcer}[1]{\node[vertex] (pi#1) at (\horshift+\pishift+#1,\vershift+4) {};
	\node[vertex] (pii#1) at (\horshift+\pishift+#1+0.5,\vershift+4) {};
	\node[vertex] (piii#1) at (\horshift+\pishift+#1+1,\vershift+4) {};
	\node[red] (piiicross#1) at (piii#1) {X};
	\draw (pi#1) -- (pii#1);
	\draw (pii#1) -- (piii#1);
	\node[vertex] (pih#1) at (\horshift+\pishift+#1,\vershift+8) {};
	\draw[superedge] (p#1) -- (pi#1);
	\draw[superedge] (pi#1) -- (pih#1);}

\newcommand{\crabConstruction}{
\def\start{6}
\def\endd{12}
\def\enddd{11}
\node[vertex, label={below:{$p^{\start}_{\subscript}$}}] (p\start) at (\horshift+\start,\vershift) {};
\drawEvenEncorcer{\start}
\drawOddEncorcer{\start}

\foreach \i [evaluate=\i as \ii using int(\i+1)] in {0,...,17}{
	\node[vertex] (p\i) at (\horshift+\i,\vershift) {};
	\node[vertex] (p\ii) at (\horshift+\ii,\vershift) {};
	\draw (p\i) -- (p\ii);	
}

\pgfmathsetmacro{\startt}{int(2+\start)}
\foreach \i [evaluate=\i as \j using int(\i+2), evaluate=\i as \ii using int(\i+1)]
	in {\start,\startt,...,\enddd}{
	\node[vertex] (p\j) at (\horshift+\j,\vershift) {};
	\node[vertex] (p\ii) at (\horshift+\ii,\vershift) {};
	\draw (p\i) -- (p\ii);
	\draw (p\ii) -- (p\j);

	\drawEvenEncorcer{\j}
	\drawOddEncorcer{\j}
}
\foreach \i in {0,\endd,18}{
	\node[vertex, label={below:{$p^{\i}_{\subscript}$}}] (p\i) at (\horshift+\i,\vershift) {};
}
}

\foreach \i [evaluate=\i as \angle using (90+\i*360/5)] in {1,...,5}{
	\ifthenelse{\i=2}{
		\node[vertex, shift={(9,10)}, label={175:$y_{1,\sigma_\i}$}] (y\i) at (\angle:3) {};
	}{
		\ifthenelse{\i=3}{
			\node[vertex, shift={(9,10)}, label={5:$\;y_{1,\sigma_\i}$}] (y\i) at (\angle:3) {};
		}{
			\node[vertex, shift={(9,10)}, label={\angle:$y_{1,\sigma_\i}$}] (y\i) at (\angle:3) {};
		}
	}
}
\foreach \i [evaluate=\i as \jbase using (\i+1)] in {1,...,4}{
	\foreach \j in {\jbase,...,5}{
		\draw[superedge] (y\i) -- (y\j);
	}
}
\foreach \todraw in {y1,y3,y4,y5}{
	\node[vertex, fill=white] (\todraw) at (\todraw) {};
}

\def\qshift{-4}
\def\pishift{4}

\def\horshift{-6}
\def\vershift{-4}
\def\subscript{1}
\crabConstruction

\draw[superedge] (qh6) -- (y2);
\draw[superedge] (qh10) -- (y2);
\draw[superedge] (qh12) -- (y2);

\foreach \todrawgray in {y2,p2,p8,p14,q6,qh8,q10,q12,pih6,pih12,pih8,pih10}{
	\node[vertex, fill=gray] (\todrawgray) at (\todrawgray) {};
}
\node[red] (p0) at (p0) {X};

\def\horshift{9.5}
\def\vershift{-1.5}
\def\subscript{2}
\crabConstruction

\draw[superedge] (qh6) -- (y2);
\draw[superedge] (qh12) -- (y2);

\draw[superedge] (pih8) -- (y2);
\draw[superedge] (pih10) -- (y2);

\foreach \todraw in {p3,p9,p15,q6,qh8,qh10,q12,pih6,pih12,pii8,pii10}{
	\node[vertex, fill=gray] (\todraw) at (\todraw) {};
}
\node[red] (p0) at (p0) {X};

\end{tikzpicture}
\end{center}
\caption{Construction for \cref{lemma:is:ppseth:reduction} for a very simple CSP instance with $a=6$.
There are only $2$ variables, both in $V_1$, a single bag and a single constraint with assignments 
$\sigma_1,\dots,\sigma_5$
	with $\sigma_2(1)=2$ and $\sigma_2(2)=3$.
(The construction for the other assignments is omitted).
The bold edge depict super edges, while there is a blocker on every vertex with a red cross.
Note that graph is a $2$-subdivision.
The gray marked vertices form a maximum $6$-independent set.
As $y_{1,\sigma_2}$ is selected (and because of how it is connected to below)
	only $p_1^{6+2}$ and $p_2^{6+3}$ (resembling the assignment $\sigma_2$)
	of $p_1^7,\dots,p_1^{12}$ and $p_2^7,\dots,p_2^{12}$, respectively.}
\label{figure:ind:seth:lb}
\end{figure}

\textit{Construction:}
We construct a graph $G$ that is a $2$-subdivision of some graph $G'$,
	by using super edges $S_{u,v}$, paths $(p^0,\dots,p^j)$ of some even length $j$
	and connecting them only with vertices of even upper index to one another, and only blocking vertices of even 
	index.
\begin{enumerate}
\item
\label{is:step:v2}
For every variable $i \in V_2$ we proceed as follows.
We add a path $P_i = (p_i^0,\dots,p_i^{3a})$ of length $3a$
	with a blocker on $p_i^0$.
For every even position $\alpha = a + \alpha'$ with $\alpha' \in \{0,2,\dots,a\}$
	we add vertices $q_{i}^{\alpha}$, $\hat q_{i}^{\alpha}$, $\pi_{i}^{\alpha}$, $\hat 
	\pi_{i}^{\alpha}$,
	and super edges between the pairs
	$\{p_{i}^{\alpha}, q_{i}^{\alpha}\}$,
	$\{q_{i}^{\alpha}, \hat q_{i}^{\alpha}\}$,
	$\{p_{i}^{\alpha}, \pi_{i}^{\alpha}\}$ and
	$\{\pi_{i}^{\alpha}, \hat \pi_{i}^{\alpha}\}$,
	as well as a path $(\pi_{i}^{\alpha},\pi_{i}^{\alpha,1},\pi_{i}^{\alpha,2})$ of length $2$
	with a blocker on vertex $\pi_{i}^{\alpha,2}$.
	
For convenience, let $p_i^{3a\tau+\alpha}$ refer to the vertex $p_i^{\alpha}$
	for every $\tau\geq 1$ and $\alpha \in [3a]$.
\item
\label{is:step:v1}
For every variable $i \in V_1$, we proceed as follows.
Assume that $i$ occurs exactly in some $t_i$ bags $B_{\tau_i},\dots,B_{\tau_i+t_i-1}$.
We add a path $P_i = (p_{i}^{\tau_i a},\dots,p_i^{(\tau_i+t_i) 3a})$ of length $3at_i$
	with a blocker on $p_i^{\tau_i a}$.
For every even position $\alpha = (3\tau+1) a + \alpha'$ with $\alpha' \in\{0,2,\dots,a\}$
	we add vertices $q_{i}^{\alpha}$, $\hat q_{i}^{\alpha}$, $\pi_{i}^{\alpha}$, $\hat 
	\pi_{i}^{\alpha}$,
	and super edges between the pairs
	$\{p_{i}^{\alpha}, q_{i}^{\alpha}\}$,
	$\{q_{i}^{\alpha}, \hat q_{i}^{\alpha}\}$,
	$\{p_{i}^{\alpha}, \pi_{i}^{\alpha}\}$ and
	$\{\pi_{i}^{\alpha}, \hat \pi_{i}^{\alpha}\}$,
	and further, we add a path $(\pi_{i}^{\alpha,0},\pi_{i}^{\alpha,1},\pi_{i}^{\alpha,2})$	of length $2$
	with a blocker on vertex $\pi_{i}^{\alpha,2}$.

\item
\label{is:step:main}
We proceed for every constraint $\mu \in [m]$ as follows.
Let $\tau = f(\mu)$.
For every assignment $\sigma \in C_\mu$,
	we add a vertex $y_{\tau,\sigma}$.
For every pair of distinct assignments $\sigma,\sigma' \in C_\mu$,
	we add a super edge between $y_{\tau,\sigma}$ and $y_{\tau,\sigma'}$.
We proceed for every assignment $\sigma \in C_\mu$ and variable $i$ in $X_\mu$,
	the set of variables of constraint $\mu$, as follows.
We add a super edge between $y_{\tau,\sigma}$ and $\hat q_{i}^{\alpha}$
	for every position $\alpha = (3\tau+1) a + \alpha'$ where $\alpha' \in \{0,2,\dots,a\}$
	and $|\alpha' - \sigma(i)| \geq 2$.
Similarly, we add a super edge between $y_{\tau,\sigma}$ and $\hat \pi_{i}^{\alpha}$
	for every position $\alpha = (3\tau+1)a + \alpha'$ where $\alpha' \in \{0,2,\dots,a\}$
	and $|\alpha' - \sigma(i)| = 1$,
\end{enumerate}
Finally, we set the budget $k$ to $m + (5+2a) (|V_2|+\sum_{i \in V_1} t_i) + \frac{a-2}{2}s + s'$
	where $s$ is the number of added super edges
	and $s'$ is the number of added blockers.

\smallskip

By the construction, the output graph $G$ is a $2$-subdivision.
Let us further observe that~$G$ does not contain a cycle of length $<a$.
A blocker does not introduce a cycle of length $<a$.
Further, no added super edge $S_{u,v}$ between some vertices $u,v$ is part of a cycle of length $<a$
	since $u,v$ are never joined by a path outside $S_{u,v}$ of length $<2$.
The only remaining possible cycle of length $<a$ must contain a vertex $y_{\tau,\sigma}$,
	for some $\tau \in [t]$ and assignment $\sigma \in C_\mu$ with $f(\mu)=\tau$,
	which, however, is not possible as $y_{\tau,\sigma}$ is only connected via super edges.

Next, let us bound the pathwidth of the constructed graph $G$.
Notably, the graph $H$ resulting from removing every blocker
	and replacing every super edge between some vertices $u,v$
	by an edge $\{u,v\}$ has a pathwidth within an additive constant of the pathwidth of $G$.
Hence let us upper bound the pathwidth of $H$.

Consider a path decomposition $(B_1,\dots,B_t)$ of the primal graph of $\varphi$ of width $p$
	and where each bag contains at most $O( a \log p)$ vertices from $V_2$.
For every bag $\tau \in [t]$ and every variable $i \in V_2$,
	we add all the $O(a)$ vertices of step \ref{is:step:v2} for $\tau$ to a new bag $B_\tau'$.
Further, for every constraint $\mu \in [m]$ with $\tau = f(\mu)$,
	we add every vertex occurring in an edge in step \ref{is:step:main} to $B_\tau'$.
So far every new bag contains at most $O( a^2 \log p + a^{4} )$ vertices,
	as there are at most $a^4$ assignments in $C_\mu$.

For every bag $\tau \in [t]$, let $W_\tau \coloneqq V_1 \cap B_\tau$, which has size at most $p$.
Then let $(B_\tau^1,\dots,B_\tau^{t'})$
	be a straight forward path decomposition of width $p+O(1)$
	that contains, for $i \in W_\tau$,
	the subpath of $P_i$ from vertex $p_{\tau,i,0}^0$ to vertex 
	$p_{\tau,i,2}^{a}$,
	and that starts with a bag $B_\tau^1 = \{ p_{i}^{(3\tau+z)a+0} \mid i \in W_\tau \}$
	and that ends with a bag $B_\tau^{t'} = \{ p_{i}^{(3\tau+z)a+a} \mid i \in W_\tau \}$.
We extend $(B_\tau^1,\dots,B_\tau^{t'})$
	by adding the vertices from the trees attached to vertices of $P_i$ in step \ref{is:step:v1}.
We add every vertex of $B_\tau'$ to the bags $B_\tau^1,\dots,B_\tau^{t'}$.
Finally, joining the path decompositions $(B_\tau^1,\dots,B_\tau^{t'})$ for $\tau \in [t]$
	results in a path decomposition of $H$, and hence also for the constructed graph $G$,
	of width $p+O( a^2 \log p + a^{4} )$.

It remains to show the correctness of the construction.

\begin{lemma}
\label{lemma:is:ppseth:correctness}
There is a monotone increasing multi-assignment $\sigma$ that
	satisfies $\varphi$ and that is consistent for every variable in $V_2$,
	if and only if there is a size $\geq k$ $a$-independent set $I \subseteq V(G)$ in $G$.
\end{lemma}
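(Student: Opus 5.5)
The proof goes through a \emph{budget accounting}. We show that every $a$-independent set $I$ can be normalized, without decreasing its size, so that it meets a per-gadget upper bound, and that these bounds sum exactly to $k$. Equality forces a specific local structure that encodes a satisfying multi-assignment.

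\textbf{Normalization.} Following the observations after the definitions of super edges and blockers, and \cref{lemma:ind:leaf}, we first normalize $I$ so that it has the following form. Every blocker contributes exactly its unique vertex at distance $a-1$ from its attachment vertex, for $s'$ in total, and its attachment vertex is not in $I$. Every super edge contributes exactly its $\frac{a-2}{2}$ pendant ends $u_i'$, for $\frac{a-2}{2}s$ in total, and no other inner vertex. After this, only the path vertices $p_i^\alpha$, the vertices $q,\hat q,\pi,\pi^{\alpha,1},\hat\pi$, and the vertices $y_{\tau,\sigma}$ remain free.

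\textbf{Local upper bounds.} Next we bound each block of free vertices.
\begin{itemize}
\item For each constraint, the $y_{\tau,\sigma}$ are pairwise joined by super edges, hence at distance $a-2$ through the super edge path, so at most one of them is selected. This accounts for the $m$ term.
\item For each path $P_i$ on a window of length $3a$ together with its attached $q,\hat q,\pi,\hat\pi$ trees, we show that at most $5+2a$ vertices can be chosen. The path has at most one vertex per $a$ consecutive positions. Each attached pair of vertices ($q$ or $\hat q$, and $\pi$-branch or $\hat\pi$) yields one vertex, and the $q$/$\pi$ choices near the selected path vertex are constrained. The count reaches $5+2a$ only when the chosen path vertices are exactly $a$ apart, i.e.\ they encode a residue $\alpha'\in[a]$.
\item At even offsets $\alpha$ near the encoded position, $q^\alpha$ rather than $\hat q^\alpha$ is forced. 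At odd encodings, the $\pi$ branch combined with the length-$2$ path and its blocker plays the same role. Away from the encoded position, the hatted vertices $\hat q^\alpha$ or $\hat\pi^\alpha$ are taken.
\end{itemize}
Summing these bounds gives $|I|\le k$, with equality iff every gadget is tight.

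\textbf{Soundness and completeness.}
\begin{itemize}
\item ($\Rightarrow$) Given the multi-assignment, place path vertices at positions $\equiv\sigma(i,\tau)$ in each window, select $y_{\tau,\sigma_\mu}$ for the satisfied assignment, and choose the hatted and unhatted vertices accordingly. Then verify pairwise distances $\ge a$. Here the super edge from $y_{\tau,\sigma}$ reaches only hatted vertices at offsets with $|\alpha'-\sigma(i)|\ge2$ (for $\hat q$) or $=1$ (for $\hat\pi$), and those are exactly the ones left unselected. Monotonicity is what allows consecutive windows of a $V_1$ path to shift consistently: along a path, the residue can only move in one direction without losing a vertex, so a change in $\sigma(i,\cdot)$ costs nothing only when it goes the monotone way.
\item ($\Leftarrow$) From a normalized $I$ of size $\ge k$, tightness gives one selected $y_{\tau,\sigma}$ per constraint and a well-defined residue per window. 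Reading these off gives the multi-assignment. The super edges from the selected $y$ to the hatted vertices force the residue to equal $\sigma(i)$, so each constraint is satisfied. For a $V_2$ path there is a single window, which yields consistency. For $V_1$ paths, a tight packing along a path shows that the residue is monotone across windows.
\end{itemize}

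\textbf{Main obstacle.} The delicate step is the tight local bound of $5+2a$ per window, and its rigidity. This must be proved by a careful case analysis of distances around the even-only attachment points, distinguishing even from odd encoded residues, i.e.\ $q$-type versus $\pi$-type enforcement. I would first do this for $a=6$ by hand, guided by \cref{figure:ind:seth:lb}, and then generalize the distance computations.
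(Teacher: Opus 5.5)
Your skeleton matches the paper's proof. You normalize super edges and blockers, then note that each remaining group holds at most one vertex of $I$: every length-$a$ block of a path $P_i$, every pair $\{q_i^\alpha,\hat q_i^\alpha\}$, every triple $\{\pi_i^\alpha,\pi_i^{\alpha,1},\hat\pi_i^\alpha\}$, and every set $\{y_{\tau,\sigma}\mid\sigma\in C_\mu\}$. The budget then forces exactly one vertex per group, and you read off $\sigma$.

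\textbf{The forcing step is reversed.} The mechanism that drives $(\Leftarrow)$ is stated backwards in your third ``local upper bound'' bullet. Let $p_i^\beta\in I$ be the selected vertex of the middle block.
\begin{itemize}
\item $d(q_i^\alpha,p_i^\beta)=a-2+|\alpha-\beta|$. So $q_i^\alpha$ is \emph{excluded} when $|\alpha-\beta|\le 1$, and tightness forces the hatted $\hat q_i^\alpha\in I$ near the encoded position. Far from it, $q_i^\alpha$ is the vertex that can be taken.
\item $\pi_i^\alpha$ and $\pi_i^{\alpha,1}$ lie at distances $a-2+|\alpha-\beta|$ and $a-1+|\alpha-\beta|$ from $p_i^\beta$. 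So $\hat\pi_i^\alpha$ is forced exactly when $\beta=\alpha$.
\item The pendant $\pi_i^{\alpha,1}$ exists so that an odd encoding does \emph{not} force $\hat\pi$ at the two flanking even positions. It does not ``play the same role'' as $q$.
\end{itemize}
Satisfaction follows because a forced hatted vertex cannot share a super edge with the selected $y_{\tau,\sigma}$. For an even residue $\alpha'$, the forced $\hat q$ and $\hat\pi$ at $\alpha'$ give $|\alpha'-\sigma(i)|\le 1$ and $|\alpha'-\sigma(i)|\ne 1$. For an odd residue, the two forced $\hat q$'s at $\alpha'\pm 1$ give $\sigma(i)=\alpha'$.

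With your version (unhatted forced near the encoding, hatted taken away from it), even the correct $y$ would be blocked by the hatted vertices away from the encoding, and nothing near the encoding would pin down $\sigma(i)$. The argument would then fail. Your $(\Rightarrow)$ bullet actually uses the correct relation, so the proposal contradicts itself here, and this is the point to fix.

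\textbf{The rigidity claim is false and unnecessary.} You claim a tight window has its path vertices exactly $a$ apart. This is false: positions $1$, $a+3$, $3a$ in the three blocks are compatible. It would also clash with the monotone shifts you allow between $V_1$-windows. You only need exactly one vertex per block:
\begin{itemize}
\item read $\sigma(i,\tau)$ from the middle block;
\item consistency for $V_2$ is automatic;
\item monotonicity holds because the four selected vertices, from the middle block of window $\tau$ to that of window $\tau+1$, are pairwise at distance $\ge a$ on a subpath of length $3a+\sigma(i,\tau+1)-\sigma(i,\tau)$.
\end{itemize}
Consequently, the ``main obstacle'' you name is not one. The upper bound is just this partition into groups of diameter less than $a$. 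The only distance computations needed are the ones above for the satisfaction check.
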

\begin{proof}
\forward
Assume that there is a monotone increasing multi-assignment $\sigma$
	that satisfies $\varphi$ and that is consistent for every variable in $V_2$.
We define an independent set $I$ of $G$ as follows.
For every super edge $S_{u,v}$, we add the $\frac{a-4}{2}$ internal vertices $u_i'$ for $i \in \{2,4,\dots,a-4\}$.
For every blocker on some vertex $u$, we add the unique internal vertex that has distance $a-1$ to $u$.
For every variable $i \in V_2$,
	we add the vertices $p_i^{az+\alpha}$ for $z \in \{0,1,2\}$ and
	$\alpha$ being the consistent assignment of $\sigma(i,\tau)$ for every $\tau \in [t]$.
For every variable $i \in V_1$, node $\tau \in [t]$ and $z \in \{0,1,2\}$,
	we add the vertex $p_i^{az+\alpha}$ with $\alpha = \sigma(i,\tau)$.
Consider a variable $i \in V_1 \cup V_2$, a node $\tau \in [t]$, $z \in \{0,1,2\}$
	and a position $\alpha' \in \{0,2,\dots,a\}$.
If the difference $|\alpha' - \sigma(i,\tau)| \geq 2$, we add the vertex $q_i^{\alpha}$ to $I$,
	and else we add $\hat q_i^{\alpha}$;
	Similarly, if $|\alpha' - \sigma(i,\tau)| \geq 1$, we add the vertex $\pi_i^{\alpha,1}$ to $I$,
	and else we add $\hat \pi_i^\alpha$.
Finally, for every constraint $\mu \in [m]$ with $f(\mu)=\tau$,
	we add the vertex $y_{\tau,\sigma_\tau}$ to $I$,
	where $\sigma_\tau \in C_\mu$ is the restriction of $\sigma$ to the variables $X_\mu$ of constraint $\mu$.
This way, we have added exactly $k=m + (5+2a) (|V_2|+\sum_{i \in V_1} t_i) + \frac{a-2}{2}s + 2s'$ vertices to $I$.

We claim that $I$ is an $a$-independent set in $G$.
For every super edge $S_{u,v}$ between some vertices $u$ and $v$,
	the inner vertices of $S_{u,v}$ in $I$ have pair-wise distance at least $a$
	and distance at least $a$ to any vertex other than an inner vertex of $S_{u,v}$.
Similarly, for every blocker on some vertex $u$, as $I$ does not select $u$,
	we have that the selected inner vertex of that blocker has distance at least $a$
	to each other and to every other vertex in $I$.
Since $\sigma$ is monotone increasing, for every variable $i \in V_1 \cup V_2$,
	every pair of vertices of $I$ on path $P_i$, has distance at least $a$.
Consider a variable $i \in V_1 \cup V_2$, node $\tau \in [t]$, $z \in \{0,1,2\}$
	and a position $\alpha' \in \{0,2,\dots,a\}$.
The set $I$ contains either $q^{\alpha}_i$ or $\hat q^{\alpha}_i$ with $\alpha = (3\tau+1) a + \alpha'$,
	and, by our choice, the selected one has distance at least $a$ to the vertices $I \cap V(P_i)$.
Similarly, $I$ contains either $\pi^{\alpha}_i$ or $\hat \pi^{\alpha}_i$ with $\alpha = (3\tau+1) a + \alpha'$,
	and, by our choice, the selected one has distance at least $a$ to the vertices $I \cap V(P_i)$.
Finally, consider a constraint $\tau \in [t]$ with $f(\tau)=\mu$,
	where we have added $y_{\tau,\sigma_\tau}$ to $I$
	where $\sigma_\tau \in C_\mu$ is the restriction of $\sigma$ to $X_\mu$.
For every variable $i \in X_\mu$ and position $\alpha' \in \{0,2,\dots,a\}$ defining $\alpha = (3\tau+1)a+\alpha'$,
	by our choice of $I$,
	we have that either $\hat q_i^{\alpha} \notin I$, or $\hat q_i^{\alpha}$ and $y_{\tau,\sigma_\tau}$
	are not connected by a super edge
	and hence have distance at least $a$.
Analogously, either $\hat \pi_i^{\alpha} \notin I$, or $\hat \pi_i^{\alpha}$ and $y_{\tau,\sigma_\tau}$
	are not connected by a super edge
	and hence have distance at least $a$.
We easily observe that the remaining pairs of vertices in $I$ have distance at least $a$
	regardless of the assignment $\sigma$.
Thus we conclude that $I$ is an $a$-independent set of $G$ of size $k$.

\smallskip

\backward
Assume that there is a maximum $a$-independent set $I$ of size at least $k$ in $G$.
For every super edge,
	exactly $\frac{a-2}{2}$ of its inner vertices are contained in $I$.
Similarly, we may assume that every blocker on some vertex $u$
	contains exactly the unique inner vertex that has distance $a-1$ to $u$.
Hence a budget of $m + (5+2a) (|V_2|+\sum_{i \in V_1} t_i)$ remains for the other vertices.

We observe that, for every variable $i \in V_2$, the subpath $(p_i^{za+1},\dots,p_i^{(z+1)a})$ for $z \in 
\{0,1,2\}$,
	contains at most one vertex from $I$.
Similarly, for every variable $i \in V_1$, node $\tau \in [t]$ and $z \in \{0,1,2\}$,
	the subpath $(p_i^{(3\tau+z)a+1},\dots,p_i^{(3\tau+z+a)})$ contains at most one vertex from $I$.
For every added super edge
	between $q_{i}^{\alpha}$ and $\hat q_{i}^{\alpha}$ at most one of them is in $I$.
For every added super edge
	between $\pi_{i}^{\alpha}$ and $\hat \pi_{i}^{\alpha}$
	at most one of $\pi_i^\alpha$, $\pi_{i}^{\alpha,1}$ (the non-super edge neighbor of $\pi_i^\alpha$)
	and $\hat \pi_{i}^{\alpha}$ is in $I$.
Finally, for every constraint $\mu \in [m]$,
	set $I$ contains at most one vertex in $\{ y_{\tau,\sigma_\tau} \mid \sigma_\tau \in C_\mu \}$.
By the budget limit, for each of the above statement about containing at most one vertex from $I$,
	the same statement holds claiming containment of exactly one vertex from $I$.

Thus we can now define the multi-assignment $\sigma$.
For every variable $i \in V_2$ and $\tau \in [t]$,
	let $\sigma(i,\tau) = \alpha$
	where $p_i^{a+\alpha}$ is the unique vertex in $I$ of the subpath $(p_i^{a+1},\dots,p_i^{2a})$.
By definition, this assignment is consistent for every variable of $V_2$.
For every variable $i \in V_1$ and $\tau \in [t]$,
	let $\sigma(i,\tau)=\alpha$
	where $p_i^{(3\tau+1)a+\alpha}$ is the unique vertex in $I$
	of the subpath $(p_i^{(3\tau+1)a+1},\dots,p_i^{(3\tau+2)a})$.
This assignment is monotone increasing for variable $i \in V_1$,
	since, for every $\tau \in [t-1]$,
	between the vertices $p_i^{(3\tau+1)a+\alpha}$ and $p_i^{(3\tau+4)a+\alpha'}$
	with $\alpha = \sigma(i,\tau)$ and $\alpha' = \sigma(i,\tau+1)$
	there are exactly $2$ vertices of $I$ on path $P_i$.

Finally, we observe that every constraint $\mu \in [m]$ is satisfied.
Let $\tau = f(u)$.
Let $\sigma_\tau$ be the local assignment
	such that $y_{\tau,\sigma_\tau}$
	is the unique vertex of $\{y_{\mu,\sigma_\tau} \mid \sigma_\tau \in C_\tau\}$ in $I$.
We claim that $\sigma(i,\tau)=\sigma_\tau$ for every variable $i \in X_\mu$.
Indeed, if $\sigma(i,\tau) = \alpha'$ is even,
	that is $p_i^\alpha \in I$ with $\alpha= (3\tau+1) a + \alpha'$
	we have $\hat q_i^\alpha \in I$.
Hence the difference of the assignments is $|\alpha' - \sigma_\tau(i)| \leq 1$.
Further, we have $\hat \pi_i^\alpha \in I$ and hence $|\alpha' - \sigma_\tau(i)| \neq 1$,
	such that we conclude $\alpha' = \sigma_\tau(i)$.
On the other hand, if $\sigma(i,\tau) = \alpha'$ is odd,
	we may assume that $\pi_i^{\alpha-1,1} \in I$ and $\pi_i^{\alpha+1,1} \in I$.
Then $|(\alpha'-1) - \sigma_\tau(i)| \leq 1$ and $|(\alpha'+1) - \sigma_\tau(i)| \leq 1$,
	which implies that $\alpha' = \sigma_\tau(i)$.
In conclusion, $\sigma$ satisfies $\varphi$.
\end{proof}

\subsection{Parameterized Complexity of Independent Set}
\label{a:section:ind:parameterized}

This section contains the missing proofs from \cref{section:ind:parameterized}.

\begin{lemma}[\cref{lemma:ind:fpt} restated]
\label{a:lemma:ind:fpt}
\lemmaTextIndFpt
\end{lemma}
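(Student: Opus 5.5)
We use the win-win strategy sketched before the statement, combining \cref{lemma:ind:leq:matching} with the treewidth algorithm of \cref{lemma:a:is:tw}. The input is a graph $H\in\GG_b$ and a budget $k$, and we want to decide whether $\alpha_a(H)\geq k$. First we recover, in polynomial time, a graph $G$ with $H=G_b$. For $b=1$ this is trivial. For $b\geq 2$ the branch vertices of $H$ are exactly the vertices of degree $\neq 2$, except for cycle components, which are handled separately. The remaining paths between branch vertices have length $b$, and contracting them gives $G$. Isolated vertices are preprocessed away, as assumed in \cref{section:preliminaries}. Then we compute a maximum matching $M$ of $G$ in polynomial time, so $|M|=\nu(G)$.

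\emph{Case $k\leq \nu(G)$.} By \cref{lemma:ind:leq:matching}, $\alpha_a(G_b)\geq \nu(G)\geq k$, since $\frac{a}{b}<2$. So we answer yes. For an explicit witness, we can output the vertices $\vv(u,v,\floor{b/2})$ for the edges $\{u,v\}\in M$.

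\emph{Case $k>\nu(G)$.} The endpoints $V(M)$ form a vertex cover of $G$ of size $2\nu(G)<2k$. A graph with a vertex cover $X$ has a path decomposition of width at most $|X|+1$: for each vertex $w\notin X$, take one bag $X\cup\{w\}$, and concatenate these bags. Subdividing edges keeps this property. In $G_b$, take for each edge $\{u,v\}$ with $u\in X$ the subdivision path, and sweep along it using bags $X\cup\{\text{two consecutive path vertices}\}$, keeping the far endpoint $v$ in every bag of that sweep. This gives a path decomposition of $H=G_b$ of width at most $|X|+3\leq 2k+2$, and it is computable in polynomial time. We then run the algorithm of \cref{lemma:a:is:tw} on it, which takes $a^{2k+2}\cdot |V(H)|^{O(1)}$ time. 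This is an \fpt running time in $k$, because $a$ is a fixed constant.

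The only step needing real care is this explicit decomposition of $G_b$ of width $O(\nu(G))$. The paper cites $2\nu(G)\geq \pw(G)$, but we need a decomposition of the subdivision that is actually constructed, not just a width bound. Subdividing edges increases pathwidth by at most an additive constant, and the vertex-cover construction above supplies the decomposition concretely. The remaining ingredients, recognising $G$ from $G_b$ and computing a matching, are routine.
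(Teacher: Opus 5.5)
Your proof is correct and uses the same win-win argument as the paper. If $k\le\nu(G)$, answer yes by \cref{lemma:ind:leq:matching}. Otherwise, use $\nu(G)<k$ to get a decomposition of width $O(k)$ and run the $a^{t}\cdot n^{O(1)}$ algorithm of \cref{lemma:a:is:tw}.

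The only genuine difference is how the decomposition is obtained. The paper computes a tree decomposition of $G$ with the FPT treewidth algorithm of Korhonen and Lokshtanov and transfers it to $G_b$. You instead build a path decomposition of $G_b$ directly from the vertex cover $V(M)$. Your route is elementary and runs in polynomial time. The paper's route can exploit $\tw(G)$ being much smaller than $2\nu(G)$, at the cost of an extra $f(\tw(G))$ term to compute the decomposition.

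A few details need tidying.
\begin{itemize}
\item \cref{lemma:a:is:tw} is stated only for $a\ge 2$. The case $a=1$ must be handled separately, as the paper does; it is trivial since $\alpha_1(H)=|V(H)|$.
\item In your sweep construction, all sweeps for edges incident to the same non-cover vertex $w$ must be placed consecutively, replacing the single bag $X\cup\{w\}$. Otherwise $w$ does not occupy a contiguous interval of bags.
\item When recovering $G$ from $H$, a path between consecutive branch vertices has length a multiple of $b$, not exactly $b$. This is because degree-$2$ vertices of $G$ are not branch vertices in $H$. Cutting each such path every $b$ steps fixes this.
\end{itemize}
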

\begin{proof}
Let the input be a graph $G$ (defining $G_b$) and an integer $k$ as the solution size asked for.
We determine $\nu(G)$, the size of maximum matching in $G$, in polynomial time.
If $k \leq \nu(G)$, by \cref{lemma:ind:leq:matching},
	we may immediately answer `yes'.
Otherwise $k > \nu(G) \geq \frac{\vc(G)}{2}$,
	where $\vc(G)$ is the minimum size of a vertex cover of $G$.
Further, the size of a vertex cover upper bounds the treewidth of $G$.
We may compute a tree decomposition of $G$ in \fpt time~\cite{KorhonenLokshtanov2023},
	which immediately provides a tree decomposition of $G_b$ of same size.
If $a\geq 2$, we may compute a maximum $a$-independent set in \fpt time
	for the treewidth as parameter by using \cref{lemma:a:is:tw}.
Else, $a=1$ and we have $V(G_b)$ as a maximum $1$-independent set.
\end{proof}

We settle the hardness results by two parameter preserving reductions from \indset
	showing \wone-hardness,
	the first for $\frac{a}{b} \in (2,3)$, the second for $\frac{a}{b} \geq 3$;
	similarly as in~\cite{HartmannL22}.

\begin{lemma}%
\label{a:lemma:ind:2:hardness}
\lemmaTextIndTwoHardness
\end{lemma}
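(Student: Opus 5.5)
We give a parameterized reduction from \indset on general graphs (which is \wone-hard~\cite{DowneyFellows1995}) to $a$-$\IndSet(\GG_b)$, in the spirit of the corresponding covering reduction of~\cite{HartmannL22}. The first step is to normalize: by \cref{lemma:alpha:sub} we may replace $(a,b)$ by $(2a,2b)$, so we can assume $a$ and $b$ are even. This is convenient because midpoints of subdivided edges become vertices. Write $\frac{a}{b} = 2 + \rho$ with $0<\rho<1$.

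Given an \indset instance $\langle H,k\rangle$, we build a graph $G$ as follows. Start from $H$, and to every vertex $v$ of $H$ attach a pendant gadget, for instance a single pendant edge $\{v,v'\}$ or, if needed, a short pendant path. We then take $G_b$. The intended solution of size $k$ in $G_b$ uses one designated point near each chosen vertex $v$ of $H$, namely $v'$ or a vertex at a suitable offset along the pendant. The gadget lengths are chosen so that two designated points of \emph{adjacent} vertices $u,v$ of $H$ are at distance less than $a$ through the edge $\{u,v\}$, which has length $b$. Designated points of non-adjacent vertices must then be at distance at least $a$, since any connecting path has length at least $2b$ plus the two pendant offsets. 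The condition $\frac{a}{b}\in(2,3)$ is exactly what allows this tuning. We want $2b + 2\cdot\text{offset} \ge a$ while $b + 2\cdot\text{offset} < a$. Taking the offset to be $\lceil (a-2b)/2\rceil \le$ pendant length $b$ is consistent because $a-2b < b$. If the offset has to be finer than a multiple of $b$, we instead attach a longer pendant path and use the normalization to even $a,b$ to hit the exact vertex.

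The forward direction is then immediate. For the backward direction we must show that any $a$-independent set of size $k$ in $G_b$ can be transformed into one using only designated points. We argue that every selected vertex lies in the "territory" of some vertex $v\in V(H)$: either on its pendant, or within distance $b/2$ of $v$ along an $H$-edge. We then move it to the designated point of $v$, in the style of \cref{lemma:ind:leaf} and \cref{lemma:covering:simplify}, so that moving outward along the pendant never decreases distances. Since $a>2b$, at most one selected vertex can lie in the territory of each $v$. Two selected vertices in the territories of adjacent $u,v$ are at distance less than $a$ after the move, and we must show they were already too close before it. This gives an independent set of size $k$ in $H$, and the parameter is preserved.

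\textbf{Main obstacle.} The delicate step is this exchange argument. A selected point in the middle of an $H$-edge $\{u,v\}$ must be charged to exactly one endpoint without creating conflicts, so the precise offset must be chosen to work uniformly for all $\rho\in(0,1)$. If a single pendant edge does not give enough slack, I would first try pendant paths of length $b$ with the designated point at distance $\lceil (a-2b)/2\rceil$ from $v$, and verify the two inequalities above case by case.
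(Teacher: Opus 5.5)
There is a genuine gap, and it lies in the construction itself rather than in the exchange argument: no pendant gadget can work. In $G_b$ the pendant edge $\{v,v'\}$ becomes a path of length $b$. So for adjacent $u,v\in V(H)$ the pendant ends satisfy $d(u',v')=b+b+b=3b>a$, since $\frac{a}{b}<3$, and for non-adjacent $u,v$ the distance is even larger. Hence $\{v'\mid v\in V(H)\}$ is an $a$-independent set of size $|V(H)|$ for every $H$. Your instance is therefore a yes-instance for every $k\le|V(H)|$, regardless of $E(H)$.

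Your exchange step cannot repair this. The designated point lies at offset $\lceil (a-2b)/2\rceil<b$ from $v$, i.e.\ inward of $v'$, so moving a selected $v'$ onto it decreases distances. This is exactly the direction you cannot afford. Your claim that selected vertices in territories of adjacent $u,v$ were already too close is false for pendant ends. Longer pendant paths only make things worse. More generally, any gadget attached to $H$ only through $v$ contains a vertex at distance exactly $b$ from $v$ in the subdivision, which gives the same counterexample.

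The missing idea is a gadget that supplies an offset of about $b/2$ while keeping every one of its vertices within $b/2$ of a vertex at distance exactly $b$ from the gadgets of all neighbours. The paper achieves this by doubling:
\begin{itemize}
\item each $u\in V(H)$ becomes an edge $\{u_1,u_2\}$;
\item each edge $\{u,v\}$ becomes the four edges $\{u_i,v_j\}$;
\item one then takes the $b$-subdivision.
\end{itemize}
For the forward direction, the designated vertex $\vv(u_1,u_2,\lfloor b/2\rfloor)$ gives distance at least $2b+2\lfloor b/2\rfloor\ge 3b-1\ge a$ between choices for non-adjacent vertices. For the backward direction, every vertex lies in a ball $B(u)$ of radius $b/2$ around $\{u_1,u_2\}$. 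Two vertices in one ball are at distance at most $2b<a$. Vertices in balls of adjacent $u,v$ are at distance at most $b/2+b+b/2=2b<a$. This yields the backward direction with no exchange argument at all.

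Separately, your normalization is not justified. \cref{lemma:alpha:sub} with $c=2$ requires $a,b$ to be even already, so it does not let you pass from $(a,b)$ to $(2a,2b)$ when $\gcd(a,b)$ is odd. These are genuinely different problems in general; compare the different bases in \cref{p:lemma:is:tw:summary}. The normalization is also unnecessary, since the offsets you need are integers anyway.
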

\begin{proof}
We show \wone-hardness by a reduction from $\IndSet$ on general graphs with the solution size $k$ as parameter.
Given a graph $G$ and integer $k$ we construct a graph $G' \in \GG_b$ and set $k'=k$
	such that $G$ has an independent set of size at most $k$
	if and only if $G'$ has an independent set of size at most $k'$.
It is easy to see that the following construction works in polynomial time.

\smallskip

\emph{Construction:}
First, we construct an auxiliary graph $G''$.
For every vertex $u\in V(G)$, we add vertices $u_1,u_2$ and edge $\{u_1,u_2\}$ to $G''$.
For every edge $\{u,v\}\in E(G)$, we add edge $\{u_i,v_j\}$ for every $i,j \in \{1,2\}$ to $G''$.
This concludes the construction of $G''$.
Then let $G'$ be the $b$-subdivision of $G''$.
Recall that $\vv(u,v,\beta)$, with $\{u,v\} \in E(G)$ and $\beta \in \{0,\dots,b\}$,
	denotes the unique vertex on the unique shortest $u,v$-path in $G'$
	which has distance $\beta$ to $u$ and distance $b-\beta$ to $v$.
By setting $k'=k$, this completes the construction.

\smallskip

To show correctness,
	for the forward direction,
	consider an independent set $I$ of $G$.
We claim that $I'$, consisting of $\vv(u_1,u_2,\floor{\frac{b}{2}})$ for every $u\in I$,
	is an $a$-independent set of $G'$.
As distinct vertex $u,v \in I$ have distance at least two in $G$,
	the sets $\{u_1,u_2\}$ and $\{v_1,v_2\}$ have distance at least $2b$ in $G_b$,
	and hence $\vv(u_1,u_2,\floor{\frac{b}{2}})$ and $\vv(u_1,u_2,\floor{\frac{b}{2}})$
	have distance at least $3b-1$.
With $\frac{a}{b} < 3$, hence $a < 3b$, and $a,b$ integers,
	the distance is at least $a$.
Since $|I'|=|I|$, this concludes the forward direction.

For the backward direction,
	consider an $a$-independent set $I'$ of $G'$.
For every $u \in V(G)$,
	we define the ball $B(u)$ as the set of vertices in $G'$
	with distance at most $\frac{b}{2}$ to vertex $u_1$ or~$u_2$.
We note that every vertex in $V(G')$ is contained in a ball $B(u')$ for some $u \in V(G)$.
We define $I \subseteq V(G)$ by adding,
	for every $u' \in I'$, a vertex $u \in V(G)$ to $I$ where $u' \in B(u)$.
Note that $|I \cap B(u)|\leq 1$, for every $u \in V(G)$,
	as any two vertices in $B(u)$ have distance at most $a > 2b$.
Hence $|I|\leq|I'|$.
We claim that for every edge $\{u,v\}\in E(G)$, either $u \notin I$ or $v \notin I$.
Assuming otherwise, consider vertices $u' \in B(u)$ and $v' \in B(v)$
	where we have added $u$ and $v$ to $I$.
Then, in $G'$, vertex $u'$ has distance at most $\frac{b}{2}$ to $\{u_1,u_2\}$,
	$v'$ has distance at most $\frac{b}{2}$ to $\{v_1,v_2\}$,
	and $u_i$ has distance to $v_j$ of $b$ for any $i,j \in \{1,2\}$.
In total, $u',v'$ have distance at most $2b$ in $G'$.
Since $\frac{a}{b} > 2$ and hence $a>2b$, we have the contradiction that $u,v' \in I'$.
In conclusion, $I$ is an independent set in~$G$.
\end{proof}

\begin{lemma}%
\label{a:lemma:ind:3:hardness}
\lemmaTextIndThreeHardness
\end{lemma}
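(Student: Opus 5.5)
We reduce from \IndSet on general graphs, parameterized by solution size. The reduction is parameter-preserving with $k'=k$, as in \cref{a:lemma:ind:2:hardness}. The difference from the previous lemma is that two vertices of $G$ must now be encoded by selected points that are far apart. Such points can only be pushed to distance $\ge a \ge 3b$ if we exploit long paths. So we attach to every vertex of $G$ a pendant structure whose tip is the candidate solution vertex.

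\emph{Construction.} Given $\langle G,k\rangle$, let $r$ be the largest integer with $2rb < a$, so that $a\le 2(r+1)b$. Build $G''$ from $G$ by subdividing each edge of $G$ into a path of $2r$ edges; the midpoint of each such path is an ``edge-center''. For every vertex $u\in V(G)$, attach a pendant path of some length $\ell$ ending in a leaf $u^\star$, where $\ell$ is chosen below. Then output the $b$-subdivision $G'$ of $G''$, with $k'=k$.

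Because $a\ge 3b$ we have $r\ge1$, and all lengths scale by $b$. Two tips $u^\star,v^\star$ with $\{u,v\}\in E(G)$ are then at distance $2\ell b+2rb$. Two tips of non-adjacent vertices are at distance at least $2\ell b+4rb$. We want to pick $\ell\ge 0$, but the single parameter $r$ only guarantees $2rb<a\le 2(r+1)b$. A cleaner choice is to avoid extra tips and use vertices of $G''$ at distance $\lambda$ from $u$ on a pendant path. The inequalities needed are
\[
2\lambda + 2rb < a \quad\text{(adjacent $u,v$ conflict)},\qquad
2\lambda + 4rb \ge a \quad\text{(non-adjacent $u,v$ compatible)}.
\]
Both can be met with an integer $\lambda\ge 0$ whenever $2rb<a$, by choosing the pendant length and $r$ suitably. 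Concretely, we take $r$ maximal with $2rb<a$ and $\lambda=0$: then $4rb\ge a$ because $2(r+1)b\ge a$ and $r\ge1$. So in fact we may simply use the vertices $u$ themselves in the $2rb$-subdivision, with no pendants needed. The forward direction is then immediate: an independent set $I$ of $G$ gives $\{u : u\in I\}$, which is $a$-independent in $G'$.

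\emph{Backward direction (main obstacle).} We must rule out solutions that use vertices inside the subdivided edges. Such vertices may form larger $a$-independent sets that do not correspond to independent sets of $G$. For instance, when $G$ is a star, edge-interior points spread out.

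To handle this, we first fix a ball partition as in \cref{a:lemma:ind:2:hardness}. Let $B(u)$ be the vertices within distance $rb$ of $u$. Every vertex of $G'$ lies in some $B(u)$, and $B(u)$ has diameter $2rb<a$, so $|I'\cap B(u)|\le1$. Mapping each selected vertex to its ball center gives a set $I$ with $|I|=|I'|$.

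The difficulty is that two selected vertices in balls $B(u),B(v)$ of adjacent $u,v$ can be at distance up to $2rb+2rb$, which may be $\ge a$. The plan to fix this is to use the twin trick of \cref{a:lemma:ind:2:hardness}: replace $u$ by many twins $u_1,\dots,u_{k+1}$ joined pairwise (or to a common hub) by paths of length $2rb$. The twins make each vertex's ball contain a dense cluster. Selected points in adjacent clusters are then forced within distance $<a$, because they can use whichever twin copy is closest; specifically, they can use the hub at distance $\le rb$ from any point in its ball, and hub-to-hub distance is $2rb$.

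We verify this by choosing each hub-to-neighbor-hub path of length $rb$ per half. Clusters then have radius $rb/2$, so adjacent clusters have point distance $\le rb+2rb\cdot\frac12\cdot2$. The exact constants need to be tuned so that the adjacent-cluster bound is $<a$ while the hub-to-hub distance for non-adjacent vertices remains $\ge a$. I expect this constant-balancing, possibly through a case split on $\frac ab\in[2r+1,2r+2)$ versus $[2r,2r+1)$, to be the crux.
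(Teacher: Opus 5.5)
There is a genuine gap. Your skeleton matches the paper's: a parameter-preserving reduction from \IndSet with $k'=k$, representatives in the forward direction, and a ball partition in the backward direction. However, the construction you commit to is not a valid reduction, and the proposed repair does not address why it fails.

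Your final graph is $G_{2rb}$ with $2rb<a\le 2(r+1)b\le 4rb$. That is $a$-\IndSet on $2rb$-subdivided graphs with ratio $a/(2rb)\le 2$. The paper shows this regime is FPT or polynomial (\cref{lemma:ind:leq:matching}, \cref{lemma:ind:fpt}), so no reduction that merely subdivides $G$ can give W[1]-hardness.

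Concretely, take $G=K_n$, where $\alpha(G)=1$. The middle vertices of the $2rb$-paths of a perfect matching are pairwise at distance at least $rb+2rb+rb=4rb\ge a$. Hence $\alpha_a(G')\ge\lfloor n/2\rfloor$, and the no-instance $\langle K_n,2\rangle$ is mapped to a yes-instance.

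Twins or hubs at the vertices do not change this. The middle of a $2rb$-path is still $rb$ from both ends, and as long as distinct vertex gadgets are joined only via such paths, matched middles stay at least $4rb$ apart. Your own bound for points in adjacent clusters, $rb+2rb+rb=4rb$, is already $\ge a$, so it does not even produce the required conflict. The ``constant balancing'' you defer is therefore not a tuning issue; the construction needs a different ingredient.

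The missing idea is to keep the edges of $G$ short, let pendant paths carry the length, and add shortcuts between edge interiors so they cannot host far-apart points. The paper proceeds as follows:
\begin{itemize}
\item Start from $G_2$ and turn $V(G_2)\setminus V(G)$, the set of edge-midpoints, into a clique. Then two vertices of $G$ are at distance $2$ in $G''$ if adjacent and $3$ otherwise, and the part of $G'$ outside the pendants has diameter at most $3b\le a$.
\item Attach to each $u$ a pendant path of length $\lceil a/(2b)\rceil-2$ ending in $u_1$.
\item When $\lceil a/b\rceil$ is even, add a twin $u_2$ of $u_1$ and use $\vv(u_1,u_2,\lfloor b/2\rfloor)$ as the representative, gaining half a unit.
\item Finally, $b$-subdivide.
\end{itemize}
The pendant length is calibrated so that representatives of adjacent vertices are at distance $<a$ and those of non-adjacent vertices at distance $\ge a$. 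In the backward direction, balls around the representatives that reach halfway into each clique edge cover $G'$, have diameter $<a$, and any two points in balls of adjacent vertices are closer than $a$.

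Your initial pendant idea is in fact sound for large ratios, provided edges of $G$ are subdivided only twice. For $a/b>4$, take $G_2$ plus pendants of length $L=\lceil a/(2b)\rceil-2\ge 1$, so that $2Lb+2b<a\le 2Lb+4b$. Let $B(u)$ be the pendant path of $u$ together with the half-edges at $u$; every point of $B(u)$ lies within $Lb$ of $u$. Then points in balls of adjacent vertices are at distance $\le 2Lb+2b<a$, while tips of non-adjacent vertices are at distance $\ge 2Lb+4b\ge a$.

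For $3\le a/b\le 4$, however, $L=0$ and this degenerates to $G_{2b}$ with $a/(2b)\le 2$. There the clique shortcut is essential, and the paper uses it uniformly for all $a/b\ge 3$.
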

\begin{proof}
We show \wone-hardness by a reduction from $\IndSet$ on general graphs with the solution size $k$ as parameter.
Given a graph $G$ and integer $k$ we construct a graph $G' \in \GG_b$ and set $k'=k$
	such that $G$ has an independent set of size at most $k$
	if and only if $G'$ has an independent set of size at most $k'$.
It is easy to see that the following construction works in polynomial time.

\smallskip

\emph{Construction:}
First, we construct an auxiliary graph $G''$.
We begin with a $2$-subdivision $G_{2}$ of $G$
	and making $V(G_2)\setminus V(G)$ to a clique.
For every vertex $u \in V(G)$, we add a path of length $\ceil{\frac{a}{2b}}-2$
	from $u$ to some new vertex $u_1$.
If $\ceil{\frac{a}{b}}$ is even, we add a vertex $u_2$ adjacent to $u_1$ and
	every neighbor of $u_1$ in the constructed graph so far.
This concludes the construction of $G''$.
Then let $G'$ be the $b$-subdivisions of $G''$.
Again, recall that $\vv(u,v,\beta)$, with $\{u,v\} \in E(G)$ and $\beta \in \{0,\dots,b\}$,
	denotes the unique vertex on the unique shortest $u,v$-path in $G'$
	which has distance $\beta$ to $u$ and distance $b-\beta$ to $v$.
By setting $k'=k$, this completes the construction.

\smallskip

To show correctness,
	for the forward direction,
	consider an independent set $I$ of $G$.
In case that $\ceil{\frac{a}{b}}$ is odd,
	let $I'$ consist of $u' = u_1$ for every vertex $u \in I$;
	and in case that $\ceil{\frac{a}{b}}$ is even,
	let $I'$ consist of $u' = \vv(u_1,u_2,\floor{\frac{b}{2}})$ for every vertex $u \in I$.
We claim that $I'$ is an independent set of $G'$.
Consider distinct vertices $u',v' \in I'$.
Then the corresponding vertices $u,v \in I$
	have distance $2$ in $G$,
	and hence distance $3$ in the preliminary graph $G''$.
In case $\ceil{\frac{a}{b}}$ is odd,
	$u',v'$ have distance
	$3b+ 2(\ceil{\frac{a}{2b}}-2)b \geq 2+ \frac{a}{b} = 2\ceil{ \frac{a+1}{2b} }b -2b \geq a$.
In case $\ceil{\frac{a}{b}}$ is even,
	$u',v'$ have distance
	$3b+ 2(\ceil{\frac{a}{2b}}-{2})b +2(\half b) \geq a$.
Thus $I'$ is an $a$-independent set in $G'$.

For the backward direction,
	consider an independent set $I'$ of $G'$.
Fore every vertex $u \in V(G)$,
	let the ball $B(u')$ be the set of vertices with distance at most
	$(\ceil{\frac{a}{2b}}-2)b + b + \floor{\frac{b}{2}} = \ceil{\frac{a}{2b}}b - b + \floor{\frac{b}{2}}$
	to $u'$.
We note that every vertex in $V(G')$ is contained in a ball $B(u')$ for some $u \in V(G)$.
We define $I \subseteq V(G)$ by choosing,
	for every $u' \in I'$, a vertex $u \in V(G)$ where $u' \in B(u)$ and adding $u$ to $I$.
Note that $|I \cap B(u)|\leq 1$, for every $u \in V(G)$,
	as any two vertices in $B(u)$ have distance at most $\ceil{\frac{a}{2b}}b - b + \floor{\frac{b}{2}}$.
Hence $|I|\leq|I'|$.
We claim that for every edge $\{u,v\}\in E(G)$, either $u \notin I$ or $v \notin I$.
Assuming otherwise, consider vertices $u^\star \in B(u)$ and $v^\star \in B(v)$
	where we have added $u$ and $v$ to $I$.
We note that $u^\star, v^\star$ have distance at most that of $u',v'$ in $G'$.
Vertices $u,v$ have distance $2$ in $G''$.
In case $\ceil{\frac{a}{b}}$ is odd,
	$u',v'$ have distance $2(\ceil{\frac{a}{2b}}-2)b+2b \leq a-2b < a$ in $G'$.
In case $\ceil{\frac{a}{b}}$ is even,
	$u',v'$ have distance $2(\ceil{\frac{a}{2b}}-2)b+3b \leq a-b < a$ in $G'$.
Both cases contradict that that $u',v' \in I$ and hence that $u^\star,v^\star \in I'$.
We conclude that $I$ is indeed an independent set of $G$.
\end{proof}

Above \cref{a:lemma:ind:2:hardness} and \cref{a:lemma:ind:3:hardness} combined
	yield the following result.

\begin{lemma}[\cref{lemma:ind:hardness} restated]
\label{a:lemma:ind:hardness}
\lemmaTextIndHardness
\end{lemma}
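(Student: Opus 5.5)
The statement is the combination of the two preceding lemmas, and I would prove it by a case split on the ratio $\frac{a}{b}$. Since $\frac{a}{b} > 2$, either $\frac{a}{b} \in (2,3)$ or $\frac{a}{b} \geq 3$, and these two cases are exhaustive. In the first case I invoke \cref{a:lemma:ind:2:hardness}, which shows \wone-hardness of $a$-$\IndSet(\GG_b)$ for $\frac{a}{b}\in(2,3)$. In the second case I invoke \cref{a:lemma:ind:3:hardness}, which covers every $\frac{a}{b}\geq 3$.

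Both lemmas are parameter-preserving reductions from \IndSet with $k'=k$, and \IndSet parameterized by solution size is \wone-hard~\cite{DowneyFellows1995}. So \wone-hardness transfers in each case, and the union of the two cases gives the claim for all integers $a,b$ with $\frac{a}{b}>2$.

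The only point to check is that the boundary value $\frac{a}{b}=3$ is not lost. It belongs to the closed case of \cref{a:lemma:ind:3:hardness}, so no ratio above $2$ is missed. I therefore expect no real obstacle: the substantive work, namely the gadget constructions and their correctness via the ball arguments, is already contained in the two cited lemmas.
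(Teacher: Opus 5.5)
Your proposal is correct and matches the paper, which proves this statement by combining \cref{a:lemma:ind:2:hardness} (covering $\frac{a}{b}\in(2,3)$) with \cref{a:lemma:ind:3:hardness} (covering $\frac{a}{b}\geq 3$). Your check that the boundary $\frac{a}{b}=3$ falls into the closed second case is the only detail needed.
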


\subsection{Dispersion with Irrational Distance}
\label{a:section:irrational}

This section provides the missing proof of \cref{section:irrational},
	which is the proof of the following reduction,
	which serves as the first step for the proof of \cref{i:lemma:disp:wone:pw}.

\newcommand{\edgehinted}[2]{
\draw[] (#1) edge ($(#1)!0.10!(#2)$) edge [dotted] ($(#1)!0.20!(#2)$);
}
\begin{figure}
\begin{center}
\begin{tikzpicture}

\node[vertex,label={[yshift=-0.3cm]left:{${\hat w}^{1,2}_{j,j'}$}}] (wjj) at (-1.5,-2) {};
\node[vertex,rectangle,fill=gray,label=left:{$w^{1,2}_{j,j'}$}] (wjjstar) at (-1.5,-3.5) {};
\draw (wjj) -- (wjjstar) node[midway,fill=white] {$14n$};
\node[vertex,label={[yshift=-0.3cm]right:{$\hat w^{1,2}_{j'',j'''}$}}] (wjjmore) at (3,-2) {};
\node[vertex,rectangle,fill=gray,label=left:{$ $}] (wjjmorestar) at (3,-3.5) {};
\draw (wjjmore) -- (wjjmorestar) node[midway,fill=white] {$14n$};
\node[vertex,label={[yshift=-0.3cm]left:{$w^{1,2}$}}] (w) at (1,-3) {};
\draw (wjj) -- (w) node[midway,fill=white] {$12n$};
\draw (wjjmore) -- (w) node[midway,fill=white] {$12n$};
\node[vertex,rectangle,label=left:{$w^{1,2}_\star$}] (wstar) at (1,-4.5) {};
\draw (w) -- (wstar) node[midway,fill=white] {$38n$};

\foreach \i [evaluate=\i as \pos using int((\i*2)-3)*4)] in {1,2}{
	\ifthenelse{\i=1}{\def\jlabel{j}}{\def\jlabel{j'}}
	\node[vertex,rectangle,label=left:{$a^\i_\star$}] (u\i star) at (\pos-2,1.5) {};
	\node[vertex,label=left:{$a^\i$}] (u\i) at (\pos-2,0) {};
	\node[vertex,label=right:{$b^\i$}] (v\i) at (\pos+2,0) {};
	\node[vertex,rectangle,label=right:{$b^\i_\star$}] (v\i star) at (\pos+2,1.5) {};
	\draw (u\i) -- (v\i) node[midway,fill=white] {$46n$};
	\draw (u\i) -- (wjj) node[midway,fill=white] {$30n-6\jlabel$};
	\draw (v\i) -- (wjj) node[midway,fill=white] {$24n+6\jlabel$};
	\foreach \u/\v in {u\i star/u\i,v\i/v\i star}{
		\draw (\u) -- (\v) node[midway,fill=white] {$44n$};
	}
	
	\edgehinted{wjjmore}{u\i};
	\edgehinted{wjjmore}{v\i}
}

\end{tikzpicture}
\end{center}
\caption{A sketch for the construction of \cref{a:lemma:clique:leq:disp}.
Each edge represents a path of length as indicated.
The construction for color classes $1,2$ and for an edge $\{u_j^1,u_{j'}^2\}$ is depicted
	(and hinted for an edge $\{u_{j''}^1,u_{j'''}^2\}$).
The white square-shaped vertices can be assumed to be in any $64n$-independent set.
The gray square-shaped vertices exclude each other, and encode the edge for color classes $1,2$.}
\label{figure:disp:reduction:pathwidth}
\end{figure}

\begin{theorem}[\cref{lemma:clique:leq:disp} restated]
\label{a:lemma:clique:leq:disp}
\ineditnote{\cref{lemma:gt:leq:cover}.}
There is a polynomial time reduction that,
	given a \ColorfulClique-instance $\langle G,k\rangle$ with $k$ color classes of size $n$,
	a graph $G'$ of pathwidth $\Oh(k)$ and integer $k'$, such that: %
$\langle G,k\rangle$ is a yes-instance of \ColorfulClique
	if and only if $\langle G', k'\rangle$ is a yes-instance of $32n$-$\dispersionstar$
	if and only if $\langle G', k'\rangle$ is a yes-instance of $(32n-1)$-$\dispersionstar$.
\end{theorem}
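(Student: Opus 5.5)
The plan is to realize the construction sketched in \cref{figure:disp:reduction:pathwidth}, and to reason through the discretization of \cref{lemma:is:to:disp}. Namely, $\disp{32n}(G')=\alpha_{64n}(G'_2)$ and $\disp{(32n-1)}(G')=\alpha_{64n-2}(G'_2)$. So it suffices to work with integer independent sets on the $2$-subdivision, and to show that both numbers equal a fixed $k'$ exactly when $G$ has a colorful $k$-clique.

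For each color class $i\in[k]$ I build a \emph{selection path} from $a^i$ to $b^i$ of length $46n$, with pendant paths of length $44n$ ending in $a^i_\star$ and $b^i_\star$. For each pair $i<i'$ I add a hub $w^{i,i'}$ with a pendant path of length $38n$ to $w^{i,i'}_\star$. For each edge $\{u^i_j,u^{i'}_{j'}\}$ of $G$ I add a vertex $\hat w^{i,i'}_{j,j'}$ with the following paths:
\begin{itemize}
\item a path of length $12n$ to $w^{i,i'}$;
\item a pendant path of length $14n$ to $w^{i,i'}_{j,j'}$;
\item paths of lengths $30n-6j$ to $a^i$ and $24n+6j$ to $b^i$;
\item analogous paths of lengths $30n-6j'$ and $24n+6j'$ to $a^{i'}$ and $b^{i'}$.
\end{itemize}
The budget is $k'=(\#\text{star leaves})+k+\binom{k}{2}$.

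First I would show the normalization. By a leaf argument as in \cref{lemma:ind:leaf}, some maximum solution contains every star leaf $a^i_\star,b^i_\star,w^{i,i'}_\star$. These leaves are pairwise far apart, and they push every further selected point away from the $a^i$, $b^i$, $w^{i,i'}$ ends. What remains available is one point per selection path, whose offset from $a^i$ encodes a choice $j\in[n]$ (offsets spaced by multiples of $6$ in the subdivided scale), and at most one edge leaf $w^{i,i'}_{j,j'}$ per pair $(i,i')$. Two edge leaves of the same pair are within distance $<64n-2$ of each other via the hub. Hence the budget forces exactly one point per selection path and exactly one edge leaf per pair.

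The core computation is then the following. The distance from $w^{i,i'}_{j,j'}$ to the selected point on the $i$-th selection path at the offset encoding $j^\ast$ is $\ge 64n$ if $j^\ast=j$, and at most $64n-3$ (comfortably below $64n-2$) if $j^\ast\ne j$. The complementary lengths $30n-6j$ and $24n+6j$ are chosen exactly so that the two routes, through $a^i$ and through $b^i$, pin the compatible offset from both sides. Consequently the selected edge leaves are consistent with the selected vertices, and $k'$ is attained iff the chosen vertices form a colorful clique. For the forward direction I simply write down this set and check all pairwise distances.

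The main obstacle is the robustness: every relevant pairwise distance must avoid the window $[64n-2,64n)$, so that thresholds $64n$ and $64n-2$ give identical answers. I would handle this by tabulating each family of pairwise distances as affine expressions in $n,j,j'$ and verifying a slack of at least $3$. The $6$-spacing between encodings is what provides this gap, and every fixed length is a multiple of $2n$.

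Two routine checks remain. All cycles pass through at least two of these long paths, hence have length $\ge 24n>32n-1$ in $G'$, as required for $\dispersionstar$. For pathwidth, deleting the $2k$ vertices $a^i,b^i$ leaves a disjoint union of subdivided spiders of bounded depth, each of constant pathwidth, so $\pw(G')\le 2k+O(1)$.
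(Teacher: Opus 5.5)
Your plan follows the paper's proof: the construction of \cref{figure:disp:reduction:pathwidth}, the passage through \cref{lemma:is:to:disp}, the leaf normalization, the one-point-per-group counting, and the pathwidth argument via deleting all $a^i,b^i$. Two steps, however, do not go through as you wrote them.

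\textbf{The girth check.} The inequality $24n>32n-1$ is false for every $n\ge 1$, so ``every cycle passes through two long paths'' does not establish the no-short-cycle condition required by $\dispersionstar$. You need a finer count.
In $G_2'$, every cycle uses at least two connector paths between some $\hat w^{i,i'}_{j,j'}$ and some $a^{\cdot}$ or $b^{\cdot}$, each of length at least $24n$. Such a cycle can only close up in one of three ways: through both $12n$ hub paths, through a $46n$ selection path, or through further connectors. Hence every cycle has length at least $72n$ in $G_2'$, i.e.\ at least $36n>32n$ in $G'$.

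\textbf{The robustness argument.} It is not true that all relevant distances avoid the window $[64n-2,64n)$.
With threshold $64n-2$, the point on the $i$-th selection path may sit at position $x=20n+6j+\epsilon$ (measured from $a^i$) with $|\epsilon|\le 2$. Its distance to the matching leaf $w^{i,i'}_{j,j'}$ is then $64n-|\epsilon|$, which lies inside that window. Your description of the available points as lying exactly on encoding offsets presupposes a snapping step you have not justified.

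The invariant you need, as in the paper, is the following. Other routes are longer, so
\[
d\big(x,w^{i,i'}_{j,j'}\big)=\min(44n-6j+x,\;84n+6j-x),
\]
and $d\ge 64n-2$ forces $|x-(20n+6j)|\le 2$. Because of the $6$-spacing, these windows are pairwise disjoint for distinct $j$. Hence all pairs $(i,i')$ must select the same index for color $i$, which yields the clique.

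Your bound for the case $j^\ast\neq j$ is then correct (it even gives at most $64n-4$). You just have to apply it to a point $x$ within distance $2$ of the offset of $j^\ast$, rather than exactly at that offset.
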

\begin{proof}
We prove the above statement for $64n$-\IndSet and $(64n-2)$-\IndSet
	on $2$-subdivided graphs.
Our output graph $G_2'$ is a $2$-subdivision of a graph $G'$, of pathwidth $\Oh(k)$
	and contains no a cycle of length $<64n-2$.
By \cref{lemma:is:to:disp}, outputting graph $G'$
	yields the claimed reduction to $32n$-$\dispersionstar$ and $(32n-1)$-$\dispersionstar$.
Particularly, the pathwidth of $G'$ is at most the pathwidth of its $2$-subdivision.
Further, since $G_2'$ contains no cycle of length $64n-2$,
	so does $G'$ contain no cycle of length $32n-1$,
	as required.

\medskip

\textit{Construction:}
We set the budget to budget $k'= 2h + 2k$ and construct a graph $G'_2$ constructed as follows.
See \cref{figure:disp:reduction:pathwidth} for an illustration.
\begin{itemize}
\item 
For every color class $i\in[k]$, we add vertices $a^i,a^i_\star,b^i,b^i_\star$,
	as well as a $(a^i,a^i_\star)$-path of length $44n$, %
	a $(b^i,b^i_\star)$-path of length $44n$,
	and a $(a^i,b^i)$-path of length $46n$. %
\item
Let the $n$ vertices of color class $V_i$ to be enumerated as $v_1^i,v_2^i,\dots,v_{n}^i$.
Conveniently, we let $v_{j/6}^i$ for $j\in [6n]$ also be vertex
	on the $(a^i,b^i)$-path with distance $20n+6j$ to $a_i$ %
	(and hence with distance $26n-6j$ to $b^i$).
\item
For every $i,i'\in[k]$ with $i<i'$ and edge $\{u^{i}_{j}, u^{i'}_{j'}\} \in E(G)$,
	we add a path between new vertices $w^{i,i'}$ and $w^{i,i'}_\star$ of length $38n$.
\item
For every $i,i'\in[k]$ with $i<i'$ and edge $\{u^{i}_{j}, u^{i'}_{j'}\} \in E(G)$,
	we connect $w^{i,i'}$ to a new vertex $\hat w^{i,i'}_{j,j'}$ by a path of length $14n$.
We connect $\hat w^{i,i'}_{j,j'}$ to $u^{i}$ by a path of length $30n-6 j$
	as well as $\hat w^{i,i'}_{j,j'}$ to $v^{i}$ by a path of length $24n + 6 j$.
Symmetrically, we connect $\hat w^{i,i'}_{j,j'}$ to $u^{i'}$ by a path of length $30n-6 j'$
	as well as $\hat w^{i,i'}_{j,j'}$ to $v^{i'}$ by a path of length $24n + 6 j'$.
\end{itemize}

\smallskip

This concludes the construction.
It is easy to see, that $\langle G'_2,k'\rangle$ is polynomial time computable given $\langle G,k\rangle$.
Further, the output graph $G_2'$ is indeed a $2$-subdivision of a graph~$G'$.
For the correctness, it suffices to show that
	\forward a colorful $k$-clique of $G$ implies an $64n$-independent set of $G'_2$ and of size $k'$;
	and \backward a $(64n-2)$-independent set of $G'_2$ and of size $k'$ implies a $k$-clique of $G$.

\forward
For the forward direction,
	consider a colorful $k$-clique.
That is, there are vertices $u^i_{j_i}$ of color class $i$ for $i \in[k]$
	such that for all $i,i'\in [k]$ with $i<i'$
	there is an edge $\{u^{i}_{j_i}, u^{i'}_{j_{i'}}\}\in E(G)$.
We observe that
$$ \bigcup_{i \in [k]} \{ a^i_\star, u^i_{j_i} ,b^i_\star \}
	\;\cup \bigcup_{1\leq i < i' \leq k} \left\{ w^{i,i'}_{{j_{i},j_{i'}}} , w^{i,i'}_\star \right\}$$
	is a $64n$-independent set of size $3k + 2\binom{k}{2}$,
	hence especially a $(64n-2)$-independent set.
Particularly, $w^{i,i'}_{j_i,j_{i'}}$ has a shortest path to $u^i_{j_i}$ either via $a^i$ or $b^i$,
	for $i,i'\in[k]$ with ${i<i'}$.
The former has length $14n+30n-6j+20n+6j=64n$,
	and the latter has length $14n+24n+6j+20n+6n-6j=64n$.

\medskip

\backward
Consider a $(64n-2)$-independent set $I$ of $G'_2$ 
	of size $k' = 3k + 2\binom{k}{2}$.
We observe that $a_\star^i,b_\star^i$, for $i\in [k]$
	and $w_\star^{i,i'}$, for $i,i'\in [k], i<i'$,
	are attached to the remaining graph by a path of length at least $44n/2\geq \ceil{\alpha/2}$.
Hence, by \cref{lemma:ind:leaf}, we may assume that $I$
	contains at least the $2k+\binom{k}{2}$ vertices
		$I^\star \coloneqq \bigcup_{i \in [k]} \{ a^i_\star, b^i_\star \} \cup
		\bigcup_{1\leq i<i'\leq k} \{ w^{i,i'}_\star \}$.
Thus the remaining set $I\setminus I^\star$ is a subset of $V_{I^\star}$,
	the set of vertices with distance $\geq \alpha$ to every vertex in $I^\star$.

For every color class $i\in[k]$, let $U^i$
	be the set of vertices with distance at most $3n$ to vertex $u^i_{n/2}$ (the vertex halfway between $a^i$ and 
	$b^i$).
For every color classes $i,i'\in[k]$ with $i<i'$,
	let $U^{i,i'}$
	be the set vertices with distance at most $14n$ to
	at least one of the vertices $w^{i,i'}_{j,j'}$ where $\{u^i_j,u^{i'}_{j'}\} \in E(G)$.
Let $\UU$ be the family of sets $U^i$ and $U^{i,i'}$ for every $i,i'\in[k]$ with $i < i'$.
Notably, any two vertices in a single set $U \in \UU$ have distance at most $2(14n+12n)=52n < \alpha$.
Hence $I$ contains at most one vertex from $U$, for every set $U \in \UU$.
We observe that the union of the sets $U \in \UU$
	is a superset of $V_{I^\star}$.
Thus, as $|I|-|I^\star|=k+\binom{k}{2}=|\UU|$,
	we have that $I$ contains exactly one vertex from each of the sets $U \in \UU$.

Consider color classes $i,i'\in[k]$ with $i<i'$.
Let $u^{i,i'}$ be the unique vertex in $I \cap U^{i,i'}$.
We recall that
	recall that for $j,j'\in[n]$ where $\{u^i_j,u^{i'}_{j'}\}\in E(G)$, 
	the vertex $w^{i,i'}_{j,j'}$ is connected to the remaining graph by a path of length $18n$.
Hence, by arguments analogous to \cref{lemma:ind:leaf},
	we may assume that $u^{i,i'} = w^{i,i'}_{j,j'}$
	for some indices $j,j'\in[n]$
	where $\{u^i_j,u^{i'}_{j'}\}\in E(G)$.

For color classes $i,i' \in [k]$ with $i < i'$, let $V^{i,i'}\subseteq U^i$
	be those vertices of $U^i$ which have distance at least $64n-2$ to $u^{i,i'}$.
Let $j,j'$ be such that $u^{i,i'} = w^{i,i'}_{j,j'}$.
Then $V^{i,i'}$ is the set of vertices with distance $\leq 2$ to vertex $v^i_{j}$.
This is due to that the closed walk $W$ from vertex $w^{i,i'}_{j,j'}$ to $\hat w^{i,i'}_{j,j'}$ to $a^{i}$ to $b^{i}$
	to $\hat w^{i,i'}_{j,j'}$ back to $w^{i,i'}_{j,j'}$ has length $18n+30n-6j + 48n + 18n + 6j + 18n = 2\cdot 64 n$
	and that $u^i_j$ lies exactly halfway on the walk $W$.
We observe that $V^{i,i'} \cap V^{i,i''}\neq\emptyset$
	for pairwise distinct color classes $i,i',i'' \in [k]$
	if and only if
	$u^{i,i''}=w^{i,i''}_{j,j''}$ for some $j''$
	(that is the first lower index $j$ coincides for pairs $i,i'$ and $i,i''$).
Thus, for every color class $i\in[k]$, the vertices $u^{i,i'}$ where $i'\in [k]\setminus\{i\}$
	have a common integer index $j(i)$,
	such that the unique vertex in $U^i$ has distance $\leq 2$ to $u^i_j$.
Hence, for every color classes $i,i'\in[k]$ with $i<i'$,
	there is an edge $\{u^i_{j(i)}, u^{i'}_{j(i')}\}\in E(G)$
	as $I$ contains $w^{i,i'}_{j(i),j(i')}$.
In other words, $\bigcup_{i \in [k]} u^i_{j(i)}$ is a $k$-clique in $G$.

\smallskip

We claim that the the constructed graph has no cycle of length $<64n$
	(and hence no cycle of length $<64n-2$).
Every cycle $C$
	contains $\{a^i,b^i\}$ for some index $i$.
Further, it contains the $a^i,\hat w^{i,i'}_{j,j'}$-path and the $b^i,w^{i,i'}_{j,j'}$-path,
	or it contains the $a^i,\hat w^{i,i'}_{j,j'}$-path and the $b^i,\hat w^{i,i''}_{j'',j'''}$-path
	for some indices $i,i',i'',j',j'',j'''$ with $(i',j,j')\neq(i'',j'',j''')$.
In the former case, $C$ has length at least $46n+30n+24n \geq 64n$.
In the latter case, $C$ has length at least $4\cdot 16n \geq 64n$.

It remains to show that the constructed graph $G'_2$ has pathwidth $\Oh(k)$.
Let $I' = \bigcup_{i\in[k]} \{a^i,b^i\}$.
We have that $G - I'$ is a forest consisting of paths
	and trees $T_{i,i'}$ for color classes $i,i' \in [k]$ with $i<i'$,
	where $T_{i,i'}$ contains vertex $w^{i,i'}_\star$.
For every tree $T_{i,i'}$, removing $\hat w^{i,i'}$ yields a forest with at most one non-leave,
	hence of constant pathwidth.
Thus the pathwidth of a tree $T_{i,i'}$ is also constant.
Since $|I'|=2k$, we obtain that $G'_2$ has pathwidth $\Oh(k)$.
\end{proof}

\begin{lemma}
(\cref{lemma:disp:gamma:delta} restated)
\label{a:lemma:disp:gamma:delta}
For $i\geq2$, and $\gamma_i$, $\delta$, $\delta_i$ as defined in \Cref{i:def:delta}, we have
$
\gamma_{i-1} < \gamma_i < \delta < \delta_i < \delta_{i-1}
$
\end{lemma}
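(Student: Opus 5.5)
The plan is to rewrite all three quantities $\gamma_i$, $\delta_i$, $\delta_{i-1}$ as values of one monotone function, so that every comparison reduces to comparing simple terms built from the $c_j$. Throughout I assume $c_j\geq 2$ for all $j$ (as for $c_j=2^{2^j}$), so that $\gamma_j$ is well defined.

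\emph{Step 1: the comparisons within index $i$.} Using $a_i=a_{i-1}c_i$ and $b_i=b_{i-1}c_i+1$ from \cref{i:def:delta}, divide numerator and denominator by $c_i$ (respectively by $c_i-1$). This gives
\[
\delta_{i-1}=\frac{a_{i-1}}{b_{i-1}},\qquad
\delta_i=\frac{a_{i-1}}{b_{i-1}+\frac{1}{c_i}},\qquad
\gamma_i=\frac{a_{i-1}(c_i-1)}{b_{i-1}(c_i-1)+1}=\frac{a_{i-1}}{b_{i-1}+\frac{1}{c_i-1}}.
\]
The map $t\mapsto a_{i-1}/(b_{i-1}+t)$ is strictly decreasing for $t\geq 0$, and $0<\frac{1}{c_i}<\frac{1}{c_i-1}$. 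Hence $\gamma_i<\delta_i<\delta_{i-1}$ for every $i\geq 1$. In particular, $(\delta_i)$ is strictly decreasing.

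\emph{Step 2: $\gamma_{i-1}<\gamma_i$.} This is the only place where it matters that the $c_j$ increase, and I expect it to be the main computation. I unfold $\gamma_i$ one level further, substituting $a_{i-1}=a_{i-2}c_{i-1}$ and $b_{i-1}=b_{i-2}c_{i-1}+1$. Dividing by $c_{i-1}$ gives
\[
\gamma_i=\frac{a_{i-2}}{b_{i-2}+\frac{1}{c_{i-1}}\bigl(1+\frac{1}{c_i-1}\bigr)}=\frac{a_{i-2}}{b_{i-2}+\frac{c_i}{(c_i-1)c_{i-1}}},
\qquad
\gamma_{i-1}=\frac{a_{i-2}}{b_{i-2}+\frac{1}{c_{i-1}-1}}.
\]
By the same monotonicity as in Step 1, it suffices to show $\frac{c_i}{(c_i-1)c_{i-1}}<\frac{1}{c_{i-1}-1}$. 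Cross-multiplying, this becomes $c_ic_{i-1}-c_i<c_ic_{i-1}-c_{i-1}$, that is, $c_{i-1}<c_i$, which holds because the sequence is increasing. So $(\gamma_i)$ is strictly increasing.

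\emph{Step 3: placing the limit $\delta$.} By Steps 1 and 2, for all $j\geq i+1$ we have $\gamma_{i+1}\leq\gamma_j<\delta_j$. Letting $j\to\infty$ gives $\delta\geq\gamma_{i+1}>\gamma_i$. Likewise, $\delta_j\leq\delta_{i+1}$ for all $j\geq i+1$, so $\delta\leq\delta_{i+1}<\delta_i$. Using the successor index $i+1$ in both bounds is what makes the inequalities $\gamma_i<\delta<\delta_i$ strict, even though passing to a limit only preserves non-strict inequalities. Combining Steps 1--3 yields $\gamma_{i-1}<\gamma_i<\delta<\delta_i<\delta_{i-1}$ for $i\geq 2$.
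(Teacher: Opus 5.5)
Your proof is correct and follows the same route as the paper: each of $\delta_{i-1}>\delta_i$, $\delta_i>\gamma_i$ and $\gamma_i>\gamma_{i-1}$ is reduced to elementary algebra in the $c_j$, and a monotone-limit argument then places $\delta$. Your normalization $a_{i-1}/(b_{i-1}+t)$ is a tidier form of the paper's cross-multiplied differences $a_{i-1}$, $a_{i-1}$ and $a_{i-2}(c_i-c_{i-1})$. Using the successor index $i+1$ actually improves on the paper, whose limit argument only yields $\gamma_i\le\delta\le\delta_i$ and leaves the strict inequalities $\gamma_i<\delta<\delta_i$ unjustified.
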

\begin{proof}
First, we note that $\gamma_i < \delta < \delta_i$ follows from the inequalities $\gamma_{i-1} < \gamma_i < \delta_i 
< \delta_{i-1}$.
Indeed, since $\delta_i$ is non-increasing and $\gamma_i$ is non-decreasing,
	the limes (inferior) of $\delta_i$ satisfies $\gamma_i \leq \delta \leq \delta_i$.

To see that $\delta_i$ is non-increasing,
	that is $\delta_i \leq \delta_{i-1}$ for $i\geq 2$,
	we observe that
	$(\delta_{i-1}-\delta_i)b_i b_{i-1} = {a_{i-1}} b_i - {a_{i}} b_{i-1}
	=  {a_{i-1}} (b_{i-1}c_i+1) - c_i{a_{i-1}} b_{i-1} = 1$.

Regarding the inequality $\gamma_i\leq\delta_i$,
	we observe that $(\delta_i - \gamma_i)b_i (b_i - b_{i-1})
	= c_i a_{i-1} (c_{i} b_{i-1} +1 - b_{i-1}) - (c_i a_{i-1} - a_{i-1}) (c_i b_{i-1} +1)
	= a_{i-1}( {c_i} (c_i b_{i-1}+1-b_{i-1}) - (c_{i} + c_i b_{i-1} + 1) )
	= a_{i-1}( {c_i}^2 b_{i-1} + c_i - c_i b_{i-1} - ( {c_i}^2 b_{i-1} + c_{i} - c_{i-1} b_{i-1} - 1) ) = a_{i-1} > 0
	$.

It remains to show that $\gamma_i$ is non-decreasing,
	that is $\gamma_{i} - \gamma_{i-1}$ for $i\geq 2$.
For ease of notation let $c=c_{i-1}$, $d=c_i$, $a=a_{i-2}$ and $b=b_{i-2}$.
We note that $a_{i-1} - a_{i-2} = ca-a$ and $a_i - a_{i-1} = cda -ca$.
Further, $b_{i-1} - b_{i-2} = cb+1$ and $b_i - b_{i-1} = d(cb+1)+1-(cb+1) = cdb + d - cb$.
Then
\begin{align*}
	&(\gamma_i - \gamma_{i-1})(b_{i}-b_{i-1})(b_i - b_{i-1})\\
	=\; &(cda-ca)(cb+1-b) - ((ca-a)(cdb+d-cb))\\
	=\; &(c^2dab + cdb -cdab -c^2ab -ca + cab) - (c^2dab+da -c^2ab- cdab -da +cab)\\
	=\; &a(d-c),
\end{align*}
which is positive since $a$ is positive and the sequence $(c_i)_{i\geq 0}$ is increasing.
\end{proof}

\subsection{Translation Preserving Simplicity}
\label{a:section:disp:translation}

In this section, we prove \cref{a:lemma:disp:translate:b:simple}
	which relates $a$-independent sets in the subdivided graphs $G_b$ and $G_{a+b}$.
The correctness follows with almost the same proof as for \cref{lemma:disp:translate}
	as shown in~\cite{HartmannL22,thesis}.
For a better comparison, we follow the proof of~\cite{HartmannL22,thesis} quite literally
	and highlight changes in bold face.

\begin{lemma}[\cref{lemma:disp:translate:b:simple} restated]
\label{a:lemma:disp:translate:b:simple}
\lemmaTextDispTranslateBSimple
\end{lemma}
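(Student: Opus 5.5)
The plan is to pass to the discretized metric and then mimic the continuous translation argument behind \cref{lemma:disp:translate}, keeping every cut and insertion at an integer position. By \cref{lemma:is:iff:disp} and scaling all distances by $b$ (resp.\ $a+b$), a $b$-simple $\tfrac{a}{b}$-dispersed set of $G$ is the same as an $a$-independent set of $G_b$. Likewise an $(a+b)$-simple $\tfrac{a}{a+b}$-dispersed set is an $a$-independent set of $G_{a+b}$. So I will prove: an $a$-independent set $I$ of $G_b$ yields one of size $|I|+|E(G)|$ in $G_{a+b}$, and conversely an $a$-independent set $I'$ of $G_{a+b}$ yields one of size at least $|I'|-|E(G)|$ in $G_b$. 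Throughout I assume $G_b$ has no cycle of length $<a$.

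\emph{Forward direction (insertion).} Fix an arbitrary orientation of each edge $\{u,v\}$ of $G$. On the corresponding path of length $b$ in $G_b$, choose an integer cut position $\beta_e$ and splice in a subpath of length $a$ there, giving a path of length $a+b$. Old vertices keep their positions before the cut and are shifted by $a$ after it. I add exactly one new vertex per edge, at one end of the inserted segment. The cut is chosen as follows. If the edge path contains a vertex of $I$ (interior or endpoint), cut directly behind the first such vertex, so that the new vertex sits at distance exactly $a$ from it. Otherwise, cut at $\beta_e=\floor{b/2}$.

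Then I check distances. Any walk in $G_{a+b}$ between two vertices that were old in $G_b$ traverses each inserted segment entirely or not at all. Hence distances between old vertices do not decrease, and pairs in $I$ remain at distance $\ge a$. For a new vertex $z_e$, any path leaving $e$ must traverse the whole inserted segment of length $a$ on one side, or reach an old vertex on the other side. The latter vertex is either at distance $\ge a$ by the choice of cut, or lies on an edge that itself carries an inserted segment. Two new vertices are separated by at least one full inserted segment. The case that needs the girth hypothesis is a path that leaves $e$ and re-enters it; this would close a cycle of length $<a$ in $G_b$, which is excluded.

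\emph{Backward direction (deletion).} For each edge path of length $a+b$ in $G_{a+b}$ containing a vertex of $I'$, let $x$ be the first such vertex along the orientation. I contract the half-open window of length $a$ starting at $x$, after relocating $x$ to the window's far end. Since vertices of $I'$ are $a$-apart, this window contains only $x$ from $I'$. Deleting $x$ and contracting removes one vertex per edge and produces a path of length $b$ with integer positions, so the result is $b$-simple. On edges with no vertex of $I'$, I contract an arbitrary integer window of length $a$, which only helps the count. Showing that remaining pairs stay at distance $\ge a$ is the reverse of the forward computation: a shortest path between survivors that loses length $a$ on some edge must have passed the window entirely, and the deleted $x$ had been at distance $\ge a$ from both of them.

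The main obstacle is precisely this distance bookkeeping when a shortest path uses an edge only partially, or returns to the same edge. I will handle it the way \cref{lemma:disp:translate} is proved, with the girth condition ruling out short re-entrant walks. The only new point compared to that proof is verifying that every cut and window endpoint is an integer position, which holds because all points of $I$ and $I'$ are vertices and $a,b$ are integers.
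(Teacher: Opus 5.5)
\textbf{The forward direction fails on edges whose path contains no vertex of $I$.} Splicing at $\lfloor b/2\rfloor$ and putting the new vertex at an end of the inserted segment leaves it within distance $\lceil b/2\rceil$ of an endpoint of $e$, no matter where $I$ lies. For the same reason, your claim that two new vertices are always separated by a full inserted segment is false: it fails for new vertices sitting on the near sides of two edges at a common endpoint.

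Here is a concrete counterexample. Take $a=3$, $b=1$, $G=K_4$ (which has no cycle of length $<3$) and $I=\{v_1\}$.
On each edge $v_1v_j$ your new vertex is at distance $3$ from $v_1$, i.e.\ at distance $1$ from $v_j$ in $G_4$.
Every triangle edge $v_jv_k$ is $I$-free. Your new vertex on it lies at distance $0$ or $1$ from $v_j$ or $v_k$. It is therefore within distance $2<3$ of a spoke vertex, for every choice of orientation and end.
The lemma itself holds here (use the midpoints of the triangle edges, at distance $2$ from both ends), so it is the construction that fails.

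The missing idea is that the new point on an empty edge must be placed adaptively and away from the ends. The paper puts it at $\max\{(\tfrac{\delta\rho}{2})^\star,\ \delta\rho-\dr_S(u,w_u)\rho\}$ from the endpoint $u$ with the smaller $\dr_S$. In $G_{a+b}$ this is distance $\max\{\lceil a/2\rceil,\ a-D_u\}$ from $u$, where $D_u=b\,\dr_S(u,w_u)$ is the distance from $u$ to $I$ in $G_b$. Rounding $a/2$ up is the only change needed for $(a+b)$-simplicity.
Correctness is then verified not by global path bookkeeping but through the local characterization (A1)+(A2) of dispersed sets. That characterization is where the girth hypothesis enters, and the paper carries out the check via the positive/neutral/negative case split on directed edges.

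On edges that do contain vertices of $I$, your splice agrees with the paper's map $\lambda\mapsto\lambda\rho$ from either end: the first vertex keeps its distance to $u$ and the last keeps its distance to $v$. The one exception is when the first vertex along the orientation is the head endpoint, where ``cut directly behind it'' leaves the edge.

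\textbf{The backward direction has the mirror-image gap.} Contracting a window on an edge with no vertex of $I'$ deletes nothing. It can therefore bring survivors closer than $a$ with nothing paying for it, and your compensation argument (the path passed a deleted $x$ at distance $\ge a$ from both survivors) does not apply.
For example, take $a=9$, $b=1$, $G$ the path $w_1uvw_2$, and $I'$ consisting of $w_1$, $w_2$ and the vertices of $G_{10}$ at distance $1$ from $u$ and from $v$ on the outer edges. Your procedure returns $\{w_1,w_2\}$, which are at distance $3<9$ in $G$ (even though the lemma only asks for size $1$ here).

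There are two further problems. A length-$a$ window starting at $x$ need not fit inside the edge when $x$ is more than $b$ from the tail. Also, $x$ may be an endpoint shared by several edges, so ``one deletion per edge'' is not well defined.
The paper does not re-derive this direction; it takes it from the inverse construction of Hartmann and Lendl.
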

Before we prove \cref{lemma:disp:translate:b:simple} let us recall some definitions from~\cite{HartmannL22}.
Let $\delta \coloneqq \tfrac{a}{b}$.
For an edge $\{u,v\} \in E(G)$, let
\[
\dr_S(u,v) \!\coloneqq\! \begin{cases}
	\min\{\lambda \mid p(u,v,\lambda) \in S\}, &
	\!\! \text{if } S \cap P(G[\{u,v\}]) \neq \emptyset, \\
	1 + \min\!\big\{ \dr_S(v,w) \;\big|\; w \in N(v) \setminus \{u\} \big\}, & \!\! \text{else}.
\end{cases}	
\]
Further, recall that
	a set of points $S \subseteq P(G)$ is \define{edge internally $\delta$-dispersed}
	if every edge by itself does not falsify that $S$ is $\delta$-dispersed,
	formally that every distinct points $p,q \in P(G[\{u,v\}])$ for every edges $\{u,v\}\in E(G)$
	are $\delta$-dispersed.

Then for $G$ (since without any cycle of length $<\tfrac{a}{b}$),
	a subset $S \subseteq P(G)$ is $\delta$-dispersed, if and only if
	\begin{itemize}
		\item \labeltext{$($A1$)$}{condition:a1}
		$S$ is edge internally $\delta$-dispersed, and
		\item \labeltext{$($A2$)$}{condition:a2}
		$\dr_{S}(u,v) + \dr_{S}(u,w) \geq \delta$ for every distinct adjacent edges $\{u,v\},\allowbreak \{u,w\} 
		\in E(G)$ where vertex $u$ is not in $S$.
	\end{itemize}

\begin{proof}[Proof of \cref{a:lemma:disp:translate:b:simple}]
We observe that the construction from \cite{HartmannL22,thesis}
	already yields the latter claim,
	that for an $(a+b)$-simple $\frac{a}{a+b}$-dispersed set $S'$,
	there is a $b$-simple $\frac{a}{b}$-dispersed set of size $|S'|-|E(G)|$.
It remains to show the first claim,
	that for an $b$-simple $\frac{a}{b}$-dispersed set $S$,
	there is a $(a+b)$-simple $\frac{a}{a+b}$-dispersed set of size $|S|+|E(G)|$.
We adapt the construction of~\cite{HartmannL22,thesis} slightly.
For a better comparison, we follow the proof of~\cite{HartmannL22,thesis} quite literally
	and highlight changes in bold face.
Let $\delta = \tfrac{a}{b}$.

\medskip

Consider a $\delta$-dispersed set $S \subseteq P(G)$ of size $|S| = \disp{\delta}(G)$.
Let $\rho \coloneqq (\delta+1)^{-1}$, the conversion ratio.
We construct a $(\delta\rho)$-dispersed set $S'$ of size $|S'| = |S|+|E(G)|$. %
A key observation is that
$$\rho + \delta\rho \;=\; 1 .$$

\textbf{For a rational $\mathbold{\delta = \tfrac{a}{b}}$ we have $\mathbold{\rho = \tfrac{b}{a+b}}$. %
That means, for example, if an edge position $\mathbold \lambda$ is $\mathbold b$-simple (i.e., multiple of 
$\mathbold{b^{-1}}$)
	then $\mathbold{\lambda\rho}$ is $\mathbold{(a+b)}$-simple.}

We construct $S'$ by considering each edge $\{u,v\} \in E(G)$ separately.
Set~$S'$ compared to~$S$
will contain one more point in $P(G[\{u,v\}])$ for every edge $\{u,v\} \in E(G)$.
Any point~$p \in S$ on a vertex will also be in~$S'$.
Thus $|S'| = |S| + |E(G)|$.
\begin{itemize}
	\item
	Consider that edge $\{u,v\} \in E(G)$ contains $m \geq 1$ points from $S$,
	which is $|S \cap P(G[\{u,v\}])| = m$.
	Let $p = p(u,v,\lambda)$ be the point among $S \cap P(G[\{u,v\}])$ with minimum distance to~$u$, hence with 
	distance $\lambda$.
	Analogously, let $q= p(u,v, 1- \mu)$ be the point among $S \cap P(G[\{u,v\}])$ with minimal distance to $v$,
	hence with distance $\mu$,
	possibly $p=q$.
	Then $\dr_S(u,v)=\lambda$ and $\dr_S(v,u)=\mu$.
	There are $m-1 \geq 0$ further points between $p$ and $q$ on edge $\{u,v\}$,
	such that $p$ and $q$ have distance $1-(\lambda + \mu) \geq m\delta$.
	In other words $\lambda + \mu \leq 1 - m\delta$.
	
	We add points $p' = p(u,v,\lambda \rho)$ and $q' = p(u,v, 1 - \mu \rho)$ to $S'$,
	which are distinct even when $p=q$.
	Note that if $p$ is positioned on a vertex, also $p'$ is; analogously for $q$ and~$q'$.
	Thus the distance between the new points $p'$ and $q'$ is
	$$1 - (\lambda + \mu)\rho \; \geq \; 1 - (1-m\delta) \rho \; = \; 1 - \rho + m \delta\rho \; = \; 
	(m+1)\delta\rho . $$
	Hence we may add $m$ further points between $p'$ and $q'$ on edge $\{u,v\}$ to set $S'$
	in such a way that the now $m+2$ points in $S' \cap P(G[\{u,v\}])$ have pairwise distance at least~$\delta\rho$.
	Then $S'$ is edge internally $\delta$-dispersed for edge $\{u,v\}$.
\textbf{Given that $\mathbold p$ and $\mathbold q$ are $\mathbold b$-simple, the new points are 
$\mathbold{(a+b)}$-simple.}
	\item
	It remains to consider that edge $\{u,v\} \in E(G)$ contains no point from the set~$S$,
	which is $S \cap P(G[\{u,v\}]) = \emptyset$.
	We fix a neighbor $w_u \in N(u)$ with minimum distance $\dr_S(u,w_u)$.
	Analogously, we fix a neighbor $w_v \in N(v)$ with minimum distance $\dr_S(v,w_v)$.
	By symmetry, we may assume the inequality $\dr_S(u,w_u) \leq \dr_S(v,w_v)$.
	We have that $w_u \neq v$ since otherwise $\dr_S(u,w_u) = 1 + \dr_S(v,w_v)$.
	Then we add a new point $p(u,v, \lambda' )$ at edge position $\lambda' = \max\{ \mathbold{ 
	(\frac{\delta}{2})^\star} , \delta\rho - \dr_S(u,w_u)\rho\}$ to~$S'$,
\textbf{where $\mathbold{ (\frac{\delta\rho}{2})^\star}$
	is the the smallest value $\mathbold{ \geq \frac{\delta\rho}{2} }$
	which is $\mathbold{ (a+b) }$-simple,
	hence either $\mathbold{\frac{\delta\rho}{2}}$ or $\mathbold{\frac{\delta\rho}{2}+\tfrac{\rho}{2b} = 
	\frac{\delta\rho}{2} + \frac{\rho}{2b} = \frac{a+1}{2(a+b)}}$.
This is the only time where we have to adjust the construction.}
\end{itemize}

\sloppy
We show that $S'$ is $(\delta\rho)$-dispersed
by proving the conditions from above.
Set $S'$ is edge internally $(\delta\rho)$-dispersed
since for every edge we have added points that have pairwise distance at least $\delta$.
It remains to show condition~\ref{condition:a2},
which is that $\dr_{S'}(u,v) + \dr_{S'}(u,w) \geq \delta$ for every distinct adjacent edges $\{u,v\},\{u,w\} \in 
E(G)$ where vertex $u$ is not in $S$.

We partition the directed edges $(u,v)$ for $\{u,v\} \in E(G)$ as follows:
\begin{itemize}
	\item %
	A directed edge $(u,v)$ is \define{positive} if it satisfies $\dr_{S'}(u,v) \geq \dr_S(u,v)\rho$.
	This is the case when $S \cap G[\{u,v\}]\neq \emptyset$, by construction.
	Further, we claim that $(u,v)$ is also positive if $S \cap G[\{u,v\}] = \emptyset$
	and point $p(u,v,\lambda')$ with $\lambda' = 1 - (\delta\rho - \dr_S(v,w_v)\rho)$ is added to~$S'$,
	where $w_v \in N_G(v)$ is a neighbor with minimum $\dr_S(v,w_v)$,
	which has $w_v \neq u$.
	Then $\dr_S(u,v) = 1 + \dr_S(v,w_v)$.
	It follows that
	$$
	\dr_{S'}(u,v) \;=\; 1 - \delta\rho + \dr_S(v,w_v)\rho
	\;=\; 1 - \delta\rho + \dr_S(u,v)\rho - \rho
	\;=\; \dr_S(u,v)\rho .
	$$
	\item
	A directed edge $(u,v)$ is \define{neutral} if it is not positive, $S \cap G[\{u,v\}] = \emptyset$ and point 
	$p(u,v,\lambda')$ with $\lambda' = 1-  \mathbold{ (\frac{\delta\rho}{2})^\star }$ is added to $S'$.
	\item
	A directed edge $(u,v)$ is \define{negative} if $S \cap G[\{u,v\}] = \emptyset$ and point $p(u,v, \lambda')$
	with edge position $\lambda' = \max\{ \mathbold{ (\frac{\delta\rho}{2})^\star}, \delta\rho - \dr_S(u,w_u)\rho 
	\}$ is added to $S'$,
	in which case $w_v \neq u$.
\end{itemize}
Now, we observe that distinct adjacent edges $\{u,v\},\{u,w\} \in E(G)$ satisfy condition~\ref{condition:a2},
that is $\dr_{S'}(u,v) + \dr_{S'}(u,w) \geq 2(\frac{\delta\rho}{2}) = \delta\rho$.
We distinguish which type the directed edges $(u,v)$ and $(u,w)$ have.
\begin{itemize}
	\item
	If the directed edges $(u,v),(u,w)$ are positive, then we have
	$\dr_{S'}(u,v) + \dr_{S'}(u,w) \geq (\dr_{S}(u,v) + \dr_{S}(u,w))\rho \geq \delta \rho = \delta\rho$.
	Here we used that conditions \ref{condition:a1} and \ref{condition:a2} apply to $S$ and $\delta$,
	hence that $\dr_{S}(u,v) + \dr_{S}(u,w) \geq \delta$.
	\item
	If the directed edges $(u,v),(u,w)$ are both not positive,
	then $\dr_{S'}(u,v) + \dr_{S'}(u,w) \geq 2\frac{\delta\rho}{2} = \delta\rho$, since $\delta\rho < 1$.
\textbf{Especially for a negative edge $\mathbold{(u,v)}$,
	we have $\mathbold{ \dr_{S'}(u,v) \geq 1-\frac{a+1}{2(a+b)} \geq \frac{\delta\rho}{2} }$ since $\mathbold{b\geq 
	1}$.}

	\item
	Consider that $(u,v)$ is neutral and $(u,w)$ is positive.
	By construction, we have $\dr_S(u,w) \geq \dr_S(u,w_u) \geq \dr_S(v,w_v)$.
Further, since $(u,v)$ is not positive,
	a point is added at $p(v,u,\lambda')$
	where $\lambda' = \delta\rho - \dr_S(v,w_v)\rho < \mathbold{ \frac{\delta\rho}{2}+\frac{\rho}{2b} } $.
	Then
	\begin{align*}
		\dr_{S'}(u,w)+\dr_{S'}(u,v)
		\;\geq&\; \dr_{S}(u,w)\rho + 1 - \mathbold{ ( \tfrac{\delta\rho}{2} + \tfrac{\rho}{2b} ) } \\
		\;\geq&\; \dr_{S}(v,w_v)\rho + 1 - \mathbold{ \tfrac{\delta\rho}{2} - \tfrac{\rho}{2b} } \\
		\;>&\; 1  \mathbold{- \tfrac{\rho}{b} } \\
		\;=&\; \mathbold{ \tfrac{a+b-1}{a+b} } \\
		\;\geq&\; \delta\rho.
	\end{align*}
	\item
	It remains to consider that $(u,v)$ is negative and $(u,w)$ is positive.
	By definition $\dr_{S'}(u,w) + \dr_{S'}(u,v) \geq \dr_{S'}(u,w_u) + \dr_{S'}(u,v)$.
	If $(u,w_u)$ is neutral or negative, we obtain that $ \dr_{S'}(u,w_u) + \dr_{S'}(u,v) \geq \delta\rho$, by the 
	previous cases, as desired.
	Thus consider that $(u,w_u)$ is positive.
	Then
	\begin{align*}
		\dr_{S'}(u,w_u) + \dr_{S'}(u,v)
		\;\geq\;& \dr_{S'}(u,w_u) + \delta\rho - \dr_{S}(u,w_u)\rho \\
		\;\geq\;& \dr_{S'}(u,w_u) + \delta\rho - \dr_{S'}(u,w_u)
		\;=\; \delta\rho. %
	\end{align*}
\end{itemize}
Therefore \ref{condition:a1} and \ref{condition:a2} apply to $S'$ and $\delta\rho$,
which shows that $S'$ is $\delta\rho$-dispersed.
\end{proof}

\section{Details for Domination and Covering}
\label{section:domination:and:covering}

\subsection{Introduction}

There are the following equivalent definitions \ref{def:ds:1}, \ref{def:ds:2} and \ref{def:ds:3}:
\definitionsWalkDomination

\begin{lemma}[\cref{i:lemma:ds:equivalent} restated]
\label{lemma:ds:equivalent}
\lemmaTextDSequivalent
\end{lemma}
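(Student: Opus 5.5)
I would prove the equivalence edge by edge. All three conditions are statements about a single edge $e=\{u,v\}\in E'$ and its endpoints: the independence requirement in (D2) also only concerns endpoints of edges of $E'$. So it suffices to fix $e$ and show that the local versions coincide. Write $d_u=d(u,D)$ and $d_v=d(v,D)$ in $G$, so that $d_{G_2}(u,D)=2d_u$. For the midpoint $m_e$ of $e$ in $G_2$ we have $d_{G_2}(m_e,D)=1+2\min(d_u,d_v)$.

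\textbf{Reducing (D1) to $d_u+d_v+1\le a$.} The shortest walk from $D$ to $D$ through $e$ consists of a shortest path from $D$ to $u$, the edge $e$, and a shortest path from $v$ back to $D$. Traversing $e$ in the other direction gives the same length, and a walk may repeat vertices. Hence (D1) for $e$ is equivalent to $d_u+d_v+1\le a$.

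\textbf{Reducing (D3) to two inequalities.} (D3) for $e$ says $2d_u\le a$, $2d_v\le a$ and $1+2\min(d_u,d_v)\le a$. Since $|d_u-d_v|\le 1$, I would split into cases according to the parity of $a$ and whether $d_u=d_v$.

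\textbf{Reducing (D2).} For each edge, (D2) says $d_u,d_v\le a/2$, and not both equal $a/2$ since $u$ and $v$ are adjacent. The remaining vertices of $V'$ are handled by their own incident edges.

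\textbf{Comparing the three conditions.}
\begin{itemize}
\item If $a$ is odd, then $d_u,d_v\le a/2$ means $d_u,d_v\le (a-1)/2$, so the equality clause in (D2) is vacuous. Then $d_u+d_v+1\le a$ follows because $|d_u-d_v|\le1$ and both values are at most $(a-1)/2$. The case $d_u=d_v=(a-1)/2$ gives exactly $a$.
\item If $a$ is even, (D1) forbids $d_u=d_v=a/2$ but allows $\{d_u,d_v\}=\{a/2,a/2-1\}$, which is exactly (D2).
\item For (D3), the midpoint condition $1+2\min\le a$ is implied by the endpoint bounds together with $\min\le \max-1$ when the two distances differ. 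When $d_u=d_v$ it says $2d_u+1\le a$, which is again $d_u+d_v+1\le a$.
\item Conversely, $d_u+d_v+1\le a$ gives $2\min+1\le a$ and $\max\le a/2$, using $\max\le\min+1$.
\end{itemize}

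The main obstacle is keeping the parity case analysis clean, especially the boundary cases $d_u=d_v=a/2$ for even $a$ and the midpoint condition in (D3). I would resolve these by reducing every condition to the single inequality $d_u+d_v+1\le a$, using $|d_u-d_v|\le1$ throughout. Isolated vertices are excluded by assumption, so every vertex of $V'$ lies on an edge of $E'$.
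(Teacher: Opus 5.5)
Your proof is correct. It uses the same distance facts as the paper but organizes them differently.

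\textbf{The paper's route.} It proves the cycle (D1)$\Rightarrow$(D2)$\Rightarrow$(D3)$\Rightarrow$(D1). For the last step it stays inside $G_2$: since $G_2$ is bipartite with $D$ on one side, neighbours have distances to $D$ differing by exactly one. This leaves two configurations around the subdivision vertex of $e$. In each, concatenating two shortest paths through $e$ gives a walk of length $2a'\le 2a$ in $G_2$, hence of length at most $a$ in $G$.

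\textbf{Your route.} You pull everything back to $G$ via $d_{G_2}(u,D)=2d_u$ and $d_{G_2}(m_e,D)=1+2\min(d_u,d_v)$. You then show that each of the three conditions is, edge by edge, equivalent to the single inequality $d_u+d_v+1\le a$. Your split into $d_u=d_v$ versus $|d_u-d_v|=1$ is exactly the paper's two $G_2$-configurations.

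\textbf{What the difference buys.} The common normal form gives all implications in both directions at once. It also makes explicit the lower bound on walk length (split any walk at a traversal of $e$), which the paper's (D1)$\Rightarrow$(D2) step leaves implicit.

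\textbf{A caveat on (D2).} Reducing (D2) to ``no edge of $E'$ has both endpoints at distance $a/2$'' requires reading the independence in (D2) inside $G[E']$. If it is read in $G$, then (D1) does not imply (D2) when $E'\neq E(G)$. For example, take $a=2$, the path $x\,u\,v\,y$, $D=\{x,y\}$ and $E'=\{\{x,u\},\{v,y\}\}$: (D1) holds, but $u,v$ are adjacent in $G$ and both at distance $1$. The paper's proof tacitly uses the same reading, and for $E'=E(G)$ there is no difference. It is worth stating this choice explicitly rather than asserting that the independence requirement ``only concerns'' edges of $E'$.
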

\begin{proof}
We show that an \dombarset{a} $D$ according to definition \ref{def:ds:1}
	is also an \dombarset{a} according to definition \ref{def:ds:2},
	analogously for \ref{def:ds:2} to \ref{def:ds:3}, and for \ref{def:ds:3} to \ref{def:ds:1}.

\ref{def:ds:1} $\to$ \ref{def:ds:2}:
Consider a vertex $u \in V'$ where $d_G(u,D) \geq \ahalf$.
Since graph $G$ contains no isolated vertices,
	there is a neighbor $v \in N(u) \cap V'$.
According to \ref{def:ds:1} there are vertices $w_1,w_2 \in D$
	and a $w_1,w_2$-walk $P$ of length at most $a$ that contains $\{u,v\}$.
Hence $d_G(u,D)=d_g(u,w_1) = \ahalf$.
Now property \ref{def:ds:2} follows assuming there are no neighboring vertices $u,v$
	with $d_G(u,D) = d_G(v,D) = \ahalf$.
For neighboring vertices $u,v$, again there are vertices $w_1,w_2 \in D$
	that have a $w_1,w_2$-walk $P$ of length at most $a$ that contains the edge $\{u,v\}$.
Then not both of $u,v$ can have length $a$ to $\{w_1,w_2\}\subseteq D$.

\ref{def:ds:2} $\to$ \ref{def:ds:3}:
Consider a vertex $u \in V'$.
Then by property \ref{def:ds:2}
	indeed $d_G(u,D) \leq \ahalf$ and hence $d_{G_2}(u,D) < a$.
Consider an edge $\{u,v\} \in E'$.
Then by property \ref{def:ds:2}
	indeed $d_G(w,D) \leq \ahalf -\half$ for at least one incident vertex $w \in \{u,v\}$.
Thus $d_{G_2}(\{u,v\},D) \leq 2 \cdot d_G(w,D) + 1 \leq a$.
Therefore $D$ $a$-dominates $V' \cup E'$ in $G_2$.

\ref{def:ds:3} $\to$ \ref{def:ds:1}:
Consider an edge $\{u,v\} \in E'$, which hence occurs as a vertex in $G_2$.
By property \ref{def:ds:3}, we have $d_{G_2}(u,D), d_{G_2}(\{u,v\},D), d_{G_2}(v,D) \leq a$.
We note that $G_2$ is bipartite with $D$ as a subset of $V(G)$ from one partition.
Hence for every two neighboring vertices in $G_2$,
	their distance to $D$ differs by exactly $1$.
We further observe that $d_{G_2}(\{u,v\},D) = 1+ \min\{ d_{G_2}(u,D), d_{G_2}(v,D)\}$.
Thus, up to symmetry, the following two cases remain:
The first case is that $d_{G_2}(u,D) = a' \leq a$ witnessed by some vertex $w_u \in D$
	and $d_{G_2}(\{u,v\},D) = a'-1$ and $d_{G_2}(v,D) = a'-2$ witnessed by some vertex $w_v \in D$.
The second case is that $d_{G_2}(\{u,v\},D) = a' \leq a$
	and $d_{G_2}(u,D) = a'-1$ witnessed by some vertex $w_u \in D$
	and $d_{G_2}(v,D) = a'-1$ witnessed by some vertex $w_v \in D$.
In both cases, the concatenation of a shortest $w_u,u$-path, path $u,\{u,v\},v$ and a shortest $v,w_v$-path
	forms a walk of length $2a'\leq 2a$ in $G_2$.
The corresponding path in $G$ then is a $w_u,w_v$-path of length $a$ that contains the edge $\{u,v\}$,
	hence satisfies property \ref{def:ds:1}.
\end{proof}

Let $\ogamma_a(G)$ be the minimum cardinality of a \dombarset{a} of $G$,
	and for a graph class $\GG$,
	let $\DS_a(\GG)$ be the decision problem for given a graph $G \in \GG$ and integer $k$,
	$\ogamma_a(G) \leq k$.

Let $\ogamma_a(G)$ be the minimum cardinality of an $a$-dominating set of $G$,
	and for a graph class~$\GG$,
	let $a$-$\DS(\GG)$ be the problem, given a graph $G \in \GG$ and integer~$k$,
	deciding whether $\ogamma_a(G) \geq k$.

\subsection{Dominating Set and Covering}
\label{section:ds:and:covering}
\ineditnote{\cref{section:ind:and:disp}}

This section explores the close relationship between
$a$-walk dominating set and $\delta$-dispersion.
Particularly, we relate \dombarsets{a} similarly as the two results below do for covering sets.

\begin{lemma}[\cite{HartmannLW22}]
\label{lemma:cover:subdivide}
For every real $\delta > 0$ and integer $c\geq 1$,
	$\cover{\delta}(G)=\cover{c\delta}(G_c)$.
\end{lemma}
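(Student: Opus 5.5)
The statement to prove is \cref{lemma:cover:subdivide}: $\cover{\delta}(G)=\cover{c\delta}(G_c)$ for every real $\delta>0$ and integer $c\ge 1$. I would prove it by exhibiting an isometry-up-to-scaling between the point spaces. Every point of $G_c$ lies on an edge of the path replacing some edge $\{u,v\}\in E(G)$. Concretely, the point on the $\beta$-th subedge at position $\lambda$ (measured from $\vv(u,v,\beta)$ towards $\vv(u,v,\beta+1)$) should correspond to the point $p(u,v,\tfrac{\beta+\lambda}{c})$ of $G$. This defines a bijection $\phi\colon P(G_c)\to P(G)$; vertices of $G$ are mapped to themselves, and points shared by consecutive subedges are treated consistently.

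The key step is to show that $d_{G}(\phi(x),\phi(y)) = \tfrac{1}{c}\, d_{G_c}(x,y)$ for all $x,y\in P(G_c)$.

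\emph{Upper bound.} Take a shortest $x,y$-path in the metric space $P(G_c)$. It is a concatenation of segments along subedges, and its image under $\phi$ is a continuous path in $P(G)$ whose length is exactly $1/c$ times the original length. This holds because each subedge is traversed as a segment of length scaled by $1/c$ inside a single edge of $G$.

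\emph{Lower bound.} Use the symmetric argument with $\phi^{-1}$, which scales lengths by $c$.

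The only care needed is that shortest paths in the point space are piecewise-linear, traversing whole edges except possibly a partial first and last edge. The distance is therefore the minimum over the finitely many combinations of endpoint choices of the two partial pieces plus a graph distance between vertices. Comparing these combination by combination gives the equality directly: vertex distances in $G_c$ between original vertices are exactly $c$ times those in $G$, and partial-edge lengths scale by $c$. I also need the distances from a point to the intermediate subdivision vertices, which are handled the same way, since those vertices map to interior points of the edges of $G$.

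Given the scaling identity, a set $S\subseteq P(G)$ is a $\delta$-cover of $G$ if and only if $\phi^{-1}(S)$ is a $c\delta$-cover of $G_c$. Indeed, every point $p$ is within $\delta$ of $S$ exactly when $\phi^{-1}(p)$ is within $c\delta$ of $\phi^{-1}(S)$, and $\phi$ is a bijection on all points, so both the points to be covered and the candidate centers correspond. Since $\phi$ is a bijection, the minimum sizes agree, giving $\cover{\delta}(G)=\cover{c\delta}(G_c)$.

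I expect the main obstacle to be stating the distance formula for points in the interior of edges cleanly, including the case where $x$ and $y$ lie on the same original edge. In that case the direct segment competes with paths leaving the edge through $u$ or $v$, but the formula handles it by also including the direct term $|\lambda-\lambda'|$ in the minimum, and that term scales in the same way.
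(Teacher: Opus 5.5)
Your proof is correct: the map $\phi$ is a bijection $P(G_c)\to P(G)$ that scales every distance by exactly $\tfrac{1}{c}$, so $\delta$-covers of $G$ correspond one-to-one, with equal size, to $c\delta$-covers of $G_c$. This is the same scaling observation that underlies the cited result of Hartmann, Lendl and Woeginger; the paper gives no proof of its own beyond the citation.
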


\begin{lemma}[\cite{HartmannLW22}]
\label{lemma:cover:translate}
$\cover{\tfrac{a}{b}}(G) + |E(G)| = \cover{\tfrac{a}{2a+b}}(G)$.
\end{lemma}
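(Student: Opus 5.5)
The statement is the rational specialization of the covering translation quoted in \cref{section:overview:domination}, namely $\cover{\delta}(G) + |E(G)| = \cover{\tfrac{\delta}{1+2\delta}}(G)$ for every real $\delta>0$. I would therefore derive it by substitution rather than reprove it. Setting $\delta = \tfrac{a}{b}$ gives
\[
\frac{\delta}{1+2\delta} \;=\; \frac{a/b}{(b+2a)/b} \;=\; \frac{a}{2a+b},
\]
so the identity reads $\cover{\tfrac{a}{b}}(G) + |E(G)| = \cover{\tfrac{a}{2a+b}}(G)$, which is exactly the claim. The only check is that the general version carries no side condition, such as the girth assumption in \cref{lemma:disp:translate}, that would be lost here. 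As stated in \cref{section:overview:domination} it has none, apart from the standing convention that $G$ has no isolated vertices.

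For a self-contained argument I would mirror the proof of \cref{a:lemma:disp:translate:b:simple}, with conversion ratio $\rho = (1+2\delta)^{-1}$. The key identity is now $2\delta\rho + \rho = 1$. It plays the role of $\rho+\delta\rho=1$: in covering, a full edge must absorb two covering radii, one from each end, instead of one dispersion gap.

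\textbf{Direction ($\le$).} Start from a $\delta$-cover $S$. On each edge with $m \ge 1$ points of $S$, rescale the extreme positions by $\rho$ measured from the respective endpoints. Then redistribute $m+1$ points so that consecutive gaps are at most $2\delta\rho$; the rescaled slack $(1-(\lambda+\mu))\rho$ plus the freed length $2\delta\rho$ per point makes this possible. On an edge containing no point of $S$, add a single point placed according to the coverage radius $\delta - d(u,S)$ arriving at each endpoint, scaled by $\rho$. A case analysis on how vertices and edge portions are covered, analogous to the positive, neutral and negative edge types, should then show that $S'$ is a $\delta\rho$-cover of size $|S|+|E(G)|$.

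\textbf{Direction ($\ge$).} Conversely, given a $\delta\rho$-cover, I would normalize it so that every edge carries at least one point. I would then delete one point per edge and rescale by $\rho^{-1}$.

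I expect this reverse normalization to be the main obstacle: an edge might otherwise be covered only from outside, leaving no point on it to delete. Since the statement is already available from \cite{HartmannLW22}, I would rely on the substitution argument of the first paragraph and cite that work for the underlying translation.
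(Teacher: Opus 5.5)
Your proposal is correct and matches the paper, which gives no proof of this lemma but cites \cite{HartmannLW22}; the rational form is exactly the instance $\delta=\tfrac{a}{b}$ of the real-valued translation $\cover{\delta}(G)+|E(G)|=\cover{\tfrac{\delta}{1+2\delta}}(G)$ stated earlier. One correction to your optional sketch: the normalization you call the main obstacle is automatic, since $\tfrac{a}{2a+b}<\tfrac{1}{2}$ means the midpoint of each edge can only be covered by a point strictly inside that edge, which the paper itself remarks.
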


To do so, recall that we denote a point $p(u,v,\lambda) \in P(G)$ as $c$-simple
	if $\lambda$ is a multiple of $c$.
Further, recall that $S \subseteq P(G)$ be $c$-simple if all its points are $c$-simple.
Restricted to $b$-simple sets, there is a direct correspondence
	of \dombarsets{a} in the $b$-subdivided of a graph $G$ and 
	$\tfrac{a}{b}$-dispersed sets in $G$ itself.

\begin{observation}[\cref{i:lemma:dom:iff:cov} restated]
\label{lemma:dom:iff:cov}
Let $k \in \NNN$.
For a graph $G$ without isolated vertices, there is an \dombarset{a} of $G_{b}$ of size $k$,
	if and only if
	there is a $b$-simple $\tfrac{a}{2b}$-covering set of $G$ of size~$k$.
\end{observation}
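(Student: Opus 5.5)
The plan is to use the natural bijection between vertices of $G_b$ and $b$-simple points of $G$, exactly as in \cref{lemma:is:iff:disp}. The vertex $\vv(u,v,\beta)$ of $G_b$ corresponds to the point $p(u,v,\tfrac{\beta}{b})$. Under this bijection, graph distances in $G_b$ equal $b$ times the point distances in $G$, whenever both endpoints are vertices of $G_b$. Then $D\subseteq V(G_b)$ maps to a $b$-simple point set $S\subseteq P(G)$ of the same size, and conversely. It remains to show that $D$ is an \dombarset{a} of $G_b$ if and only if $S$ is an $\tfrac{a}{2b}$-cover of $G$. Scaling by $b$ (as in \cref{lemma:cover:subdivide}), this is equivalent to: $D$ viewed as a point set of $G_b$ is an $\tfrac{a}{2}$-cover of $G_b$ with unit edges. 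So I reduce to showing, for a graph $H=G_b$ without isolated vertices, that $D\subseteq V(H)$ is an \dombarset{a} of $H$ if and only if every point of $P(H)$ is within distance $\tfrac{a}{2}$ of $D$.

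\forward Take a point $p=p(u,v,\lambda)$ on an edge $\{u,v\}$ of $H$. By \ref{def:ds:1} there is a $w_1,w_2$-walk of length at most $a$ through $\{u,v\}$ with $w_1,w_2\in D$. Going along this walk, $p$ lies at walk-distance $s$ from $w_1$ and at most $a-s$ from $w_2$, so $\min\{s,a-s\}\le \tfrac a2$. Hence $d(p,D)\le \tfrac a2$. Vertices are covered as endpoints of edges, since there are no isolated vertices.

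\backward Use \ref{def:ds:2}. For every vertex $u$ we have $d(u,D)\le \tfrac a2$, because $u$ is itself a point. Suppose two adjacent vertices $u,v$ both had $d(u,D)=d(v,D)=\tfrac a2$ (which requires $a$ even). Then the midpoint of $\{u,v\}$ is at distance $\tfrac12+\tfrac a2$ from $D$: every path from $D$ to that midpoint enters the edge through $u$ or $v$, and $D$ consists of vertices. This contradicts the cover property. So the vertices at distance exactly $\tfrac a2$ form an independent set, \ref{def:ds:2} holds, and by \cref{lemma:ds:equivalent} $D$ is an \dombarset{a}.

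The main obstacle is bookkeeping: I have to ensure that distances from the vertex set $D$ to interior points are computed via the endpoints. This holds because every point of $S$ corresponds to a vertex of $G_b$. I also have to handle parity of $a$ cleanly in the midpoint argument. When $a$ is odd, vertex distances are integers $\le \lfloor a/2\rfloor$, so both endpoints are at distance $\le \tfrac{a-1}{2}$ and the independence condition is vacuous. Finally, I need to confirm that the subdivision scaling exactly preserves distances between vertex-points and arbitrary points, which follows directly from the definition of $P(G)$.
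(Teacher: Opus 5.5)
Your proof is correct and follows essentially the paper's argument: the same vertex--point bijection and scaling reduce the claim to $D$ being an $\tfrac{a}{2}$-cover of $G_b$, and this is then compared with the endpoint-distance characterizations of walk domination. The only difference is that you use (D1) for the forward direction and (D2) with \cref{lemma:ds:equivalent} for the backward one, whereas the paper uses (D2) forward and builds the walk directly backward; your midpoint argument also makes explicit a step the paper only asserts, namely that two adjacent vertices cannot both be at distance exactly $\tfrac a2$ from $D$.
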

\begin{proof}
Consider an \dombarset{a} $D$ of $G_b$.
Consider the points on an edge $\{u,v\}\in E(G_b)$.
By property \ref{def:ds:2} of $D$,
	we have $d(u,D), d(v,D) < \ahalf$, or $d(u,D)= \ahalf$ and $d(v,D)=\ahalf-1$ up to symmetry.
In both cases, set $D$ $\ahalf$-covers every point of edge $\{u,v\}$ in $G_b$.
Hence $S$, the set of ($b$-simple) points in $P(G)$ corresponding to the vertices in $D$,
	$\frac{a}{2}$-covers $G$.

Vice versa, consider a $b$-simple $\frac{a}{2b}$-covering set $S$ of $G$.
Then $D$, the set of vertices in $G_b$ corresponding to the points $S$,
	forms a $\ahalf$-cover of $G_b$.
Consider an edge $\{u,v\}\in E(G_b)$. %
Then $d(u,D),d(v,D) \leq \ahalf-\half$, or $d(u,D)=\ahalf$ and $d(v,D)=\ahalf-1$ up to symmetry.
Let $w_1,w_2 \in D$ be the closest vertices to $u$ and $v$, respectively.
Then in both cases, a shortest $w_1,w_2$-path via $u$, edge $\{u,v\}$ and $v$
	forms a path of length at most $a$.
Hence $D$ is also an \dombarset{a} of $G_b$.
\end{proof}

A minimum $\tfrac{a}{b}$-covering set $S$ may not be $b$-simple,
	but we can assume it to be $2b$-simple.

\begin{lemma}[\cite{HartmannLW22}]
For an $\tfrac{a}{b}$-covering set $S$ of a graph $G$
	there is $2b$-simple $\tfrac{a}{b}$-covering set $S^\star$ of $G$ of size $|S^\star|=|S|$.
\end{lemma}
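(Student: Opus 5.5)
The plan is a rounding argument in the spirit of \cref{lemma:covering:simplify} for dispersion. Write $\delta=\tfrac{a}{b}$ and call the points $p(u,v,\tfrac{j}{2b})$ for integers $j$ the \emph{grid points}; these are exactly the $2b$-simple points. Every vertex of $G$ is a grid point. Every point $p$ of an edge $\{u,v\}$ at distance $\delta$ from a vertex $w$ is also at a position that is a multiple of $\tfrac{1}{b}$ relative to the endpoints, because $\delta$ is a multiple of $\tfrac1b$ and all distances from vertices are integers plus the offset. The idea is to move each point of $S$ to a nearby grid point, chosen so that the covered region only grows.

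\textbf{First}, I reduce covering of all of $P(G)$ to a finite set of ``critical'' conditions. Following the structure used in~\cite{HartmannLW22}, a set $S$ $\delta$-covers $G$ if and only if every edge $\{u,v\}$ is fully covered. The covered part of an edge is a union of intervals. Each interval comes either from points of $S$ on the edge itself, or from points reaching in through $u$ or $v$ with residual radius $\delta-d(u,S)$ resp.\ $\delta-d(v,S)$.

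\textbf{Second}, I define the rounding per edge. Fix a point $p=p(u,v,\lambda)\in S$ with $\lambda\in(\tfrac{j}{2b},\tfrac{j+1}{2b})$. Its coverage ball has boundary points at positions $\lambda\pm\delta$ along paths. A boundary point matters only when it must meet another interval's boundary, whose offsets are grid offsets shifted by $\pm\lambda'$ for other points. I would show that one may slide all non-grid points on a maximal ``chain'' simultaneously in one direction until some point hits a grid point, without uncovering anything. Sliding is monotone: the part of $G$ covered via one side grows while the other side shrinks, and a shrinking boundary can be compensated by a neighbouring point moving in parallel. Iterating reduces the number of non-grid points, so the process terminates.

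\textbf{Third}, for even $b$, I want the stronger $b$-simple statement. Here $\delta=\tfrac{a}{b}$ with $b$ even means boundaries of balls around vertices lie at positions in $\tfrac1b\mathbb Z$. A half-step position $\tfrac{2i-1}{2b}$ can then be moved by $\tfrac{1}{2b}$ toward the nearer critical side. The same applies under the extra hypothesis that no point sits at $\tfrac{2i-1}{2b}$: rounding then never needs to create a half-step position, giving $b$-simple, and for even $b$ it gives $\tfrac b2$-simple.

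The main obstacle is the second step: making precise that a coordinated slide keeps everything covered when the dependencies among points form long chains across vertices, and possibly cycles. My first attempt would be to mirror the proof in~\cite{HartmannLW22}. That is, pick the point of $S$ whose fractional offset $\lambda-\tfrac{j}{2b}$ is minimal, and shift by that offset every point with the same fractional class, each in the direction toward its own nearest grid point. The key step is then to verify, by checking the boundary positions modulo $\tfrac{1}{2b}$, that coverage intervals only overlap more.
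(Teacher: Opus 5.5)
Your second step is the entire content of the lemma, and it is only announced, not carried out. You give no argument for two claims: that a ``chain'' of non-grid points can be slid in parallel without uncovering anything, and that a shrinking boundary ``can be compensated by a neighbouring point moving in parallel''. You yourself leave chains and cycles open. The concrete fallback (shift only the points of minimal fractional offset) raises exactly the question you do not answer. Take an edge $\{x,y\}$ whose coverage is shared between a shifted point, reaching in through $x$ with a residual radius that may shrink under the shift, and a point that is not moved (or is moved the other way), reaching in through $y$. Handling this needs a slack argument modulo $\tfrac{1}{2b}$, which is absent. So neither preservation of coverage nor termination is established. Your third step concerns the stronger $b$-simple refinement, which this statement does not ask for.

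The observation you are missing is that no coordination is needed at all. Pass to $G_b$. There an $\tfrac ab$-cover of $G$ is an $a$-cover with integer radius $a\ge 1$, and $2b$-simple means $2$-simple. Now round each point independently: a point at position $\lambda<\tfrac12$ from $u$ on an edge $\{u,v\}$ of $G_b$ moves to $u$, and exact midpoints stay. This is the rounding the paper uses after its reduction to a subdivided graph. Coverage is then checked edge by edge.
\begin{itemize}
\item If the closed edge $\{x,y\}$ contains a point of $S$, it also contains that point's image, and radius $a\ge 1$ covers the whole edge.
\item Otherwise the edge is covered by $S$ iff $d(x,S)+d(y,S)\le 2a-1$. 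It is covered by $S^\star$ as soon as $d(x,S^\star)+d(y,S^\star)\le 2a-1$.
\item For every vertex $x$ we have $d(x,S^\star)\le\rho(d(x,S))$, where $\rho$ rounds to the nearest integer and fixes half-integers. If $d(x,p)$ has fractional part in $(0,\tfrac12)$, the shortest path reaches $p$ through its nearer endpoint, which is $p^\star$, so the distance drops to the floor. If the fractional part is in $(\tfrac12,1)$, the path reaches $p$ through the farther endpoint, so $d(x,p^\star)$ is at most the ceiling. Since $\rho$ is monotone, this passes to the minimum over $p$.
\item Finally, $t_1+t_2\le N\in\mathbb{Z}$ implies $\rho(t_1)+\rho(t_2)\le N$, by a short case check on fractional parts. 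For example, if both fractional parts exceed $\tfrac12$, then $\lfloor t_1\rfloor+\lfloor t_2\rfloor\le N-2$.
\end{itemize}
The paper phrases this check via a split point $\mu$ on each edge and the half-integrality of the radius. That formulation also handles half-integral radii, which is what gives the $b$-simple refinement for even $b$.
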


This observation is almost analogue to \cref{lemma:covering:simplify} for dispersion.
However, it does not imply a $b$-simple set $S^\star$
	given a set $S$ that contains no point at a position $\tfrac{2i-1}{2b}$ for $i \in \NNN$.
We prove the following, generalized statement that implies such a $b$-simple set $S^\star$
	given a set $S$ that contains no point at a position $\tfrac{2i-1}{2b}$ for $i \in \NNN$.
Particularly, it provides a stronger result for $\tfrac{a}{b}$-covers with even $b$.
For example, a $\half$-cover $S$ implies a $2$-simple $\half$-covering set $S^\star$ of same size

\begin{lemma}[\cref{i:lemma:ocover:simple} restated]
\label{lemma:ocover:simple}
Let $S$ be an $\tfrac{a}{b}$-cover of a graph $G$ for integers $a,b \in \NNN$.
Then there is an $\tfrac{a}{b}$-cover $S^\star$ of $G$ of size $|S^\star|=|S|$
	that is $2b$-simple and, if $b$ is a multiple of $2$, is $b$-simple.
	
Assuming that $S$ contains no point at a position $\tfrac{2i-1}{2b}$ for $i \in \NNN$,
	then $S^\star$ is $b$-simple,
	and, if additionally $b$ is a multiple of $2$,
	$S^\star$ is $\frac{b}{2}$-simple.
\end{lemma}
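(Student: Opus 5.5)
The plan is a rounding argument on each edge separately, in the style of \cref{lemma:covering:simplify} but tailored to covering. Given an $\tfrac{a}{b}$-cover $S$, I would define $S^\star$ pointwise by moving each point $p(u,v,\lambda)$ to a nearby point of the lattice $L = \{\tfrac{i}{2b}\}$. Points with $\lambda \in (\tfrac{i}{b}-\tfrac{1}{2b}, \tfrac{i}{b}+\tfrac{1}{2b})$ go to $\tfrac{i}{b}$, and points with $\lambda=\tfrac{2i-1}{2b}$ stay. Rounding always goes to the nearest $b$-simple point, which the first case makes precise. For even $b$ I would refine this: odd multiples of $\tfrac{1}{2b}$ are also sent to a $b$-simple neighbour, and under the extra hypothesis they are pushed further to $\tfrac{b}{2}$-simple positions.

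The heart of the proof is showing that $S^\star$ still covers. I would use a discretized covering criterion, analogous to (A1)/(A2) for dispersion. Since $\delta = \tfrac{a}{b}$ and all vertices sit at integer positions, the coverage of an edge is determined by the quantities $d(u,S)$ and $d(v,S)$ plus the points on the edge itself. An edge $\{u,v\}$ is fully covered iff the uncovered gaps between consecutive points (including the contributions $\delta - d(u,S)$ from vertex ends) close up. In that condition only comparisons of the form ``$\lambda + \ldots \le \delta$'' appear. All thresholds are then integer combinations of $\tfrac{1}{b}$, or $\tfrac{1}{2b}$ for the midpoint condition where two covering radii meet inside an edge.

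The key claim is the following. If a coverage inequality $x \le \tfrac{m}{b}$ (or $\tfrac{m}{2b}$) holds for the original distance $x$, then it still holds after rounding, because rounding changes $x$ by less than the gap to the next lattice value in the direction that matters. More concretely, I would show that for every point $q\in P(G)$ with $d(q,S)\le\delta$ there is a point of $S^\star$ within $\delta$ as well. I would do this by checking points $q$ at positions in $L$ plus the midpoints between consecutive lattice points. Coverage of a segment between lattice points follows from coverage of its ends and its midpoint, since distance functions are piecewise linear with slopes $\pm1$ and break points in $\tfrac{1}{2b}\mathbb{Z}$ after rounding. I would verify that the distance from $q$ to the rounded point $p^\star$ is at most $\delta$ whenever $d(q,p)\le\delta$ and $q$ is such a test point. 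This is the usual argument: $d(q,p^\star)\le d(q,p)$ rounded up to the lattice, which is still $\le \delta$ because $\delta\in\tfrac{1}{b}\mathbb{Z}$.

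For the ``no points at $\tfrac{2i-1}{2b}$'' clause, the only non-$b$-simple outputs come from such points, so $S^\star$ is $b$-simple.

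The main obstacle is the even-$b$ strengthening: rounding an odd multiple of $\tfrac{1}{2b}$ to a $b$-simple point, or a $b$-simple point to a $\tfrac{b}{2}$-simple one, moves it by $\tfrac{1}{2b}$ or $\tfrac{1}{b}$, which can break coverage on one side. Here I would use a parity argument instead of nearest rounding. With $b$ even, $\delta = \tfrac{a}{b}$ has denominator dividing $b$, and the worst-case test points are at distance exactly $\delta$. I would choose the rounding direction per edge (towards the endpoint whose side is ``tighter'', determined by $d(u,S)$ versus $d(v,S)$), and argue via \cref{lemma:cover:subdivide} in the subdivided graph $G_{b/2}$ or $G_b$, where the statement becomes the known $b$-simplicity result for covers with a halved denominator. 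If direct case analysis gets messy, I would fall back to this reduction: apply the $2b$-simple result to the $\tfrac{a}{b/2}$-type instance on $G_{b/2}$ and transfer back.
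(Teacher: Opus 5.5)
Your rounding rule for the general case (send $\lambda$ to the nearest multiple of $\tfrac1b$, keep exact halfway points) is the paper's construction. The verification step you rely on, however, is false. You claim that for every test point $q$, lattice points \emph{and} midpoints between them, $d(q,p)\le\delta$ implies $d(q,p^\star)\le\delta$. This fails at midpoints. Take $b=1$, $\delta=1$, the path $u$--$v$--$w$, $p=p(u,v,0.3)$ (so $p^\star=u$) and $q=p(v,w,\tfrac14)$. Then $d(q,p)=0.95\le 1$ but $d(q,p^\star)=1.25$.

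So ``$d(q,p^\star)$ is at most $d(q,p)$ rounded up to the lattice'' is wrong. The implication does hold for test points on the lattice $\tfrac1{2b}\mathbb{Z}$. But lattice points alone do not certify coverage: a segment can be covered at both ends and missed in between. Coverage is therefore not preserved point by point, only jointly.

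The paper handles this per edge $\{x,y\}$ with two points $p_x,p_y\in S$ covering $[0,\mu]$ and $[\mu,1]$. If $\mu$ can be taken half-integral (in units of the rounding lattice), no point is moved across a half-integral position, and both parts stay covered. Otherwise a short case analysis shows that $p_x^\star,p_y^\star$ together still cover the edge.

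Your own remark that coverage of an edge is governed by $d(x,S)$, $d(y,S)$ and the on-edge points could be turned into such a proof. After subdividing so the rounding lattice is $\mathbb{Z}$ and the radius is $r$, you would need two facts. First, $d(x,S^\star)$ is at most $d(x,S)$ rounded to the nearest integer, with fractional part $\tfrac12$ left unchanged. Second, the edge condition $d(x,S)+d(y,S)\le 2r-1$, whose right side is integral, survives this rounding. That joint argument is exactly what is missing.

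For even $b$, the missing idea is that the rounding argument only needs the radius to be \emph{half-integral} relative to the rounding lattice, not integral. The paper writes $\tfrac ab=\tfrac{a/2}{b/2}$ and passes to $G_{b/2}$ via the subdivision lemma for covers, where the radius $\tfrac a2$ is half-integral. It then rounds to the nearest vertex of $G_{b/2}$, keeping midpoints. This gives a $b$-simple set in one step, and a $\tfrac b2$-simple set when no point sits at a midpoint of $G_{b/2}$.

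Your version of this step has three problems:
\begin{itemize}
\item Your fallback to ``the $\tfrac{a}{b/2}$-type instance'' uses the wrong radius.
\item Your base argument explicitly uses $\delta\in\tfrac1b\mathbb{Z}$, so it does not cover radius $\tfrac a2$ for odd $a$.
\item The per-edge ``tighter side'' rule for the rounding direction is neither specified nor justified. If you round in two steps, the direction that works is towards the nearest multiple of $\tfrac2b$, which is what the paper's rounding does.
\end{itemize}

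Finally, ``pushing'' to $\tfrac b2$-simple positions cannot succeed under the hypothesis as literally stated. For $b=2$, $a=1$ and a single edge, $S=\{p(u,v,\tfrac12)\}$ has no point at $\tfrac14$ or $\tfrac34$, yet no single vertex $\tfrac12$-covers the edge. What the paper's argument actually proves is $\tfrac b2$-simplicity when $S$ has no point at an odd multiple of $\tfrac1b$.
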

\begin{proof}
We consider $S$ as an $\tfrac{a'}{b'}$-cover of $G$
	with $b'$ integral and $a'$ minimal half-integral such that 
 	$\tfrac{a}{b} = \tfrac{a'}{b'}$.
In other words, when $b$ is a multiple of $2$, we treat $S$ as an $\frac{a/2}{b/2}$-cover.
Then it suffices to show that there is a $2b'$-simple $\tfrac{a'}{b'}$-cover $S^\star$,
	and that if $S$ contains no point at a position $\tfrac{2i-1}{2b}$ for $i \in \NNN$,
	then $S^\star$ is $b'$-simple.

Further, it suffices to consider the case $b'=1$.
For $b>1$, an ${a'}$-cover of $G_{b'}$,
	corresponds to an $\tfrac{a'}{b'}$-cover of $G$.
Constructing a $2$-simple (or $1$-simple) ${a'}$-cover of $G_{b'}$,
	implies a $2b$-simple (respectively $b$-simple) $\tfrac{a'}{b'}$-cover of $G$,
	by \cref{lemma:cover:subdivide}.

Hence consider an $a'$-cover $S$ of a graph $G$ with half-integral $a'$ (hence that $2a'$ is integer).
Consider point $p = p(u,v,\lambda) \in S$ with, up to symmetry, $\lambda \leq \half$.
If $\lambda < \half$,
	let $p^\star \coloneqq p(u,v,0)$.
Else, in case $\lambda = \half$,
	let $p^\star \coloneqq p$.
Then the resulting set $S^\star = \{ p^\star \mid p \in S\}$ is $2$-simple.
Especially, if $S$ contains no point at edge position $\half$, $S^\star$ is $1$-simple.

It remains to show that $S^\star$ is ${a'}$-covering.
For every edge $\{x,y\} \in E(G)$, there are points $p_x,p_y \in S$ and a real $\mu \in [0,1]$
	such that $p_x$ $a'$-covers the points $p(x,y,\lambda)$ for $\lambda \in [0,\mu]$,
	and $p_y$ $a'$-covers the points $p(x,y,\lambda)$ for $\lambda \in [\mu,1]$.
By symmetry, we may assume that $\mu \le \half$.
First consider that $\mu$ can be chosen to be half-integral,
	hence there is a $\mu \in \{0,\half\}$
	such that $p_x$ $a'$-covers the points $p(x,y,\lambda)$ for $\lambda \in [0,\mu]$,
		and $p_y$ $a'$-covers the points $p(x,y,\lambda)$ for $\lambda \in [\mu,1]$.
Then since $p_x^\star,p_y^\star$ respectively are not moved across a half-integral point
	and ${a'}$ is half-integral,
	still $p_x^\star$ covers the points $p(x,y,\lambda)$ with $\lambda \in [0,\mu]$
		and $p_y^\star$ covers the points $p(x,y,\lambda)$ with $\lambda \in [\mu,1]$.
It remains to consider that $\mu \in (0,\half)$.
Let $p_x = p(u_x,v_x,\lambda_x)$ where $v_x$ is on a shortest path from $p_x$ to $u$,
	and let $p_y = p(u_y,v_y,\lambda_y)$ where $v_y$ is on a shortest path from $p_y$ to $v$.
Because the covering distance ${a'}$ is half-integral (and $\mu$ is not in $\{0,\half\}$),
	we have $\lambda_x > \half$ or $\lambda_y < \half$.
In case $\lambda_x > \half$, it follows that $p_x^\star = p(u_x,v_x,1)$ and ${a'}$ is not integer.
And in case $\lambda_y > \half$, we have $p_y^\star = p(u_y,v_y,0)$ and ${a'}$ is integer.
In both cases, $p_x^\star$ and $p_y^\star$ cover all points of edge $\{u,v\}$.
\end{proof}

\begin{corollary}[\cref{i:lemma:ds:to:cov} restated]
\label{lemma:ds:to:cov}
$\cover{\tfrac{a}{b}}(G) = \ogamma_{4a}(G_{2b}) = \ogamma_{4ca}(G_{2cb})$;
$\cover{\tfrac{a}{2b}}(G) = \ogamma_{2a}(G_{2b}) = \ogamma_{2ca}(G_{2cb})$,
	for any $a,b,c \in \NNN$ and graph $G$.
\end{corollary}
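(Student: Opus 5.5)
We prove the two chains separately; each has the same shape. First relate the covering number to a $b$-simple or $2b$-simple cover via \cref{lemma:ocover:simple}. Then translate simple covers into walk dominating sets via \cref{lemma:dom:iff:cov}. Finally, handle the factor $c$ by subdividing with \cref{lemma:cover:subdivide}.

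\emph{Second chain.} Take $\delta=\tfrac{a}{2b}=\tfrac{a}{2b}$. Since the denominator $2b$ is even, \cref{lemma:ocover:simple} shows that any $\tfrac{a}{2b}$-cover can be replaced by a $2b$-simple cover of the same size. Hence $\cover{\tfrac{a}{2b}}(G)$ equals the minimum size of a $2b$-simple $\tfrac{a}{2b}$-covering set. By \cref{lemma:dom:iff:cov}, applied with subdivision parameter $2b$ and walk length $2a$ (so that $\tfrac{2a}{2\cdot 2b}=\tfrac{a}{2b}$), this minimum is exactly $\ogamma_{2a}(G_{2b})$.

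For the factor $c$, write $\tfrac{a}{2b}=\tfrac{ca}{2cb}$. This is the same real number, so the same argument with $a,b$ replaced by $ca,cb$ gives $\cover{\tfrac{a}{2b}}(G)=\ogamma_{2ca}(G_{2cb})$. No separate subdivision argument is needed, since $\cover{\delta}$ depends only on the value of $\delta$.

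\emph{First chain.} Write $\tfrac{a}{b}=\tfrac{2a}{2b}$ and apply the second identity with $a$ replaced by $2a$. This yields $\cover{\tfrac{a}{b}}(G)=\ogamma_{4a}(G_{2b})=\ogamma_{4ca}(G_{2cb})$ directly.

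\emph{Main obstacle.} The step to verify carefully is the even-denominator case of \cref{lemma:ocover:simple}. The lemma is invoked for the fraction $\tfrac{a}{2b}$ exactly as written, with even denominator $2b$. It then guarantees $2b$-simplicity rather than merely $4b$-simplicity, which is precisely what \cref{lemma:dom:iff:cov} needs with subdivision $2b$.

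We also rely on $G$ having no isolated vertices. This hypothesis is required by \cref{lemma:dom:iff:cov}, and the corollary assumes it (as in its earlier statement \cref{i:lemma:ds:to:cov}). The two inequalities then follow: a minimum cover yields a simple cover of equal size and hence a walk dominating set; conversely, any walk dominating set yields a simple cover and hence an upper bound on the covering number.
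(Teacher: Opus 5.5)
Your proof is correct and is essentially the derivation the paper intends. The paper gives no separate proof for this corollary; it is a direct combination of \cref{lemma:ocover:simple} (whose even-denominator clause gives $2b$-simplicity for $\tfrac{a}{2b}$) with \cref{lemma:dom:iff:cov}. The factor $c$ indeed follows just from $\tfrac{a}{2b}=\tfrac{ca}{2cb}$, so the subdivision lemma mentioned in your overview is not needed, and deriving the first chain from the second via $\tfrac{a}{b}=\tfrac{2a}{2b}$ is a harmless shortcut. You are also right that the no-isolated-vertices hypothesis from the main-text version of the statement is required.
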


The analogous result to \cref{lemma:alpha:sub} follows for \dombarset{a}.

\begin{theorem}[\cref{lemma:ds:sub} restated]
\label{a:lemma:ds:sub}
\lemmaTextDsSub
\end{theorem}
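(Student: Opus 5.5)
We follow the proof of \cref{lemma:alpha:sub}, with covers and \dombarsets{a} replacing dispersed and independent sets. The two cases are handled separately.

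\emph{Case $a,b$ even.} Write $a=2a'$ and $b=2b'$. By \cref{lemma:ds:to:cov} (second identity, with $a'$ and $b'$ in place of $a$ and $b$),
\[
\ogamma_{2a'}(G_{2b'}) = \cover{\tfrac{a'}{2b'}}(G).
\]
Since $\tfrac{a'}{2b'}=\tfrac{ca'}{2cb'}$, the same corollary with $ca'$ and $cb'$ gives
\[
\cover{\tfrac{ca'}{2cb'}}(G) = \ogamma_{2ca'}(G_{2cb'}) = \ogamma_{ca}(G_{cb}).
\]
Chaining the two equalities proves the claim. (Equivalently, the corollary already states $\ogamma_{2a}(G_{2b}) = \ogamma_{2ca}(G_{2cb})$.) We assume $G$ has no isolated vertices, as in the preliminaries.

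\emph{Case $c$ odd.}

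\emph{Direction $\ogamma_{ca}(G_{cb}) \le \ogamma_a(G_b)$.} Take an \dombarset{a} $D$ of $G_b$. The graph $G_{cb}$ is the $c$-subdivision of $G_b$, so $D$ sits inside $V(G_{cb})$. Use definition \ref{def:ds:1}. Every edge $e'$ of $G_{cb}$ lies on the subdivided path of some edge $e$ of $G_b$. Some $w_1,w_2$-walk of length at most $a$ in $G_b$ with $w_1,w_2\in D$ contains $e$. Its $c$-subdivision is a walk of length at most $ca$ in $G_{cb}$ that contains $e'$. Hence $D$ is a \dombarset{ca} of $G_{cb}$.

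\emph{Direction $\ogamma_a(G_b) \le \ogamma_{ca}(G_{cb})$.} Take a minimum \dombarset{ca} $D$ of $G_{cb}$. By \cref{lemma:dom:iff:cov} applied to the base graph $G_b$ with subdivision factor $c$, $D$ corresponds to a $c$-simple $\tfrac{ca}{2c}=\tfrac{a}{2}$-cover $S$ of $G_b$ of the same size. Now apply \cref{lemma:ocover:simple}, viewing $S$ as an $\tfrac{a}{2}$-cover. Its construction moves each point at position $\lambda<\tfrac12$ to the nearer endpoint and leaves points at position exactly $\tfrac12$ in place. Because $c$ is odd, the $c$-simple set $S$ has no point at position $\tfrac12$, i.e.\ at $\tfrac{2i-1}{2}$ in the lemma's terminology with $b=2$ after normalising. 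The lemma then yields a $1$-simple $\tfrac{a}{2}$-cover $S^\star$ of $G_b$ with $|S^\star|=|S|$. Applying \cref{lemma:dom:iff:cov} again with subdivision factor $1$, $S^\star$ corresponds to an \dombarset{a} of $G_b$ of the same size.

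\emph{Main obstacle.} The delicate step is invoking \cref{lemma:ocover:simple} in the right normalisation. The cover radius is $\tfrac a2$, which is half-integral when $a$ is odd. The lemma rewrites it as $\tfrac{a'}{b'}$ with $b'=1$ and $a'=\tfrac a2$ half-integral. In that normalisation the hypothesis ``no point at position $\tfrac{2i-1}{2b'}$'' means exactly ``no point at an edge midpoint''. That is guaranteed by $c$ odd, and it gives the $b'$-simple, i.e.\ $1$-simple, conclusion we need. When $a$ is even, the radius $\tfrac a2$ is an integer and the lemma is applied directly. Checking that each case lands in the correct branch of \cref{lemma:ocover:simple}, together with the no-isolated-vertices requirement of \cref{lemma:dom:iff:cov}, is where the care goes. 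The remaining steps are routine.
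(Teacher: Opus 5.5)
Your proof is correct and takes the same route as the paper. The even case is read off from \cref{lemma:ds:to:cov}; for odd $c$, the easy direction subdivides walks, and the hard direction goes through \cref{lemma:dom:iff:cov} and the rounding of \cref{lemma:ocover:simple}, using that a $c$-simple set with $c$ odd has no point at an edge midpoint. Your bookkeeping is also more precise than the paper's: the paper loosely calls the intermediate object a $bc$-simple $a$-covering set, whereas it is a $c$-simple $\tfrac{a}{2}$-cover of $G_b$, and the lemma is applied in its normalised form with denominator $1$.
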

\begin{proof}
First consider that $a,b$ are even.
Then $\ogamma_{a}(G_b) = \ogamma_{2a'}(G_{2b'})$ for some integers $a',b'$.
Then $\ogamma_{2a'}(G_{2b'}) = \ogamma_{2ca'}(G_{2cb'}) = \ogamma_{ca}(G_{cb})$,
	because of \cref{lemma:ds:to:cov}.

Now consider that $c$ is odd.
Clearly, any \dombarset{a} in $G_b$ corresponds to a \dombarset{ca} in $G_{cb}$.
For the reverse direction,
	let $D \subseteq V(G_{cb})$ be an \dombarset{ca} set of $G_{cb}$.
Consider the corresponding $a$-covering set $S \subseteq V(G_b)$ in $G_b$,
	which is $bc$-simple.
Apply the construction of \cref{lemma:ocover:simple}.
Because $c$ is odd,
	there is no point $p \in S$ with edge position $\tfrac{1}{2}$.
Hence the construction only produces points with edge position $0$ or $1$.
In other words, the constructed set $S^\star$ is $1$-simple,
	and hence $S^\star$ corresponds to a \dombarset{a} in~$G_b$.
\end{proof}

The proof of \cref{lemma:cover:translate} is constructive
	and preserves simplicity of the point sets.
As stated in~\cite{HartmannLW22},
	the proof of \cref{lemma:cover:translate} considers \emph{maximum} $\frac{a}{b}$-covering sets
	and \emph{maximum} $\frac{a}{2a+b}$-covering sets
	and translates them to one another with the size difference of $|E(G)|$.
However, in fact the construction in the forward direction does not require
	the $\frac{a}{b}$-covering set to be maximum,
	and the construction in the backward direction only requires
	that the $\frac{a}{2a+b}$-covering set $S'$
	contains for every edge $\{u,v\}\in E(G)$
	at least one point $p(u,v,\lambda)$ with $\lambda\in(0,1)$.
Since $\frac{a}{2a+b}<\half$, this has to be the case in order to cover the point $p(u,v,\half)$.
Thus we obtain the following result.

\begin{lemma}[\cite{HartmannLW22}]
\label{lemma:cover:translate:simplicity}
\ineditnote{\cref{lemma:disp:translate:b:simple}}
Let $G$ be a graph.
A $b$-simple $\tfrac{a}{b}$-covering set $S$ of $G$
	implies a $(2a+b)$-simple $\tfrac{a}{2a+b}$-covering set of $G$ of size $|S|+|E(G)|$.
Further, a $(2a+b)$-simple $\tfrac{a}{2a+b}$-covering set $S'$
	implies a $b$-simple $\tfrac{a}{b}$-covering set of size $|S'|-|E(G)|$.
\end{lemma}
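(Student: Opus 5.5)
We re-run the constructive proof of \cref{lemma:cover:translate} from \cite{HartmannLW22} and check two things: that it preserves simplicity, and that it never uses minimality or maximality of the input set. The approach parallels \cref{a:lemma:disp:translate:b:simple}. Let $\delta=\tfrac{a}{b}$ and let $\rho$ be the conversion ratio with $\delta\rho=\tfrac{a}{2a+b}$, so $\rho=\tfrac{b}{2a+b}$ and $\rho+2\delta\rho=1$. A $b$-simple edge position $\lambda=\tfrac{i}{b}$ maps to $\lambda\rho=\tfrac{i}{2a+b}$, which is $(2a+b)$-simple. Likewise, $1-\lambda\rho$ and any position obtained by adding or subtracting multiples of $\delta\rho=\tfrac{a}{2a+b}$ are $(2a+b)$-simple.

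\emph{Forward direction.} Take a $b$-simple $\tfrac{a}{b}$-cover $S$. For each edge $\{u,v\}$ we place points in the spirit of the dispersion translation. On an edge carrying $m\ge1$ points of $S$, with extremal points at positions $\lambda$ from $u$ and $\mu$ from $v$, we put $p(u,v,\lambda\rho)$ and $p(u,v,1-\mu\rho)$. We then fill the gap between them with $m-1$ points spaced at most $2\delta\rho$ apart; adding one point per edge relative to $S$ gives room for this. On an edge with no point of $S$, we add one new point whose position is determined by the covering distances inherited from $u$ and $v$. This position is of the form $1-(\delta-d)\rho$ or a midpoint, where $d$ is a distance from a vertex to $S$ measured in multiples of $\tfrac1b$.

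Every added position is an affine combination of $b$-simple quantities with coefficients $\rho$ and $\delta\rho$, hence $(2a+b)$-simple. The one exception is a possible midpoint term. As in the dispersion case, we round it to the nearest $(2a+b)$-simple value on the safe side and verify that the covering inequalities still hold, using $1-\tfrac{\rho}{b}\ge 2\delta\rho$-type slack.

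\emph{Backward direction.} Take a $(2a+b)$-simple $\tfrac{a}{2a+b}$-cover $S'$. Because $\tfrac{a}{2a+b}<\tfrac12$, every edge contains a point of $S'$ in its interior; this is the only hypothesis the original proof needs in place of minimality. Delete one interior point per edge, the one closest to an endpoint in the canonical choice, and rescale the remaining positions by $\rho^{-1}$. This maps $\tfrac{i}{2a+b}$ to $\tfrac{i}{b}$, so the result is $b$-simple, and the HLW argument shows it is an $\tfrac{a}{b}$-cover of size $|S'|-|E(G)|$.

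The main obstacle is the forward case of an empty edge, where the original construction may place a point at a non-simple midpoint. We would handle it exactly as in \cref{a:lemma:disp:translate:b:simple}: replace the position by the nearest $(2a+b)$-simple value toward the endpoint with the weaker coverage, then recheck the neutral/negative/positive case analysis for the covering condition. The recheck should lose at most $\tfrac{\rho}{2b}$, which the integrality slack absorbs.
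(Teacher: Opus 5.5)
Your proposal takes essentially the paper's route. Like the paper, you reuse the constructive translation of \cite{HartmannLW22}, check that $\lambda\mapsto\lambda\rho$ (and shifts by multiples of $\tfrac{a}{2a+b}$) send $b$-simple positions to $(2a+b)$-simple ones, note that the forward direction never uses minimality, and observe that the backward direction only needs an interior point of $S'$ on every edge, which $\tfrac{a}{2a+b}<\tfrac12$ forces via the edge midpoint.

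The one difference is your midpoint-rounding fallback, which the paper does not use: it asserts that the original construction already outputs simple positions. For example, the new point on an empty edge $\{u,v\}$ can sit at distance $\dr_S(v,u)\rho$ from the endpoint $v$ with the weaker coverage, which is already $(2a+b)$-simple, so no rounding step arises.
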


Hence the following  relation of \dombarset{a} in $G_b$ and $G_{a+b}$ holds,
	analogously to \cref{lemma:alpha:translate}.

\begin{theorem}[\cref{i:lemma:trans:ogamma} restated]
\label{lemma:trans:ogamma}
\lemmaTextDsTranslation
\end{theorem}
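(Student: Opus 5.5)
The plan is to mirror the proof of \cref{lemma:alpha:translate}, with the covering correspondence of \cref{lemma:dom:iff:cov} taking the place of \cref{lemma:is:iff:disp}, and \cref{lemma:cover:translate:simplicity} taking the place of \cref{lemma:disp:translate:b:simple}. By \cref{lemma:dom:iff:cov}, an \dombarset{a} of $G_b$ of size $k$ is the same as a $b$-simple $\tfrac{a}{2b}$-covering set of $G$ of size $k$. Similarly, an \dombarset{a} of $G_{a+b}$ is the same as an $(a+b)$-simple $\tfrac{a}{2(a+b)}$-covering set of $G$. So it suffices to relate $b$-simple $\tfrac{a}{2b}$-covers with $(a+b)$-simple $\tfrac{a}{2(a+b)}$-covers of $G$, with a size difference of exactly $|E(G)|$.

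To apply \cref{lemma:cover:translate:simplicity}, I write the distance as $\tfrac{a'}{b'}$ with $a' = \ahalf$ and $b' = b$. Then $2a'+b' = a+b$, so $\tfrac{a'}{2a'+b'} = \tfrac{a}{2(a+b)}$, which is exactly the target distance. If $a$ is even, $a'$ is an integer and the lemma applies verbatim. If $a$ is odd, I would instead scale everything by $2$. The distances become $\tfrac{a}{2b}$ and $\tfrac{a}{2a+2b}$, which are the translation pair for numerator $a$ and denominator $2b$. The simplicity parameters then become $2b$ and $2(a+b)$. I would check that the proof of \cref{lemma:cover:translate:simplicity} in \cite{HartmannLW22} only uses that $b'$-simple positions are mapped to $(2a'+b')$-simple positions via the ratio $\rho = \tfrac{b'}{2a'+b'}$. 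This ratio equals $\tfrac{b}{a+b}$ in both cases. So the statement should hold for half-integral $a'$ as well, with $b$-simple inputs giving $(a+b)$-simple outputs.

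\textbf{Forward direction.} Take a minimum \dombarset{a} $D$ of $G_b$. I convert it to a $b$-simple $\tfrac{a}{2b}$-cover $S$ of $G$ and apply the translation to obtain an $(a+b)$-simple $\tfrac{a}{2(a+b)}$-cover of size $|S|+|E(G)|$. Converting back through \cref{lemma:dom:iff:cov} on $G_{a+b}$ gives $\ogamma_a(G_{a+b}) \le \ogamma_a(G_b)+|E(G)|$.

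\textbf{Backward direction.} Take a minimum \dombarset{a} of $G_{a+b}$, viewed as an $(a+b)$-simple $\tfrac{a}{2(a+b)}$-cover $S'$. The backward construction needs every edge of $G$ to carry a point strictly inside it. This holds automatically because $\tfrac{a}{2(a+b)} < \half$, so the edge midpoint could not otherwise be covered. The construction yields a $b$-simple $\tfrac{a}{2b}$-cover of size $|S'|-|E(G)|$, hence the reverse inequality.

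The main obstacle I expect is verifying that simplicity is really preserved in the $a$ odd case, where the relevant covering distance $\ahalf$ is not an integer. In the dispersion analogue, the construction had to be modified (the $(\tfrac{\delta\rho}{2})^\star$ rounding). Here I would go through the construction of \cite{HartmannLW22} point by point. Points at positions $\lambda$ map to $\lambda\rho$ or $1-\mu\rho$, which preserves simplicity since $\rho = \tfrac{b}{a+b}$. The only risk is the newly inserted points on point-free edges, positioned relative to $\delta\rho$-type quantities. I would check that these positions are multiples of $\tfrac{1}{a+b}$, using that $2\cdot\tfrac{a}{2(a+b)} = \tfrac{a}{a+b}$. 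If they are not, I would fall back on a rounding step in the spirit of \cref{lemma:ocover:simple}, applied after the translation.
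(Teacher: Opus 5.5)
Your proposal follows the paper's proof: pass to simple covers via \cref{lemma:dom:iff:cov}, apply \cref{lemma:cover:translate:simplicity}, and pass back. Your inequality directions are the right ones for a minimisation problem; the paper's write-up has them swapped, which is harmless since both directions are proved. The odd-$a$ issue you flag is real, and the paper does not address it. Via \cref{lemma:dom:iff:cov}, \cref{lemma:cover:translate:simplicity} with an integer numerator is exactly the even-$a$ case of the theorem. So for odd $a$ one must re-run the construction of \cite{HartmannLW22} with numerator $\frac{a}{2}$, as you plan, in both directions.

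What would not work is your fallback of rounding after the translation in the spirit of \cref{lemma:ocover:simple}. That rounding leaves points sitting exactly at odd multiples of $\frac{1}{2(a+b)}$ in place, and such points cannot be eliminated in general. Take $a=3$, $b=1$ and $G$ a path with three edges. The four points at distances $\frac38,\frac98,\frac{15}8,\frac{21}8$ from one end form a minimum $\frac38$-cover. Yet every $4$-simple $\frac38$-cover has at least $5$ points, since by \cref{lemma:dom:iff:cov} it is a dominating set of the $13$-vertex path $G_4$. So simplicity has to come from the construction itself, and the check is a parity argument:
\begin{itemize}
\item For a $b$-simple $\frac{a}{2b}$-cover, every coverage boundary inside an edge lies at $\frac{a}{2b}-\frac{k}{b}$ or $\frac{i}{b}\pm\frac{a}{2b}$ from an endpoint.
\item After scaling by $\rho=\frac{b}{a+b}$, and shifting by the inserted length $\frac{a}{a+b}$ where applicable, this is an odd multiple of $\frac{1}{2(a+b)}$ when $a$ is odd. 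The same holds for boundaries produced by already placed $(a+b)$-simple points.
\item If each new point on a point-free edge is placed at distance $\frac{a}{2(a+b)}$ from such a boundary, as in the dispersion analogue, it lands on a multiple of $\frac{1}{a+b}$ with no rounding needed.
\end{itemize}
Run the analogous parity check in the backward direction as well.
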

\begin{proof}
Let $D$ be a \dombarset{a} of $G_{b}$.
Then $D$ correspond to a $b$-simple $\frac{a}{2b}$-covering set $S$ of $G$ of size $|D|$,
	by \cref{lemma:dom:iff:cov}.
\cref{lemma:cover:translate:simplicity} maps $D$ to an $(a+b)$-simple $\tfrac{a}{2a+2b}$-covering set $S'$ of $G$ of size $|D|+|E(G)|$.
Then $S'$ corresponds to a \dombarset{a} of $G_{a+b}$ of size $|D|+|E(G)|$.
That means $\ogamma_a(G_b) + |E(G)| \leq \ogamma_{a}(G_{a+b})$.

Vice versa, let $D'$ be an \dombarset{a} of $G_{a+b}$.
Then $D'$ corresponds to an $(a+b)$-simple $\tfrac{a}{2a+2b}$-covering set $S'$ of $G$ of size $|D'|$,
	by \cref{lemma:dom:iff:cov}.
\cref{lemma:cover:translate:simplicity} maps $S'$ to an $a$-simple $\tfrac{a}{2b}$-covering set $S$ of $G$ of size $|D'|-|E(G)|$.
Then $S$ corresponds to an \dombarset{a} of $G_{b}$ of size $|D'|-|E(G)|$, by \cref{lemma:dom:iff:cov}.
That means $\ogamma_a(G_b) + |E(G)| \geq \ogamma_{a}(G_{a+b})$.
\end{proof}

\subsection{Domination with Parameter Solution Size}
\label{section:dom:parameterized}

This section settles the parametrized complexity
	of $a$-\DS on $b$-subdivided graphs
	when parameterized by the solution size,
	for every integers $a,b$.
These results are also marked by the color of the cells in \cref{figure:ds:cells}
	and summarized as follows.

\begin{lemma}
\label{a:lemma:ds:param}
\ineditnote{\cref{lemma:ind:param}}
\lemmaTextDsParam
\end{lemma}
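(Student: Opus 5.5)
The plan follows the proof of \cref{lemma:ind:param}, using the domination analogues of its tools. There are three parts: the polynomial cases, the \fpt cases for $\frac{a}{b}<3$, and \wtwo-hardness for $\frac{a}{b}\geq 3$.

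\textbf{Polynomial cases.} If $b=ca$ with $a$ odd, \cref{lemma:ds:sub} gives $\ogamma_a(G_{ca})=\ogamma_1(G_c)$. A \dombarset{1} must contain both endpoints of every edge, so the answer is $|V(G_c)|$. If $a$ is even and $b=ca$, or $\ahalf$ is even and $b=c\ahalf$, I would do two things. First, use \cref{lemma:ds:sub} to normalize to $\ogamma_{2}(G_{2})$ or $\ogamma_4(G_2)$, plus some $2|E|$-type offsets. Second, strip multiples of $a$ from $b$ via \cref{lemma:trans:ogamma}, applied as $\ogamma_a(G_{b})=\ogamma_a(G_{b-a})+|E(G)|$ while $b>a$. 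This lands in $\ogamma_a(G_a)$ or $\ogamma_a(G_{a/2})$. \cref{lemma:ds:sub} then reduces these to the cases $a\in\{2,4\}$ on $2$-subdivided graphs, which \cref{lemma:ds:p} solves.

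\textbf{\np-hardness of the remaining cases.} With $c=\gcd(a,b)$, use $yb'=1+xa'$ and \cref{lemma:trans:ogamma} to reduce from $\ogamma_{a'}(G_1)$, where $a'=a/c$ and $b'=b/c$. This works when $c$ is odd. For even $c$, reduce from \np-hard mixed or vertex-edge domination instances on $2$-subdivisions, exactly as in the proof of \cref{p:lemma:is:tw:summary}.

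\textbf{\fpt for $\frac{a}{b}<3$.} The key step is a domination analogue of \cref{lemma:ind:leq:matching}: every \dombarset{a} of $G_b$ has size at least $\nu(G)/2$, or a similar linear bound. To prove it, fix a maximal matching $M$ and take the middle vertex $\vv(u,v,\floor{b/2})$ of each edge of $M$. Any walk of length $\le a<3b$ that covers the middle segment of a matching edge in $G_b$ must have an endpoint in $D$ within distance roughly $3b/2$ of it. Such a vertex lies on that matching edge or on an edge incident to it, so one element of $D$ can serve only $O(1)$ disjoint matching edges. Hence $|D|\geq \nu(G)/2$. This gives a win-win: if $k<\nu(G)/2$, answer no. Otherwise $\tw(G_b)\le\tw(G)+1\le 2\nu(G)+1=O(k)$, and we run a treewidth dynamic program with base $O(a)$ (the algorithm behind \cref{lemma:ds:lit:seth}, or a direct DP on property \ref{def:ds:3} in $G_2$). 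I expect this lower bound to be the main obstacle. The constant has to be made precise, and the boundary cases where $a$ is just below $3b$ need care, since near a vertex of high degree a single dominator covers many edges. The matching bound survives only because covering a middle point requires a nearby dominator, and I would verify that using (D1).

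\textbf{\wtwo-hardness for $\frac{a}{b}\geq 3$.} Reduce from \domset, mirroring \cref{a:lemma:ind:3:hardness}. Take $G_2$ of the input, make the subdivision vertices a clique, attach pendant paths of length about $\ceil{\frac{a}{2b}}-1$ at original vertices, possibly doubling the path end to fix parity, then $b$-subdivide and keep $k'=k$. Hardness then follows via balls of radius about $a/2$ as in that proof. For ratios exactly at $3$ (for example $a=3,b=1$) this reduction is the identity from \domset.
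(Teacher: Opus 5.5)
The gap is in the \wtwo-hardness part, which does not work as described.

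\textbf{The copied gadget fails for domination.} The construction you take from \cref{a:lemma:ind:3:hardness} is harmless for packing but fatal for covering. In $G''$, every vertex $e$ of the clique on the subdivision vertices is at distance at most $2$ from every non-isolated original vertex. That is exactly the distance between two adjacent original vertices. Each pendant path hangs off a single original vertex, so $e$ walk dominates every pendant path by walks no longer than those a dominator at (or beyond) a neighbour would need. It also trivially handles the subdivided and clique edges. Hence, whenever your forward direction holds, $\{e\}$ alone is a walk dominating set, and every instance with $k\geq 1$ maps to a yes-instance. For $\tfrac ab\geq 4$ the paper avoids this: it attaches pendant paths of length $\floor{\tfrac{a}{2b}}-1$ directly to $G$, with no $2$-subdivision and no clique. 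When $\floor{\tfrac ab}$ is odd it adds a vertex $u_2$ adjacent to $u_1$ and $N(u_1)$. It then maps each dominator to a nearest original vertex.

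\textbf{The range $\tfrac ab\in[3,4)$ needs a separate idea.} Here there is no room for pendant paths, and a plain reduction from \domset is false. A dominator inside a subdivided edge $\{u,v\}$ of $G$ (e.g.\ at $\vv(u,v,\tfrac b2)$ for even $b$) is within distance $\tfrac{3b}{2}\leq\ahalf$ of all of $N[u]\cup N[v]$. Concretely, the $2$-subdivision of the path on four vertices has $\ogamma_6=1$ (take its middle vertex), while the path has domination number $2$. So your remark about ratio $3$ only covers $a=3c$, $b=c$ with $c$ odd, via \cref{lemma:ds:sub}. The paper handles this range by reducing from \textsc{Colorful Dominating Set} with two gadgets:
\begin{itemize}
\item per colour class, an edge $x_iy_i$ with both ends joined to all of $V_i$; these far-apart midpoints force exactly one dominator per class, sitting close to an original vertex;
\item per vertex $u$, a triangle $u u^\star u^{\star\star}$ with cross edges from every $v\in N(u)$ to $u^\star$ and $u^{\star\star}$; its midpoint can only be walk dominated from close to a vertex of $N[u]$.
\end{itemize}

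Your polynomial-time and \fpt parts follow the paper's route. Two smaller points remain.
\begin{itemize}
\item In the matching bound, ``the dominator lies on the matching edge or on an edge incident to it'' does not by itself give $O(1)$: an original vertex lies on edges incident to arbitrarily many matching edges. What works is the strict inequality $\ahalf<\tfrac{3b}{2}$. A dominator on an edge $\{x,y\}$ of $G$ has distance at least $\tfrac{3b}{2}$ to the midpoint of every edge not incident to $x$ or $y$, so it serves at most two edges of $M$. This is \cref{lemma:ds:leq:matching}; take $M$ maximum rather than maximal to get $\nu(G)/2$.
\item For \np-hardness with even $\gcd(a,b)$, your sources do not fit. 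Mixed domination is $\ogamma_5$ on $\GG_2$ and vertex-edge domination is $\ogamma_4$ on $\GG_1$; neither has the required form $\ogamma_{2a'}$ on $\GG_2$ with $a'\geq 3$. Use instead the \np-hardness of $\delta$-\covering for every $\delta$ that is not a unit fraction~\cite{HartmannLW22}, together with \cref{lemma:ds:to:cov} and \cref{lemma:ds:sub}.
\end{itemize}
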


The polynomial time solvable cases are already settled by \cref{lemma:ds:p}.
As is well-known, \textsc{DominatingSet},
	or as denoted in this work $\DS_3(\GG_1)$, is \np-hard and \wtwo-hard~\cite{GareyJohnson1979,DowneyF95}.
This also puts all cases with $\frac{a}{b}=2$ and $b$ odd to \np-hard and \wtwo-hard,
	by applying \cref{lemma:ds:sub}.
The cases solvable in \fpt-time rely on bounding the solution size
	by the size of a maximum matching in a graph $G$, denoted as $\nu(G)$;
	similarly as in~\cite{HartmannLW22}.

\begin{lemma}
\label{lemma:ds:leq:matching}
\ineditnote{\cref{lemma:ind:leq:matching}}
$\frac{\nu(G)}{2} \leq \ogamma_{a}(G_b)$
	for every $a,b$ with $\frac{a}{b} < 3$.
\end{lemma}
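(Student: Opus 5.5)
The plan is to pass from walk domination in $G_b$ to covering in $G$ itself. Then we show that a single covering point can take care of the midpoints of at most two edges of a fixed matching.

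Let $D$ be a minimum \dombarset{a} of $G_b$, and let $M \subseteq E(G)$ be a maximum matching, so $|M| = \nu(G)$. We may assume $G$ has no isolated vertices, as in the preliminaries. By \cref{lemma:dom:iff:cov}, $D$ corresponds to a $b$-simple $\tfrac{a}{2b}$-covering set $S \subseteq P(G)$ with $|S| = |D|$. Since $\tfrac{a}{b} < 3$, every point of $G$ lies at distance strictly less than $\tfrac{3}{2}$ from some point of $S$. In particular, for every matching edge $\{s,t\} \in M$ its midpoint $m_{st} = p(s,t,\tfrac12)$ has some point $p \in S$ with $d(p, m_{st}) < \tfrac32$. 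We charge the edge $\{s,t\}$ to such a point $p$.

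The key step is to show that each point $p \in S$ is charged by at most two matching edges. Fix $p = p(x,y,\lambda)$ and a matching edge $\{s,t\}$. If $p$ lies in the interior of $\{s,t\}$, then $\{s,t\} = \{x,y\}$. Otherwise every path from $p$ to $m_{st}$ passes through $s$ or $t$, so
\[ d(p, m_{st}) = \tfrac12 + \min\{d(p,s), d(p,t)\}. \]
Hence $d(p,m_{st}) < \tfrac32$ forces $s$ or $t$ to be a vertex at distance $<1$ from $p$. The only such vertices are $x$ and $y$ when $p$ is interior to an edge, and only $x$ when $p = x$ is a vertex. Thus every charged matching edge contains $x$ or $y$, or equals $\{x,y\}$. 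Because $M$ is a matching, at most one edge of $M$ contains $x$ and at most one contains $y$ (the edge $\{x,y\}$, if in $M$, is among these). So at most two matching edges are charged to $p$.

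Consequently $\nu(G) = |M| \le 2|S| = 2|D| = 2\ogamma_a(G_b)$, which is the claim. The only delicate point is the strictness $\tfrac{a}{2b} < \tfrac32$. With distance exactly $\tfrac32$, a vertex $p$ could reach the midpoints of matching edges at its neighbours, and the count would break. This is why the hypothesis $\tfrac ab < 3$ is needed.
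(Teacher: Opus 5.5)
Your proof is correct and follows the paper's argument. You charge each matching edge to a dominator of its midpoint and show that a dominator on the (subdivided) edge $\{x,y\}$ reaches only midpoints of matching edges meeting $x$ or $y$, since every other midpoint lies at distance at least $\tfrac32$ (in the paper, at least $3b > a$ in $G_{2b}$). The only difference is that you rescale to the point space $P(G)$ via \cref{lemma:dom:iff:cov}, whereas the paper stays discrete and uses characterization (D3) in $G_{2b}$.
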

\begin{proof}
Let $M$ be a maximum matching in the original graph $G$,
	and let $D$ be an \dombarset{a} of the subdivided graph $G_b$.
We use the definition \ref{def:ds:3} of an \dombarset{a},
	which is that 
	for every vertex $u \in G_{2b}$ (hence of the $2$-subdivision of $G_b$),
	there is a vertex $w \in D$ with $d_{G_{2b}}(u,w)\leq a$
	(measured in the $2$-subdivision $G_{2b}$ of $G_b$).
For every edge $\{u,v\} \in M$,
	we define $m(\{u,v\})$ as the vertex $\vv(u,v,b)$ in the graph $G_{2b}$.
We claim that for every vertex $d \in D$,
	there are no three edges $e_1,e_2,e_3 \in M$
	such that $d$ has distance $< \ahalf$ to a vertex in each of $m(e_1), m(e_2), m(e_3)$.
Consider that $d = \vv(u,v,\beta)$ with $\beta \in \{0,\dots,2b\}$.
Then there is an edge $e \in e_1,e_2,e_3$ not incident to $u$ nor $v$.
Then $d$ has distance to $m(e)$ of at least $3b<a$ in $G_{2b}$,
	since $\frac{a}{b}<3$.
Therefore $|M| \leq 2|D|$.
\end{proof}

Since we maximum size of a matching in a graph upper bounds the treewidth,
	we have a win-win situation for our \fpt-algorithm.

\begin{lemma}
For every $a,b$ with $\frac{a}{b} < 3$,
	$a$-$\DS(\GG_b)$ is \fpt for the parameter solution size.
\end{lemma}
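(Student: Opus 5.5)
The proof is a win-win argument, in the same way as the proof of \cref{lemma:ind:fpt} for independent sets. The input is a graph $G$, which defines $G_b$, together with a budget $k$; the question is whether $\ogamma_a(G_b)\leq k$. I would first compute a maximum matching $M$ of $G$ in polynomial time. If $\nu(G) > 2k$, then \cref{lemma:ds:leq:matching} gives $\ogamma_a(G_b)\geq \nu(G)/2 > k$, since $\frac{a}{b}<3$. Hence we can answer `no' immediately.

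Otherwise $\nu(G)\leq 2k$. The endpoints of $M$ form a vertex cover of $G$, so $\vc(G)\leq 2\nu(G)\leq 4k$, and therefore $\tw(G)\leq 4k$. A tree decomposition of $G$ of width $O(k)$ can be computed in \fpt time, for example with the algorithm of \cite{KorhonenLokshtanov2023}. Alternatively, we can take the vertex cover $C$ given by the matching endpoints and use the bags $C\cup\{v\}$. Subdividing edges raises the width by at most one, because each path replacing an edge $\{u,v\}$ can be handled with bags containing a bag that holds $u,v$ plus two consecutive path vertices. This gives a tree decomposition of $G_b$ of width $O(k)$.

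It then remains to solve $a$-$\DS(\GG_b)$ in time $f(t)\cdot n^{O(1)}$ for fixed $a,b$ when a width-$t$ decomposition is given. Here I would avoid relying on the not-yet-proved \cref{lemma:i:ds:tw:domination}. Instead I would use definition \ref{def:ds:3}: a set $D$ is an \dombarset{a} of $G_b$ exactly when $D$ is a distance-$a$ dominating set of the vertices of the $2$-subdivision $G_{2b}$, with the restriction that $D\subseteq V(G_b)$. This is a distance-$r$ domination problem with $r=a$ fixed, a restricted candidate set, and every vertex required to be dominated. Its standard dynamic program on tree decompositions records, for each bag vertex, its distance to $D$ ``already realized'' or ``promised'' with values in $\{0,\dots,a\}$. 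It runs in time $(2a+1)^{O(t)}\cdot n^{O(1)}$, as in \cite{BorradaileL16,KatsikarelisLP2022}, and restricting the candidates only removes states. Since $a$ is a fixed constant, this is \fpt in $k$.

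The only delicate point is the trivial case $a=1$ or small values, and making sure the decomposition transfers. For $\frac{a}{b}\leq 1$ the problem is polynomial anyway. I expect no real obstacle beyond correctly citing an \fpt-by-treewidth algorithm for the domination variant. If I wanted to be self-contained, I would sketch the distance-labelling dynamic program on $G_{2b}$. The treewidth of $G_{2b}$ is also $O(k)$, by the same subdivision remark.
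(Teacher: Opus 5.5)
Your proof is correct and uses the same win--win argument as the paper. By \cref{lemma:ds:leq:matching}, either the instance is trivial, or $\vc(G)\le 4k$, so $G$ and $G_b$ have treewidth $O(k)$ and a treewidth dynamic program finishes the job. In the trivial branch you answer \emph{no} when $\nu(G)>2k$, which fits the natural question $\ogamma_a(G_b)\le k$. The paper instead answers \emph{yes} when $k\le\nu(G)/2$, matching its $\ogamma_a\ge k$ phrasing; the two are equivalent in substance.

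The differences lie only in the black boxes.
\begin{enumerate}
\item You build the decomposition directly from the matching endpoints. This is polynomial and avoids \cite{KorhonenLokshtanov2023}.
\item For the last step you use condition \ref{def:ds:3} to recast the problem. It becomes distance-$a$ domination of all vertices of $G_{2b}$ by a set restricted to $V(G_b)$, solved with the classical distance-$r$ domination DP. The paper instead runs its own \cref{lemma:ds:classic} on $G_b$ for even $a\ge 4$, and cites \cref{lemma:ds:lit:seth} for $a=2$ and odd $a$.
\end{enumerate}
Your route is uniform in $a$ and does not depend on the paper's new DP. The cost is a weaker bound, $(2a+1)^{O(t)}$ instead of $a^{t}$. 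You also rely on the standard but unstated fact that the cited DP tolerates a restricted candidate set, which holds by forbidding the ``in $D$'' state on excluded vertices. Neither cost matters for \fpt.

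Delete one sentence: ``For $\frac{a}{b}\le 1$ the problem is polynomial anyway'' is false. For example, \cref{i:lemma:trans:ogamma} gives $\ogamma_2(G_3)=\ogamma_2(G_1)+|E(G)|$, which is minimum vertex cover plus $|E(G)|$ and hence \np-hard (cf.\ \cref{figure:ds:cells}). Your argument never uses this claim, since the DP handles every fixed $a$, including $a=1$, so the proof stands once the sentence is removed.
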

\begin{proof}
Let the input be a graph $G$ (defining $G_b$) and an integer $k$ as the solution size asked for.
We determine $\nu(G)$, the size of maximum matching in $G$, in polynomial time.
If $k \leq \frac{\nu(G)}{2}$,
	we may immediately answer `yes', according to \cref{lemma:ds:leq:matching}.
Otherwise $k > \frac{\nu(G)}{2} \geq \frac{\vc(G)}{4}$,
	where $\vc(G)$ is the minimum size of a vertex cover of $G$.
Further, the size of a vertex cover upper bounds the treewidth of $G$.
We may compute a tree decomposition of $G$ in \fpt time~\cite{KorhonenLokshtanov2023},
	which immediately provides a tree decomposition of $G_b$ of same size.
If $a\geq 2$, we may compute a minimum \dombarset{a} in  \fpt time
	for the treewidth as parameter by
	using \cref{lemma:ds:classic} for even $a \geq 4$,
	and \cref{lemma:ds:lit:seth} for $a=2$ and odd $a$.
Else, $a=1$ and we have $V(G_b)$ as a minimum \dombarset{1} of $G_b$.
\end{proof}

We settle the remaining cases by two parameter preserving reductions from
	a colorful version of \textsc{Dominating Set}
	showing \wtwo-hardness,
	the first for $\frac{a}{b} \in (3,4)$, the second for $\frac{a}{b} \geq 4$.

\begin{lemma}
For every $a,b \in \NNN_+$ with $\frac{a}{b} \in [3,4)$,
	$a$-$\DS(\GG_b)$ is \np-hard and \wtwo-hard for the parameter solution size.
\end{lemma}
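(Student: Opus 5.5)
We give a parameter-preserving polynomial reduction from \textsc{Dominating Set} (\wtwo-hard and \np-hard), or from a colorful variant, to $a$-$\DS(\GG_b)$ for $\frac{a}{b}\in[3,4)$. The reduction follows the pattern of \cref{a:lemma:ind:2:hardness} and of the analogous covering reduction in~\cite{HartmannLW22}. Given $\langle G,k\rangle$, we build an auxiliary graph $G''$, output its $b$-subdivision $G'=G''_b$, and set $k'=k$ (plus possibly an additive number of forced vertices). Throughout we use characterization \ref{def:ds:3}: $D\subseteq V(G')$ is an \dombarset{a} iff every vertex and every edge-midpoint of $G'$ (a vertex of $G'_2=G''_{2b}$) is within distance $a$ of $D$ in $G''_{2b}$.

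\emph{Construction.} For every $u\in V(G)$ we take a vertex $u$ in $G''$. For every edge $\{u,v\}\in E(G)$ we add the edge $\{u,v\}$ to $G''$. We also attach to every $u$ a private pendant structure: a single pendant edge $\{u,u'\}$, or, if the slack requires it, two pendant vertices $u'_1,u'_2$ adjacent to $u$ and to each other, as in the independent set reduction. Its purpose is that its farthest points must be dominated from the ball around $u$. In $G''_{2b}$ the farthest point of the pendant edge lies at distance $2b$ from $u$. Selecting $u$ therefore covers its own pendant, every edge incident to $u$, and, since $a\ge 3b$, the pendant of every neighbour $v$, which lies at distance $2b+2b=4b$ in $G''_{2b}$. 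This has to be checked against $2a\ge 6b$, keeping track of whether distances are measured in $G'_2$ or in $G'$. The pendant of a vertex at distance $2$ in $G$ must not be reachable. If plain edges make the threshold come out wrong, we instead replace each edge $\{u,v\}$ of $G$ by a path of length $2$ through a midpoint vertex, or we place all edge midpoints in a clique as in \cref{a:lemma:ind:3:hardness}. In either case the pendant lengths are tuned so that $u$ reaches the pendant of $v$ if and only if $v\in N[u]$, using $3b\le a<4b$.

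\emph{Correctness.} For the forward direction, a dominating set $X$ of $G$ maps to $D=X$, viewed as original vertices of $G'$. We check every pendant tip and every edge midpoint against the distance bound, which is routine once the lengths are fixed. For the backward direction we take any \dombarset{a} $D$ of size $k$ and map each $d\in D$ to a nearest original vertex $\pi(d)\in V(G)$. The key claim is that if $d$ covers the pendant tip of $v$, then $\pi(d)\in N_G[v]$, or at least $d$ can be pushed onto some vertex $w\in N_G[v]$ without losing coverage. Hence $\pi(D)$ dominates $G$.

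\emph{Main obstacle.} The delicate step is the backward direction for vertices $d$ placed deep inside subdivided edges or inside pendants. Such a $d$ may reach pendant tips of several vertices that are not all adjacent to a single vertex of $G$. For example, a point in the middle of edge $\{u,v\}$ reaches $u$'s and $v$'s neighbourhoods more evenly than $u$ itself does. The strict inequality $a<4b$ has to rule this out: a tip at distance $\ge 2b$ beyond $v$ with $d$ on edge $\{u,w\}$ forces $w=v$ or $u\in N(v)$. I would first try a normalization lemma in the spirit of \cref{lemma:ind:leaf}: any $d$ can be moved to an endpoint of its edge in $G''$ while the set of pendant tips it covers only grows. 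If that fails near $a=3b$, I would lengthen the pendants and, where needed, add blockers or forced pendant vertices.
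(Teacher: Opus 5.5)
\textbf{There is a genuine gap: your construction does not work, and the repairs you list cannot fix it.}

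By (D3), an $a$-walk dominating set of $G'=G''_b$ is a set that $a$-dominates $G''_{2b}$. The radius is therefore $a$, not $2a$. With a private pendant edge $\{v,v'\}$, the tip $v'$ lies at distance $4b>a$ from every $u\in N_G(v)$, so your forward direction already fails; the triangle variant gives $5b$.

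This is not a matter of tuning. Any gadget attached to the rest of the graph only at $v$ contains a vertex of $G''$ at distance at least $2b$ from $v$, hence at distance at least $4b>a$ from $N_G(v)$. So it can only be served from inside the gadget or by a point within distance $a-2b<2b$ of $v$, that is, on an edge incident to $v$. Each solution point then serves the gadgets of at most two adjacent vertices, so the instance behaves like edge cover, not domination.

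No choice of pendant lengths avoids this, since every edge of $G''$ has length $2b$ in $G''_{2b}$. Subdividing the edges of $G$ only moves the neighbours further away. Your normalization lemma also fails in exactly this situation: for even $b$, the vertex of $G'$ in the middle of $\{u,v\}$ serves both $u'$ and $v'$ (each at distance $3b\le a$), while $u$ serves only $u'$.

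The missing idea is to let every vertex of $N_G[v]$ reach $v$'s check point by its own route, at the same distance. The paper attaches to each $v$ a triangle $v,v^\star,v^{\star\star}$ and adds cross-edges from every $u\in N_G(v)$ to both $v^\star$ and $v^{\star\star}$. Your triangle option $u,u'_1,u'_2$ becomes exactly this gadget once every neighbour of $u$ is also joined to $u'_1$ and $u'_2$.

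The midpoint $m_v$ of $\{v^\star,v^{\star\star}\}$ is then at distance exactly $3b\le a$ from every vertex of $N_G[v]$ in $G''_{2b}$, which gives the forward direction. Conversely, since $a<4b$, any point within distance $a$ of $m_v$ is within distance less than $3b$ of $v^\star$ or $v^{\star\star}$, whose only other neighbours lie in $N_G[v]$. So the point lies either on an edge at $v^\star$ or $v^{\star\star}$, or within distance less than $b$ of a vertex of $N_G[v]$. In both cases its nearest vertex of $V(G)$ lies in $N_G[v]$.

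Mapping each solution point to a nearest original vertex therefore yields a dominating set of $G$ of size at most $k$, with no normalization step. The paper reduces from Colorful Dominating Set and also adds, for each colour class $V_i$, an edge $\{x_i,y_i\}$ joined to all of $V_i$, whose midpoint forces one solution point per class. The neighbourhood gadget is the essential ingredient.
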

\begin{proof}
We use the definition \ref{def:ds:3} of a \dombarset{a} $D$ of $G_b$,
	that is $D \subseteq V(G_b)$ is an $a$-dominating set in $G_{2b}$.

We show \wtwo-hardness by a reduction from \textsc{Colorful Dominating Set},
	which is known to be \np-hard and \wtwo-hard~\cite{HartmannLW22}.
The input is a graph $G$ with a partition of the vertex set into \emph{color classes} $V_1,\dots,V_k$
	and an integer $k$.
The task is to output a dominating set $\{u_1,\dots,u_k\}$ of $G$ where $v_i \in V_k$ for $i\in[k]$.
Given a graph $G$ with a partition of $V(G)$ into $V_1,\dots,V_k$ and integer $k$,
	we construct a graph $G' \in \GG_b$ and set $k'=k$,
	such that $G$ has a dominating set $\{u_1,\dots,u_k\}$ with $u_i \in V_i$ for $i \in [k]$
	if and only if $G'$ has a \dombarset{a} of size $k$.
The following construction is possible in polynomial time.

\smallskip

\emph{Construction:}
First, we construct an auxiliary graph $G''$.
Initially, let $V(G'')$ be $V(G)$.
For every color class $i \in [k]$,
	we add vertices $x_i$ and $y_i$ with edge $\{x_i,y_i\}$
	and, for every vertex $u \in V_i$,
	the edges $\{u,x_i\}$ and $\{y_i,u\}$.
For every vertex $u \in V(G)$, we add vertices $u^\star,u^{\star\star}$
	and add the edges of the triangle $u,u^\star,u^{\star\star}$.
Further, for every edge $\{u,v\}\in E(G)$, we add
	cross-edges $\{u,v^\star\}$ and $\{u,v^{\star\star}\}$
	and by symmetry $\{v,u^\star\}$ and $\{v,u^{\star\star}\}$.
Finally, let $G'$ be the $b$-subdivision of $G''$.

\smallskip

To show correctness, for the forward direction,
	consider a dominating set $D=\{u_1,\dots,u_k\}$ of $G$
	where $u_i \in V_k$ for $i\in[k]$.
We claim that $D$ is also an \dombarset{a} of $G'$.
In~$G$, for every vertex $v \in V(G)$, there is a vertex $u \in D$ with $u=v$ or $\{u,v\}\in E(G)$.
Hence in the $2$-subdivision $(G'')_{2}$ vertex $u$ $3$-dominates
	$\{v\} \cup N(v)$ and
	because of the cross-edges also $3$-dominates $\{v^\star\} \cup N(v^\star)$ and $\{v^{\star\star}\} \cup N(v^{\star\star})$.
Similarly, for $i \in [k]$, vertex $u_i$ $3$-dominates $\{x_i,y_i\}\cup N(x_i) \cup N(y_i)$ in $(G'')_{2}$.
Since $a \geq 3b$,
	$D$ is an $a$-dominating set of the $2b$-subdivision $(G'')_{2b}$
	and hence an \dombarset{a} of $G' = (G'')_b$.

For the backward direction,
	consider an \dombarset{a} $D'$ of $G'$ of size $k$.
For every color class $i \in [k]$, let $U_i$ be the set of vertices
	with distance $< 4a$ to the midpoint $\vv(x_i,y_i,b)$ in the graph $G_{2b}$.
Since $\frac{a}{b} < 4$,
	the set $D'$, as an $a$-dominating set of $(G'')_{2b}$,
	contains at least one vertex $U_i$,
	for every color class $i\in [k]$.
Because $U_1,\dots,U_k$ are pairwise disjoint,
	$D'$ contains exactly one vertex $u_i'$ in $U_i$ for every $i\in[k]$.
Let $D \subseteq V(G)$ contain a vertex $u_i$, for every color class $i\in [k]$,
	that has minimum distance to $u_i'$ in $G'$.
Then $D$ contains exactly one vertex $u_i$ from every color class $V_i$ for $i \in [k]$.
It remains to show that $D$ is a dominating set of $G$.
Consider a vertex $u \in V(G)$.
Then the midpoint $\vv(u^\star,u^{\star\star},b)$ is $a$-dominated by some vertex $v'$ in $(G'')_{2b}$.
Since $a < 4b$, vertex $v'$ has distance $<b$ to a vertex $v \in V(G)$ in the graph $G'$.
Thus $v \in D$ and $v$ dominates $u$ in $G$.
\end{proof}

\begin{lemma}
For every $a,b \in \NNN_+$ with $\frac{a}{b} \geq 4$,
	$a$-$\DS(\GG_b)$ is \np-hard and \wtwo-hard for the parameter solution size.
\end{lemma}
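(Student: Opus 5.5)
We reduce from \textsc{Colorful Dominating Set}, in parallel with the reduction for $\frac{a}{b}\in[3,4)$ and with the independent-set case $\frac{a}{b}\geq 3$ (\cref{a:lemma:ind:3:hardness}). The idea is to first build a gadget graph $G''$ in which the vertices of $V(G)$ are far apart, then take $G'=(G'')_b$. Throughout we use definition \ref{def:ds:3}: a set $D$ is an \dombarset{a} of $G'$ if and only if it $a$-dominates $(G'')_{2b}$.

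\emph{Construction.} Given $G$ with color classes $V_1,\dots,V_k$, let $r\coloneqq\ceil{\frac{a}{2b}}-1\geq 1$; this is the ``radius'' in units of $G''$-edges that one solution vertex can reach in $(G'')_{2b}$.
\begin{itemize}
\item \emph{Attachment paths.} For every $u\in V(G)$, keep a copy of $u$ and attach a path of length roughly $r-1$ from $u$ to a new terminal vertex $\hat u$. The intended solution vertices sit at $u$.
\item \emph{Edge representatives.} For every vertex $v$, add a representative $v^\star$ (or a triangle $v^\star,v^{\star\star}$ if parity requires it). Join $v^\star$ by short paths to $\hat u$ for every $u\in N[v]$, with lengths chosen so that $d(u,v^\star)$ in $(G'')_{2b}$ is at most $a$ exactly when $u\in N[v]$.
\item \emph{Color gadgets.} For every color class $i$, add $x_i,y_i$ with edge $\{x_i,y_i\}$, joined to every $\hat u$ with $u\in V_i$. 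This forces one solution vertex near each class.
\item \emph{Parity fixes.} Exactly as in \cref{a:lemma:ind:3:hardness}, use an extra twin vertex (or half-edge offsets) depending on the parity of $\ceil{\frac{a}{b}}$ and $\ceil{\frac{a}{2b}}$, so that the relevant distances are exactly $a$ and not off by one subdivision step.
\end{itemize}
Set $k'=k$.

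\emph{Forward direction.} Take the colorful dominating set $\{u_1,\dots,u_k\}$ itself. Check in $(G'')_{2b}$ that $u_i$ $a$-dominates:
\begin{itemize}
\item its attachment path;
\item the $x_i,y_i$ gadget;
\item the representatives $v^\star$ for $v\in N[u_i]$;
\item the connecting paths, split at their midpoints.
\end{itemize}
Every vertex of $G$ is dominated, and the representatives are built so that their connecting paths are split at midpoints. Hence all vertices and midpoints of $(G'')_{2b}$ are covered, using $a\geq 4b$ only to guarantee that the paths exist with nonnegative length.

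\emph{Backward direction.} Let $D'$ have size $k$. For each $i$, the midpoint of $\{x_i,y_i\}$ must be within $a$ of some vertex of $D'$. The balls $U_i$ of radius $a$ around these midpoints are pairwise disjoint, because distinct color gadgets are separated by more than $2a$ via the attachment paths. So $D'$ has exactly one vertex $u_i'\in U_i$. Map $u_i'$ to the closest $u_i\in V_i$, and argue that moving $u_i'$ to $u_i$ only enlarges what is dominated among the representatives; this is a leaf-pushing argument in the style of \cref{lemma:ind:leaf}. Then each representative $v^\star$ being dominated forces some $u_i\in N[v]$, so $\{u_1,\dots,u_k\}$ is a colorful dominating set.

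\emph{Main obstacle.} The hard part is tuning the path lengths so that two things hold simultaneously for every $a,b$ with $\frac{a}{b}\geq 4$, handling all parity cases with integral lengths in a $b$-subdivision. First, a single solution vertex reaches $v^\star$ precisely for $u\in N[v]$ and no farther, in particular not via another representative. Second, the color balls stay disjoint. I would first try lengths $\ceil{\frac{a}{2b}}-2$ as in \cref{a:lemma:ind:3:hardness}, verify the two extreme distance inequalities, and add a twin vertex when $\ceil{\frac{a}{b}}$ is even. NP-hardness follows from the same polynomial reduction.
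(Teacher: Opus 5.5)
\paragraph{There is a genuine gap, already in the forward direction.} In your $G''$ the edges of $G$ are not kept. Each copy $u$ is the leaf end of its own attachment path of about $r-1=\lceil a/(2b)\rceil-2$ edges, which has length roughly $a-4b$ in $(G'')_{2b}$.

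You only check what each chosen $u_i$ dominates. But for $u\notin\{u_1,\dots,u_k\}$, the leaf $u$ itself must also be within distance $a$ of the solution, and so must $\hat u$ and the $\hat u$-ends of its connecting paths. Every route from $u$ to a chosen copy $u_i$ traverses both attachment paths, plus at least two $G''$-edges between $\hat u$ and $\hat u_i$ (via $x_i$ or via a representative). Hence $d(u,u_i)\gtrsim 2(a-4b)+4b=2a-4b$, which exceeds $a$ once $a>4b$. So a colorful dominating set of size $k$ does not give an $a$-walk dominating set of size $k$: the $n-k$ unchosen pendant copies stay uncovered. This is not a parity or rounding issue. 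It comes from placing the solution vertices at private pendant ends while keeping $V(G)$ far apart.

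\paragraph{The backward direction has the same kind of problem.} The color gadgets are not separated ``via the attachment paths''. Both $x_i$ and $x_j$ hang off terminals $\hat u,\hat w$, and these are joined by the representative paths $\hat u - v^\star - \hat w$ whenever $G$ has an edge between $V_i$ and $V_j$. For $u$ to reach $v^\star$ within $a$, these representative paths must be short (at most two $G''$-edges with your attachment length). Then $d(x_i,x_j)$ is bounded by a constant multiple of $b$, independent of $a$. For large $a/b$ a single vertex near a representative lies in several of the balls $U_i$, and your counting argument collapses. Finally, the step you call the main obstacle, choosing the lengths, is precisely the content of the proof, and you leave it open.

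\paragraph{The missing idea.} The paper reduces from plain \textsc{Dominating Set}, with no color gadgets or representatives, and the roles are the reverse of yours: solution vertices live on $G$, and pendant tips act as the constraints.
\begin{itemize}
\item \emph{Construction.} Keep $G$ with its edges. Hang on every $u$ a pendant path of $\lfloor a/(2b)\rfloor-1\ge 1$ edges ending in $u_1$. If $\lfloor a/b\rfloor$ is odd, add a vertex $u_2$ adjacent to $u_1$ and to its predecessor on the path.
\item \emph{Key distance.} In $(G'')_{2b}$, the tip $u_1$ (respectively the midpoint of $u_1u_2$) lies at distance $d$ from $u$ with $2b\le a-d<3b$.
\item \emph{Forward direction.} A dominating set $D$ of $G$ covers every tip through a vertex of $N[u]\cap D$. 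Everything on the $G$-part is covered automatically, since $a\ge 3b$.
\item \emph{Backward direction.} A vertex that $a$-dominates the tip of $u$ lies either on $u$'s pendant or within distance $<3b$ of $u$. In either case its nearest original vertex lies in $N[u]$. Rounding each solution vertex to its nearest original vertex therefore yields a dominating set of $G$ of size at most $k$.
\end{itemize}
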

\begin{proof}
Given a graph $G$ with a partition of $V(G)$ into $V_1,\dots,V_k$ and integer $k$,
	we construct a graph $G' \in \GG_b$ and set $k'=k$,
	such that $G$ has a dominating set of size $k$,
	if and only if $G'$ has an \dombarset{a} of size $k$.
The following construction is possible in polynomial time.

\emph{Construction:}
First, we construct an auxiliary graph $G''$.
Initially, let $V(G'')$ be $V(G)$.
For every vertex $u \in V(G)$,
	a new vertex $u_1$ and
	a $u,u_1$-path of length $\floor{\frac{a}{2b}}-1$.
If $\floor{\frac{a}{b}}$ is odd, we additionally add a vertex $u_2$
	and make it adjacent to $\{u_1\}\cup N(u_1)$.
Finally, let $G'$ be the $b$-subdivision of $G''$.

To show correctness, for the forward direction,
	consider a dominating set $D$ of $G$ of size $k$.
We claim that $D$ is also an $a$-dominating set of the $2b$-subdivision $(G'')_{2b}$
	and hence an \dombarset{a} of $G' = (G'')_b$.
Because $\frac{a}{b}\geq 3$, every vertex $\vv(u,v,\beta)$ for $\{u,v\}\in E(G)$ and $\beta \in \{0,\dots,2b\}$
	is $a$-dominated by $D$ in $(G'')_{2b}$.
In case $\floor{\frac{a}{b}}$ is even, it remains to show that $u_1$ for every $u \in V(G)$
	is $a$-dominated by $D$ in $G_{2b}$.
Indeed, since $u$ is dominated by $v\in \{u\}\cup N(u)$ in $G$,
	$v$ has distance to $u_1$ of at most $2b + (\floor{\frac{a}{2b}}-1)2b \leq a$ in $(G'')_{2b}$.
Else, in case $\floor{\frac{a}{b}}$ is odd, it remains to show that $\vv(u_1,u_2,b)$
	for $u \in E(G)$ is $a$-dominated by $D$ in~$G_{2b}$.
Indeed, since $u$ is dominated by $v\in \{u\}\cup N(u)$ in $G$,
	$v$ has distance to $\vv(u_1,u_2,b)$ of at most 
	$3b + (\floor{\frac{a}{2b}}-1)2b \leq 3b + (\frac{a}{2b}-\frac{1}{2}-1)2b = a$ in $(G'')_{2b}$.

For the backward direction,
	consider an \dombarset{a} $D'$ of the $b$-subdivision $G' = (G'')_b$,
	hence an $a$-dominating set $D'\subseteq V(G_b)$ of the $2b$-subdivision $(G'')_{2b}$.
For every vertex $u' \in D'$, let $u$ be a closest vertex in $V(G)$,
	and let $D = \{ u \mid u' \in D'\}$.
We claim that $D$ is a dominating set of $G$.
First consider that $\floor{\frac{a}{b}}$ is even.
Then for every vertex $u\in V(G)$, vertex $u_1$ in $(G'')_{2b}$
	is $a$-dominated by a vertex whose closest vertex in $V(G)$ is $u$
	or by a vertex $v'$ with distance to $u$ of at most
	$a - (\floor{\frac{a}{2b}}-1)2b < a - ((\frac{a}{2b}+\frac{1}{2})2b) = b$.
In the former case, $u \in D$.
In the latter case, $v'$ has distance $<b$ to a vertex $v \in N(u)$,
	and hence $v \in D$ dominates $u$.
Now consider that $\floor{\frac{a}{b}}$ is odd.
Then for every vertex $u\in V(G)$, vertex $u_1$ in $(G'')_{2b}$
	is $a$-dominated by a vertex whose closest vertex in $V(G)$ is $u$
	or by a vertex $v'$ with distance to $u$ of at most
	$a - (\floor{\frac{a}{2b}}-1)2b - b = b$.
In the former case, $u \in D$.
In the latter case, $v'$ has distance $<b$ to a vertex $v \in N(u)$,
	and hence $v \in D$ dominates $u$.
\end{proof}

\subsection{Covering for Irrational Distances}
\label{section:cov:irrational}

This section derives the following result, analogously to \cref{lemma:disp:wone:pw}.
The main task to give another initial reduction introducing a leeway in the solutions size,
	analogous to \cref{lemma:clique:leq:disp}.
Then the core tools, subdividing and translation, exists in almost the same format for covering.

\begin{theorem}[\cref{lemma:cover:irrational} restated]
\label{a:lemma:cover:irrational}
\ineditnote{\cref{lemma:disp:wone:pw}.}
\lemmaTextCoverIrrational
\end{theorem}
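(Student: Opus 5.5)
We follow the dispersion template of \cref{lemma:disp:wone:pw} and replace each tool by its covering counterpart: subdivision by \cref{lemma:cover:subdivide} and translation by \cref{lemma:cover:translate}. Applying \cref{lemma:cover:translate} $b$ times (by induction, as in \cref{a:lemma:disp:translate:b}) gives
$\cover{\delta}(G)=\cover{\frac{\delta}{1+2b\delta}}(G)-b|E(G)|$.
Since the translation lemma for covering needs no short-cycle restriction, no analogue of $\dispersionstar$ is required. We therefore define the sequences for covering by $a_i=a_{i-1}c_i$ and $b_i=b_{i-1}c_i+2$. This gives $\delta'_i=a_i/(2 b_i')$, where $b_i'$ is the dispersion $b_i$, so $\delta'=\delta/2=(2\sum_j 2^{-2^j})^{-1}$. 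The lower approximants are $\gamma'_i=\gamma_i/2$.

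The chain $\gamma'_{i-1}<\gamma'_i<\delta'<\delta'_i<\delta'_{i-1}$ then follows verbatim from \cref{lemma:disp:gamma:delta}. Efficient comparability and polynomial computability of $a_i,b_i$ follow from \cref{lemma:disp:computing:a:b}, because we only halve. The second reduction is the same computation as in \cref{lemma:disp:reduction:delta:gamma}. Given an instance $\langle G',k'\rangle$ of $c_i$-\covering, we output the $a_{i-1}$-subdivision $G''$ with budget $k''=k'+b_{i-1}|E(G')|$. The $b_{i-1}$-fold translation followed by subdivision sends $c_i$ to $\delta'_i$ and $c_i-1$ to $\gamma'_i$. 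Subdivision changes pathwidth by only a constant.

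The main step, which I expect to be the real obstacle, is an analogue of \cref{lemma:clique:leq:disp}. We need a reduction from \ColorfulClique with color classes of size $n$ (padded so that $c_i=2^{2^i}$ is a fixed multiple of $n$) to an instance $\langle G',k'\rangle$ of pathwidth $O(k)$. The yes/no answer must be the same for $c_i$-\covering and for $(c_i-1)$-\covering. By \cref{i:lemma:ds:to:cov}, it suffices to build a $2$-subdivided graph and argue about \dombarset{4c}-s versus \dombarset{(4c-4)}-s. One could equivalently argue directly with integer covering radii.

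I would first try a grid-tiling/clique style construction mirroring \cref{figure:disp:reduction:pathwidth}. For each color class, take a long path whose vertex at offset $\Theta(n)+6\Theta(j)$ encodes the choice $v^i_j$, with pendant paths at both ends that force covering centers near known positions. For each pair $i<i'$ and each edge $\{u^i_j,u^{i'}_{j'}\}$, add a gadget attached at distances chosen so that a single center covers the gadget only if the two path selectors sit exactly at offsets $j$ and $j'$. All lengths should be multiples of a large constant times $n$, with index spacing of about $8$. Then moving a center by at most one unit, or shrinking the radius by one, cannot flip which index is consistent. This gives the required leeway.

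Correctness then proceeds by a budget argument: each class path and each pair gadget needs exactly one center. Together with the slack, this shows that any cover at radius $c-1$ already gives a clique, while a clique gives a cover at radius $c$. Pathwidth is bounded as before: deleting the $O(k)$ endpoints of the class paths leaves trees of constant pathwidth. Combining the three reductions with $\gamma'_i<\delta'<\delta'_i$, which forces $\cover{\gamma'_i}(G'')\ge\cover{\delta'}(G'')\ge\cover{\delta'_i}(G'')$ to coincide, yields \wone-hardness and the $f(\pw)n^{o(\pw)}$ lower bound under \ETH.
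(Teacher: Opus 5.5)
\textbf{Gap: the second reduction does not produce $\delta_i/2$ and $\gamma_i/2$.}
The covering translation acts on reciprocal radii as $1/r\mapsto 1/r+2$, and $c$-subdivision acts as $1/r\mapsto (1/r)/c$. Hence every composition of the two acts as $1/r\mapsto (1/r+2T)/S$ for some integers $T\ge 0$, $S\ge 1$.

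In particular, $b_{i-1}$ translations followed by an $a_{i-1}$-subdivision send $c_i$ to $\frac{a_{i-1}c_i}{1+2b_{i-1}c_i}$. That is not your $\delta'_i=\frac{a_{i-1}c_i}{b_{i-1}c_i+2}$. For $i=2$, taking $b_0=2$ so that $b_i=2b'_i$, you get $64/321$ instead of $\delta_2/2=32/81$.

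No other choice of integers repairs this. An affine map on reciprocals is determined by two points. The one sending $1/c_i\mapsto 2/\delta_i$ and $1/(c_i-1)\mapsto 2/\gamma_i$ is $x\mapsto (x+b'_{i-1})/(a_{i-1}/2)$. It would need $T=b'_{i-1}/2$ translations, but $b'_{i-1}$ is odd, since $b'_0=1$ and every $c_j$ is even. So a source instance that is robust only between the integer radii $c_i$ and $c_i-1$ can never be mapped onto $\delta_i/2$ and $\gamma_i/2$. The ``verbatim'' nesting $\gamma'_i<\delta'<\delta'_i$ therefore has nothing to attach to.

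\textbf{Missing idea: halve the starting radius.}
The paper defines a $\delta$-cover$^\star$ to be a $\tfrac{\delta}{2}$-cover. Then the covering translation becomes $\tfrac{a}{b}$-cover$^\star(G)+|E(G)|=\tfrac{a}{a+b}$-cover$^\star(G)$ (\cref{lemma:covstar:translate}), and subdivision still multiplies the radius. These are exactly the dispersion identities. So the original $a_i,b_i,\delta_i,\gamma_i$ and the proof of \cref{lemma:disp:reduction:delta:gamma} carry over unchanged and give $\delta'=\delta/2$.

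The price is that the first reduction must be robust between covering radii $c_i/2$ and $(c_i-1)/2$, a half-integer. The paper gets this by rescaling its reduction, which is robust between $16n^4-1$ and $16n^4$ (\cref{lemma:gt:leq:coverstar}). Your plan would likewise be fixed by running your first reduction at the integer radius $c_i/2$ with leeway one; monotonicity of the covering number then handles $(c_i-1)/2$.

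\textbf{Secondary gap: the first reduction is only a sketch.}
The paper does not use \textsc{Colorful Clique} here. It reduces from $\leq$-\textsc{Grid Tiling}, adapting Feldmann and Marx, with one cycle of length $4a-8$ carrying four centers per tile. Your clique gadget leaves open how, within the same budget, every point on the gadget paths of the \emph{unselected} edges gets covered. Dispersion never has to handle this, and it is precisely what the covering construction must engineer.
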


Our starting point is \GridTiling.
The input are $k$ by $k$ `tiles' in form of
	a subset of tuples $S_{i,j} \subseteq [n]^2$ for every $i,j \in [k]$.
The task is to select a tuple for every tile
	such that corresponding entries for each column and each row are non-decreasing.

\problemdefSimple{\GridTiling}{
Integers $n,k$ and $\SSS$, a list of tuples $S_{i,j} \subseteq [n] \times [n]$
	for $i,j \in [k]$.}{
Are there tuples $(s_{i,j},t_{i,j}) \in S_{i,j}$, for every $i,j \in [k]$,
	such that $s_{i,j}\leq s_{i+1,j}$ for $i,i+1,j \in [k]$
	and $t_{i,j}\leq t_{i,j+1}$ for $i,j,j+1 \in [k]$?
	}

\GridTiling is \wone-hard for parameter $k$ and, assuming \ETH,
	has no $f(k) \cdot n^{o(k)}$ time algorithm, for any computable function $f$, see~\cite{bookParameterized}.
Our first reduction builds upon the reduction of Feldmann et al.~\cite{FeldmannM20}.
We adapt it to work for unit-distance graphs and to allow a leeway in the covering distance.

\begin{theorem}
\label{lemma:gt:leq:cover}
\ineditnote{\cref{lemma:clique:leq:disp}.}
There is a polynomial time reduction that,
	given a \GridTiling-instance $\langle n,k, \SSS \rangle$,
	outputs $\langle G',k'\rangle$, a graph $G'$ of pathwidth $\Oh(k)$ and integer $k'$, such that:
$\langle n,k,\SSS\rangle$ is a yes-instance of \GridTiling
	if and only if $\langle G', k'\rangle$ is a yes-instance of $16n^4$-$\covering$
	if and only if $\langle G', k'\rangle$ is a yes-instance of $(16n^4-1)$-$\covering$.
\end{theorem}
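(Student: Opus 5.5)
Following the template of \cref{a:lemma:clique:leq:disp}, I will work in the discrete world and output a $2$-subdivided graph. By \cref{lemma:ds:to:cov}, $\cover{c}(G')$ equals $\ogamma_{4c}(G'_2)$, i.e.\ an $8c$-dominating set in the $4$-subdivision (condition \ref{def:ds:3}). So it suffices to build a graph $H$ and a budget $k'$ such that three statements are equivalent: \GridTiling is a yes-instance; $H$ has a walk dominating set of size $k'$ for the larger radius $R=4\cdot 16n^4$; and $H$ has one for the smaller radius $R-4$. The forward direction will be proven for the small radius and the backward direction for the large radius. This is robust because covering is monotone in the radius. All gadget lengths will be multiples of a large unit $L=\Theta(n^4)$ plus offsets in steps of $\Theta(n^2)$, so a slack of $4$ (much smaller than any offset step) never changes which choices are feasible.

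The construction adapts Feldmann and Marx~\cite{FeldmannM20}. For every tile $(i,j)$ I add a gadget whose feasible solution vertices correspond to the tuples $(s,t)\in S_{i,j}$. Each such candidate vertex sits at the end of a hub path, and a forced vertex at a pendant path of length about $R$ makes sure exactly one candidate per tile is chosen. I will justify the forced vertex by a leaf argument analogous to \cref{lemma:ind:leaf}, adapted to domination: a pendant path of length $\ge R$ must contain a solution vertex, and this vertex can be pushed to the far end. Between horizontally adjacent tiles $(i,j)$ and $(i+1,j)$ I place a connector path of carefully chosen length. The candidate for $(s,t)$ is attached to it by paths of length $\ell_0+n^2 s$ on one side and $\ell_0-n^2 s$ on the other. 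Then the connector is fully covered within radius $R$ iff $s_{i,j}\le s_{i+1,j}$, with a margin of at least $n^2-O(1)\gg 4$ whenever the inequality fails. Vertical connectors encode $t$ in the same way. Choosing $n^2$ spacing, and $n^4$ for the overall radius, leaves room for these margins and keeps the unrelated distances long. I set $k'$ to the number of tiles plus the number of forced vertices.

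The proof then has three steps. (i) Forward: from a tiling solution, pick the forced vertices and the tuple candidates, and check by explicit path lengths that every edge of $H$ is covered with radius $R-4$. The connectors are covered because of monotonicity, and the hub paths are covered by their own candidate. (ii) Backward: given a solution of size $k'$ for radius $R$, normalize the forced vertices. Then a counting argument over disjoint balls, one around each tile hub, shows that exactly one remaining vertex lies near each tile. A pushing argument moves that vertex to a candidate endpoint without losing coverage. Finally, the covering of each connector midpoint forces the monotonicity inequality, because a violation leaves an uncovered gap of length $\ge n^2-4>0$. (iii) Pathwidth: delete the $O(k)$ vertices forming one column of tile hubs plus the connectors; the remainder is a union of trees of constant pathwidth. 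Sweeping the grid column by column then gives width $O(k)$.

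The main obstacle is step (ii), in particular the pushing and normalization for walk domination. Unlike independence, a solution vertex may sit in the middle of a long path and cover parts of two gadgets at once. I will handle this with the \ref{def:ds:2} characterization (distance to $D$ at most $R/2$, and no two adjacent vertices at distance exactly $R/2$). I will make every inter-gadget distance at least $R+n^2$ except along the intended hub-to-connector routes, so that each solution vertex can influence only its own tile and that tile's four connectors.
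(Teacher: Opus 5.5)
Your framework matches the paper's. You reduce \GridTiling to walk domination on a $2$-subdivided graph, pass to covering, prove the forward direction at the smaller radius and the backward direction at the larger one, and let the offset step absorb the slack. Your tile gadget is different, however, and as written the backward direction has a real hole.

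Write $\rho=R/2$ for the covering radius in $H$, $h$ for the spoke length and $L_c$ for the connector length. Your counting only places one solution vertex ``near'' each tile, and then you push it to a candidate. That push fails when the vertex sits on an attachment path. At distance $z$ from $c_\mu$ toward the connector to $(i+1,j)$, the vertex is $z$ closer to that connector and $z$ farther from the other three. With $z=n^2$, the connector to $(i+1,j)$ is covered even if $s_{i,j}=s_{i+1,j}+1$; the other three inequalities merely have to hold strictly. Moving the vertex back to $c_\mu$ then uncovers that connector's midpoint. Concretely, take a $2\times 2$ instance with one tuple per tile whose only violated inequality is $s_{1,1}=s_{2,1}+1$, and which has $t_{1,1}<t_{1,2}$. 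Your reduction maps this no-instance to a yes-instance.

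The fix is to confine each tile's vertex to within $h+O(1)$ of its hub, with $h>n^3+O(1)$ so that positions at the hub are infeasible. A confined vertex is then either on a spoke, where it is no closer than $c_\mu$ to any connector endpoint, or $O(1)$ past $c_\mu$, where it lends only $O(1)$ slack that the step $n^2$ absorbs.

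Your pendant does not provide this confinement. For domination, the vertex forced by a leaf must be pushed \emph{away} from the leaf, to distance exactly $\rho$ from it; moving it to the leaf loses coverage. After this correct normalization, a pendant of length about $R=2\rho$ is covered by that vertex alone up to the hub and constrains nothing else. A pendant of length about $3\rho-h$ does work: it leaves a stretch of length about $\rho-h$ next to the hub that only a vertex within $h+O(1)$ of the hub can cover.

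The paper avoids this issue structurally. Each tile is a cycle $C_{i,j}$ carrying four solution vertices, one per direction. Each is pinned into its own window by a pair of far-apart pendant vertices $\bar U^z_{i,j}$. Covering the cycle then forces the four to be evenly spaced, hence to share one index $\mu$, so no single vertex can trade slack between directions. The price is the synchronization-and-rounding argument.

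Your one-vertex design gets consistency across directions for free. Besides the confinement, though, the chosen candidate must cover every spoke and every attachment path of its tile. Your forward claim that hub paths are ``covered by their own candidate'' is wrong for non-chosen candidates, since they are not in the solution. Covering them requires roughly $\ell_0+h+n^3\le\rho-2$ alongside $2\ell_0+L_c=2\rho-4$. For example, $h=\rho/4$, $\ell_0=\rho/2$, $L_c=\rho-4$ with even offsets is an admissible choice.

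Also, $\ogamma_{4c}(G'_2)$ corresponds to a $4c$-dominating set of $G'_4$, not an $8c$-dominating set.
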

\begin{proof}
We prove the above statement for $32n^4$-\domset and $(32n^4-2)$-\domset
	with an output graph $G'_2$ that is a $2$-subdivision of a graph $G'$.
By \cref{lemma:ds:to:cov}, outputting graph $G'$ instead of $G'_2$
	yields the claimed result for $16n^4$-$\covering$ and $(16n^4-1)$-$\covering$.
Particularly, the pathwidth of $G'$ is at most the pathwidth of $G_2'$.

\medskip

\textit{Construction:}
Let $k'=4k^2$.
Let $a=32n^4$.
We assume in the following the (redundant) inequalities
	$6n^2 + (12n^4 +5 n^2) \leq (a-2)/4$, %
	and $a - 12n^4-8n^2-2\geq 4n^2$ %
	and $a-12n^4-2 > 12n^4$.
Otherwise $n$, and hence the number of distinct inputs,
	is bounded above by a constant
	and our reduction may output a trivial no or yes-instance depending on the input.

We define $G_2'$ as follows.
For every tile, defined by integers $i,j \in [k]^2$, we add a cycle $C_{i,j}$ on vertices
	$c_{i,j}^0,c_{i,j}^1,\dots,c_{i,j}^{4a-8}=c_{i,j}^0$ of length $4a-8$.
Let $\UU$ be the family of sets $U_{i,j}^z = \{ c_{i,j}^{z(a-2)+2}, c_{i,j}^{z(a-2)+12n^4} \}$ for every
	$i,j \in [k]$ and ${z\in\{0,1,2,3\}}$.
For every of the two vertices $c \in U_{i,j}^z$, we add a path
	between $c$ and new vertex $\bar c$.
Let $\bar \UU$ be the family of sets analogous to~$\UU$ but with the bar-version of vertices,
	which is $\bar U_{i,j}^{z} = \{\bar c_{i,j}^{z(a-2)+2}, \bar c_{i,j}^{z(a-2)+12n^4} \}$
	for $i,j \in [k]$ and $z\in\{0,1,2,3\}$.
For every `vertically' neighboring tiles, defined by integers $i,i+1,j \in [k]$,
	we add a path between new vertices $u_{i,j}^0$ and $u_{i+1,j}^2$ of length $a-12n^4-8n^2-2 \geq 0$. %
For every `horizontally' neighboring tiles, defined by integers $i,j,j+1 \in [k]$,
	we add a path between new vertices $u_{i,j}^1$ and $u_{i,j+1}^3$ of length $a-12n^4-8n^2-2$.
Consider $S_{i,j}$ with $m_{i,j}=m\leq n^2$ tuples $(s^1, t^1),\dots, (s^{m},t^{m})$,
	omitting subscripts $i,j$.
For $\mu \in [m]$, add
\begin{itemize}
\item
a path from $c_{i,j}^{12n^2\mu}$ to $u_{i,j}^0$ of length $12n^4 + 4(n^2-s^\mu)$,
\item
a path from $c_{i,j}^{(a-2)+12n^2\mu}$ to $u_{i,j}^1$ of length $12n^4 + 4(n^2-t^\mu)$,
\item
a path from $c_{i,j}^{2(a-2)+12n^2\mu}$ to $u_{i,j}^2$ of length $12n^4 + 4(n^2+s^\mu)$,
\item
a path from $c_{i,j}^{3(a-2)+12n^2\mu}$ to $u_{i,j}^3$ of length $12n^4 + 4(n^2+t^\mu)$,
\end{itemize}
This concludes the construction.
We note that the used vertices of the circle in the first bullet point,
	$c_{i,j}^{12n^2\mu}$ for $\mu \in [m]$,
	lie on the shortest path
	between $\bar c_{i,j}^{2}$ and $\bar c_{i,j}^{12n^4}$.
This is analogously the case for the other bullet points.

It is easy to see that $\langle G_2',k \rangle$ is polynomial-time computable.
Further, the output graph $G_2'$ is indeed a $2$-subdivision of a graph $G'$.
Regarding the pathwidth,
	let $G_{i,j}$ be the graph after removing the vertices $U_{i,j}=\{u^0_{i,j},\dots,u^3_{i,j}\}$.
Then the component $\Gamma_{i,j}$ containing the cycle $C_{i,j}$ in $G_{i,j}$
	has constant pathwidth
	since it is a vertex plus a tree where all degree $\geq 3$ vertices lie on a path.
Let $G''$ be graph resulting from $G'_2$ by, for each $i,j \in [k]$,
	removing $\Gamma_{i,j}$
	and contracting the four vertices $U_{i,j}$ to a single vertex $u_{i,j}$.
Then $G''$ as a subdivision of a $k$ by $k$ grid has a path-decomposition of width $\Oh(k)$.
Then we obtain a path-decomposition of $G'_2$ of width $\Oh(k)$ by the following extension:
We extend each bag containing $u_{i,j}$ to include $U_{i,j}$
	and at the first occurrence of a bag containing $U_{i,j}$
	we insert bags that follow the constant width path-decomposition of $\Gamma_{i,j}$.

\medskip

For the correctness, it suffices to show that
	\forward a grid-tiling solution of $\langle G,k\rangle$ implies a $(32n^4-2)$-dominating set of $G'$;
	and that \backward a $32n^4$-dominating set of $G'$ implies a grid-tiling solution of $\langle G,k\rangle$.

\forward
Let indices $\mu_{i,j}$ for every tile $i,j \in [k]$ define a solution of the \GridTiling-instance.
That is $(s_{i,j}^{\mu_{i,j}},t_{i,j}^{\mu_{i,j}}) = (s_{i,j},t_{i,j}) \in S_{i,j}$ for $i,j \in [k]$
	satisfy $s_{i,j}\leq s_{i+1,j}$ for $i,i+1,j \in [k]$
	and $t_{i,j}\leq t_{i,j+1}$ for $i,j,j+1 \in [k]$.
We claim that $D=\bigcup_{i,j\in[k]} D_{i,j}$
	with $D_{i,j} = \{ c_{i,j}^{z(a-2)+12n\mu_{i,j}} \mid z \in \{0,1,2,3\} \}$
	is a $(32n^4-2)$-dominating set.
Indeed, as for each cycle $C_{i,j}$ the vertices $D_{i,j}$ are consecutively in distance $a-2$,
	they $(a-2)$-dominate the cycle.
Further, $c_{i,j}^{z(a-2)+12n\mu_{i,j}}$ does $(a-2)$-dominate $\bar U_{i,j}^z$
	since its lies on a shortest path, of length $a-2$,
	between these two vertices of $\bar U_{i,j}^z$.
The distance from $D_{i,j}$ to $u_{i,j}^z$ for some $z \in \{0,1,2,3\}$
	is at most $(12n^4 +5 n^2) \leq (a-2)/4$.
The distance of any vertex on a path between $C_{i,j}$ and $u_{i,j}^z$
	is at most $(12n^4 +5 n^2) +6n^2  \leq (a-2)/4$. %
Thus the vertices $D_{i,j}$ do $(a-2)$-dominate every vertex between $C_{i,j}$
	and the vertices $u_{i,j}^z$ for $z \in \{0,1,2,3\}$.

Consider `vertically' connected cycles defined by indices $i,i+1,j \in [k]$.
Then $D_{i,j}$ and $D_{i+1,j}$ have distance
	$12n^4 + 4(n^2 - s^\mu_{i,j}) + (a - 24n^4-8n^2-2) + 12n^4 + 4(n^2 + s_{i+1,j}^{\mu'})
	= a - 2 - s_{i,j} + s_{i+1,j}$.
Since $s_{i,j} \leq s_{i+1,j}$, this distance is at most $a-2$
	and hence $D_{i,j}$ and $D_{i+1,j}$ do $(a-2)$-dominate the `vertical' $u_{i,j}^0$ to $u_{i+1,j}^2$ path.
Analogously, $D_{i,j}$ and $D_{i,j+1}$,
	for indices $i,j,j+1 \in [k]$,
	do $(a-2)$-dominate the `horizontal' $u_{i,j}^1$ to $u_{i,j+1}^3$ path.
In conclusion, $D$ does $(a-2)$-dominate every vertex of~$G'_2$.

\smallskip

\backward
Let $D \subseteq V(G')$ be an $a$-dominating set of $G'_2$ and of size $4k^2$.
We show that there is a solution the original \GridTiling-instance.
We claim that distinct $U,U' \in \bar \UU$
	have pairwise distance of more than $a$.
It suffices to observe that distinct $U,U' \in \UU$
	have pairwise distance of more than $12n^4$.
Distinct $U,U' \in \UU$ from the same cycle $C_{i,j}$
	have distance at least $a-12n^4-2 > 12n^4$.
Further, $U,U' \in \UU$ from distinct cycles
	have distance at least $2\cdot 12n^4+(a-24n^4-4n^2-2)= 64n^4-4n^2+6> 12n^4$.

For each set $U_{i,j}^z \in \UU$,
	an $a$-dominating set $D$ of size $4k^2$ contains exactly one vertex
	in distance $\leq \ahalf$ to both of its vertices.
The vertices $V_{i,j}^z$ satisfying this condition have distance $\leq 2$ to $c_{i,j}^{za-2}, c_{i,j}^{za-1},\dots, c_{i,j}^{za+12n^4+2}$,
	located on the cycle.
Indeed, any vertex not on a cycle $C_{i,j}$ has in its $12n^4-2$ radius
	at most one vertex in this cycle.
We observe that $V_{i,j}^z$ are disjoint for each $i,j\in[k]$ and $z\in\{0,1,2,3\}$,
	and hence $D$ contains exactly one vertex from each set $V_{i,j}^z$.
Now, since each cycle $C$ has length $4a-8$ and no shortcuts
	(i.e., any path connecting two vertices $u,v$ of $C$ has length at least the distance of $u,v$ within the cycle),
	the $4$ vertices of $D \cap C_{i,j}$
	have consecutively distance in the interval $\{a-8,a-7,\dots,a\}$.
More precisely, these $4$ vertices have distance $\leq 2$ to $c_{i,j}^{za+c}, z \in \{0,1,2,3\}$,
	respectively, for some $c=c(i,j) \in \{0,\dots,12n^4\}$ depending on~$i,j$.

Recall that the distance between any two cycles is at least $64n^4-4n^2+6$.
Hence index $c(i,j)$ is within distance $4n^2-6+2 < 6n^2$
	to the index $za+12n^2\mu$ for some $\mu \in [m_{i,j}]$,
	and consequently in distance $4n^2-6+2$ to exactly one $\mu=\mu(i,j) \in [m_{i,j}]$.
We may modify $D$ by setting $c(i,j)$ to $12n^2\mu(i,j)$,
	since the corresponding $4$ vertices still cover $C_{i,j}$
	and cover every vertex outside $C_{i,j}$ which was covered by the original vertices of $D\cap C_{i,j}$.

We claim for indices $i \in [k-1]$ and $j\in[k]$ that $s^{\mu(i,j)} \leq s^{\mu(i+1,j)}$.
Indeed, the distance between the vertices $c_{i,j}^{12n^2\mu(i,j)}$ and $c_{i+1,j}^{12n^2\mu(i+1,j)}$,
	defined by the path via $u^0_{i,j}$ and $u^3_{i,j}$ used in the `vertical' connection,
	is $a-24n^4-8n^2-2 + 2\cdot (12n^4 + 4 n^2) - 4s^{\mu(i,j)}_{i,j} + 4s^{\mu(i+1,j)}_{i+1,j} = a + 4(s^{\mu(i+1,j)}_{i+1,j} - s^{\mu(i,j)}_{i,j}) - 2$.
Assuming $s^{\mu(i,j)} > s^{\mu(i+1,j)}$ hence results in a distance larger than $a$,
	in contradiction to an $a$-dominating set $D$.
Analogously, for $j \in [k-1]$ and $i \in [k]$ we have $t^{\mu(i,j)} \leq t^{\mu(i+1,j)}$.
In conclusion, tuples $(s^{\mu(i,j)},t^{\mu(i,j})\in S_{i,j}$ form a solution
	for the original \GridTiling-instance.
\end{proof}

\newcommand{\coverstar}{cover${}^\star$\xspace}
\newcommand{\coverstarr}[1]{#1\text{-}\textrm{cover}^\star}
\newcommand{\coveringstar}{\textsc{Covering}${}^\star$\xspace}

To simplify the following reduction, we tweak the notion of covering as follows.
For a real~$\delta$, a subset of points $S \subseteq P(G)$
	is a $\delta$-\coverstar of $G$ if it is a $\frac{\delta}{2}$-cover of $G$.
Let $\delta$-\coveringstar be the decision problem, given a graph $G$ and integer $k$,
	to decide whether $\delta$-cover${}^\star(G) \leq k$.
Hence we may restate \cref{lemma:gt:leq:cover}
	with some more leeway as follows.

\begin{corollary}
\label{lemma:gt:leq:coverstar}
There is a polynomial time reduction that,
	given a \GridTiling-instance $\langle n,k, \SSS \rangle$,
	outputs $\langle G',k'\rangle$, a graph $G'$ of pathwidth $\Oh(k)$ and integer $k'$, such that:
$\langle n,k,\SSS\rangle$ is a yes-instance of \GridTiling
	if and only if $\langle G', k'\rangle$ is a yes-instance of $2^8n^4$-\coveringstar
	if and only if $\langle G', k'\rangle$ is a yes-instance of $(2^8n^4-1)$-\coveringstar.
\end{corollary}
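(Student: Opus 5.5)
The plan is to obtain the corollary from \cref{lemma:gt:leq:cover} by rescaling the output graph through subdivision, using \cref{lemma:cover:subdivide}. By definition, a $\delta$-\coverstar is a $\frac{\delta}{2}$-cover. The target distances are therefore $2^8n^4$-\coveringstar, which is $2^7n^4$-\covering, and $(2^8n^4-1)$-\coveringstar, which is $(2^7n^4-\tfrac12)$-\covering. \cref{lemma:gt:leq:cover} gives us distances $16n^4=2^4n^4$ and $16n^4-1$. Multiplying these by $c=8$ gives $2^7n^4$ and $2^7n^4-8$.

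\textbf{Step 1.} Apply the reduction of \cref{lemma:gt:leq:cover} to $\langle n,k,\SSS\rangle$ to obtain $\langle G',k'\rangle$.

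\textbf{Step 2.} Output $\langle G'_8,k'\rangle$, where $G'_8$ is the $8$-subdivision of $G'$. \cref{lemma:cover:subdivide} gives $\cover{\delta}(G')=\cover{8\delta}(G'_8)$. Hence a yes-instance of \GridTiling corresponds to $\cover{2^7n^4-8}(G'_8)\le k'$. A no-instance corresponds to $\cover{2^7n^4}(G'_8)>k'$.

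\textbf{Step 3.} The covering number is monotone non-increasing in the covering radius. Since $2^7n^4-8\le 2^7n^4-\tfrac12\le 2^7n^4$, both $(2^7n^4-\tfrac12)$-\covering and $2^7n^4$-\covering on $\langle G'_8,k'\rangle$ then have the same answer as the \GridTiling instance. By the definition of \coverstar, these are exactly the two claimed \coveringstar problems. Subdividing edges changes the pathwidth by at most an additive constant, so it remains $\Oh(k)$. The construction is polynomial because $8$ is a constant.

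The only point needing care is Step 3. The original reduction produces a leeway of exactly one unit in the covering radius. After scaling by $8$, this becomes a leeway of $8$ units, which comfortably contains the required window of $\tfrac12$ unit in covering radius (equivalently one unit in \coverstar distance). The argument relies on monotonicity of the covering number, which is immediate because a $\delta$-cover is also a $\delta'$-cover for every $\delta'\ge\delta$.
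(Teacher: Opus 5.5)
Your proposal is correct and matches the argument the paper leaves implicit. The paper gives no separate proof, only the remark that this restates \cref{lemma:gt:leq:cover} ``with some more leeway'', which amounts to rescaling the output by an $8$-subdivision via \cref{lemma:cover:subdivide} and using that the covering number is non-increasing in the radius. Your constants check out: $2^8n^4$-\coveringstar is $2^7n^4$-\covering, and the scaled window $[2^7n^4-8,\,2^7n^4]$ contains $2^7n^4-\tfrac12$, which is the radius corresponding to $(2^8n^4-1)$-\coveringstar.
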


Further the subdivision lemma translation lemma \cref{lemma:cover:translate} yields
	the following translation analogous to $\delta$-dispersion.
Hence there is a reduction of $\delta$-covering
	analogous to the reduction of $\delta$-dispersion
	in \cref{lemma:disp:reduction:delta:gamma}.

\begin{observation}
\label{lemma:covstar:translate}
${\tfrac{a}{b}}$-cover${}^\star(G) + |E(G)| = {\tfrac{a}{a+b}}$-cover${}^\star(G)$.
\end{observation}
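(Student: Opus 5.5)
The statement to prove is ${\tfrac{a}{b}}$-cover${}^\star(G) + |E(G)| = {\tfrac{a}{a+b}}$-cover${}^\star(G)$. The plan is to unfold the definition of cover${}^\star$ and then apply \cref{lemma:cover:translate} once.

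By definition, a $\delta$-\coverstar of $G$ is exactly a $\tfrac{\delta}{2}$-cover of $G$. Hence
\[
{\tfrac{a}{b}}\text{-cover}^\star(G) = \cover{\tfrac{a}{2b}}(G)
\qquad\text{and}\qquad
{\tfrac{a}{a+b}}\text{-cover}^\star(G) = \cover{\tfrac{a}{2(a+b)}}(G) = \cover{\tfrac{a}{2a+2b}}(G).
\]
So it suffices to show $\cover{\tfrac{a}{2b}}(G) + |E(G)| = \cover{\tfrac{a}{2a+2b}}(G)$.

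Next, apply \cref{lemma:cover:translate} in the form $\cover{\tfrac{a'}{b'}}(G) + |E(G)| = \cover{\tfrac{a'}{2a'+b'}}(G)$ with $a' = a$ and $b' = 2b$. This gives $\cover{\tfrac{a}{2b}}(G) + |E(G)| = \cover{\tfrac{a}{2a+2b}}(G)$, which is exactly the required identity.

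If one prefers the real-valued version $\cover{\delta}(G)+|E(G)|=\cover{\tfrac{\delta}{1+2\delta}}(G)$, take $\delta = \tfrac{a}{2b}$. Then $\tfrac{\delta}{1+2\delta} = \tfrac{a/(2b)}{1+a/b} = \tfrac{a}{2(a+b)}$, and the same conclusion follows.

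The only step needing care is that \cref{lemma:cover:translate} is stated for integers $a,b$ with no coprimality or parity condition. Here the denominator $2b$ is even, so the lemma applies directly and no case distinction is needed.

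The lemma also carries no hypothesis about short cycles, unlike its dispersion analogue \cref{lemma:disp:translate}. So the identity holds for every graph $G$, apart from the standing assumption that $G$ has no isolated vertices.

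Nothing here is a real obstacle; the proof is a definitional rewriting followed by a parameter substitution. The purpose of the observation is to make the covering translation look like \cref{lemma:disp:translate}, namely $\delta \mapsto \tfrac{\delta}{1+\delta}$. Iterating it $b_{i-1}$ times, as in \cref{lemma:disp:translate:b}, lets the proof of \cref{lemma:disp:reduction:delta:gamma} be reused verbatim for cover${}^\star$.
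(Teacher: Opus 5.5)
Your proof is correct and is exactly the paper's argument: unfold cover${}^\star$ into $\cover{\tfrac{a}{2b}}(G)$ and $\cover{\tfrac{a}{2a+2b}}(G)$, then apply \cref{lemma:cover:translate} once with denominator $2b$. Your remark that the evenness of $2b$ is what lets the lemma apply is unnecessary, since the lemma holds for all integers $a,b$ (indeed for every real $\delta>0$).
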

\begin{proof}
${\tfrac{a}{b}}$-cover${}^\star(G) + |E(G)|
	= {\tfrac{a}{2b}}$-cover$(G) + |E(G)|
	= {\tfrac{a}{2a+2b}}$-cover$(G) = {\tfrac{a}{a+b}}$-cover${}^\star(G)$,
	by applying \cref{lemma:cover:translate}.
\end{proof}

\begin{observation}
\label{lemma:covstar:subdivide}
For every real $\delta > 0$ and integer $c \geq 1$,
$\delta$-cover${}^\star(G) = c\delta$-cover${}^\star(G_b)$.
\end{observation}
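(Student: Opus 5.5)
This follows by unfolding the definition of a $\delta$-\coverstar and invoking \cref{lemma:cover:subdivide}. Note that the statement writes $G_b$, but the intended graph is the $c$-subdivision $G_c$; with $G_b$ for an unrelated $b$ the claim would be false. So we prove
\[
\coverstarr{\delta}(G) = \coverstarr{c\delta}(G_c).
\]

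First, a set $S \subseteq P(G)$ is a $\delta$-\coverstar of $G$ exactly when it is a $\tfrac{\delta}{2}$-cover of $G$. Taking minima over all such sets gives $\coverstarr{\delta}(G) = \cover{\tfrac{\delta}{2}}(G)$.

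Second, apply \cref{lemma:cover:subdivide} with the real $\tfrac{\delta}{2} > 0$ and the integer $c \geq 1$. This yields $\cover{\tfrac{\delta}{2}}(G) = \cover{c\tfrac{\delta}{2}}(G_c) = \cover{\tfrac{c\delta}{2}}(G_c)$.

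Third, read the definition of \coverstar backwards for the graph $G_c$ and the distance $c\delta$. A $\tfrac{c\delta}{2}$-cover of $G_c$ is precisely a $c\delta$-\coverstar of $G_c$, so $\cover{\tfrac{c\delta}{2}}(G_c) = \coverstarr{c\delta}(G_c)$. Chaining the three equalities proves the observation.

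There is no real obstacle. The only points to check are that \cref{lemma:cover:subdivide} is stated for every real $\delta>0$, so it applies to $\tfrac{\delta}{2}$, and that the graph in the statement is read as $G_c$. This is exactly parallel to how \cref{lemma:covstar:translate} was derived from \cref{lemma:cover:translate}.
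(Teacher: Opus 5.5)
Your proof is correct, and you are right that $G_b$ is a typo for $G_c$. The paper states this observation without proof, but the intended argument is exactly your definition-unfolding through the covering subdivision lemma applied at distance $\tfrac{\delta}{2}$. It parallels the paper's one-line proof of the translation analogue.
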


\begin{lemma}
\label{lemma:cover:reduction:delta:gamma}
\ineditnote{\cref{lemma:disp:reduction:delta:gamma}.}
Let $a_i, b_i$ be defined by \Cref{i:def:delta} and sequence $(c_i)_{i\geq 1}$.
There is a polynomial time reduction that,
	given integers $c_i,k'$ and a graph $G'$,
	outputs a subdivision $G''$ of $G'$ and integer $k''$, such that:
$\langle G',k'\rangle$ is a yes-instance of $c_i$-\coveringstar,
	if and only if $\langle G'',k''\rangle$ is a yes-instance of $\delta_i$-\coveringstar.
$\langle G',k'\rangle$ is a yes-instance of $(c_i-1)$-\coveringstar,
	if and only if $\langle G'',k''\rangle$ is a yes-instance of $\gamma_i$-\coveringstar.
\end{lemma}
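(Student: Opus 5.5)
I will mirror the proof of \cref{lemma:disp:reduction:delta:gamma}, replacing the dispersion tools by the cover${}^\star$ tools \cref{lemma:covstar:translate} and \cref{lemma:covstar:subdivide}. First I handle border cases. If $G'$ has no edges, the answer is trivial, and I output a trivial instance with the same answer for both distances. Otherwise I compute $a_{i-1}$ and $b_{i-1}$ in polynomial time from $c_i$, using \cref{lemma:disp:computing:a:b}; the point is that $a_{i-1}$ is polynomial in $c_i$. I may assume $c_i$ is polynomial in $|V(G')|$, which holds in the intended application where $c_i\approx 2^8n^4$. If this fails, covering with radius $c_i/2\ge |V(G')|$ is trivial: any single point of a connected component covers it. So I can decide both instances directly and output a trivial instance. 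In the main case I output $G''\coloneqq G'_{a_{i-1}}$, the $a_{i-1}$-subdivision of $G'$, with budget $k''\coloneqq k'+b_{i-1}|E(G')|$.

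For correctness I first apply \cref{lemma:covstar:translate} repeatedly, $b_{i-1}$ times, starting from $\frac{c_i}{1}$. By induction, as in \cref{a:lemma:disp:translate:b}, one step maps $\frac{x}{y}\mapsto\frac{x}{x+y}$; with $x=c_i$ fixed this gives $\frac{c_i}{1+b_{i-1}c_i}$. Hence
\[
c_i\text{-cover}^\star(G') + b_{i-1}|E(G')| = \tfrac{c_i}{1+b_{i-1}c_i}\text{-cover}^\star(G').
\]
No cycle-length restriction is needed here, unlike for dispersion, since \cref{lemma:cover:translate} holds for all graphs. Next, \cref{lemma:covstar:subdivide} with subdivision factor $a_{i-1}$ turns this into $\frac{a_{i-1}c_i}{1+b_{i-1}c_i}$-cover${}^\star(G'')$, and this radius is exactly $\frac{a_i}{b_i}=\delta_i$. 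Thus $c_i$-cover${}^\star(G')\le k'$ if and only if $\delta_i$-cover${}^\star(G'')\le k''$. The identical computation with $c_i-1$ in place of $c_i$ gives the radius $\frac{a_{i-1}(c_i-1)}{1+b_{i-1}(c_i-1)}=\frac{a_i-a_{i-1}}{b_i-b_{i-1}}=\gamma_i$, with the same shift $b_{i-1}|E(G')|$. So the same output $\langle G'',k''\rangle$ works for the second equivalence.

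The main point needing care is the iterated translation. \cref{lemma:cover:translate} is stated for $\frac{a}{b}$ with integers, and I apply it with $x=c_i$ or $x=c_i-1$ and growing integer $y$. Since $x$ stays fixed and $y$ increases by $x$ each step, the formula composes cleanly. I also check that $c_i-1\ge1$, so the $\gamma_i$ case is meaningful. Finally, subdivision keeps $G''$ polynomial, since $a_{i-1}$ is polynomial in $c_i$, and it changes pathwidth only by an additive constant.
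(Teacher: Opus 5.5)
Your proposal is correct and is essentially the paper's proof, which just reruns the construction of \cref{lemma:disp:reduction:delta:gamma} (the $a_{i-1}$-subdivision with budget $k'+b_{i-1}|E(G')|$) with \cref{lemma:covstar:translate} and \cref{lemma:covstar:subdivide} in place of the dispersion tools. You also spell out details the paper leaves implicit: the iterated translation $\frac{x}{y}\mapsto\frac{x}{x+y}$ landing at $\delta_i$ and $\gamma_i$, and that no short-cycle hypothesis is needed. In your large-$c_i$ border case, add one observation: radius $\frac{c_i-1}{2}\ge |V(G')|-\frac12$ still lets one vertex cover its whole component, so both answers coincide and a single trivial output serves for both.
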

\begin{proof}
The proof follows analogously with the same construction as \cref{lemma:disp:reduction:delta:gamma}.
This is due to whenever the proof uses \cref{lemma:disp:subdivide}
	we may use \cref{lemma:covstar:subdivide} instead,
	and whenever the proof uses \cref{lemma:disp:translate}
	we may use \cref{lemma:covstar:translate}.
\end{proof}

\begin{proof}[Proof of \cref{a:lemma:cover:irrational}]
In the following, we show the statement for $\delta$-\coveringstar with $\delta = \frac{\delta'}{2}$.
Then the actual statement follows for $\delta'$-\covering.

Let $\delta$ be defined by integer sequence $c_i = 2^{2^i}$ for $i\geq 1$.
Then $2\delta$, and hence $\delta$, is efficiently comparable by \cref{lemma:disp:computing:a:b}.
Consider a \GridTiling-instance $\langle n,k,\SSS \rangle$ with a universe of size $\hat n$.
Let $i$ be such that $\hat n \leq (c_i/2^8)^{1/4} = 2^{2^{i-2}-2} \eqqcolon n$, hence $2^8n^4=c_i$.
Note that $c_{j+1} = {c_j}^2$, for $j\geq 0$,
	and hence $n \leq {\hat n}^2$.
Thus $n,c_i$ are polynomial time computable and $n,c_i$ are polynomial in $\hat n$.
Equivalently to $\langle \hat n,k,\SSS \rangle$ we may consider
	the instance $\langle n,k,\SSS \rangle$.
We apply the reduction of \cref{lemma:gt:leq:cover} to $\langle n,k,\SSS \rangle$,
	and apply on the output $\langle G',k'\rangle$ the reduction of \cref{lemma:cover:reduction:delta:gamma}
	which outputs $\langle G'',k''\rangle$,
	and finally output $\langle (G'')_2,k''\rangle$.

We note that the reductions of \cref{lemma:gt:leq:cover} and \cref{lemma:cover:reduction:delta:gamma}
	are polynomial time computable.
The former outputs a graph of pathwidth $\Oh(k)$,
	the latter does not change the pathwidth up to subdividing.
Hence the overall algorithm outputs a graph $G''$ of pathwidth $\Oh(k)$.

By \cref{lemma:gt:leq:cover} and \cref{lemma:cover:reduction:delta:gamma},
	$\langle G,k\rangle$ is a yes-instance of \GridTiling,
	if and only if $\langle G'',k''\rangle$ is a yes-instance of $\delta_i$-\coveringstar,
	if and only if $\langle G'',k''\rangle$ is a yes-instance of $\gamma_i$-\coveringstar.
By \cref{lemma:disp:gamma:delta} we have $\gamma_i < \delta < \delta_i$
	and hence the covering numbers satisfy
	$\coverstarr{\gamma_i}(G'')=\coverstarr{\delta}(G'')=\coverstarr{\delta}(G'')$. 	
Thus the output $\langle G'',k'' \rangle$ is a yes-instance of $\delta$-\coveringstar,
	if and only if $\langle G,k \rangle$ is a yes-instance of \GridTiling.

Finally we use that $\langle G'',k'' \rangle$ is a yes-instance of $\delta$-\coveringstar
	if and only if $\langle (G'')_2,k'' \rangle$ is a yes-instance of $\delta$-\covering,
	hence if and only if $\langle G,k \rangle$ is a yes-instance of \GridTiling.

Now since \GridTiling is \wone-hard parameterized by $k$,
	also $\frac{\delta}{2}$-\covering is \wone-hard parameterized by the pathwidth of the input graph.
For the lower bound under \ETH, assume an $f(\pw(G)) \cdot n^{o(\pw(G))}$ time algorithm for $\delta$-\coveringstar
	for any computable function~$f$.
Then using the above reduction on a \GridTiling-instance
	yields an $f(k) \cdot n^{o(k)}$ time algorithm for \GridTiling, in contradiction to \ETH.
\end{proof}

\subsection{Domination for Rational Distances}
\label{section:ds:treewdith}

This section derives the complete classification of the complexity of
	\dombarset{a} on $b$-subdivided graphs
	as well as $\frac{a}{b}$-\covering
	on graphs of bounded treewidth
	for all values $a,b$.
The following two statements are the technical results needed.
First, we give a lower bound for $a=4$
	on general graphs and a lower bound for even $a\geq 6$ on $2$-subdivided graphs.

\begin{lemma}[\cref{section:ds:ppseth:lb}]
\label{lemma:dom:lb:6}
\label{lemma:ds:ppset:lb}
\ineditnote{\cref{lemma:alpha:sub:lower:bound}}
Assume SETH and an $\varepsilon>0$.
Then, for every even $a\geq 4$,
	 $a$-\DomSet has no $O( (a-\varepsilon)^{\pw(G)} )$ time algorithms,
	 in case $a\geq 6$, even on $2$-subdivided graphs.
\end{lemma}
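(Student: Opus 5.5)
We follow the template of \cref{a:lemma:is:ppset:lb}: we assume \ppseth (which is implied by \SETH) and reduce from \technicalProblem with alphabet $[a]$, using \cref{lemma:lampis:technical}. It suffices to give a polynomial-time reduction that outputs a graph $G$ of pathwidth $p+O(a^2\log p+a^4)$ and a budget $k$ such that $\varphi$ has a monotone, satisfying multi-assignment consistent on $V_2$ if and only if $\ogamma_a(G)\le k$. For $a\ge 6$ the graph must additionally be a $2$-subdivision. An $O((a-\varepsilon)^{\pw(G)})$ algorithm would then solve \technicalProblem in time $(a-\varepsilon)^{p}\cdot|\varphi|^{O(1)}$, a contradiction. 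Throughout we work with definition \ref{def:ds:3}: $D$ is an \dombarset{a} of $G$ iff $D$ $a$-dominates all vertices of $G_2$ (original vertices and edge-midpoints).

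\emph{Variable encoding.} For each variable $i$ we build a long path $P_i$, exactly as in \cref{lemma:is:ppseth:reduction}: a single path for $i\in V_2$, and one path spanning the bags $B_{\tau_i},\dots,B_{\tau_i+t_i-1}$ for $i\in V_1$. We measure lengths so that, in $G_2$, one block of $P_i$ has $2a+1$ positions. Domination of a path with one dominator per block then forces the chosen vertex to sit at one of $a$ offsets inside each block. The count of $a$ comes from the $2a+1$ positions in $G_2$ together with the parity constraint that dominators lie on original vertices. The budget forces exactly one dominator per block, so the offset encodes $\sigma(i,\tau)\in[a]$. Offsets can only shift in one direction between consecutive blocks, which gives monotonicity; for $V_2$ the variable does not span multiple bags, which gives consistency.

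\emph{Gadgets.} We need domination analogues of the super edge and the blocker. The analogue of the blocker is a pendant path whose far end must be dominated. Such a path \emph{forces} a dominator at a prescribed distance, and this is what lets us check the positions of the chosen vertex. Each forced vertex costs one unit of budget and dominates a ball of radius $a$ in $G_2$. For every constraint $\mu$ with $\tau=f(\mu)$ we add a clique-like selector of vertices $y_{\tau,\sigma}$, $\sigma\in C_\mu$. A common pendant vertex forces exactly one $y_{\tau,\sigma}$ into $D$. For each $i\in X_\mu$ and each position $\alpha'$ in the relevant block, we attach a ``test vertex'' hanging off $p_i^{\alpha}$ at distance tuned so that it is $a$-dominated either by the path dominator (when $\alpha'$ is close to the encoded value) or by $y_{\tau,\sigma}$ (when $\sigma(i)$ is close to $\alpha'$). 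We use two kinds of attachments, as in the independent-set proof: one at distance offset by $1$ through a length-$2$ detour, so that even and odd values are both pinned down. With both kinds present, the test vertices are all dominated iff $\sigma(i)$ equals the encoded value. Short cycles are irrelevant for domination, which makes this part simpler than before. For $a=4$ we do not need $2$-subdivided output, so the attachments may be made at any position and the two-kinds trick is unnecessary. The pathwidth bound is the same bag-by-bag argument as before: each bag gets $O(a\log p)$ $V_2$-gadgets plus at most $a^4$ selector vertices.

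\emph{Main obstacle.} The hard step is the backward direction: showing that a dominating set within budget must be ``canonical''. That is, exactly one vertex per block of each $P_i$ and exactly one selector per constraint, with the forced pendant dominators in place. Domination lacks the clean exchange arguments available for independence, so we first prove local replacement claims of the kind used for leaves in \cref{lemma:ind:leaf}. Any dominator in a gadget's private region can be moved to the canonical position without losing domination. After that, a counting argument over disjoint ``private balls'' (sets that no single vertex can $a$-dominate two of) matches the budget exactly and yields the canonical form. We then read off $\sigma$ and check the constraints using the test vertices.
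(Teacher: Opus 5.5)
Your overall architecture matches the paper's. You reduce from Lampis's \textsc{Pathwidth $a$-Ary CSP}, encode $\sigma(i,\tau)$ as the offset of the unique dominator in a length-$a$ block of $P_i$, force vertices with pendant paths (the paper's $1$-/$2$-marked vertices), select one $y_{\tau,\sigma}$ per constraint, and finish by counting. The gap is in the constraint-checking gadget, which is the one part that does not carry over from the independent-set reduction.

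\textbf{The disjunction points the wrong way.} Suppose the test vertex for $(i,\alpha')$ is dominated by the path dominator when the encoded value is near $\alpha'$, or by the selected $y_{\tau,\sigma}$ when $\sigma(i)$ is near $\alpha'$. Then in the intended solution (encoded value $=\sigma(i)$), every test vertex at a position far from $\sigma(i)$ is undominated, so the forward direction already fails. A domination requirement is an ``at least one of'' condition, so the selector must cover a test exactly when the selected assignment does \emph{not} ask for that value. The paper realises ``$y_{\tau,\sigma}\in D$ implies a dominator at $p^{\sigma(i)}$'' as follows. A path $Q_{\tau,\sigma,i,z}$ of length $a-4$ runs from $p^{\sigma(i)}$ to a vertex $x_{\tau,\sigma,i}$. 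That vertex is joined by length-$2$ paths to every $y_{\tau,\sigma'}$ with $\sigma'\neq\sigma$, but not to $y_{\tau,\sigma}$. The edge $\{q_{a/2-1},q_{a/2}\}$ then has an endpoint within distance $\tfrac a2-1$ of $p^{\sigma(i)}$ and of each such $y_{\tau,\sigma'}$, but not of $p^{\sigma(i)\pm1}$ or of $y_{\tau,\sigma}$.

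\textbf{The ``two kinds of attachments'' do not transfer, even with the orientation fixed.} In the independent-set proof, the $\pi$-gadget works because a conflict is a universal condition (no selected vertex in a window). A window reaching into a neighbouring block therefore only adds restrictions. Here, attachments sit at even positions, so an odd value $\alpha$ can only be tested at $p^{\alpha-1}$ and $p^{\alpha+1}$. These are two separate existential conditions, ``some dominator within distance $1$''. For $\alpha\in\{1,a-1\}$, one of these windows reaches into the adjacent block and can be met by that block's dominator, so the offset is not pinned. For example, $\alpha=1$ passes with the previous block at offset $a$ and the current block at offset $3$, which is a legal monotone step.

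This is exactly why the paper repeats every test in three consecutive blocks $z\in\{0,1,2\}$ of each bag. It then uses the one-dominator-per-block count to force both witnesses to be the middle block's dominator. Your claim that ``the test vertices are all dominated iff $\sigma(i)$ equals the encoded value'' is the unproved step, and for a single tested block it is false.

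\textbf{Small $a$.} ``Distance tuned'' hides that the generic lengths degenerate for small $a$. The paper needs separate gadgets, with extra forced vertices and budget, for $a\in\{6,8,10\}$.

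\textbf{Two smaller points.} First, consistency on $V_2$ does not come from these variables avoiding multiple bags; they do span many bags. It comes from representing each such variable by one constant-length path, shared by all its constraints and placed in every bag. Second, short cycles are not irrelevant: projecting a stray dominator onto a unique vertex of $P_i$ uses that the middle edge of each block lies on no cycle of length at most $a$.
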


Further, we give an upper bound for every $a\geq 2$.
Previously, only an upper bound for odd $a$ was known,
	by the work of Borradaile et al.~\cite{BorradaileL16}.

\newcommand{\lemmaTextBoundsDsClassic}{
	For $a \geq 2$,
	given a tree decompostion of width $t$ of an $n$ vertex input graph,
	$a$-$\DS$ can be solved in time $a^{t} \cdot n^{O(1)}$.
}

\begin{theorem}[\cref{section:ds:domination}]
\label{lemma:ds:classic}
\lemmaTextBoundsDsClassic
\end{theorem}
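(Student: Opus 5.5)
We will give a dynamic program over a nice tree decomposition of width $t$, working with characterization \ref{def:ds:2} of an $a$-walk dominating set. By \cref{lemma:ds:equivalent}, $D$ is $a$-walk dominating iff every non-isolated vertex $u$ satisfies $d(u,D)\le \ahalf$, and the vertices with $d(u,D)=\ahalf$ form an independent set. For odd $a$ the second condition is vacuous, since distances are integers and $d(u,D)\le \lfloor a/2\rfloor$ is forced. So the problem is distance-$\lfloor a/2\rfloor$ domination, the known setting of Borradaile and Le, where the states $\{0,\pm1,\dots,\pm\lfloor a/2\rfloor\}$ number exactly $a$. The new case is even $a=2r$. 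Here every vertex needs distance at most $r$, and no two adjacent vertices may both have distance exactly $r$.

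Each bag vertex $v$ gets a state recording its intended final distance $d(v,D)\in\{0,\dots,r\}$. For $d\ge1$ we also record whether the distance is already witnessed in the processed part, i.e.\ whether $v$ has a neighbour of distance $d-1$ there. Distance $0$ needs no witness flag, so the naive count is $1+2r=a+1$. To get down to $a$ we use the observation that a vertex of final distance $r$ imposes nothing on its neighbours except that they are not at distance $r$. Hence we merge the states ``$r$, witnessed'' and ``$r$, unwitnessed'' into one state meaning ``distance at least $r$, not yet satisfied at distance $<r$''. Whether it is witnessed at $r$ can be checked once at the forget node against the stored neighbourhood information. This is not quite right as stated, though. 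Instead we will mirror the Borradaile--Le trick, interpreting states as ``$\le d$ already achieved'' versus ``needs some neighbour of value $d-1$ later'', with monotone (relaxed) semantics. With this trick the values $r$-satisfied and $r$-requested become distinguishable only through the independence condition, which can be verified edge-by-edge at introduce-edge nodes since both endpoints are in the bag.

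The node operations are as follows. An introduce node assigns a state. An introduce-edge node $\{u,v\}$ checks $|d(u)-d(v)|\le1$. It upgrades the witness flag of whichever endpoint has the larger value, and rejects if both values equal $r$ (for even $a$). A forget node discards only states whose witness is fulfilled or whose value is $0$. A join node combines witness flags by OR. Under the relaxed ``at most'' semantics, the join is a subset-convolution/OR-product over the flags. We will compute it in $a^{t}\cdot n^{O(1)}$ time via the standard state-change (zeta/Möbius-type) transform, as in van Rooij et al.\ and Borradaile--Le, with the correct handling of the extremal value $r$.

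The main obstacle is the counting: reaching exactly $a$ states, not $a+1$, per bag vertex while still enforcing the independence condition on value-$r$ vertices, and keeping the join fast. I would first try the relaxed semantics, where ``value $r$'' means ``at most $r$, no promise''. Under it, two adjacent vertices of value $r$ are only forbidden when both are truly at distance $r$. This is checked at the edge node by requiring that one of them has a smaller true value, i.e.\ by reclassifying one endpoint as unwitnessed $r-1$. Verifying that this relaxation stays sound and that the join transform remains valid is the step I expect to take the most care.
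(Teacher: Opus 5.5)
Your treatment of odd $a$ via distance-$\lfloor a/2\rfloor$ domination and Borradaile--Le is fine. The gap is in the even case $a=2r$, at exactly the step you flag as the main obstacle, and the proposal never closes it.

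Your first idea, a single merged state for value $r$, is the right one. But you then try to confirm its witness ``once at the forget node against the stored neighbourhood information'', and that cannot be done. When $v$ is forgotten, neighbours of $v$ that were introduced and forgotten inside a subtree are no longer in the bag, and the table keeps no record of them. Your fallback (relaxed ``at most'' semantics, reclassifying an endpoint as unwitnessed $r-1$) never fixes what a state-$r$ vertex must satisfy or how that is enforced. Taken literally, relabelling a vertex whose true distance is $r$ as $r-1$ makes it demand a witness at distance $r-2$ that does not exist. So you are left either with $a+1$ states, or with a state $r$ whose domination requirement is never checked.

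The missing observation is that state $r$ needs no witness at all. Every other state certifies distance at most $r-1$ to $D$: it is either state $0$, or a value $d\le r-1$ whose witness must be fulfilled before forgetting. Hence the introduce-edge check forbidding both endpoints in state $r$ forces every neighbour of a state-$r$ vertex to be within distance $r-1$. Since isolated vertices were removed, such a vertex has a neighbour and is therefore within distance $r$ automatically. The vertices at distance exactly $r$ all lie in state $r$ and are therefore independent, which is (D2). This is precisely how the paper handles its label $a/2$: it only requires that no neighbour carries label $a/2$, and ``we use that the neighbors of $u$ then imply proper domination of $u$.''

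With this observation your scheme works as follows:
\begin{itemize}
\item The states are $0$, $r$, and $(d,\text{witnessed})$, $(d,\text{unwitnessed})$ for $1\le d\le r-1$, which is exactly $a$ states.
\item The forget node accepts $0$, $r$, and witnessed states.
\item The join needs the OR-transform only on the flag pairs, giving $a^{t}\cdot n^{O(1)}$.
\end{itemize}
That completed version would be a legitimate alternative to the paper's labels, which follow Borradaile--Le rather than witness flags. There, a low label $c<a/2$ means distance $\le c$ to $D$ inside $G_x$, a high label $c>a/2$ means distance $\le a-c$ to $D$ outside $G_x$, and the join transform groups the pairs $\{c,a-c\}$.
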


Then the following classification follows.

\begin{theorem}[\cref{lemma:i:ds:tw:domination} and \cref{lemma:i:ds:tw:covering} combined]
\label{lemma:ds:tw:summary}
\ineditnote{\cref{p:lemma:is:tw:summary}}
Let $\aaa,\bbb$ integers with $\gcd(\aaa,\bbb)=c$ and $c \aac=\aaa$ and $c \bbc=\bbb$.
Let $n$ be the number of vertices in the input graph.
Assume \SETH, $\varepsilon>0$, and that a tree decomposition of width $t$ is part of the input.
\begin{itemize}
\item
If $\gcd(\aaa,\bbb)$ is odd:
If $\aac=1$, $\aaa$-$\DS(\GG_{\bbb})$ is in \p,
else
	$\aaa$-$\DS(\GG_{\bbb})$ can be solved in time $a^{t} \cdot n^{O(1)}$
	but not in $(a-\varepsilon)^{\pw(G)} \cdot n^{O(1)}$.
\item
If $\gcd(\aaa,\bbb)$ is even:
If $\aac\in\{1,2\}$, $\aaa$-$\DS(\GG_{\bbb})$ is in \p,
else
	$\aaa$-$\DS(\GG_{\bbb})$ can be solved in time $(2a)^{t} \cdot n^{O(1)}$
	but not $(2a-\varepsilon)^{\pw(G)} \cdot n^{O(1)}$.
\item
$\frac{\aaa}{\bbb}$-\covering for $\aac =1$ is in $\p$;
	if $\aac \geq 2$ and $\bbc$ is odd,
	can be solved in time $(4\aac)^{t} \cdot n^{O(1)}$
	but not in $(4\aac-\varepsilon)^{\pw(G)} \cdot n^{O(1)}$;
	if $\aaa \geq 2$ and $\bbc$ is even,
	can be solved in time $(2\aac)^{t} \cdot n^{O(1)}$
	but not in $(2\aac-\varepsilon)^{\pw(G)} \cdot n^{O(1)}$.
\end{itemize}
\end{theorem}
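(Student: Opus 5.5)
The plan is to mirror the proof of \cref{p:lemma:is:tw:summary}, replacing \cref{lemma:alpha:sub} by \cref{lemma:ds:sub}, \cref{lemma:alpha:translate} by \cref{lemma:trans:ogamma}, and \cref{lemma:is:to:disp} by \cref{lemma:ds:to:cov}. Upper bounds come from \cref{lemma:ds:classic} and \cref{lemma:ds:lit:seth}. Lower bounds come from \cref{lemma:ds:lit:seth} (odd $a$) and \cref{lemma:ds:ppset:lb} (even $a$).

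\emph{Odd $c=\gcd(a',b')$.} By \cref{lemma:ds:sub}, $\ogamma_{a'}(G_{b'})=\ogamma_a(G_b)$.
\begin{itemize}
\item If $a=1$, the whole vertex set is trivially optimal.
\item Otherwise \cref{lemma:ds:classic} on the given decomposition of $G_b$ gives $a^t\cdot n^{O(1)}$.
\item For the lower bound, write $yb=1+xa$ using coprimality. Start from the SETH bound for $a$-\DS on general graphs $\GG_1$: for odd $a$ use \cref{lemma:ds:lit:seth}, for even $a\ge4$ use \cref{lemma:ds:ppset:lb}, and for $a=2$ use vertex cover. Apply \cref{lemma:trans:ogamma} $x$ times to reach $\GG_{1+xa}=\GG_{yb}$. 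This needs no girth condition, unlike the independence case. Then use \cref{lemma:ds:sub} with the odd multiplier $y$ to pass to $\GG_b$; this works because $\GG_{yb}$ instances are $y$-subdivisions of $\GG_b$ instances with the same optimum. If $y$ is even, replace $x$ by $x+b$ so that $y$ becomes $y+a$.
\item Subdividing changes pathwidth by at most an additive constant and the size only polynomially, so the base $a$ survives.
\end{itemize}

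\emph{Even $c$.} By \cref{lemma:ds:sub}, $\ogamma_{a'}(G_{b'})=\ogamma_{2a}(G_{2b})$.
\begin{itemize}
\item The upper bound $(2a)^t$ follows from \cref{lemma:ds:classic}.
\item If $a\in\{1,2\}$: by \cref{lemma:ds:to:cov}, $\ogamma_{2a}(G_{2b})=\frac{a}{2b}$-cover$(G)$. For $a=1$ this is a unit fraction and hence in \p by \cite{HartmannLW22}. For $a=2$, $b$ is odd and $\frac{2}{2b}=\frac1b$ is again a unit fraction.
\item If $a\ge3$, \cref{lemma:ds:ppset:lb} gives the $(2a-\varepsilon)$ bound on $\GG_2$, since $2a\ge6$. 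Translate $\GG_2\to\GG_{2+x\cdot 2a}$ via \cref{lemma:trans:ogamma} and pick $x$ with $2+2xa\equiv 0 \pmod{2b}$, which is possible since $\gcd(a,b)=1$. This gives $\GG_{2by}$, and descending to $\GG_{2b}$ by \cref{lemma:ds:sub} again requires odd $y$, arranged as above. \textbf{This parity bookkeeping is the step I expect to be the main obstacle}: descending from $\GG_{2by}$ to $\GG_{2b}$ with $2a,2b$ both even is allowed for any $y$ by \cref{lemma:ds:sub}, but I must check that the congruence is solvable and that the instances produced stay in the class where \cref{lemma:ds:ppset:lb} applies.
\end{itemize}

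\emph{Covering.} Reduce $\frac{a'}{b'}$ to $\frac ab$ in lowest terms.
\begin{itemize}
\item If $b$ is odd, \cref{lemma:ds:to:cov} gives $\frac ab$-cover$(G)=\ogamma_{4a}(G_{2b})$. Here $\gcd(4a,2b)=2$ and the reduced pair is $(2a,b)$, so the even case yields the base $2\cdot 2a=4a$; polynomiality requires $2a\le2$, i.e.\ $a=1$.
\item If $b$ is even, write $b=2b''$, so $\frac ab$-cover$(G)=\ogamma_{2a}(G_{2b''})$. Since $a$ is odd, $\gcd(2a,2b'')=2\gcd(a,b'')$ with odd cofactor $\gcd(a,b'')=1$, and the even case gives base $2a$, polynomial iff $a\le2$, which with $a$ odd means $a=1$.
\item The lower bounds transfer because \cref{lemma:ds:to:cov} is an exact equivalence on the same graph $G$, up to subdivision, which preserves pathwidth up to an additive constant.
\end{itemize}
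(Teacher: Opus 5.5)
Your plan follows the paper's proof step for step. You normalize with \cref{lemma:ds:sub} to the coprime pair $(a,b)$ (odd gcd) or to $(2a,2b)$ (even gcd), and take the upper bounds from \cref{lemma:ds:classic} and \cref{lemma:ds:lit:seth}. You start the lower bound on $\GG_1$, or on $\GG_2$ via \cref{lemma:ds:ppset:lb}, translate $x$ times with \cref{lemma:trans:ogamma}, and derive the covering statements from \cref{lemma:ds:to:cov}. Your base computations $4a$ (odd $b$) and $2a$ (even $b$) are correct.

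The one step you justify incorrectly is the descent from $\GG_{yb}$ to $\GG_b$, and likewise from $\GG_{2yb}$ to $\GG_{2b}$. \cref{lemma:ds:sub} scales the distance together with the subdivision: $\ogamma_a(G_b)=\ogamma_{ya}(G_{yb})$. Your hard instances are for distance $a$ (resp.\ $2a$) on $G_{yb}$ (resp.\ $G_{2yb}$), so the lemma does not connect them, and forcing $y$ odd via $x\mapsto x+b$ is beside the point.

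No lemma is needed here. Since $G_{yb}=(G_y)_b$, you have $\GG_{yb}\subseteq\GG_b$ and $\GG_{2yb}\subseteq\GG_{2b}$. An algorithm beating the bound on $\GG_b$ would therefore beat it on the hard subclass, with the same pathwidth parameter. This is exactly the paper's ``thus especially'' step, and with it the obstacle you flagged disappears for every $y$.

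The same containment, $\GG_{2b}\subseteq\GG_2$, is why the paper can cite \cref{lemma:ds:p} directly for the polynomial cases $a\in\{1,2\}$. Your detour through unit-fraction covering also works.
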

\begin{proof}
First consider that $c = \gcd(\aaa,\bbb)$ is odd.
Then $c \aac$-$\DS(\GG_{c \bbc})$ is equivalent to $\aac$-$\DS(\GG_{\bbc})$
	by \cref{lemma:ds:sub}.
In case $\aac=1$,
	the minimum \dombarset{1} of any graph $G$ is $|V(G)|$.
In case $\aac \geq 2$, there is an $\aac^{t} \cdot n^{O(1)}$ time algorithm
	using \cref{lemma:ds:classic} for even $\aac \geq 4$
	and \cref{lemma:ds:lit:seth} for $\aac=2$ and odd $\aac \geq 3$.

For the lower bound we use that $y \bbc = 1+x \aac$ for some integers $x,y\in\NNN$.
Further assume \SETH.
Then we know that $\aac$-$\DS(\GG_1)$ has no $(\aac-\varepsilon)^{\pw(G)} \cdot n^{O(1)}$ time algorithm 
	for any $\varepsilon>0$,
	by \cref{lemma:dom:lb:6} for even $\aac \geq 4$ and \cref{lemma:ds:lit:seth} for $a=2$ and all odd $\aac \geq 
	3$.
\cref{lemma:trans:ogamma} applied $x$ times
	yields that $\aac$-$\DS(\GG_{1+xa})$,
	and equivalently $\aac$-$\DS(\GG_{yb})$,
	has no $(\aac-\varepsilon)^{\pw(G)} \cdot n^{O(1)}$ time algorithm
	for any $\varepsilon>0$.
Thus especially $\aac$-$\DS(\GG_{\bbc})$ has no $(\aac-\varepsilon)^{\pw(G)} \cdot n^{O(1)}$ time algorithm
	for any $\varepsilon>0$.
This settles the cases for \dombarsetvoid with odd $\gcd(\aaa,\bbb)$.

Consider that $c = \gcd(\aaa,\bbb)$ is even,
	hence we have $c=2\hat c$ for some integer $\hat c$.
Then $2\hat c \aac$-$\DS(\GG_{2\hat c \bbc})$ is equivalent to $(2 \aac)$-$\DS(\GG_{2 \bbc})$
	by \cref{lemma:ds:sub}, since $2\aac$ and $2\bbc$ are even.
In case $a \in \{1,2\}$, \cref{lemma:ds:p} provides a polynomial time algorithm.
Otherwise, for $a\geq 3$,
	we have that $2a$ is even
	and hence \cref{lemma:ds:classic} provides a $(2a)^{t} \cdot n^{O(1)}$ time algorithm.

For the lower bound we again use that $yb = 1+xa$ for some $x,y\in\NNN$.
Assume \SETH.
We know, for $a\geq 2$, that
	$2a$-$\DS(\GG_2)$
	has no $(2a-\varepsilon)^{\pw(G)} \cdot n^{O(1)}$ time algorithm
	for any $\varepsilon>0$, by \cref{lemma:dom:lb:6}.
By applying \cref{lemma:trans:ogamma} $x$ times,
	$2a$-$\DS(\GG_{2+2ax})$,
	and equivalently $2a$-$\DS(\GG_{2yb})$,
	has no $(2a-\varepsilon)^{\pw(G)} \cdot n^{O(1)}$ time algorithm
	for any $\varepsilon>0$.
Thus especially there is no $(2a-\varepsilon)^{\pw(G)}$ time algorithm for $2a$-$\DS(\GG_{2b})$
	for any $\varepsilon > 0$.
This settles the cases for \dombarsetvoid with even $\gcd(\aaa,\bbb)$.

Finally, $\frac{\aaa}{\bbb}$-\covering is equivalent to $\frac{\aac}{\bbc}$-\covering
	with coprime $\aac,\bbc$
	by definition.
In case $\aac=1$, there is polynomial time algorithm for $\frac{\aac}{\bbc}$-\covering, by \cref{lemma:ds:p}.
Consider that $\aac\geq 2$.
In case $\bbc$ is odd,
	$\frac{\aac}{\bbc}$-\covering is equivalent to $4\aac$-$\DS(\GG_{2\bbc})$, by \cref{lemma:ds:to:cov}.
Then the greatest common devisor of $4\aac$ and $2\bbc$ (as two even numbers with $\aac,\bbc$ coprime) is $2$.
Hence (by the earlier discussion)
	$\frac{\aac}{\bbc}$-\covering has a $(4\aac)^t \cdot n^{O(1)}$ time algorithm
	and, assuming \SETH, there is no $(4\aac-\varepsilon)^t \cdot n^{O(1)}$ time algorithm
	for any $\varepsilon>0$.
Otherwise, if $\bbc$ is even,
	$\frac{\aac}{\bbc}$-\covering is equivalent to $2\aac$-$\DS(\GG_{\bbc})$, by \cref{lemma:ds:to:cov}.
Then the greatest common devisor of $2\aac$ and $\bbc$ (as two even numbers with $\aac,\bbc$ coprime) is again $2$.
Hence
	$\frac{\aac}{\bbc}$-\covering has a $(4\aac)^t \cdot n^{O(1)}$ time algorithm
	and, assuming \SETH, there is no $(4\aac-\varepsilon)^t \cdot n^{O(1)}$ time algorithm
	for any $\varepsilon>0$.
\end{proof}

\subsubsection{Lower Bound under SETH}
\label{section:ds:ppseth:lb}

This section shows the lower bound of \dombarset{a} under \ppseth for even $a \geq 4$
	on graphs of bounded pathwidth,
	and, in case of $a \geq6$, also for $2$-subdivided graphs.

\begin{theorem}
\label{a:lemma:ds:ppset:lb}
Assume \ppseth and an $\varepsilon>0$.
Then, for every even $a\geq 4$,
	 $a$-\DomSet has no $(a-\varepsilon)^{\pw(G)} \cdot n^{O(1)}$ time algorithms,
	 in case $a\geq 6$, even on $2$-subdivided graphs.
\end{theorem}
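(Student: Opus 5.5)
The plan is to mirror the independence lower bound (\cref{a:lemma:is:ppset:lb}). We give a polynomial-time reduction from \technicalProblem\ to $a$-\DomSet and then invoke \cref{lemma:lampis:technical}. The output graph has pathwidth $p+O(a^2\log p+a^4)$; for $a\geq 6$ it is additionally a $2$-subdivision. Throughout we work with characterization \ref{def:ds:3}: $D\subseteq V(G)$ must $a$-dominate all vertices and all edge-midpoints in $G_2$. This turns the problem into ordinary distance-$a$ domination in the $2$-subdivision, restricted to centres at original vertices.

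For every variable $i$ we build a long path $P_i$. Each block of $a$ consecutive original vertices (that is, $2a$ vertices of $P_i$'s $2$-subdivision) contains exactly one selected vertex, and the selected offset in $[a]$ encodes $\sigma(i,\tau)$. To force one vertex per block we attach, at the block centres, private \emph{forcing pendants}: paths of the right length (half of $a$ in the $G_2$ metric) ending in a leaf that can only be dominated from within the block. Monotone decrease of the offsets along $P_i$ is then automatic. If two consecutive selected vertices are more than $a$ apart, the gap in between is left undominated in $G_2$, so offsets can only drift in one direction. Consistency for $V_2$ variables uses the same short three-block paths as in Step~\ref{is:step:v2}. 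Each constraint $\mu$ gets one selector vertex $y_{\tau,\sigma}$ per assignment $\sigma\in C_\mu$. A common pendant forces exactly one selector of each constraint into $D$. For each variable $i\in X_\mu$ and each offset, a checking vertex hangs off $P_i$; it is dominated either by the path vertex of that offset or by the selectors $y_{\tau,\sigma}$ with $\sigma(i)$ equal to that offset, via paths of exact length so that the total distance in $G_2$ is $a$. The budget is then the number of forced choices (blocks, pendants, constraints), exactly as in the independence case, so any size-$k$ solution is "tight" and yields a satisfying monotone multi-assignment.

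For the $2$-subdivided case with $a\geq 6$, attachments may only be made at even positions of the subdivided paths. As in \cref{lemma:is:ppseth:reduction}, we use two kinds of attachments, a direct one and one through a length-$2$ detour, to test both parity classes of offsets. Checking vertices for odd offsets are realized through a pair of even-position checks at offsets $\alpha\pm 1$, combined with the "$|\alpha'-\sigma(i)|$" case distinction from that construction. Short cycles are irrelevant here, but the exact lengths must be even, which requires $a\geq 6$ to have enough room. For $a=4$ we drop the subdivision requirement and use direct attachments at every offset. The pathwidth bound follows as before: remove pendants, contract gadget paths, and thread the $V_1$ paths through the bags of $(B_1,\dots,B_t)$, adding the $O(a^2\log p+a^4)$ gadget vertices of each bag.

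The main obstacle is the soundness direction. We must show that a dominating set of size $k$ cannot place vertices in gadgets or between blocks to dominate several check vertices at once. This means establishing disjoint "private regions" (balls of radius $a$ in $G_2$ around the pendant leaves) that each contain exactly one solution vertex, as in the $U$-set arguments of \cref{a:lemma:clique:leq:disp}. We also need a normalization lemma, the analogue of \cref{lemma:ind:leaf} for domination, that lets us push each solution vertex to its canonical position without losing domination. I would first fix the gadget lengths so that these regions are pairwise at distance more than $2a$ in $G_2$, and only then verify the parity constraints.
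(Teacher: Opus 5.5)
\textbf{There is a genuine gap in the constraint gadget, which is the core of the reduction.} As you describe it, the check vertex for variable $i$ and offset $\alpha$ is dominated either by the path vertex at offset $\alpha$ \emph{or} by the selectors $y_{\tau,\sigma}$ with $\sigma(i)=\alpha$.

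With your tight budget, each block has exactly one chosen path vertex, say at offset $\beta$, and each constraint has exactly one chosen selector $y_{\tau,\sigma^*}$. The check for offset $\alpha$ is then dominated only if $\alpha\in\{\beta,\sigma^*(i)\}$. For $a\geq 4$ at least $a-2$ checks stay undominated, so even a satisfying multi-assignment gives no solution of size $k$, and completeness fails. This is the conflict logic of the independence reduction, not domination logic.

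What you need is the complement. The check belonging to $(\sigma,i)$ must be reachable from every selector $y_{\tau,\sigma'}$ with $\sigma'\neq\sigma$; equivalently, the check for offset $\alpha$ is reachable from all selectors with $\sigma(i)\neq\alpha$. Then the unique chosen selector leaves exactly its own checks to be covered from the path, and this forces $p^{\sigma(i)}$.

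This is what the paper does:
\begin{itemize}
\item $x_{\tau,\sigma,i}$ is joined by length-$2$ paths to all $y_{\tau,\sigma'}$ with $\sigma'\neq\sigma$, and by a path $Q_{\tau,\sigma,i,z}$ of length $a-4$ to $p^{\sigma(i)}$.
\item The actual check is a specific edge in the middle of $Q$.
\item $1$-/$2$-markings pre-dominate the lower part of $Q$; these are realized afterwards by pendant paths of length $2a-1$ or $2a-2$ with budget $+2$ each. They make sure every edge of the connecting paths is dominated in all cases, and that a path vertex at a wrong offset cannot reach the check edge.
\end{itemize}
``Total distance $a$'' alone does not achieve this. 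Also, the generic gadget needs $a\geq 12$; the cases $a\in\{6,8,10\}$ get separate ad hoc gadgets.

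\textbf{Two further points need repair.}

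First, odd $\sigma(i)$ in the $2$-subdivided case. The independence-style case distinction on $|\alpha'-\sigma(i)|$ forbids rather than forces, so it does not transfer. The paper checks offsets $\alpha\pm1$ with tolerance $1$. Because the dominator of such a check may lie in a neighbouring block (position $0$ of a block is the last vertex of the previous block), it repeats the checks in three consecutive blocks $z\in\{0,1,2\}$. A counting argument then applies: four distinct solution vertices cannot fit into three blocks, so the middle block's vertex sits exactly at $\alpha$. The paper singles out this threefold repetition as the point where domination differs from independence, and your plan does not address it.

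Second, the forcing pendants. A pendant of $G_2$-length $\tfrac{a}{2}$ at a block centre only admits dominators within original distance about $\tfrac{a}{4}$ of the centre, which is roughly $\tfrac{a}{2}+1$ of the $a$ offsets. Moreover, the separation ``more than $2a$'' between private regions cannot hold for consecutive blocks, whose centres are exactly $2a$ apart in $G_2$. The paper uses no pendants here:
\begin{itemize}
\item The middle edge $\{p^{a/2},p^{a/2+1}\}$ of each block is walk-dominated exactly by $p^1,\dots,p^a$.
\item These edges, together with one edge $\{z'_\tau,z_\tau\}$ per constraint, have pairwise end-vertex distance at least $a-1$, so they need distinct solution vertices.
\item The budget $k$ equals their number.
\end{itemize}
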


The idea of our proof origins from Borradaile et al.,
	who gave a lower bound for \textsc{Distance Dominating Set} under SETH
	(hence for $a$-\DomSet for odd $a$)~\cite{BorradaileL16}.
Recently, Lampis introduced the `Primal-Pathwidth \SETH' (\ppseth)
	and showed the same lower bound assuming \ppseth instead~\cite{Lampis2024pwseth}.
This result is stronger, as assuming \ppseth is a weaker prerequisite,
	and, at the same time, allows for a nicer presentation
	by deferring many technicalities to an intermediate
	Constraint Satisfaction Problem (CSP) problem,
	which we also use here.
Our reduction is similar but uses some adaptations to output a $2$-subdivided graph.
For details about the intermediate problem, we refer to \cref{section:is:ppseth:lb}.

As also done in \cref{section:is:ppseth:lb},
	the basic idea is to deploy a long path for every variable $i$
	where an $a$-independent set can at most contain every $a$-th vertex
	and thereby encoding an assignment of variable $i$ to $[a]$.
For every constraint, we attach a gadget to these paths that enforce an encoding satisfying the constraint.
To construct a $2$-subdivided graph,
	we may only add connections to the even position of a long path.
We use two kinds of connections,
	that together can enforce an encoding at even and odd positions of such a long path.
In contrast to the construction for $a$-independent set
	as well as to that of ordinary $a$-dominating set,
	we have to locally repeat these enforcing connecting three times
	to assure our connection properly enforce the encoding at an odd position.

\smallskip

We give a reduction from
	\technicalProblem to $a$-\DomSet
	that outputs a $2$-subdivided graph of pathwidth at most $p+O(a^2 \log p + a^5)$.
Thus, assuming \ppseth and an $(a-\varepsilon)^{\pw(G)} \cdot n^{O(1)}$ time algorithm for an $\varepsilon>0$,
	$a$-\DomSet yields a contradiction to \cref{lemma:lampis:technical}.

\begin{lemma}
\label{lemma:ds:ppseth:reduction}
\ineditnote{\cref{lemma:is:ppseth:reduction}}
Let $a\geq 4$ be even.
There is a polynomial time reduction from
	\technicalProblem to $a$-\DomSet
	that outputs a graph $G$ of pathwidth at most $p + O(a^2 \log p + a^5)$,
	which, if $a \geq 6$, is a $2$-subdivision.
\end{lemma}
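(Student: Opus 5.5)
We mirror the construction of \cref{lemma:is:ppseth:reduction}, working throughout with characterization \ref{def:ds:3}: $D\subseteq V(G)$ is an $a$-walk dominating set of $G$ iff it $a$-dominates $V(G)\cup E(G)$ in $G_2$. In the $2$-subdivision world this means every vertex of $G_2$ is within distance $a$ of $D$. So effectively we are building an instance of distance-$a$ domination in a $4$-subdivided graph and choosing the $2$-subdivision structure at the level of $G$.

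\textbf{Gadgets.} We first fix gadgets that force part of the solution and block parts of the graph.
\begin{itemize}
\item A \emph{pendant} is a path of length about $a/2$ hanging at a vertex $u$ (measured in $G$). As in \cref{lemma:ind:leaf}, we may assume its far end, or a fixed vertex on it, is in $D$. This gadget forces domination of a ball around $u$.
\item A \emph{blocker-like} gadget dominates a neighbourhood of $u$ while keeping $u$ itself undesirable to choose. It plays the role of the blocker, and it lets us attach odd-length connections to a $2$-subdivided graph.
\item A \emph{super edge} between $u$ and $v$ is a path whose inner vertices are forced-dominated by attached pendants. Its role is to transmit "distance" information: it guarantees that if neither endpoint region is dominated from outside, the gadget needs extra solution vertices.
\end{itemize}
For $a=4$ we do not require a $2$-subdivision, so plain edges can replace super edges there.

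\textbf{Variable paths.} For each variable $i$ we add a long path $P_i$ (of length $3a$ per bag for $V_1$, consistent across bags for $V_2$) with a forced solution vertex at its start. A solution of the budgeted size must then pick exactly one vertex among every $a$ consecutive path vertices, and this vertex encodes $\sigma(i,\tau)\in[a]$. Domination slack along the path enforces monotonicity. This is exactly the argument of \cite{BorradaileL16,Lampis2024pwseth}: each solution vertex dominates at most a window of size $2a+1$ in the subdivided path.

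\textbf{Constraints.} For each constraint $\mu$ with $\tau=f(\mu)$ we introduce a choice gadget. It consists of vertices $y_{\tau,\sigma}$ for $\sigma\in C_\mu$, all attached to a common vertex whose domination forces exactly one $y_{\tau,\sigma}$ to be chosen. Each $y_{\tau,\sigma}$ is then connected by paths, of carefully chosen lengths, to \emph{even} positions of $P_i$ for $i\in X_\mu$. The lengths are chosen so that the selected $y_{\tau,\sigma}$ together with the path vertex encoding $\sigma(i,\tau)$ dominate the connecting path exactly when the two values agree, following the $(30n-6j)/(24n+6j)$ sum-to-constant trick of \cref{a:lemma:clique:leq:disp}.

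Because only even positions may be used, a single attachment cannot distinguish an odd value $\alpha'$ from $\alpha'\pm1$. We therefore use two attachment types:
\begin{itemize}
\item a direct one, as for $\hat q$, which detects $|\alpha'-\sigma(i)|\le 1$;
\item a shifted one through an extra length-$1$ offset realized with a blocker, as for $\hat\pi$, which detects $\neq 1$.
\end{itemize}
For domination, a vertex may be dominated from either side, which weakens these tests. As indicated before the lemma, we repeat the attachments at three consecutive even positions, namely $\alpha'-2,\alpha',\alpha'+2$, so that the combined constraints pin down odd values too.

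\textbf{Budget and correctness.} The budget is $k$ = (number of forced gadget vertices) $+ m +$ (one per window per variable copy). In the forward direction we place vertices according to $\sigma$ and check distances to every gadget vertex and every subdivision vertex in $G_2$. In the backward direction, counting with disjoint "private" balls shows every window and every choice gadget contains exactly one solution vertex. We then read off $\sigma$, show monotonicity from domination of the path between consecutive windows, and show satisfaction from domination of the connecting paths.

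\textbf{Pathwidth.} Each bag gets the $O(a)$ gadget vertices for $V_2$ variables, giving $O(a^2\log p)$ overall. The constraint gadgets contribute at most $a^4$ assignments times $O(a)$ attachments, giving $O(a^5)$. Combined with a sweep along the $V_1$ paths, the width is $p+O(a^2\log p+a^5)$.

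\textbf{Main obstacle.} The hard step is the odd-position enforcement, that is, choosing connection lengths and the threefold repetition so that domination, unlike independence, cannot be satisfied by a path vertex on the wrong side. I would first design and verify the gadget for a single variable and a single assignment by explicit distance tables in $G_2$ before assembling the full construction.
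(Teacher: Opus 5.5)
\textbf{Where the proposal breaks.} Your skeleton matches the paper's: Lampis's \textsc{Pathwidth $a$-Ary CSP}, one solution vertex per window of $a$ path vertices, a one-of-$C_\mu$ selection gadget, attachments only at even positions, a threefold repetition, and the same pathwidth count. However, the step that is genuinely specific to domination is missing, and the mechanism you sketch for it fails.

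In a domination instance \emph{every} test path must be dominated, including the paths of the unselected assignments. Suppose the path for $(\sigma,i)$ joins $y_{\tau,\sigma}$ to position $\sigma(i)$ of $P_i$, with length tuned so it is covered exactly when $y_{\tau,\sigma}$ is chosen and the encoding vertex sits at $\sigma(i)$. Then for an unselected $\sigma$ both ends lose slack. The $y$-end is farther from the chosen $y_{\tau,\sigma^*}$, and the other end is $|\sigma(i)-\sigma^*(i)|$ (up to $a-1$) away from the encoding vertex. So that path stays undominated. If you instead shorten the paths until all unselected ones are covered, the selected test becomes vacuous.

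Domination is monotone, so the selected assignment must cover \emph{less} of its own test than the others do. The ``sum-to-constant'' trick from the dispersion reduction therefore does not transfer; there, superfluous paths cost nothing. The paper's key device is a complement connection:
\begin{itemize}
\item The test path $Q_{\tau,\sigma,i,z}$ runs from $p^{\alpha}_{i,\tau,z}$ to a vertex $x_{\tau,\sigma,i}$.
\item That vertex is joined by length-$2$ paths to every $y_{\tau,\sigma'}$ with $\sigma'\neq\sigma$, but not to $y_{\tau,\sigma}$.
\item The lower half of $Q$ is pre-dominated by marks.
\end{itemize}
Hence $Q$ is covered from the constraint side iff $\sigma$ is not selected. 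If $\sigma$ is selected, its middle edge $\{q_{a/2-1},q_{a/2}\}$ forces the encoding vertex to be $p^{\alpha}$.

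\textbf{The odd-value step.} You leave this as the ``main obstacle'', and it is not realizable as you describe. Domination only expresses ``some solution vertex is close enough'', so there is no analogue of the $\hat\pi$-test ``distance $\neq 1$''. The paper instead uses two close-enough tests at the even positions $\alpha-1$ and $\alpha+1$. The test edge is $\{q_{a/2-2},q_{a/2-1}\}$, so a solution vertex within distance $1$ of $p^{\alpha\pm1}$ suffices. These tests are repeated over the three windows $z\in\{0,1,2\}$ of bag $\tau$, not over positions $\alpha'-2,\alpha',\alpha'+2$.

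A single window does not pin down $\alpha$: near a window boundary, a solution vertex of a neighbouring window can pass one of the two tests. With exactly one solution vertex per window, suppose the two tests in the middle window were passed by different vertices. Then the three windows would contain four solution vertices, so the middle one must be $p^{\alpha}$.

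\textbf{Two smaller points.}
\begin{itemize}
\item A pendant of length about $a/2$ lets the vertex dominating its leaf edge be $u$ itself, so in effect it just forces $u\in D$. To pre-dominate only the edges within $i$ hops of $u$, the paper reduces from a marked variant. It realizes an $i$-mark by a pendant of length $2a-i$ carrying two forced vertices, at budget $+2$.
\item The generic gadget is built for $a\ge 12$. The paper needs separate gadgets for $a\in\{6,8,10\}$, and obtains $a=4$ from the $a=8$ construction with doubled values, not by replacing super edges with plain edges.
\end{itemize}
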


We output a graph $G$ that is $2$-subdivision of some graph $G'$.
Since we consider even $a$,
	we may work with a simplified version of the definition \ref{def:ds:1} of an \dombarset{a} $D$.
It suffices to require that at least one end-vertex of each edge has distance at most $\ahalf-1$ to~$D$.
Further, we conveniently reduce to a slightly more general problem,
	which treats some edges as already being dominated.
These edges are specified as those in distance $i$ to an $i$-marked vertex for some $i \in \{1,2\}$.

That is, we ask for an \dombarset{a}
	that does not need to cover the edges within an $i$ radius of an $i$-marked vertex, where $i\in \{1,2\}$.
To be precise, our modified problem is the following.

\problemdefSimple{\textsc{Marked $a$-Walk Dominating Set}, $a\geq 4$ even}{
	An integer $k$, a graph $G$ that is a $2$-subdivision of a graph $G'$,
	a set of $1$-marked vertices in $V(G')$ and a set of $2$-marked vertex in $V(G')$
	that define $E' \subseteq E(G)$
	as the set of edges that are incident to $1$-marked vertex or a neighbor of a $2$-marked vertex.
	}{
Is there a size $\leq k$ set $D\subseteq V(G)$,
	such that
	every edge $\{u,v\}\in E(G) \setminus E'$
	has $d(u,D) \leq \ahalf - 1$ or $d(v,D) \leq \ahalf-1$.
}

This problem is in one-to-one correspondence
	with the $a$-\DomSet on $2$-subdivided graphs by the following construction,
	which only increases the pathwidth by a constant.
We increase the budget by $2$ for every marked vertex.
For $i\in\{1,2\}$ and every $i$-marked vertex $u$,
	we add a path $P_u$ of length $2a-i\geq 6$ starting from $u$ and ending in a sequence of vertices 
	$u_3,u_2,u_1,u_0$.
If $2a-i$ is odd, we add a length $3$ path $P_u'$ from $u_3$ to $u_0$ (hence forming length $6$ cycle).
Then, without loss of generality any \dombarset{a} contains,
	for every $i\in\{1,2\}$ and every $i$-marked vertex $u$,
	the two vertices in distance $\ahalf-i$ and in distance $a+\ahalf-i$ to $u$ on the added path $P_u$.
Then these two vertices imply
	that every edge of $P_u$ is dominated as well as every edge within $i$ hops of $u$.

Finally, we note that the original graph has at most the pathwidth of the constructed graph.
Vice versa, we observe that the constructed graph has the pathwidth of the original graph increased by only a 
constant.
To see this, consider a pathwidth decomposition of the original graph.
Then for every $1$-marked or $2$-marked vertex $u$,
	consider a bag $B_u$ that contains $u$.
Then replace $B_u$ by a sequence of bags
	that additionally include the vertices of a constant width path decomposition
	of the path $P_u$ (or path with loop $P_u,P_u'$) attached to~$u$.

\smallskip	

Now, we are ready for the reduction from \technicalProblem
	to our modified problem, \textsc{Marked $a$-Walk Dominating Set}.
We begin by presenting our reduction for even $a \geq 12$.

\smallskip

\emph{Construction:}
Assume that $a\geq 12$.
We construct a graph $G$ that is a $2$-subdivision of a graph $G'$,
	by adding paths $(p^0,\dots,p^j)$ of some even length $j$
	and connecting them only with vertices of even upper index.
\begin{enumerate}
\item For every variable $i \in V_2$,
	we add a path $(p_i^0,\dots,p_i^{3a})$ of length $3a$  and $1$-mark $p_i^0$.
For every node $\tau \in [t]$ and $z \in \{0,1,2\}$,
	let $P_{i,\tau,z} = P_{i,z} = (p_{i,\tau,z}^{1},\dots,p_{i,\tau,z}^{a})$ refer to the subpath
	$(p_i^{za+1},\dots,p_i^{za+a})$
	(where, as $\tau$ is redundant, we may omit the second index.)
We $1$-mark every vertex of even upper index $\leq \ahalf-1$ of $P_{i,1,0}$.
If $\ahalf$ is even, we also $2$-mark the vertex of upper index $\ahalf-2$.
Analogously, we $1$-mark every vertex of even upper index $\geq \ahalf+2$ of $P_{i,1,2}$.
If $\ahalf$ is odd, we also $2$-mark the vertex of upper index $\ahalf+3$.
(These `strange' indices are due to that $0$ is not part of the domain of $\varphi$.)

\item For every variable $i \in V_1$,
	occurring exactly in some $t_i$ bags $B_{\tau_i},\dots,B_{\tau_i+t_i-1}$, 
	we add a path $p_i^{3a\tau_i},\dots,p_i^{3a(\tau_i+t_i)}$ of length $3at_i$.
For a node $\tau \in \{\tau_i,\dots,\tau_i+t_i-1\}$ and $z \in \{0,1,2\}$,
	let $P_{i,\tau,z} = (p_{i,\tau,z}^1,\dots,p_{i,\tau}^{3a})$ refer to the subpath 
	$(p_i^{(3\tau+z)a},\dots,p_i^{(3\tau+z+1)+a})$.
We $1$-mark every vertex of even upper index $\leq \ahalf-1$ of $P_{i,\tau,0}$.
If $\ahalf$ is even, we also $2$-mark the vertex of upper index $\ahalf-2$.
Analogously, we $1$-mark every vertex of even upper index $\geq \ahalf+2$ of $P_{i,\pi,2}$.
If $\ahalf$ is odd, we also $2$-mark the vertex of upper index $\ahalf+3$.
\item \label{ds:step:main}
We proceed for every constraint $\mu \in [m]$ as follows.
Let $\tau = f(\mu)$ and let $X_\mu = \{v_1,\dots,v_4\}$ be the set of variables of constraint $\mu$.
We add a path of length $\ahalf$
	starting in a vertex $\hat z_\tau$
	and ending in vertices $z_\tau', z_\tau$.
If $\ahalf$ is odd, make $z_\tau$ adjacent to a new vertex $z_\tau''$ and $1$-mark $z_\tau''$.
For every $\sigma \in C_\mu$, the set of satisfying assignments of constraint $\mu$,
	we add a vertex $y_{\tau,\sigma}$.
We add a path of length $2$
	between $\hat z_\tau$ and $y_{\tau,\sigma}$ for every $\sigma \in C_\tau$.
For every $\sigma \in C_\mu$,
	for every variable $i \in X_\mu$ where $\sigma(i)$ is even,
		we add a vertex $x_{\tau,\sigma,i}$
	and, for every variable $i \in X_\mu$ where $\sigma(i)$ is odd,
		we add vertices $x_{\tau,\sigma,i}^{-1}$ and $x_{\tau,\sigma,i}^{1}$.
		
Further, for every distinct $\sigma,\sigma' \in C_\mu$ and variable $i \in X_\mu$,
	we add a path of length $2$
	between $y_{\tau,\sigma'}$
	and each of the added vertices $x_{\tau,\sigma,i}, x_{\tau,\sigma,i}^{-1}, x_{\tau,\sigma,i}^{1}$.
For every $\sigma \in C_\tau$, variable $i \in X_\mu$ and $z \in \{0,1,2\}$,
	we proceed as follows.
\begin{itemize}
\item 
If $\alpha = \sigma(i)$ is even,
	we add a path $Q_{\tau,\sigma,i,z} = (q_0,\dots,q_{a-4})$ of length $a-4$
	between $q_0 = p^\alpha_{i,\tau}$ and $q_{a-4} = x_{\tau,\sigma}^i$.
Further, we $2$-mark every vertex of $Q_{\tau,\sigma,i,z}$ of even positive index $\leq \ahalf-4$.
Finally, we $1$-mark $q_{a/2-3}$ if $\ahalf$ is odd.
As a result, the set $E'$ will contain every edge of the subpath $(q_0,\dots, q_{a/2-2})$.
Let $e_{\tau,\sigma,i,z}$ be the edge $\{ q_{a/2-1}, q_{a/2} \}$
	and let $e_{\tau,\sigma,i,z}'$ be the edge below, namely $\{ q_{a/2-2}, q_{a/2-1} \}$.
(We note that neither of $e_{\tau,\sigma,i,z}$, $e_{\tau,\sigma,i,z}'$
	has an end-vertex in distance $\ahalf-1$ to $y_{\tau,\sigma}$.)

\item
If $\alpha = \sigma(i)$ is odd,
	then for $s \in \{-1,1\}$ we proceed as follows.
We add a path $Q_{\tau,\sigma,i,z}^s = (q_0,\dots,q_{a-4})$ of length $a-4$
	between $q_0=p^{\alpha+s}_{i,\tau}$ and $q_{a-4} = x_{\tau,\sigma}^i$.
Further, we $2$-mark every vertex of $Q_{\tau,\sigma,i,z}^s$ of even positive index $\leq \ahalf-4$.
Moreover, we $1$-mark $q_{a/2-3}$ if $\ahalf$ is odd.
Finally, if $\ahalf$ is even, we $1$-mark $q_{a/2}$,
	and if $\ahalf$ is odd, we $2$-mark $q_{a/2+1}$.
As a result, the set $E'$ will contain every edge of the subpath $(q_0,\dots, q_{a/2-2})$
	and particularly the edge $\{q_{a/2-1},q_{a/2}\}$.
Let $e_{\tau,\sigma,i,z}^s$ be the edge $\{ q_{a/2-2}, q_{a/2-1} \}$.
(We note that $e_{\tau,\sigma,i,z}^{s}$
	has no end-vertex in distance $\ahalf-1$ to $y_{\tau,\sigma}$.)

\end{itemize}
\end{enumerate}
Finally, we set the budget $k$ to $m + 3\sum_{i \in V_1} t_i + 3a|V_2|$.

\smallskip

Let us first bound the pathwidth of the constructed graph $G$.
Consider a path decomposition $(B_1,\dots,B_t)$ of the primal graph of $\varphi$ of width $p$
	and where each bag contains at most $O( a \log p)$ vertices from $V_2$.
For every bag $\tau \in [t]$ and every variable $i \in V_2$,
	we add the subpath $P_i$ to a new bag $B_\tau'$.
Further, for every constraint $\mu \in [m]$ with $\tau = f(\mu)$,
	we add to $B_\tau'$ every vertex occurring in an edge in step \ref{ds:step:main}.
So far every new bag contains at most $O( a^2 \log p + a^{5} )$ vertices,
	as there are at most $a^4$ assignments in $C_\mu$.
For every bag $\tau \in [t]$, let $W_\tau \coloneqq V_1 \cap B_\tau$, which has size at most $p$.
Then let $(B_\tau^1,\dots,B_\tau^{t'})$
	be a straight forward width $p+O(1)$ path decomposition
	containing, for $i \in W_\tau$, the subpath of $P_i$ from vertex $p_{\tau,i,0}^0$ to vertex $p_{\tau,i,2}^{a}$,
	and starting with the bag $B_\tau^1 = \{ p_{\tau,i,0}^0 \mid i \in W_\tau \}$
	and ending with the bag $B_\tau^{t'} = \{ p_{\tau,i,2}^a \mid i \in W_\tau \}$.
We add every vertex of $B_\tau'$ to the bags $B_\tau^1,\dots,B_\tau^{t'}$.
Finally, joining the path decompositions $(B_\tau^1,\dots,B_\tau^{t'})$ for $\tau \in [t]$
	results in a path decomposition of the constructed graph of width $p+O( a^2 \log p + a^{5} )$.

\begin{lemma}
\label{lemma:ds:ppseth:correctness}
There is a monotone decreasing multi-assignment $\sigma$
	that is satisfying for $\varphi$ and that is consistent for every variable in $V_2$,
	if and only if there is a size $\leq k$ \dombarset{a} $D \subseteq V(G)$ for $ E(G) \setminus E'$.
\end{lemma}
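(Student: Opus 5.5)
The proof will mirror the correctness proof of \cref{lemma:is:ppseth:correctness}, split into the two directions. Throughout, we use the simplified condition that every edge of $E(G)\setminus E'$ has an end-vertex at distance at most $\ahalf-1$ from $D$. The budget is $k = m + 3\sum_{i\in V_1} t_i + 3a|V_2|$. The intended accounting is exactly one selected vertex on each subpath $P_{i,\tau,z}$ (three per node and variable) and exactly one vertex per constraint. For a constraint $\mu$ with $f(\mu)=\tau$, that vertex is $y_{\tau,\sigma}$, where $\sigma$ is the chosen assignment from $C_\mu$. (For $V_2$ the count $3a|V_2|$ should match three path vertices per bag occurrence; I would reconcile this count with the construction first.)

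\forward Given a monotone decreasing, satisfying multi-assignment $\sigma$ that is consistent on $V_2$, I select the following. On each $P_{i,\tau,z}$ I take the vertex $p^{\sigma(i,\tau)}_{i,\tau,z}$. For each constraint $\mu$ with $f(\mu)=\tau$ I take $y_{\tau,\sigma_\mu}$, where $\sigma_\mu\in C_\mu$ is the restriction of $\sigma(\cdot,\tau)$ to $X_\mu$.

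Then I check the edge types in turn.
\begin{itemize}
\item Path edges: consecutive selected vertices lie at distance at most $a$, since monotonicity keeps the gap between blocks at most $a$. The $1$- and $2$-marks at the ends of $P_{i,\tau,0}$ and $P_{i,\tau,2}$ handle the boundary edges. This uses that positions are in $[a]$ and never $0$.
\item Edges near $\hat z_\tau$, $z_\tau$, $z'_\tau$ and the length-$2$ paths to the $y$'s: all are within distance $\ahalf-1$ of the selected $y_{\tau,\sigma_\mu}$.
\item Paths from $y_{\tau,\sigma_\mu}$ to $x_{\tau,\sigma',i}$ for $\sigma'\neq\sigma_\mu$: these, together with the upper half of each $Q$ path for $\sigma'$, are dominated from $y_{\tau,\sigma_\mu}$.
\item The $Q$ paths of $\sigma_\mu$ itself, whose upper part is \emph{not} reached from $y$: the critical edges $e_{\tau,\sigma_\mu,i,z}$, $e'_{\tau,\sigma_\mu,i,z}$, and $e^{\pm1}$ in the odd case, must be dominated from the path side. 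This works because the attachment point $p^{\alpha}$ or $p^{\alpha\pm1}$ is at distance $0$ or $1$ from the selected path vertex. A short distance computation along $q_0,\dots,q_{a/2}$ will confirm this, including the parity cases of $\ahalf$.
\end{itemize}

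\backward Take a solution $D$ with $|D|\le k$. First I argue a lower bound: the edges that must be dominated in the middle of each $P_{i,\tau,z}$ force at least one vertex of $D$ near that subpath. Likewise, the edge between $z_\tau$ and $z'_\tau$ (or $z''_\tau$) forces a vertex within distance $\ahalf-1$, which lies among $\hat z_\tau$, the $y_{\tau,\cdot}$ and their connecting paths. These regions are pairwise far apart, so the budget is tight and each region contains exactly one vertex. A normalisation step then moves each such vertex to a canonical position: the constraint vertex moves onto some $y_{\tau,\sigma}$, and the path vertex moves onto $P_{i,\tau,z}$. This step needs care because the $Q$ paths offer alternative positions.

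Define $\sigma(i,\tau)$ from the position of the selected vertex on $P_{i,\tau,1}$. Domination of the path edges between consecutive blocks forces consecutive selected vertices to be at distance at most $a$, which gives monotonicity. For $V_2$, a single path is shared by all bags, which gives consistency.

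\emph{Satisfaction is the main obstacle.} If $y_{\tau,\sigma}$ is selected, then none of the critical edges of the $Q$ paths of $\sigma$ are dominated from the $y$ side. They must therefore be dominated from the variable path. For even $\alpha=\sigma(i)$, the pair $e,e'$ forces the selected vertex to be exactly $p^\alpha$. For odd $\alpha$, the two edges $e^{-1}$ and $e^{+1}$ force distance at most $1$ from both $p^{\alpha-1}$ and $p^{\alpha+1}$, hence the vertex $p^\alpha$. This must hold for all three copies $z\in\{0,1,2\}$, which is where the threefold repetition is needed so that the copy on $P_{i,\tau,1}$ pins the value. Verifying these exact distance thresholds for both parities of $\ahalf$ is the hardest part. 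I would first do it for $\ahalf$ even and then adjust for the extra marks in the odd case.
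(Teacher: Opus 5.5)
Your plan follows the paper's proof closely. The forward direction uses the same canonical set. The backward direction uses the same packing argument: the middle edges $\{p^{a/2}_{i,\tau,z},p^{a/2+1}_{i,\tau,z}\}$ and the edges $\{z'_\tau,z_\tau\}$ are so far apart that no vertex serves two of them, so $|D|\le k$ leaves exactly one vertex per region. Then come normalisation onto the blocks and onto the $y$'s, $\sigma$ read from $P_{i,\tau,1}$, and the critical edges $e,e',e^{\pm1}$.

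The genuine gap is the odd case of satisfaction. From $e^{-1}$ and $e^{+1}$ you only get \emph{some} vertex of $D'$ within distance $1$ of $p^{\alpha-1}_{i,\tau,z}$ and \emph{some} vertex within distance $1$ of $p^{\alpha+1}_{i,\tau,z}$. Your ``hence $p^\alpha$'' presumes these coincide. One vertex per block forces that only when both candidate sets lie in one block. This fails for $\alpha=1$, since $p^{0}_{i,\tau,z}=p^{a}_{i,\tau,z-1}$, and for $\alpha=a-1$, since $p^{a+1}_{i,\tau,z}=p^{1}_{i,\tau,z+1}$. For instance, with $\alpha=1$ and $z=0$, the vertex at position $a$ of the preceding block and the vertex at position $3$ of $P_{i,\tau,0}$ together serve both edges. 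So per-copy pinning is false, and ``this must hold for all three copies'' is not a usable claim. What you need is that the \emph{middle} copy is pinned, and the paper gets this by counting. Suppose the witnesses $v_{1,-1}$ and $v_{1,+1}$ for copy $1$ were distinct. Add the witness $v_{0,+1}$ for $e^{+1}$ of copy $0$ and the witness $v_{2,-1}$ for $e^{-1}$ of copy $2$. You then have four pairwise distinct vertices of $D'$ inside $P_{i,\tau,0}\cup P_{i,\tau,1}\cup P_{i,\tau,2}$; they are distinct because $a\ge 12$ keeps them apart. This contradicts one vertex per block, hence $v_{1,-1}=v_{1,+1}=p^\alpha_{i,\tau,1}$. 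This counting is exactly what the threefold repetition is for, and it is missing from your proposal.

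Two smaller remarks. First, the budget term $3a|V_2|$ is a typo for $3|V_2|$. Each $V_2$ variable has a single path of three blocks shared by all bags, and the forward selection puts three vertices on it, not three per bag occurrence. Second, look again at your forward check of the $\hat z_\tau$ gadget. With the path of length $a/2$ as literally specified, $z'_\tau$ is at distance $2+(a/2-1)=a/2+1$ from every $y_{\tau,\cdot}$. So the edge $\{z'_\tau,z_\tau\}$ is not within $a/2-1$ of the selected $y$, and the backward normalisation onto a $y$ also loses it. The paper's proof passes over this too; that path must be shortened by two for both directions to work.
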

\begin{proof}
\forward
Assume that there is a satisfying monotone-decreasing multi-assignment $\sigma$ that is consistent for every variable in 
$V_2$.
We construct a size $k$ subset of vertices $D \subseteq V(G)$
	such that every edge $\{u,v\} \in E(G) \setminus E'$
	has an end-vertex that has distance at most $\ahalf-1$ to $D$.
For every variable $i \in V_2$ and $z \in \{0,1,2\}$,
	we add the vertex $p_{i,1,z}^{\alpha}$ to $D$
	where $\alpha = \sigma(\tau,i)$,
	which forms an assignment that is consistent for every variable in $V_2$.
For every variable $i \in V_1$, node $\tau \in [t]$ and $z \in \{0,1,2\}$,
	we add the vertex $p_{i,\tau,z}^{\alpha}$ where $\alpha = \sigma(\tau,i)$.
Finally, for every constraint $\mu \in [m]$,
	say with $\tau = f(\mu)$, variables $X_\mu$ and constraints $C_\mu$,
	we add the vertex $y_{\tau,\sigma_\mu}$ to $D$ where $\sigma_\mu \in C_\mu$
	is the restriction of $\sigma$ to variables $X_\mu$.
The resulting set has size $k$.

We claim that $D$ is an \dombarset{a} for $E(G)\setminus E'$.
Indeed, for every variable $i \in V_2$,
	since the first $\ahalf-1$ edges and the last $\ahalf-1$ edges of $P_i$ are part of $E'$,
	the whole path is \dombared{D} by $D$.
Similarly, for every variable $i \in V_1$,
	since $\sigma$ is monotone decreasing and the first $\ahalf-1$ edges
	and the last $\ahalf-1$ edges of $P_i$ are part of $E'$,
	the whole path is \dombared{a} by $D$.
Further, every edge other than $e_{\tau,\sigma,i,z}$, $e_{\tau,\sigma,i,z}'$ and $e_{\tau,\sigma,i,z}^s$,
	where $\mu \in [m]$, $\tau = f(\mu)$, $\sigma \in C_\mu$, variable $i \in X_\mu$, $z \in \{0,1,2\}$ and $s \in 
	\{-1,1\}$,
	is \dombared{a} by $D \setminus \bigcup_{i \in V_1 \cup V_2} V(P_i)$.
Consider such an edge $e_{\tau,\sigma,i,z}$
	that is not \dombared{a} by $D \setminus \bigcup_{j \in V_1 \cup V_2} V(P_j)$.
In this case, $y_{\tau,\sigma_\mu} \in D$ with $\sigma \in C_\mu$ for $f(\mu) = \tau$.
That means $D$ contains the vertex $p_{i,\tau}^\alpha$ with $\alpha = \sigma(i)$,
	which has distance at most $\ahalf-1$ to the lower index end-vertex
	of $e_{\tau,\sigma,i,z}$ and $e_{\tau,\sigma,i,z}'$.
Finally, consider such an edge $e_{\tau,\sigma,i,z}^s$
	that is not \dombared{a} by $D \setminus \bigcup_{j \in V_1 \cup V_2} V(P_j)$.
That means $D$ contains the vertex $p_{i,\tau}^\alpha$ with $\alpha = \sigma(i)$,
	which has distance at most $\ahalf-1$ to the lower index end-vertex
	of $e_{\tau,\sigma,i,z}$.

\medskip

\backward
Assume that there is a size $\leq k$ set $D \subseteq V(G)$
	such that every edge in $E(G) \setminus E'$
	has at least one end-vertex in distance at most $\ahalf-1$ to $D$.
We show that there is a monotone-decreasing satisfying multi-assignment $\sigma$
	that is consistent for every variable in $V_2$.
Let $S$ be the set of edges consisting of, for every constraint $\mu \in [m]$
	the edge $\{z_i',z_i\}$,
	and, for every variable $i \in V_1$, node $\tau \in [t]$ and $z \in \{0,1,2\}$,
	the edge $\{p^{a/2}_{i,\tau,z}, p^{a/2+1}_{i,\tau,z}\}$,
	and, for every variable $i \in V_2$ and $z \in \{0,1,2\}$, the edge $\{p^{a/2}_{i,z}, p^{a/2+1}_{i,z}\}$.
We observe that every pair of edges in $S$
	has that their end-vertices have distance at least $a-1$.
Since also $D$ has size at most $k=|S|$,
	every edge in $S$ must be \dombared{a} by a unique vertex of $D$.

We modify $D$ to a vertex set $D'$ that still is an \dombarset{a} of $E(G)\setminus E'$.
Consider a constraint $\mu \in [m]$.
We replace the unique vertex $u \in D$ that \dombares{a} the edge $\{z_\tau', z_\tau\}$,
	with $y$, the unique vertex in $\{ y_{\mu,\sigma} \mid \sigma \in C_\mu \}$
	that has closest distance $u$.
For every such replacement, the resulting set still \dombares{a} $E(G)\setminus E'$
	since every in distance $\ahalf-1$ to $u$
	also have distance $\ahalf-1$ to $y$.
(Particularly $y_{\mu,\sigma}$ dominates more edges than its neighbor towards $\hat z_r$.)
Further, for every variable $i \in V_1$, node $\tau \in [t]$ and $z \in \{0,1,2\}$,
	for the vertex $u$ in distance at most $\ahalf-1$ to $\{p^{a/2}_{i,\tau,z}, p^{a/2+1}_{i,\tau,z}\}$
	let $p$ be the unique vertex on the path $P_{i,\tau,z}$ with minimum distance to $u$.
Since especially the edge $\{p^{a/2}_{i,\tau,z}, p^{a/2+1}_{i,\tau,z}\}$
	is not part of a cycle of length at most $a$,
	the vertex $p$ is unique. %
Further, as the modified set $D$ contains a vertex in $\{ y_{\mu,\sigma} \mid \sigma \in C_\mu \}$,
	we have that $D$ where $u$ is replaced by $y$ still is an \dombarset{a} of $E(G)\setminus E'$.
Hence we replace $u$ with $p$ in $D$.
Let the resulting \dombarset{a} be $D'$.
We have that for every variable $i \in V_2$ and $z \in \{0,1,2\}$,
	path $P_{i,z}$ contains exactly one vertex from $D'$,
	and for every variable $i \in V_1$, every node $\tau \in [t]$ and $z \in \{0,1,2\}$,
	path $P_{i,\tau,z}$ contains exactly one vertex from $D'$.

Now, given $D'$, we state the a monotone satisfying assignment $\sigma$
	that is consistent for the variables $V_2$.
For every variable $i \in V_2$ and every node $\tau \in [t]$,
	we set $\sigma(i,\tau)$ to the upper index
	of a the vertex of $D'$ on path $P_{i,1}$.
By definition, this assignment is consistent for $V_2$.
For every variable $i \in V_1$ and every node $\tau \in [t]$,
	we set $\sigma(i,\tau)$ to the the upper index
	of a the vertex of $D'$ on path $P_{i,\tau,1}$.
This assignment is monotone, since
	the edges of the paths $P_{i,\tau,1}, P_{i,\tau,2}, P_{i,\tau+1,0}, P_{i,\tau,-1}$
	including their connecting edges on the path $P_i$
	may only be \dombared{a} by the $4$ vertices in $D' \cap V(P)$ with $P \in \{P_{i,\tau,1}, P_{i,\tau,2}, 
	P_{i,\tau+1,0}, P_{i,\tau,-1}\}$.
	
Finally, we observe that every constraint $\mu \in [m]$ is satisfied.
Let $\tau = f(\mu)$.
Let $\sigma_\tau$ be the local assignment such that $y_{\tau,\sigma_\tau}$
	is the unique vertex of $\{y_{\tau,\sigma_\tau} \mid \sigma_\tau \in C_\mu\}$ in $D'$.
We claim that $\sigma(i,\tau) = \sigma_\tau$ for every variable $i \in \{1,\dots,4\}$.
Indeed, if $\sigma_\tau(i)$ is even, and assuming that $\sigma(i,\tau) \neq \sigma_\tau$
	then the edge $e_{\tau,\sigma,i,1}$ has neither end-vertex in distance at most $\ahalf-1$ to
	a vertex in $D'$, hence is not \dombared{a}.
Otherwise, if $\sigma_\tau(i)=\alpha$ is odd,
	then, for $z \in \{0,1,2\}$ and $s\in\{-1,1\}$,
	the unique vertex in $D' \cap P_{i}$, which we denote as $v_{z,s}$,
	must have distance $1$ to $p_{i,\tau,z}^{\alpha+s}$
	as otherwise the edge $e_{\tau,\sigma,i,z}^s$
	is not \dombared{a}.
Assuming that the vertices $v_{1,-1}, v_{1,1}$ are distinct,
	then the $3$ paths $P_{i,\tau,0},P_{i,\tau,1},P_{i,\tau,2}$ together contain
	the $4$ distinct vertices $v_{0,1}, v_{1,-1}, v_{1,1}, v_{2,-1} \in D'$.
Hence we have $v_{1,-1} = v_{1,1}$.
The unique vertex of $D' \cap V(P_{i,\tau,1})$
	must be $p_{i,\tau,z}^{\alpha}$,
	and hence we have $\sigma(i,\tau)=\sigma_\tau$.
In conclusion, $\sigma$ satisfies $\varphi$.
\end{proof}

This completes the proof for \cref{lemma:ds:ppseth:correctness},
	and hence for \cref{a:lemma:ds:ppset:lb}, for the case $a \geq 12$.
We proceed to show the statement for $a \in \{4,6,8,10\}$.
First, consider that $a=10$.
We adapt the construction as follows.

\smallskip

\textit{Construction for $a=10$:}
We do the same construction as for $a\geq 12$
	but do not add the paths $Q_{\tau,\sigma,i,z}$, $Q_{\tau,\sigma,i,z}^s$ as described in step \ref{ds:step:main}.
Instead, for every assignment $\sigma \in C_\tau$, variable $i \in X_\mu$ and $z \in \{0,1,2\}$,
	we proceed as follows.
\begin{itemize}
\item 
If $\alpha = \sigma(i)$ is even,
	we add a path $Q_{\tau,\sigma,i,z} = (q_0,\dots,q_{16})$ of length $16$
	between $q_0 = p_{i,\tau}^\alpha$ and $q_{16} = x_{\tau,\sigma}^i$.
We $2$-mark the vertex $q_{12}$.
\item
If $\alpha = \sigma(i)$ is odd, then for $s \in \{-1,1\}$ we proceed as follows.
We add a path $Q_{\tau,\sigma,i,z}^s = (q_0,\dots,q_{{36}})$ of length $36$
	between $q_0=p^{\alpha+s}_{i,\tau}$ and $q_{36} = x_{\tau,\sigma}^i$.
We $2$-mark the vertices $q_{26}$ and $q_{32}$.
\end{itemize}
Finally, we set the budget $k$ to as in the construction for $a\geq 12$
	and, for every assignment $\sigma \in C_\tau,$ variable $i \in X_\mu$ and $z \in \{0,1,2\}$,
	if $\sigma(i)$ is even, plus $1$, and else plus $3$.

\smallskip

We claim that \cref{lemma:ds:ppseth:correctness} also holds for this construction for $a=10$.
We easily see that the construction above outputs a graph
	that asymptotically has the same pathwidth as in the construction for $a\geq 12$.
Instead of repeating the whole proof, we focus on the role of the paths $Q_{\tau,\sigma,i,z}$
	and $Q_{\tau,\sigma,i,z}^s$, $s \in \{-1,1\}$,
	for $\mu \in [m]$ and $\tau = f(\mu)$, $i \in X_\mu$ and $z \in \{0,1,2\}$.
The proof of \cref{lemma:ds:ppseth:correctness} relies on the following.
\begin{itemize}
\item 
For even $\alpha = \sigma(i)$,
	if a \dombarset{a} $D$ of $E(G)\setminus E'$ contains $y_{\tau,\sigma}$,
	then $D$ must contain $p_{i,\tau,z}^\alpha$,
	and otherwise no vertex of $P_{i}$ is required to have 
	that every edge of $Q_{\tau,\sigma,i,z}$ has at least one end-vertex in distance $\ahalf-1$ to $D$.
\item
For odd $\alpha = \sigma(i)$ and $s \in \{-1,1\}$,
	if a \dombarset{a} $D$ of $E(G)\setminus E'$ contains $y_{\tau,\sigma}$,
	then $D$ must contain a vertex in distance $1$ to the vertex $p_{i,\tau,z}^{\alpha+s}$,
	and otherwise no vertex of $P_{i}$ is required to have 
	that every edge of $Q_{\tau,\sigma,i,z}^s$ has at least one end-vertex in distance $\ahalf-1$ to $D$.
\end{itemize}
The alternative construction has an additional budget of $1$ respectively $3$
	for every construction replacing such a path $Q_{\tau,\sigma,i,z}$ and $Q_{\tau,\sigma,i,z}^s$ for $s \in 
	\{-1,1\}$,
	depending on whether $\alpha = \sigma(i)$ is even.
With these budget constraints, we observe that the constructions
	replacing the paths $Q_{\tau,\sigma,i,z}$ and $Q_{\tau,\sigma,i,z}^s$
	have the same properties.
Particularly, for an \dombarset{a} $D$ of $E(G)\setminus E'$,
	in case $\alpha = \sigma(i)$ is even,
		then, if $y_{\tau,\sigma} \in D$, we assume that vertex $q_{10}$ of path $Q_{\tau,\sigma,i,z}$ is in $D$
		and else $q_5$ of path $Q_{\tau,\sigma,i,z}$ is in $D$;
	and in n case $\alpha = \sigma(i)$ is odd,
		then, if $y_{\tau,\sigma} \in D$, we assume that $q_{9},q_{19}$ and $q_{30}$ is in $D$,
		and else $q_5,q_{15}$ and $q_{25}$ is in $D$.
With this observation, the same proof of \cref{lemma:ds:ppseth:correctness} also applies for $a=10$.

\smallskip

The next two cases, $a=8$ and $a=4$, work analogously as the discussion for $a=10$,
	and hence we only provide the construction here.

\smallskip

\textit{Construction for $a=8$:}
We do the same construction as for $a\geq 12$
	but do not add the paths $Q_{\tau,\sigma,i,z}$, $Q_{\tau,\sigma,i,z}^s$ as described in step \ref{ds:step:main}.
Instead, for every assignment $\sigma \in C_\tau$, variable $i \in X_\mu$ and $z \in \{0,1,2\}$,
	we proceed as follows.
\begin{itemize}
\item 
If $\alpha = \sigma(i)$ is even,
	we add a path $Q_{\tau,\sigma,i,z} = (q_0,\dots,q_{38})$ of length $38$
	between $q_0 = p_{i,\tau}^\alpha$ and $q_{38} = x_{\tau,\sigma}^i$.
We $2$-mark the vertices $q_{6}$, $q_{18}$ and
	we $1$-mark the vertices $q_{30}$, $q_{36}$.
\item
If $\alpha = \sigma(i)$ is odd, then for $s \in \{-1,1\}$ we proceed as follows.
We add a path $Q_{\tau,\sigma,i,z}^s = (q_0,\dots,q_{{36}})$ of length $36$
	between $q_0=p^{\alpha+s}_{i,\tau}$ and $q_{36} = x_{\tau,\sigma}^i$.
We $2$-mark the vertex $q_{16}$ and
	we $1$-mark the vertices $q_{28}$, $q_{34}$.
Additionally, we add a pat of length $2$ to the vertex~$q_{10}$.
\end{itemize}
Finally, we set the budget $k$ to as in the construction for $a\geq 12$
	and, for every assignment $\sigma \in C_\tau,$ variable $i \in \{1,\dots,4\}$ and $z \in \{0,1,2\}$,
	plus $4$.

\smallskip

\textit{Construction for $a=6$:}
We do the same construction as for $a\geq 12$
	but do not add the paths $Q_{\tau,\sigma,i,z}$ as described in step \ref{ds:step:main}.
Instead, for every assignment $\sigma \in C_\tau$, variable $i \in X_\mu$ and $z \in \{0,1,2\}$,
	we proceed as follows.
\begin{itemize}
\item 
If $\alpha = \sigma(i)$ is even,
	we add a path $Q_{\tau,\sigma,i,z} = (q_0,\dots,q_{18})$ of length $18$
	between $q_0 = p_{i,\tau}^\alpha$ and $q_{18} = x_{\tau,\sigma}^i$.
We add a path of length $2$ to the vertex $q_{8}$,
	ending in a new vertex $q_{8}'$,
	to the vertex $q_{18}$, ending in a new vertex $q_{18}'$ and to the vertex $q_{10}$.
Then we $1$-mark the vertex $q_{8}'$ and $q_{18}'$.
Further, we add a path of length $4$ between $q_{12}$ and $q_{14}$.
\item
If $\alpha = \sigma(i)$ is odd, then for $s \in \{-1,1\}$ we proceed as follows.
We add a path $Q_{\tau,\sigma,i,z}^s = (q_0,\dots,q_{{12}})$ of length $12$
	between $q_0=p^{\alpha+s}_{i,\tau}$ and $q_{12} = x_{\tau,\sigma}^i$.
We add a path of length $2$ to the vertex $q_{4}$,
	and to the vertex $q_{12}$, ending in a new vertex $q_{12}'$.
Then we $1$-mark the vertex $q_{12}'$.
Further, we add a path of length $4$ between $q_{6}$ and $q_{8}$.
\end{itemize}
Finally, we set the budget $k$ to as in the construction for $a\geq 12$
	and, for every assignment $\sigma \in C_\tau$, variable $i \in X_\mu$ and $z \in \{0,1,2\}$,
	if $\sigma(i)$, plus $3$, and else plus $2$.

\smallskip

Finally, we consider the case that $a=4$.
This time we do not need to output a $2$-subdivided graph.
We reduce to \textsc{Marked $4$-Walk Dominating Set}
	where we allow graphs that are not $2$-subdivisions
	and where every vertex may be $1$-marked.
It easily follows that $4$-\DomSet
	is in one-to-one correspondence to this version of \textsc{Marked $4$-Walk Dominating Set}
	by a reduction that increases the pathwidth only by a constant.
Now, for the most part,
	we do the original construction as if $a=8$ and all variable assignments have twice their value.
That is, we do the same construction as for $a\geq 12$,
	however, pretending that $a=8$ and
	for $\sigma \in C_\mu$ and a constraint $\mu \in [m]$ we 
	instead of $\sigma$ use the mapping $\sigma'$
	where $\sigma'(i)=2\sigma(i)$ for every variable $i\in V_1 \cup V_2$.
Notably, only the vertices $p_{i}^0$ for $i \in V_1 \cup V_2$ become $1$-marked by the constructed graph $G$ so far.
Let $G'$ be such that $G$ is the $2$-subdivision of $G'$.
We output a modified graph $G'$ (hence a distance of $8$ in $G$ constitutes a distance of $4$ in $G'$).
Instead of $1$-marking $p_{i}^0$ for $i \in V_1 \cup V_2$, we remove these vertices.
For every vertex $u$ that was $2$-marked in the construction of $G$,
	we $1$-mark $u$ instead.
As the vertices $p_i^0$ for $i \in V_1 \cup V_2$ can be assumed
	to not be contained in any minimum \dombarset{a} of $E(G)\setminus E'$,
	\cref{a:lemma:ds:ppset:lb} follows for $a=4$ (on general graph)
	analogously as for the case $a\geq 12$.
This completes the proof of \cref{a:lemma:ds:ppset:lb}
	for all $a \geq 4$.

\subsubsection{Dynamic Program for Even Distances}
\label{section:ds:domination}

\newcommand{\state}{f}
\newcommand{\State}{F}

This section derives an algorithm for the following statement.

\begin{theorem}[\cref{lemma:ds:classic} restated]
\label{a:lemma:ds:classic}
\lemmaTextBoundsDsClassic
\end{theorem}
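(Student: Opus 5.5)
We design a dynamic program over a nice tree decomposition whose states record, for each bag vertex, a single integer from a domain of size $a$. By \cref{lemma:ds:equivalent} it is convenient to use definition \ref{def:ds:2}: $D$ is an \dombarset{a} iff every vertex has $d(u,D)\le \ahalf$ and the vertices with $d(u,D)=\ahalf$ form an independent set. For odd $a$ this is just distance-$\floor{\ahalf}$ domination in which, additionally, vertices at distance exactly $\ahalf$ (impossible, since $\ahalf$ is not integral) are independent. So the odd case is the classical distance-$d$ dominating set with $a=2d+1$, and the algorithm of \cite{BorradaileL16} already gives $a^t\cdot n^{O(1)}$. The real work is even $a=2d$: every vertex must be within distance $d$ of $D$, and no edge may have both endpoints at distance exactly $d$.

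For even $a=2d$ we will follow the standard "distance labels with promises" approach of Borradaile--Le / Katsikarelis et al. Each bag vertex $v$ gets a label $\ell(v)\in\{0,1,\dots,d\}$ meaning "the distance from $v$ to $D$ is exactly $\ell(v)$, and is witnessed inside the processed subgraph" or "promised to be witnessed outside". This naively gives $2d+1$ options per vertex (values $0$ through $d$, each satisfied or not yet satisfied, minus one), which is too many. The key trick is to exploit that adjacent vertices have labels differing by at most one. A label-$0$ vertex is always in $D$, hence satisfied. Moreover, the "satisfied/unsatisfied" bit is only needed for labels $1,\dots,d$. We will also use that a vertex labelled $d$ must have a neighbour with label $d-1$ (labels are exact distances, and the independence condition forbids it from being witnessed by a label-$d$ neighbour), so the witness structure for label $d$ coincides with that of lower labels. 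This gives $1+2d-1=2d=a$ states by merging label $d$ "unsatisfied" with one redundant configuration. Concretely, we plan the state set $\{0\}\cup\{1,\dots,d-1\}\times\{\text{sat},\text{unsat}\}\cup\{d\}$, where for label $d$ we do not need to distinguish: a vertex at distance $d$ is fine as long as some neighbour has distance $d-1$, and in the final check we only require the exact labelling to be consistent. The alternative is to use the monotone relaxation "$d(v,D)\le\ell(v)$" with the constraint verified at forget nodes. We will use the relaxation, which removes the sat bit for the top label.

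Transitions are standard. An introduce node assigns a label and checks edge constraints $|\ell(u)-\ell(v)|\le 1$ and not $\ell(u)=\ell(v)=d$. An introduce-edge node upgrades an unsat vertex to sat if a neighbour has label exactly one smaller. A forget node discards states whose forgotten vertex is unsat. A join node combines satisfaction bits by OR, which under the "unsat" monotone encoding is a covering-product (subset-convolution style) over each coordinate. This can be done in time $a^t\cdot n^{O(1)}$ via the fast zeta/Möbius transform on the product poset of per-vertex state orders, exactly as in \cite{RooijBR09,BorradaileL16}.

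The main obstacle is the join node and the correct count of states for the top label $d$. We have to verify that the relaxation "$d(v,D)\le \ell(v)$" still enforces the independence condition among label-$d$ vertices. We intend to argue this via an exchange argument: any solution admits an exact labelling, and any accepted relaxed labelling certifies a valid set since a vertex at true distance $<d$ is only easier. The join must then be a per-coordinate lattice convolution over a chain-like order of size $a$, for which the transform runs in $a^t$ operations.
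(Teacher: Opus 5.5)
Your final state design is correct and gives $a^t\cdot n^{O(1)}$, but it takes a different route from the paper. The reason you give for the key state reduction is also not the right one.

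\textbf{Comparison with the paper.} The paper uses Borradaile--Le style promise labels $\{0,\dots,a-1\}$. A low label $c<a/2$ promises a vertex of $D$ within distance $c$ inside $G_x$, and a high label $c>a/2$ promises one within distance $a-c$ outside $G_x$. For even $a$, the extra label $a/2$ only forbids two adjacent bag vertices from both carrying it. Transitions use distances in the whole graph $G$. The join merges $\{c,a-c\}$ into one class, sums over classes, multiplies pointwise, and undoes the summation by subtraction.

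You instead use exact distance labels $0,\dots,d$, a one-step witness bit on labels $1,\dots,d-1$, and only local edge checks. Your pair $(c,\mathrm{sat})/(c,\mathrm{unsat})$ plays the role of the paper's $c/(a-c)$, and your zeta/M\"obius step on the bit plays the role of the $\{c,a-c\}$ grouping. Your version buys three things:
\begin{itemize}
\item Transitions only inspect bag edges.
\item Soundness reduces to a clean fact. Suppose a labelling is $0$ exactly on $D$, is $1$-Lipschitz on edges, gives every label in $\{1,\dots,d-1\}$ a neighbour one lower, and has independent label-$d$ vertices. Then it equals $d(\cdot,D)$ and certifies (D2).
\item The same scheme handles odd $a=2d+1$ uniformly: bits on $1,\dots,d$ give $1+2d=a$ states, so you would not need Borradaile--Le as a black box.
\end{itemize}
The paper's version stays inside the established Borradaile--Le framework for $r$-domination and only has to add the single label $a/2$.

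\textbf{The justification for dropping the bit at label $d$ is misattributed.}
\begin{itemize}
\item The relaxation $d(v,D)\le\ell(v)$ does not by itself remove the need for a witness, since a label-$d$ vertex still asserts that it is dominated.
\item A tree-decomposition DP has no ``final check''; a vertex must be fully certified at the moment it is forgotten.
\item Your exchange argument only covers the easy direction, namely that true distance-$d$ vertices end up independent.
\end{itemize}
The actual reason is as follows. Every neighbour $w$ of a label-$d$ vertex $v$ carries a label in $\{0,\dots,d\}$, and the independence check rules out $d$. With your Lipschitz check the label is in fact exactly $d-1$. Such labels are certified by their own bits, so $d(w,D)\le d-1$. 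Since $G$ has no isolated vertices, $v$ has at least one such neighbour, and $d(v,D)\le d$ follows automatically. This is precisely the paper's remark that for label $a/2$ ``the neighbors of $u$ then imply proper domination of $u$''.

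Finally, M\"obius inversion needs subtraction, so your join tables must store counts (e.g.\ the number of partial labellings of each size) rather than Boolean feasibility, as in van Rooij et al.
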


Previously, such an algorithm was only known for \emph{odd $a$}, given by Borradaile et al.~\cite{BorradaileL16}.
We give a dynamic program (DP)
	that computes a minimum \dombarset{a} $D$ of $G$
	by using the definition \ref{def:ds:2}.
That is, a subset $D\subseteq V(G)$ is a \dombarset{a} of graph $G$ if:
\begin{itemize}
\item 
Every vertex $u \in V(G)$ has $d(u,D) \leq \ahalf$,
	and the set vertices $u \in V(G)$ where $d(u,D)=\ahalf$ forms an independent set.
\end{itemize}

Our DP over a tree decomposition uses a labeling of the vertices
	that roughly is the integer distance $< a$ to the nearest known vertex in a potential \dombarset{a}.
Doing so, we have to handle assigned integer distances $<\ahalf$ and $>\ahalf$ differently.
In case that $a$ is even, we also have to consider that the distance is exactly $\ahalf$.

Let $(\TT,\beta)$ be the given tree decomposition of width $t$ of the input graph.
For every bag $x$, let $G_x$ be the graph induced by the union of $\beta(x')$
	over $x'$ that is a descendant of $x$ including $x$ itself.

Bottom-up (towards some root node) we compute for every bag $x$
	and every labeling of $x$ and size $\kappa$,
	whether there is an \dombarset{a} $D$ that realizes the labeling at node $x$ (in the following sense)
	and where $D$ has size $\kappa$ when restricted to the subgraph $G_x$.
A labeling $\state$ of $x$ maps each vertex $u$ of the bag $\beta(x)$
	to a distance $0,1,\dots,a-1$.
A value of $0$ indicates that $u$ is part of the \dombarset{a}.
A \emph{low} value $c \in \{1,\dots, \floor{\ahalf-\half} \}$
	(hence all positive integers smaller than $\ahalf$)
	indicates a distance of $\leq c$ to a seen vertex of 
\dombarset{a},
	i.e., a vertex in the subgraph $G_x$.
And a \emph{high} value $c > \ahalf$ requires a distance of a $\leq a-c$ to a not seen vertex of \dombarset{a},
	i.e., a vertex outside the subgraph $G_x$.
For an assigned $\ahalf$ to $u$, we only require that $u$ is not adjacent to a vertex with another assigned $\ahalf$.
We use that the neighbors of $u$ then imply proper domination of $u$.
Since we may reprocess isolated vertices, $u$ has at least one neighbor.

\newcommand{\A}{A}
\subparagraph{Labeling}
For a bag $x$, recall that $G_x$ be the graph induced by the union of $\beta(y)$
	over all $y$ where $y$ is a descendant of $x$ in $\TT$ of $y$ itself is $x$.
We use $L \coloneqq \{0,1,\dots,a-1\}$ as the set of labels.
A \emph{labeling} of a node $x$ is a mapping $f: \beta(x) \to L$.
A subset $D\subseteq V(G)$ \emph{realizes} a labeling
	$\state$ if
	$D$ realizes $f(u)$
	for every vertex $u \in \beta(x)$, which is
\begin{itemize}
\item \labeltext{$($L0$)$}{def:labeling:0} if $f(u) = 0$, then $u \in D$;
\item \labeltext{$($L1$)$}{def:labeling:1} if $f(u) \in \{1,\dots, \floor{\ahalf-\half}\}$,
	then $f(u) \geq d(u,D \cap V(G_x))$;
\item \labeltext{$($L2$)$}{def:labeling:2} if $f(u) = \ahalf$, then $N(u) \cap f^{-1}(\ahalf)=\emptyset$;
\item \labeltext{$($L3$)$}{def:labeling:3} if $f(u) > \ahalf$, then $a-f(u) \geq d(u, D\cap V(G-G_x) )$. 
\end{itemize}

Notably, a  single set $D$ may realizes more than one labeling.
Further, if $a$ is odd, then the value $\ahalf$ is never assigned.
For a node $x$, an integer $\kappa \in \{0,\dots,n\}$ and a labeling $f$,
	we define a table entry of values in $\{0,1\}$, defined as
$$
\A_x[\kappa,f] \coloneqq
\begin{cases*}
1,&  \text{an \dombarset{a} }$D \subseteq V(G)
	\text{ realizes $f$ and has } |D \cap V(G_x) |=\kappa $. \\
0,& \text{else.}
\end{cases*}
$$
We refer to an \dombarset{a} that causes $\A_x[\kappa,f]$ in the above definition as \emph{witness}.
Conveniently, we say that a labeling $f: \beta(x) \to L$ is \emph{sane} if
\begin{itemize}
\item
$f^{-1}(\ahalf)$ is an independent set in $G$.
\item 
for every vertex $v \in \beta(x)$ with $f(v) >\ahalf$
	there is a neighbor $w \in N(v)$
	where $f(w) \in \{f(v)+1,\dots, a-1\}$ or $w \in V(G-G_x)$.
\end{itemize}
We easily observe that $\A[\kappa,f]=0$ for an in-sane labeling $f$.
Our DP computes the entries $A_x[\kappa',f']$ for integers $\kappa' \in \{0,\dots,n\}$ and
	labelings $f$ of $x$, bottom-up for every node $x$,
	depending on the type of $x$, as follows.
Notably, our DP is always aware of the whole graph $G$
	regarding the distances between vertices.
Introducing and forgetting vertices only influences the currently labeled vertices.

To simplify the DP, we first transform the given tree decomposition to a nice tree decomposition $(T,\beta)$,
	as shown by Bodlaender~\cite{Bodlaender1993}, where
	$T$ is rooted at a degree $1$ node $r$ with $\beta(r)=\emptyset$
	and every node $x \in V(T)$ is of one of the following types:
\begin{itemize}
\item \emph{leaf node} where $x$ has no children and $\beta(x)=\emptyset$,
\item \emph{introduce node} where $x$ has a single child $y$
	with $\beta(x) = \beta(y)\cup\{u^\star\}$ for some vertex $u^\star \notin \beta(y)$,
\item \emph{forget node} where $x$ has a single child $y$
	with $\beta(x) = \beta(y)\setminus\{u^\star\}$ for some vertex $u^\star \in \beta(y)$,
	and
\item \emph{join node} where $x$ has exactly two children $x_1$ and $x_2$ where $\beta(x)=\beta(x_1)=\beta(x_2)$.
\end{itemize}

\subparagraph{Leaf Node}
Assume that $x$ is a leaf node, hence with $\beta(x) = \emptyset$.
Then the only labeling of $x$ is the empty labeling $f'$.
Clearly, we have $\A_x[\kappa,f']=1$, if and only if $\kappa'=0$.

\subparagraph{Introduce Node}
Let $x$ be an introduce node with child node $y$,
	and with bag $\beta(x) = \beta(y) \cup \{u^\star\}$.
Let $f': \beta(x) \to L$ be a labeling of $x$.
Let $\kappa$ be an integer and
	$\kappa' = \kappa+1$ if $f'(u^\star)=0$, and else $\kappa' = \kappa$.
(We set $A_x[0,f']=0$ in case $f'(u^\star)=0$.)
We set $\A_x[\kappa', f']=1$, if $f' \in \FF(f)$ as defined below
	for some labeling $f: \beta(y) \to L$.
Otherwise we set $\A_x[\kappa', f']=0$.
We have $f' \in \FF(f)$ if $f'$ is sane and:
\begin{itemize}
\item $f'(u^\star)=0$
	or $f'(u^\star) \geq \ahalf$
	or $f'(u^\star) \geq \min_{w \in \beta(y)} d(u^\star,w)+f(w) \geq 1$,
and
\item for every vertex $u \in \beta(y)$, $f'(u) = 0$ if and only if $f(v)=0$, and
$$ f'(u) \;\geq\;
\begin{cases*}
\min\{f(u), d(u,u^\star) \}, & \text{if $f'(u^\star)=0$ and $f(u) + d(u,u^\star) \leq a$,} \\
f(u), & \text{else.}
\end{cases*}$$
\end{itemize}

\begin{lemma}
$\A_x[\kappa', f']=1$; if and only if $\A_y[\kappa, f]=1$ for some $f' \in \FF(f)$,
	and $\kappa'=\kappa+1$ if $f(u^\star)=0$ and else $\kappa'=\kappa$.
\end{lemma}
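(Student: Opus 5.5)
We prove the two directions separately, in each case transferring a witness $D$ between the two table entries; the set $D$ stays the same and only the labeling changes. Throughout, $V(G_x)=V(G_y)\cup\{u^\star\}$, and $u^\star$ has no neighbours in $V(G_y)\setminus\beta(y)$. Hence every vertex of $G_y$ at finite distance from $u^\star$ inside $G_x$ is reached through a bag vertex, and the distances $d(\cdot,\cdot)$ used by the DP are distances in the whole graph $G$, as stated before the lemma. The cardinality bookkeeping is immediate: $|D\cap V(G_x)|=|D\cap V(G_y)|+[u^\star\in D]$, and $u^\star\in D$ if and only if the label of $u^\star$ is $0$. This matches the rule $\kappa'=\kappa+1$ exactly when $f'(u^\star)=0$. (We read the condition ``$f(u^\star)=0$'' in the statement as $f'(u^\star)=0$, since $u^\star\notin\beta(y)$.)

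\emph{Backward direction.} Let $D$ witness $\A_y[\kappa,f]$, and let $f'\in\FF(f)$. We claim that the same $D$ realizes $f'$ at $x$. Sanity of $f'$ gives \ref{def:labeling:2}. We then check the remaining conditions vertex by vertex.
\begin{itemize}
\item For $u\in\beta(y)$ with a low label, there are two cases. If $f'(u)\ge f(u)$, then \ref{def:labeling:1} is inherited from $y$, because $D\cap V(G_y)\subseteq D\cap V(G_x)$. If instead $f'(u)\ge d(u,u^\star)$ with $u^\star\in D$, then $u^\star$ itself is a seen vertex of $D$ within distance $f'(u)$.
\item For high labels, \ref{def:labeling:3} at $y$ concerns $D\setminus V(G_y)$, which is a superset of $D\setminus V(G_x)$. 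The monotonicity $f'(u)\ge f(u)$ weakens the requirement $a-f'(u)\ge d(\cdot)$, but the unseen vertex of $D$ might be $u^\star$ itself. This is the delicate case. It is resolved by the side condition $f(u)+d(u,u^\star)\le a$ together with sanity, which forces a neighbour chain to reach an unseen vertex.
\item For $u^\star$ itself, the three options in the first bullet of $\FF(f)$ correspond to \ref{def:labeling:0}, to \ref{def:labeling:2}/\ref{def:labeling:3} (which hold because $D$ is a walk dominating set and sanity holds), and to \ref{def:labeling:1} via a bag vertex $w$ with $d(u^\star,w)+f(w)\le f'(u^\star)$.
\end{itemize}

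\emph{Forward direction.} Let $D$ witness $\A_x[\kappa',f']$. We define $f$ on $\beta(y)$ by keeping $f'(u)$ in every case except one. The exception is a vertex $u$ with a low label whose domination by $D\cap V(G_x)$ at distance at most $f'(u)$ uses $u^\star$. For such $u$ we set $f(u)$ to be the true distance $d(u,D\cap V(G_y))$ if that distance is still low. Otherwise we set $f(u)$ to a suitable high value, namely $a$ minus the distance to $u^\star$, so that \ref{def:labeling:3} at $y$ holds with $u^\star\in D\setminus V(G_y)$. This $f$ is realized by $D$ at $y$, and we must then check that $f'\in\FF(f)$. The key fact is the case formula $f'(u)\ge\min\{f(u),d(u,u^\star)\}$ when $f'(u^\star)=0$, because in that case the relevant domination comes either from $G_y$ or from $u^\star$. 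The low condition on $u^\star$ follows from the fact that any shortest path from $u^\star$ into $G_y$ leaves through some $w\in\beta(y)$.

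\emph{Main obstacle.} The step I expect to be hardest is the interaction between high labels and the new vertex. A label $f(u)>\ahalf$ at $y$ promises an unseen vertex of $D$, and after the introduction that promise may be fulfilled by $u^\star$ itself. So in the forward direction I must be careful that relabeling never produces an insane $f$, in particular that it never creates adjacent vertices both labelled $\ahalf$. I would handle this by a small case analysis on whether $d(u,u^\star)+f(u)\le a$, mirroring the two cases of the displayed formula. For the parity issue I would fall back on definition \ref{def:ds:2}, using the fact that vertices labelled $\ahalf$ need only non-adjacency.
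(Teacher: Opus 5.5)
Your forward direction is essentially the paper's: keep $f'$, except that a low-labelled bag vertex whose domination used $u^\star$ gets a high label such as $a-d(u,u^\star)$. The backward direction, however, has a genuine gap: it cannot be carried out with the witness $D$ left unchanged.

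Sanity of $f'$ only says that, from a vertex with label $>a/2$, a chain of neighbours with strictly increasing labels eventually steps to a vertex outside $G_x$. Nothing forces that vertex to lie in $D$. The side condition $f(u)+d(u,u^\star)\le a$ does not help here; it only governs when $f'(u)$ may drop to a low value. So (L3) at $x$ can fail for $D$ in two situations: when the promise (L3) of $u$ at $y$ was fulfilled only by $u^\star$ and $f'(u)$ stays high, and when $f'(u^\star)>a/2$.

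A concrete instance:
\begin{itemize}
\item Take $a=4$, let $G$ be the path $z\,u\,u^\star$, with $V(G_y)=\beta(y)=\{u\}$ and $\beta(x)=\{u,u^\star\}$.
\item Let $f(u)=3$; then $D=\{u^\star\}$ is a witness for $A_y[0,f]$.
\item The labeling $f'$ with $f'(u^\star)=0$ and $f'(u)=3$ lies in $\mathcal{F}(f)$; it is sane because of the outside neighbour $z$.
\item Yet $D$ violates (L3) for $u$ at $x$. Indeed, every witness of $A_x[1,f']$ must contain $z$.
\end{itemize}

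The missing idea, which is exactly how the paper argues, is to pass to $D'=D\cup V(G-G_x)$. A superset of an $a$-walk dominating set is again one. Also $D'\cap V(G_x)=D\cap V(G_x)$, so (L0), (L1) and the count are untouched. Now every sanity chain ends in $D'$. The chain makes at most $a-1-f'(u)$ steps inside $\beta(x)$ before stepping out, so $d(u,D'\cap V(G-G_x))\le a-f'(u)$ for every high-labelled $u\in\beta(x)$, including $u^\star$.

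Second, your bookkeeping claim that $u^\star\in D$ if and only if its label is $0$ is not immediate in the backward direction, because a witness for $A_y[\kappa,f]$ is unconstrained at $u^\star$.
\begin{itemize}
\item When $f'(u^\star)=0$, you must put $u^\star$ into the witness yourself. This is harmless for the same superset reason; the paper merely asserts $u^\star\in D'$ here.
\item When $f'(u^\star)\neq 0$ but $u^\star\in D$, the count $|D\cap V(G_x)|=\kappa+1$ does not match $\kappa'=\kappa$. Neither your sketch nor the paper's proof deals with this case.
\end{itemize}
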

\begin{proof}
\backward
Let $\A_y[\kappa,f]=1$,
	hence there is a witness $D \subseteq V(G)$ for $\A_y[\kappa,f]=1$,
	which means that $D$ is an \dombarset{a}
	that realizes $f$ and has $|D \cap V(G_y)|=\kappa$.
Let $f' \in \FF(f)$.
If $f'(u^\star)=0$, then let $\kappa' = \kappa+1$ and else $\kappa'=\kappa$.
We claim that $\A_x[\kappa', f']=1$ because of the witness $D' = D \cup V(G-G_x)$.
We have $|D' \cap V(G_x)|=\kappa'$ by the definition of $\kappa'$.
Particularly, if $f'(u^\star)=0$, then indeed $u^\star \in D'$.
Whenever a vertex $u \in \beta(x)$ has $f'(u)=0$,
	then also $f(u)=0$ such that $u \in D \subseteq D'$, as required.
Further, whenever a vertex $u \in \beta(x)$ has $f'(u)=\ahalf$,
	then since $f'$ is sane, $N(u) \cap {f'}^{-1}(\ahalf)=\emptyset$.
Moreover, whenever a vertex $u \in \beta(x)$ has $f'(u)>\ahalf$,
	then since $f'$ is sane,
	either there is a neighbor $w \in N(u) \cap V(G_x)$ with $a > f'(w) > f'(u)$
	or there is a neighbor in $N(u) \setminus V(G_x)$,
	such that inductively we conclude that $a - f'(u) \geq d(u,D' \cap V(G-G_x))$.

For a vertex $u \in \beta(x)$ with ${f'}(u) \in \{1,\dots, \floor{\ahalf-\half} \}$,
	we distinguish whether $u=u^\star$ or $u \in \beta(y)$.
In case $u=u^\star$, hence considering $f'(u^\star) \in \{1,\dots,\floor{\ahalf-\half} \}$, we have
\begin{equation}
	f'(u^\star)
	\geq \min_{w \in \beta(y)} d(u^\star,w)+f(w)
	\overset{\text{\ref{def:labeling:1}}}{\geq} d(u^\star,w) + d(u,D \cap V(G_y))
	\geq d(u^\star,D\cap V(G_x).
\end{equation}

It remains to consider that $u \in \beta(y)$ with $f'(u) \in \{1,\dots,\floor{\ahalf-\half}\}$.
We have $f'(u) \geq f(u)$,
	or we have $f'(u) \geq d(u,u^\star)$ and $f'(u^\star)=0$ and $d(u,u^\star) \leq a - f(u)$.
In the former case, $f'(u) \geq f(u) \geq d(u, D \cap V(G_y)) \geq d(u, D' \cap V(G_x))$, as required.
In the latter case, particularly $u^\star \in D'$,
	and hence $f'(u) \geq d(u,u^\star) = d(u,D' \cap V(G_x))$, as required.
In summary, $f$ realizes $D$ and $f' \in \FF(d)$.

\smallskip

\forward
Let $A_x[\kappa',f']=1$,
	hence there is a witness $D' \subseteq V(G)$ for $A_x[\kappa',f']=1$,
	which means that $D'$ is an \dombarset{a} that realizes $f'$ and has $|D'\cap V(G_x))|=\kappa'$.
If $f'(u^\star)=0$, let $\kappa=\kappa'$, and else let $\kappa=\kappa'-1$.
We claim that $A_y[\kappa,f]=1$ witnessed by the same set $D=D'$
	for some labeling $f \in F$ which we define in the following.
We have $|D \cap V(G_x)| = \kappa$ by the definition of $\kappa$.
Particularly, if $f'(u^\star)=0$, then indeed $u^\star \in D$.
Whenever $f'(u)=0$, we also set $f(u)=0$.
Then $f$ has that if $f(u)=0$, also $u \in D$.
Further, whenever a vertex $u \in \beta(x)$ has $f'(u)=\ahalf$,
	we also set $f(u)=\ahalf$.
Then since $N(u) \cap {f'}^{-1}(\ahalf)=\emptyset$ also $N(u) \cap {f}^{-1}(\ahalf)=\emptyset$.

Next, consider a vertex $u \in \beta(y)$ with $f'(x) > \ahalf$.
Then $D \cap V(G-G_x)$ contains some vertex $w$
	where $a-f'(u) \geq d(u,w)$.
Particularly, $w \in D \cap V(G-G_y)$,
	and we may set $f(u)=f'(u)$.
Clearly, $f'(u)=f(u)$ satisfies the conditions for $f$ and $f'$.
Next, consider a vertex $u \in \beta(y)$ with ${f'}(u) \in \{1,\dots,\floor{\ahalf-\half}\}$.
If $D \cap V(G_x)$ contains a vertex $w$ with $d(u,w)\leq f'(u)$,
	we may set $f(u)=f'(u)$.
Otherwise, we have $d(u,u^\star) \leq f'(u) < \ahalf$ and $u^\star \in D$.
Then by setting $f(u) = a - d(u,u^\star) > \ahalf$
	we satisfy $a - f(u) \geq d(u, u^\star) \geq d(u,D \cap V(G-G_y))$.
Indeed $f'(u)$ may result from $f(u)$
	in the first case as $f'(u)=f(u)$
	and in the second case as $f'(u^\star)=0$ and having $d(u,u^\star) \leq a-f(u)$ by definition of $f(u)$.
In summary, $f'$ realizes $D$ and we have $f' \in \FF(f)$.
\end{proof}

\subparagraph{Forget Node}
Let $x$ be a forget node with child $y$.
Let $\beta(x) = \beta(y) \setminus \{u^\star\}$.
Consider a labeling $f': \beta(y) \to L$.

\begin{lemma}
$\A_x[\kappa,f'] = 1$, if and only if
	$\A_y[\kappa,f] = 1$
	for some sane labeling $f$ that extends $f'$.
\end{lemma}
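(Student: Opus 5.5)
The plan is to prove the two directions separately, reusing the witness set in each, as in the introduce-node lemma. The forget node changes no set $D$; it only changes which subgraph counts as ``seen'', since $u^\star$ lies in $V(G_x)=V(G_y)$ but no longer in the bag. I will therefore argue that the realization conditions \ref{def:labeling:0}--\ref{def:labeling:3} for the vertices of $\beta(x)$ are the same relative to $y$ and relative to $x$. The whole content is then the existence (forward) or the sanity (backward) of a label for $u^\star$.

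\backward Let $\A_y[\kappa,f]=1$ with witness $D$, where $f$ is sane and $f|_{\beta(x)}=f'$. Since $G_x$ and $G_y$ share the same vertex set, $|D\cap V(G_x)|=|D\cap V(G_y)|=\kappa$. For every $u\in\beta(x)$, condition \ref{def:labeling:1} refers to $d(u,D\cap V(G_y))=d(u,D\cap V(G_x))$. Condition \ref{def:labeling:3} refers to $D\cap V(G-G_x)=D\cap V(G-G_y)$. Conditions \ref{def:labeling:0} and \ref{def:labeling:2} do not depend on the node, with \ref{def:labeling:2} even weakening because fewer vertices carry label $\ahalf$. Hence $D$ realizes $f'$, and $\A_x[\kappa,f']=1$.

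\forward Let $D$ witness $\A_x[\kappa,f']=1$. I will define $f(u^\star)$ from the true distance $d^\star=d(u^\star,D)$, which is at most $\ahalf$ because $D$ is an \dombarset{a} (definition \ref{def:ds:2}).
\begin{itemize}
\item If $u^\star\in D$, set $f(u^\star)=0$.
\item If $d^\star<\ahalf$, set $f(u^\star)=d^\star$; the nearest vertex of $D$ lies either in $G_y$ or outside it.
\item If $d^\star=\ahalf$ (only for even $a$), set $f(u^\star)=\ahalf$. Then \ref{def:labeling:2} holds by the independence part of \ref{def:ds:2}.
\end{itemize}

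The delicate point is a nearest vertex of $D$ outside $G_y$, because then \ref{def:labeling:1} fails for the label $d^\star$. In that case I use instead the high label $f(u^\star)=a-d^\star>\ahalf$, which satisfies \ref{def:labeling:3}. However, every vertex of $G-G_y$ adjacent to $u^\star$ would have to lie in the bag, contradicting that $u^\star$ is forgotten. So I expect to show that any shortest path from $u^\star$ to $D\setminus V(G_y)$ leaves $G_y$ through a vertex $w\in\beta(x)$. The DP is ``aware of the whole graph'', so distances are taken in $G$, and for this $w$ the label $f'(w)$ should already certify the appropriate bound.

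The main obstacle is sanity of $f$. The property that $f^{-1}(\ahalf)$ is independent holds by \ref{def:ds:2}. The second condition requires that a vertex $v$ with $f(v)>\ahalf$ has a neighbor $w$ with $f(w)\in\{f(v)+1,\dots,a-1\}$ or $w\notin V(G_y)$. I would first try choosing $f(u^\star)$ as $a$ minus the distance to $D\cap V(G-G_y)$, and check this condition along a shortest path to that vertex. If that fails, I would weaken the claim to ``some sane $f$ exists'' by increasing labels upward along the path, monotonically toward the outside.
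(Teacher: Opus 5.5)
Your backward direction is correct and is the paper's argument. The witness $D$ is reused, and $V(G_x)=V(G_y)$ leaves $\kappa$ and conditions (L0), (L1), (L3) for the vertices of $\beta(x)$ unchanged, while (L2) only gets weaker.

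The forward direction has a genuine gap. Your whole approach rests on the choice of $f(u^\star)$, and that choice does not work.
\begin{itemize}
\item In the case $d(u^\star,D)=a/2$ you derive (L2) from the independence part of (D2). But (L2) constrains labels, not distances: it puts no distance requirement on a vertex labeled $a/2$. So $f'$ may give label $a/2$ to a neighbor $w\in\beta(x)$ of $u^\star$ with $d(w,D)<a/2$. Then $f(u^\star)=a/2$ violates both (L2) and sanity.
\item In the high-label case, all neighbors of $u^\star$ lie in $G_y$, so sanity requires a neighbor $w\in\beta(x)$ with $f'(w)>f(u^\star)$. Realization only gives one-sided bounds, so nothing forces $f'(w)$ to be that large. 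Moreover, the neighbor of $u^\star$ on a shortest path need not be in the bag at all.
\item Your fallback of raising labels along the path is unavailable. $f$ must agree with $f'$ on $\beta(x)$, so the only free value is $f(u^\star)$.
\end{itemize}

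This gap cannot be closed. Under the paper's definitions the forward implication is false, for every choice of witness.

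Take $a=4$ and let $G$ be the path $u^\star w v$. Use a nice decomposition in which $y$ has bag $\{u^\star,w\}$, $x$ forgets $u^\star$, and $v$ is introduced above $x$. Then $D=\{v\}$ is a $4$-walk dominating set that realizes $f'(w)=2$ with $|D\cap V(G_x)|=0$, so $A_x[0,f']=1$. Now take any extension $f$ and any witness $D'$ with $D'\cap\{u^\star,w\}=\emptyset$, and consider the possible labels of $u^\star$:
\begin{itemize}
\item labels $0$ and $1$ need a vertex of $D'$ in $\{u^\star,w\}$, which is excluded;
\item label $2$ clashes with $f(w)=2$ in (L2);
\item label $3$ needs $d(u^\star,v)\le 1$, but this distance is $2$.
\end{itemize}
Hence $A_y[0,f]=0$ for every extension $f$.

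For odd $a$ the sanity condition breaks the statement in the same way. With $a=5$, the same path and $f'(w)=3$, the only realizable label for $u^\star$ is $3$. But $u^\star$ then has no neighbor labeled $4$ and no neighbor outside $G_y$, so $f$ is not sane.

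The paper's proof of this direction only asserts that $D$ also witnesses $A_y[\kappa,f]$, without ever fixing $f(u^\star)$, so it does not resolve this either. A correct argument requires changing the statement. The DP only needs soundness, which is your backward direction, together with completeness along labelings chosen from one fixed solution. The label semantics or the sanity condition would also have to be adjusted so that such labelings exist.
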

\begin{proof}
\forward
Consider that $\A_x[\kappa',f'] = 1$ because of witness $D$.
Since we require that $f$ is sane and further $G_x = G_y$,
	we have $\A_y[\kappa,f]$ witnessed by $D$ as well.

\backward
Let $\A_y[\kappa,f] = 1$ because of witness $D$.
We have that $N(u^\star)\subseteq V(G_y)$
	and hence $f$ restricted to $\beta(x)$ is sane.
Since further $G_x = G_y$, we have $\A_x[\kappa,f']=1$ witnessed by $D$.
\end{proof}

\subparagraph{Join Node}
Consider a join node $x$ with children $x_1$ and $x_2$.
Let $\#_0(f) = | f^{-1}(0) |$, which is the
	the number of vertices counted twice
	when adding up the solutions for the subgraphs $G_{x_1}$ and $G_{x_2}$.
Wether there is a solution $D$ for $\A_x[\kappa,f]$
	depends on the existence of solutions for $\A_{x_1}[\kappa_1,f_1]$ and $\A_{x_2}[\kappa_2,f_2]$
	with $\kappa_1 + \kappa_2 = \kappa + \#_0(f)$
	and labelings $f_1,f_2: \beta(x) \to L$ compatible with $f$ in the following sense.
Intuitively, we differentiate between whether $f(u)$ is \emph{high},
	that is $f(u) \in \{\ceil{\ahalf+\half},\dots,a-1\} \cup \{0\}$,
	or \emph{low}, that is $f(u) \in \{1,\dots, \floor{\ahalf-\half}\}$.
If $f(u)$ is high, a realizing dominating set $D$ contains a vertex in distance $\leq a-f(u)$
	introduced in an ancestor node $x'$ of $x$.
As then $x'$ is an ancestor for $x_1$ \emph{and} $x_2$, we require that $f(u)=f_1(u)=f_2(u)$. 
If $f(u)$ is low, then a realizing dominating set $D$
	contains a vertex in distance $\leq f(u)$ in descendant of $x$,
	hence a descendant of $x_1$ \emph{or} $x_2$.
Hence we require that $f_i(u)=f_i(u)$ and $f_{3-i}(u) \leq a-f_i(u)$ for an $i \in \{1,2\}$.

Formally, we define for
	labelings $\state_1,\state_2: \beta(x) \to L$
	where, for every vertex $u \in \beta(x)$,
	$\{\state_1(u),\state_2(u) \} \subseteq \{ c, a-c \}$ for some $c \in \{0,\dots,a-1\}$,
	the combined labeling $\state_1 \oplus \state_2$
	as
\begin{itemize}
\item $\state(u) = \min\{ \state_1(u), \state_2(u) \}$ for every node $u \in \beta(x)$.
\end{itemize}
For convenience, let us denote $f_1(u) \oplus f_2(u)=f(u)$
	if $f_1 \oplus f_2 = f$ when restricted to input $u$.

\begin{lemma}
For integer $\kappa$ and sane labeling $\state$ of a node $x$ with children $x_1$ and $x_2$,
$$ \A_x[\kappa,\state] \;= \max_{\substack{\kappa_1 + \kappa_2 \;=\; \kappa + \#_0(\state), \\
	f_1 \oplus f_2 = f}}
	\A_{x_1}[\kappa_1,\state_1] \cdot \A_{x_2}[\kappa_2,\state_2] $$
\end{lemma}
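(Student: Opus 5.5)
We prove the join recurrence in the same style as the introduce and forget lemmas: two directions, each by exhibiting a witness. Recall that $G_x$ is the union of $G_{x_1}$ and $G_{x_2}$, and that these two graphs intersect exactly in $\beta(x)$. Hence, for any $D$,
\[
|D\cap V(G_{x_1})|+|D\cap V(G_{x_2})| = |D\cap V(G_x)| + |D\cap\beta(x)|.
\]
If $D$ realizes $f$ with $D\cap \beta(x)=f^{-1}(0)$, then the correction term is exactly $\#_0(f)$, which gives the size bookkeeping. We may assume this equality $D\cap\beta(x)=f^{-1}(0)$ on witnesses: a vertex of $D\cap\beta(x)$ with nonzero label can be handled by relabeling it $0$ in a sibling entry. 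Alternatively, we read the table semantics so that label $0$ is exactly membership; this is the convention we commit to.

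\emph{($\geq$) direction.} Let $D_1,D_2$ be witnesses for $\A_{x_1}[\kappa_1,f_1]$ and $\A_{x_2}[\kappa_2,f_2]$. We take
\[
D \coloneqq (D_1\cap V(G_{x_1}))\cup (D_2\cap V(G_{x_2}))\cup (D_1\cap V(G-G_x)).
\]
We then check (L0)--(L3) for $f=f_1\oplus f_2$ vertex by vertex.

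For $f(u)=0$, both labels are $0$, so $u\in D$.

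For $f(u)=\ahalf$, both labels equal $\ahalf$. Sanity of $f$ gives (L2).

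For low $f(u)=f_i(u)$, the vertex realizing (L1) for $f_i$ lies in $V(G_{x_i})\cap D$.

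For high $f(u)$, the compatibility condition $\{f_1(u),f_2(u)\}\subseteq\{c,a-c\}$ with the minimum being high forces $f_1(u)=f_2(u)$. The outside witness lies in $V(G-G_x)\subseteq V(G-G_{x_i})$. We choose the outside part of $D$ from $D_1$; this works because $V(G-G_{x_1})\supseteq V(G-G_x)$. Here we must check that the vertex certifying (L3) for $f_1$ can be taken outside $G_x$ and not inside $G_{x_2}\setminus\beta(x)$.

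The remaining and main obstacle is to show that $D$ is still an $a$-walk dominating set in the sense of (D2). Vertices deep inside $G_{x_i}\setminus\beta(x)$ were dominated in $D_i$, possibly via vertices of $D_i$ lying in $G_{x_{3-i}}$ that we discarded. Every shortest path from such a vertex leaving $G_{x_i}$ passes through $\beta(x)$. So it suffices to argue that the distance information stored at bag vertices is faithfully reproduced: a bag vertex $w$ whose distance in $D_i$ came from the other side carries a high label $f_i(w)=a-c$. The matching label $f_{3-i}(w)=c$ (low) guarantees a vertex of $D\cap V(G_{x_{3-i}})$ at distance $\le c$. Composing these paths yields $d(v,D)\le d(v,D_i)$ for every vertex $v$ of $G_{x_i}$, so (D2) transfers. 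The independence part of (D2) for vertices at distance exactly $\ahalf$ follows since distances did not increase.

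\emph{($\leq$) direction.} Given a witness $D$ for $\A_x[\kappa,f]$, we use $D$ itself as witness for both children, with sizes $\kappa_i=|D\cap V(G_{x_i})|$. We define $f_1,f_2$ per vertex.

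If $f(u)\in\{0,\ahalf\}$ or $f(u)$ is high, we set $f_1(u)=f_2(u)=f(u)$. Since $V(G-G_{x_i})\supseteq V(G-G_x)$, (L3) still holds.

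If $f(u)=c$ is low, we pick a closest $w\in D\cap V(G_x)$ with $d(u,w)\le c$, say $w\in V(G_{x_i})$. We set $f_i(u)=c$ and $f_{3-i}(u)=a-c$. Then (L3) holds for $x_{3-i}$, because either $w\notin V(G_{x_{3-i}})$ or $w\in\beta(x)$. In the latter case we instead assign the low label on both sides, which is the degenerate case of the compatibility rule.

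By construction $f_1\oplus f_2=f$ and $\kappa_1+\kappa_2=\kappa+\#_0(f)$, which concludes the proof. Sanity of $f_1,f_2$ needs no extra argument: these entries have witnesses, and insane labelings have value $0$ anyway.
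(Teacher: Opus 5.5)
Your $(\leq)$ direction is correct and is the paper's argument. Your extra subcase is a small improvement: when the nearest solution vertex for a low label lies in $\beta(x)$, you put the low label on both sides, which repairs a slip in the paper. The gap is in the $(\geq)$ direction, in two places.

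\textbf{First, (L3) for high labels.} You only \emph{flag} that the vertex certifying (L3) for $f_1$ might lie in $V(G_{x_2})\setminus\beta(x)$. The inclusion $V(G-G_{x_1})\supseteq V(G-G_x)$ you cite points the wrong way to rule this out. With your outside part $D_1\cap V(G-G_x)$ it really does fail. Take $a=4$ and the star with centre $b$, where $\beta(x)=\{b\}$, and leaves $v\in V(G_{x_1})$, $w\in V(G_{x_2})$, $z\notin V(G_x)$. Set $f(b)=f_1(b)=f_2(b)=3$, which is sane because of $z$. Then $D_1=\{w\}$ and $D_2=\{v\}$ witness $\A_{x_1}[0,f_1]=\A_{x_2}[0,f_2]=1$, but your combined set is empty. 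The paper handles (L3) differently: it puts all of $V(G-G_x)$ into the combined witness. It then uses the sanity of $f$, namely a chain of bag neighbours with strictly increasing high labels ending next to a vertex of $V(G-G_x)$, to get $d(u,D)\le a-f(u)$. In the star this gives $D=\{z\}$, which works.

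\textbf{Second, walk domination of the combined set.} Your argument rests on the claim that a bag vertex whose $D_i$-distance comes from the other side carries a high label recording that distance. Nothing in (L1)--(L3) gives this: the labels are only upper bounds, and the label $\ahalf$ carries no distance information at all. So $d(v,D)\le d(v,D_i)$ does not follow. Moreover, with the table defined literally as in the paper (a witness is \emph{any} global $a$-walk dominating set), this direction is false. Take $a=4$, the path on $v,b,w$ with centre $b$, $\beta(x)=\{b\}$, $v\in V(G_{x_1})$, $w\in V(G_{x_2})$, and $f(b)=f_1(b)=f_2(b)=2$. The sets $\{w\}$ and $\{v\}$ give $\A_{x_1}[0,f_1]=\A_{x_2}[0,f_2]=1$. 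Yet $\A_x[0,f]=0$, since $D=\emptyset$ is not walk dominating.

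The paper's proof never checks that the combined set is walk dominating, so you have located the real crux. However, no local patch of your step closes it; you need a different invariant. Entries should be defined by partial solutions $D\subseteq V(G_x)$ which, together with the promises encoded by the high labels on $\beta(x)$, already certify domination of the forgotten vertices $V(G_x)\setminus\beta(x)$. Only with such an invariant can the separator $\beta(x)$ be used to glue the two children together.
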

\begin{proof}
($\leq$)
Assume that $D \subseteq V(G)$ realizes $f$.
We show that there are labelings $f_1$ and $f_2$ where $f_1 \oplus f_2 = f$,
	such that $f_i$ realize $D$ for $i \in \{1,2\}$.
The size constraint holds, since, for every vertex $u \in \beta(u)$,
	we have $f(u)=0$ if and only if $f_1(u)=f_2(u)=0$,
	and hence $D \cap V(G_x)$ has size
	$|D \cap V(G_{x_1})| + |D \cap V(G_{x_2})| - \#_0(f)$.

Consider a vertex $u \in \beta(x)$.
If $f(u)=0$, meaning $u \in D$, we set $f_1(u)=f_2(u)=f(u)$,
	which clearly satisfies $f_1(u) \oplus f_{2}(u) = f(u)$.
Then $D$ realizes $f_i(u)=0$ for $i \in \{1,2\}$.
For every vertex $u$ with $f(u)=\ahalf$,
	we set $f_1(u)=f_2(u)=\ahalf$, which satisfies $f_1(u) \oplus f_{2}(u) = f(u)$.
Then $N(u) \cap f_i^{-1}(\ahalf) = N(u) \cap f^{-1}(\ahalf)=\emptyset$,
	such that $D_i$ realizes $f_i(u)=\ahalf$ for $i \in \{1,2\}$.
For a vertex $u$ with $f(u)\in\{ \ceil{\ahalf+\half},\dots,a\}$,
	we set $f_1(u)=f_2(u)=f(u)$,
	which satisfies $f_i(u) \oplus f_{3-i}(u) = f(u)$.
Then for every $i\in\{1,2\}$,
	we have $d(u, D \cap V(G-G_{x_i})) \leq d(u, D \cap V(G-G_x)) \leq a - f(u)$,
	such that $D$ realizes $f_i(u)=f(u)$.
Finally, consider that $f(u) \in \{1,\dots, \floor{\ahalf- \half} \}$.
That means $f(u) \geq d(u,v)$ for some vertex $v \in D\cap V(G_x)$.
We set $f_i(u)=f(u)$ and $f_{3-i}(u) = a-f(u)$, which satisfies $f_i(u) \oplus f_{3-i}(u) = f(u)$.
Regarding the child $i$,
	we have $d(u,D \cap V(G_{x_i})) \leq d(u,v) = d(u,D \cap V(G_{x_i}) \leq f(u)=f_i(u)$,
	satisfying \ref{def:labeling:1}.
Regarding the other child $3-i$,
	we have $d(u,D \cap V(G-G_{x_{3-i}})) \leq d(u,v) \leq a - (a - f(u)) = a - f_{3-i}(u)$,
	satisfying \ref{def:labeling:3}.
In conclusion, $D$ realizes $f_1$ and $f_2$ with $f_1 \oplus f_2 = f$.

($\geq$)
Assume that there are labelings $f,f_1,f_2: \beta(x) \to L$ with
	$f_1 \oplus f_2 = f$ and $\kappa_1+\kappa_2 = \kappa + \#_0(f)$
	such that $D_{i}$ realizes $f_i$ for $i \in \{1,2\}$.
We show that then $D = V(G-G_x) \cup \bigcup_{i\in\{1,2\}} (D_i \cap V(G_{x_i})) $ realizes $f$.
We recall that $f(u)=0$ if and only if $f_1(u)=f_2(u)=0$,
	and hence $|D\cap V(G_x)|$ has size $|D_1 \cap V(G_{x_1})|+|D_2 \cap V(G_{x_2})| - \#_0(f) = \kappa$.

Consider a vertex $u \in \beta(x)$.
If $f(u)=0$, then $f_1(u)=f_2(u)=0$, and hence $u \in D$ such that $D$ realizes $f(u)$.
Similarly, for every vertex $u \in \beta(x)$,
	we have $f(u)=\ahalf$ if and only if $f_1(u)=f_2(u)=\ahalf$.
It follows that $f^{-1}(\ahalf) = f_1^{-1}(\ahalf)$, and hence that $D$ realizes $f(u)=\ahalf$.
For a vertex $u$ with $f(u) \in \{ \ceil{\ahalf + \half},\dots,a-1\}$,
	then since $f$ is sane, vertex $u$ has either a neighbor $w \in V(G-G_x) \subseteq D$
	or a neighbor $w \in \beta(x)$ with $f(w) \geq f(u)+1$,
	such that inductively it follows that $d(u,D) \leq a-f(u)$.
Finally, we consider that $f(u) \in \{1,\dots, \floor{\ahalf-\half} \}$.
Then for at least one child $i \in \{1,2\}$,
	we have $f_i(u)=f(u)$,
	hence that there is a vertex $v \in D_1 \cap V(G_{x_1})$ with $d(u,v)\leq f_i(u)=f(u)$.
Since $D_1 \cap V(G_{x_1}) \subseteq D \cap V(G_x)$, and hence that $D$ realizes $f(u)$.
In conclusion, $D$ realizes the labeling $f$.
\end{proof}

Iterating over all $f_1, f_2$ with $f_1 \oplus f_2 = f$ is slow.
To be faster, we convolute labels in accordance of the $\oplus$ operation.
Each of the following labels in $\LL$ is now a set,
	the set of labels of $L$ which it replaces.
	$$\LL \;\coloneqq\; \big\{ \{1,a-1\},\{2,a-2\},\dots,\{\ahalf-1,\ahalf+1\} \big\}
		\cup \big\{ \{\ahalf\}, \dots, \{a-1\} \big\} \cup \big\{ \{0\} \big\} .$$
For a labelings $f: X \to L$ and $F: X \to \LL$,
	we conveniently say that $f \in F$ if $f(u) \in F(u)$ for every vertex $u \in \beta(x)$.
Then let us abstract the notion $f_1 \oplus f_2 = f$ to labels $\LL$.

\begin{lemma}
\label{lemma:dp:f:f:in:F}
Let $F: X \to \LL$ and $f_1,f_2: X \to L$.
Then $(f_1 \oplus f_2) \in F$,
	if and only if $f_1,f_2 \in F$.
\end{lemma}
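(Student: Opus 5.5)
The statement is pointwise. Both the operation $\oplus$ and the relation ``$\in F$'' are defined vertex by vertex. So I would fix a single vertex $u\in X$ and show that $f_1(u)\oplus f_2(u)\in F(u)$ holds if and only if $f_1(u),f_2(u)\in F(u)$. Here the left side includes the requirement that $f_1(u)\oplus f_2(u)$ is defined, i.e.\ that $\{f_1(u),f_2(u)\}\subseteq\{c,a-c\}$ for some $c$. I would then finish by a case distinction over the type of the set $F(u)\in\LL$. The possible types are $\{0\}$, a pair $\{c,a-c\}$ with $1\le c<\ahalf$, the singleton $\{\ahalf\}$ (only when $a$ is even), and a singleton $\{h\}$ with $h>\ahalf$.

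For the direction ($\Leftarrow$), assume $f_1(u),f_2(u)\in F(u)$. Every set in $\LL$ has the form $\{c,a-c\}$ or is a singleton $\{c\}$, which is contained in $\{c,a-c\}$. Hence both values lie in a common set $\{c,a-c\}$, so $f_1(u)\oplus f_2(u)$ is defined. It equals $\min\{f_1(u),f_2(u)\}$, which is one of the two values and therefore lies in $F(u)$.

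For the direction ($\Rightarrow$), assume $\{f_1(u),f_2(u)\}\subseteq\{c',a-c'\}$ and $v:=\min\{f_1(u),f_2(u)\}\in F(u)$. I would argue by cases on $v$.
\begin{itemize}
\item If $v=0$, then $c'\in\{0,a\}$. Since all labels are at most $a-1$, this forces $f_1(u)=f_2(u)=0$.
\item If $v=\ahalf$, then $c'=a-c'=\ahalf$, so both values equal $\ahalf$.
\item If $v=c$ is low, then $F(u)=\{c,a-c\}$ is the only set of $\LL$ containing $c$, and both values lie in $\{c,a-c\}=F(u)$.
\item If $v$ is high, then the other value is $v$ or $a-v<v$. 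The latter is impossible because $v$ is the minimum, so $f_1(u)=f_2(u)=v$, and both values lie in whichever set $F(u)$ contains $v$, be it the pair $\{a-v,v\}$ or the singleton $\{v\}$.
\end{itemize}

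The only point needing care is the mild overlap in the definition of $\LL$. A high label $h$ appears both in the pair $\{a-h,h\}$ and in the singleton $\{h\}$. The last case above handles this: with a high minimum both values coincide, so membership holds whichever set is intended. I expect no real obstacle beyond checking this overlap and the parity remark that $\ahalf$ is absent when $a$ is odd.
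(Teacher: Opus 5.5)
Your proof is correct and follows essentially the same route as the paper: a pointwise case analysis, on the value $\min\{f_1(u),f_2(u)\}$ for the forward direction and on the shape of $F(u)$ for the backward direction. You are slightly more careful than the paper in two places. You explicitly check that $f_1\oplus f_2$ is defined. You also handle the overlap in $\LL$, where a high label $h$ lies both in the pair $\{a-h,h\}$ and in the singleton $\{h\}$.
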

\begin{proof}
\forward
Assume $f_1 \oplus f_2 = f \in F$.
Consider a vertex $u \in X$.
If $f(u) \in \{\ahalf\} \cup \{ \ceil{\ahalf+\half},\dots,a-1\} \cup \{0\}$,
	then also $f_1(u),f_2(u) = f(u) \in \{f(u)\}$.
If $f(u) \in \{1,\dots, \floor{\ahalf-\half}\}$, we have $\{f_1(u),f_2(u)\} \subseteq \{f(u),a-f(u)\}$.
Hence $f_1,f_2 \in F$.

\backward
Assume $f_1,f_2 \in F$.
Consider a vertex $u \in X$.
If $\State(u) = \{c\}$, in which case $c \geq \ahalf$,
	we have $f_1(u)=f_2(u)=c$.
If $\State(u) = \{c,-c\}$,
	we have $\{\state_1(u),\state_2(u)\} \subseteq \{c,-c\}$.
Hence $f_1 \oplus f_2 = f \in F$
	for $f$ where $u \mapsto \min\{f_1(u),f_2(u)\}$ for every $u \in X$.
\end{proof}

Let $\#_0(F) \coloneqq |F^{-1}(\{0\})|$, analogously to $\#_0(f)$.

\begin{observation}
For $f \in F$, we have $\#_0(F) = \#_0(f)$.
\end{observation}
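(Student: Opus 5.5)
The statement to prove is: if $f\in F$, then $\#_0(F)=\#_0(f)$. I will prove it directly from the structure of the label set $\LL$ and the meaning of $f\in F$, namely that $f(u)\in F(u)$ holds pointwise for every $u\in X$.

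The first step is to check that the singleton $\{0\}$ is the only member of $\LL$ that contains the label $0$. The paired labels $\{c,a-c\}$ in $\LL$ have $1\le c\le \ahalf-1$, so both entries lie in $\{1,\dots,a-1\}$. The remaining singletons are $\{c\}$ with $\ahalf\le c\le a-1$, which are also nonzero. The only other member of $\LL$ is $\{0\}$ itself. For odd $a$ the same conclusion holds once the ranges are read with floors and ceilings, because $0$ never appears in any other set. Hence, for every $F(u)\in\LL$, we have $0\in F(u)$ if and only if $F(u)=\{0\}$.

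Next I fix $f\in F$ and a vertex $u\in X$, and show that $f(u)=0$ exactly when $F(u)=\{0\}$.
\begin{itemize}
\item If $F(u)=\{0\}$, then $f(u)\in F(u)$ forces $f(u)=0$.
\item If $f(u)=0$, then $0\in F(u)$, and the first step gives $F(u)=\{0\}$.
\end{itemize}
Therefore $f^{-1}(0)=F^{-1}(\{0\})$ as subsets of $X$, and taking cardinalities yields $\#_0(f)=\#_0(F)$.

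No step here is hard. The only point needing care is confirming that $\LL$ has no paired set $\{c,a-c\}$ with $c=0$; that is, that $\{0\}$ is kept as a singleton rather than merged with $a$. The definition of $\LL$ lists $\{0\}$ separately, which settles this. The consequence is that, in the convolution-based join, the correction term $\#_0$ depends only on $F$ and can be applied uniformly to every $f_1\oplus f_2\in F$, using \cref{lemma:dp:f:f:in:F}.
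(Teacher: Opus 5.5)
Your proof is correct and is the same reasoning the paper relies on. The paper states this as an observation without proof, because it follows at once from $\{0\}$ being the only member of $\mathcal{L}$ that contains the label $0$, which gives $f^{-1}(0)=F^{-1}(\{0\})$ whenever $f\in F$.
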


\medskip

Our \emph{convolution table} $\AAA_x$ for a node $x$ has an entry for every integer $\kappa \in [0,n]$
	and labeling $\State: \beta(x) \to \LL$,
	of value at most the constant $2^{a}$,
	defined as
$$
\AAA_x[\kappa, \State] \;\coloneqq\; \sum_{\state \in \State}
	\A_x[\kappa,\state]
. $$

\begin{lemma}
Let $F: X \to \LL$.
Then
$$
\AAA_x[\kappa,F] \;= \sum_{\kappa_1 + \kappa_2 \;=\; \kappa - \#_0(F)}
	\AAA_1[\kappa_1,F] \cdot \AAA_2[\kappa_2,F] .
$$
\end{lemma}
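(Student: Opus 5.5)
The identity is a product-of-sums computation over the joint labelings, so I will expand both sides into sums over pairs $(f_1,f_2)$ and match them term by term. I use three earlier facts: the definition $\AAA_x[\kappa,F]=\sum_{f\in F}\A_x[\kappa,f]$, the characterisation of $\A_x$ at a join node as a combination over $f_1\oplus f_2=f$, and \cref{lemma:dp:f:f:in:F}, which says that $f_1\oplus f_2\in F$ holds exactly when $f_1,f_2\in F$. I also use the observation that $\#_0(f)=\#_0(F)$ for every $f\in F$. With it, the correction term no longer depends on the particular $f$, so it can be pulled out of the sum over $f\in F$.

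First, the right-hand side. By distributivity and the definition of $\AAA$,
\[
\sum_{\kappa_1+\kappa_2=\kappa-\#_0(F)} \AAA_{x_1}[\kappa_1,F]\cdot \AAA_{x_2}[\kappa_2,F]
\;=\;\sum_{\kappa_1+\kappa_2=\kappa-\#_0(F)}\;\sum_{f_1\in F}\sum_{f_2\in F}\A_{x_1}[\kappa_1,f_1]\cdot\A_{x_2}[\kappa_2,f_2].
\]
Second, the left-hand side. For each $f\in F$ I plug in the join-node lemma for $\A_x[\kappa,f]$. By \cref{lemma:dp:f:f:in:F}, the pairs $(f_1,f_2)$ with $f_1\oplus f_2=f$ for some $f\in F$ are exactly the pairs in $F\times F$. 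The double sum $\sum_{f\in F}\sum_{f_1\oplus f_2=f}$ therefore regroups as $\sum_{(f_1,f_2)\in F\times F}$. Each pair is counted once, because $\oplus$ is a function, so every pair determines a unique $f$. Since $\#_0$ is constant on $F$, the size constraint in the join lemma becomes the single constraint on $\kappa_1+\kappa_2$ that appears in the statement. I will read it with the sign $\kappa-\#_0(F)$ exactly as written. That is, I will keep the bookkeeping of $\kappa$ for the doubly counted vertices of $f^{-1}(0)$ consistent with the convention under which the convolution table is indexed.

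The main obstacle is that the join-node lemma is stated as a $\max$ over decompositions, whereas the convolution identity is a plain sum. The step where I regroup the sum over $f\in F$ into a sum over pairs only goes through once $\A_x[\kappa,f]$ is itself read as the number of compatible pairs $(f_1,f_2)$, a sum of products, rather than a $0/1$ maximum. Otherwise, a single $f$ that has several realising pairs would be counted differently on the two sides. To settle this, I will redefine the join-node entry as the sum $\sum_{f_1\oplus f_2=f}\A_{x_1}\cdot\A_{x_2}$. This sum is positive exactly when the $\max$ is $1$, so the correctness lemma for join nodes carries over unchanged. It also keeps every entry bounded by a constant depending only on $a$, which matches the stated bound $2^a$ on table entries. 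With that convention the two expansions are literally the same finite sum, and the identity follows.
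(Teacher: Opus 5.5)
Your expansion is the same computation as the paper's proof: expand $\AAA_x$, substitute the join-node formula, pull the size constraint out because $\#_0$ is constant on $F$, reindex the pairs as $F\times F$ using the characterisation of $f_1\oplus f_2\in F$, and factor. Your point about $\max$ versus sum is correct and goes beyond the paper. Its second line silently writes a sum where the join-node lemma has a $\max$. Defining the join entry as the number of compatible pairs, which is positive exactly when the $\max$ is $1$ and is then thresholded back to $\{0,1\}$, is what makes that line literally valid.

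The step that fails is the size constraint.
\begin{itemize}
\item The join-node lemma you substitute has $\kappa_1+\kappa_2=\kappa+\#_0(f)$. The $0$-labelled bag vertices lie in both $V(G_{x_1})$ and $V(G_{x_2})$, and $\A_x$ and $\AAA_x$ are indexed by the same $\kappa=|D\cap V(G_x)|$. So your regrouping produces $\kappa_1+\kappa_2=\kappa+\#_0(F)$, which is also what the final line of the paper's proof reads.
\item No indexing convention lets the statement's minus sign be ``read as written''. Every nonzero term on the right needs $\kappa_1,\kappa_2\ge\#_0(F)$, since $0$-labelled vertices must be in $D\cap V(G_{x_i})$. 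Under the minus sign this forces $\kappa\ge 3\#_0(F)$. The left side, however, can be nonzero at $\kappa=\#_0(F)\ge 1$: take a star centred at the single bag vertex $u$, with $F(u)=\{0\}$ and $D=\{u\}$.
\end{itemize}
You should prove the identity with $+\#_0(F)$ and say explicitly that the statement's sign is a typo.

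Separately, your claim that counting keeps every entry bounded by a constant depending only on $a$ is false. A join entry counts three compatible label pairs for each low-labelled bag vertex and up to $n+1$ splits of $\kappa$. It is therefore bounded only by roughly $(n+1)3^{|\beta(x)|}$. This is harmless for the running time, since $O(|\beta(x)|+\log n)$-bit arithmetic suffices, but it is not the bound you state.
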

\begin{proof}
\begin{align*}
\AAA_x[\kappa, \State] =& \sum_{\state \in \State} \A_x[\kappa,\state] & \\
	=& \sum_{\substack{\state \in \State}} \;
	\sum_{\substack{\kappa_1 + \kappa_2 = \kappa + \#_0(\state), \\ \state_1 \oplus \state_2 = \state}}
	\; \big( \A_1[\kappa_1,\state_1] \cdot \A_2[\kappa_2,\state_2] \big) & \\
	=&
	\sum_{\substack{\kappa_1 + \kappa_2 = \kappa + \#_0(\State) }} \;
	 \sum_{\substack{ \state \in \State, \state_1 \oplus \state_2 = \state }}
	\;\big( \A_1[\kappa_1,\state_1] \cdot \A_2[\kappa_2,\state_2] \big) & \\
	\overset{}{=}&
	\sum_{\substack{\kappa_1 + \kappa_2 = \kappa + \#_0(\State) }} \;
	 \sum_{\substack{ f_1,f_2 \in F }}
	\; \big( \A_1[\kappa_1,\state_1] \cdot \A_2[\kappa_2,\state_2] \big) & \mid \text{\cref{lemma:dp:f:f:in:F}} \\
	=&
	\sum_{\substack{\kappa_1 + \kappa_2 = \kappa + \#_0(\State) }} \;
	 \sum_{\substack{ \state_1 \in F }} 	\A_1[\kappa_1,\state_1] \cdot
	 \sum_{\substack{ \state_2 \in \State }} 	\A_1[\kappa_2,\state_2] & \\
	=& \sum_{\kappa_1 + \kappa_2 = \kappa + \#_0(F)}
		\AAA_1[\kappa_1,\State] \cdot \AAA_2[\kappa_2,F] .
\end{align*}
\end{proof}

\begin{lemma}
Consider a bag $x$ of size $w$ and $\kappa \in [0,n]$.
Given the entries$\A_x[\kappa,\state]$ for every $\state: \beta(x) \to L$,
	the entries $\AAA_x[\kappa, \state]$ for every $F: \beta(x) \to \LL$,
	can be computed in time $O(2^{w})$.

Vice versa, given the entries $\AAA_x[\kappa, \state]$ for every $F: \beta(x) \to \LL$,
	the entries $\A_x[\kappa,\state]$ for every $\state: \beta(x) \to L$,
	can also be computes in time $O(2^{w})$.
\end{lemma}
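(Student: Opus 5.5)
The plan is to treat the passage between $\A_x[\kappa,\cdot]$ and $\AAA_x[\kappa,\cdot]$ as a coordinate-wise linear transform (a zeta/Möbius-type transform). We process the vertices of $\beta(x)$ one at a time, and each step is a simple invertible operation on a single coordinate.

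\textbf{Structure of $\LL$.} First note that $|\LL| = |L| = a$, and that $\LL$ is in bijection with $L$. The label $0$ corresponds to $\{0\}$. Each low label $c \in \{1,\dots,\floor{\ahalf-\half}\}$ corresponds to the pair $\{c,a-c\}$. Each label $c \ge \ahalf$ corresponds to the singleton $\{c\}$. In particular, for every low $c$ both $\{c,a-c\}$ and $\{a-c\}$ lie in $\LL$.

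\textbf{Forward direction.} Fix $\kappa$ and enumerate $\beta(x) = \{u_1,\dots,u_w\}$. For $j = 0,\dots,w$ define intermediate tables $T_j$ on mixed labelings: $u_1,\dots,u_j$ receive labels from $\LL$ and $u_{j+1},\dots,u_w$ receive labels from $L$. The entry $T_j$ is the sum of $\A_x[\kappa,f]$ over all $f$ that agree with the $L$-labels on the last coordinates and satisfy $f(u_i) \in F(u_i)$ for $i \le j$. Then $T_0 = \A_x[\kappa,\cdot]$, and $T_w = \AAA_x[\kappa,\cdot]$ because the membership condition $f \in F$ is a product of per-coordinate conditions. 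The step $T_{j-1} \to T_j$ only changes the label of $u_j$:
\begin{itemize}
\item if $F(u_j)$ is a singleton $\{c\}$, copy the entry with label $c$;
\item if $F(u_j) = \{c,a-c\}$, add the two entries with labels $c$ and $a-c$.
\end{itemize}
Each step touches every entry a constant number of times.

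\textbf{Reverse direction.} Run the same steps backwards, from $T_w$ to $T_0$, using the fact that the single-coordinate map is triangular and invertible. For singletons, copy back. For a low $c$, recover
\[
  \text{entry}(c) = \text{entry}(\{c,a-c\}) - \text{entry}(\{a-c\}),
  \qquad
  \text{entry}(a-c) = \text{entry}(\{a-c\}).
\]
Correctness of both directions follows by induction on $j$ from the per-coordinate identities. The entries stay bounded integers (at most $2^a$, as noted before the statement), so the arithmetic is constant time.

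\textbf{Running time.} Each of the $w$ steps is linear in the number of table entries. This gives time linear in the table size up to a factor $w$, which is the bound the statement intends (the base of the stated bound refers to the factor $2$ merged per pair label; with respect to the full table, each pass is linear).

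\textbf{Main obstacle.} The delicate point is the inverse direction. One must make sure that $\LL$ really contains the singleton $\{a-c\}$ for every low $c$; otherwise summing over pairs would lose information and the Möbius step would be ill-defined. For even $a$ one must also check that $\ahalf$ is handled as a singleton and not paired with itself. I would verify these two facts first, directly from the definition of $\LL$.
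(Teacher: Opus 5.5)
Your proposal is correct and is essentially the paper's proof: the same one-vertex-at-a-time summation over the pair labels $\{c,a-c\}$ and its inversion by subtraction. Your subtracted term at $\{a-c\}$ is the right one; the paper's displayed inverse formula writes $\{c_i\}$, which is not a label of $\LL$ for low $c_i$.

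As you note, what either argument actually yields is $O(w\cdot a^{w})$ arithmetic operations per $\kappa$, since the tables have $a^{w}$ entries and a literal $O(2^{w})$ is impossible. Also, the entries are bounded by $2^{w}$, the number of $f\in F$, not by the constant $2^{a}$ you cite from the paper; this only adds a polynomial factor for the arithmetic.
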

\begin{proof}
Let $u_1,\dots,u_{w}$ be an ordering of the vertices of the bag $\beta(x)$.
For every labeling $f: \beta(x) \to L$,
	we define $\AAA_x^0[\kappa,f(u_1),\dots,f(u_{w})]$ as $\A_x[\kappa,f]$.
For $i \in \{1,\dots,\beta(x)\}$,
	for every label $C_1,\dots,C_i \in \LL$ and label $c_{i+1},\dots,c_{w} \in L$
	we define
	$$
	\AAA_x^i[\kappa,C_1,\dots, C_i, c_{i+1}, \dots,c_{w}]
		\;\coloneqq\; \sum_{c_i \in C_i}
		\AAA_x^{i-1}[\kappa,C_1,\dots,C_{i-1}, c_i, \dots, c_{w}]
	.$$
Eventually, we obtain that $\AAA[\kappa,F] = \AAA_x^{w}[\kappa,F(u_1),\dots,F(u_{w}]$
	for every labeling $F: \beta(x) \to \LL$.
We compute $\A_i$ for increasing $i$ in $w$ rounds, each with $2^{w}$ computations
	of values up to the constant $2^a$;
	hence in overall time $O( 2^{w} )$.

For the other direction, we may undo each operation in reverse order.
That is for $i$ from $w$ to $1$, we compute for
	every label $C_1,\dots,C_{i-1} \in \LL$ and label $c_{i},\dots,i_{w} \in L$,
	in case $c_i \in \{1,\dots, \floor{\ahalf-\half}\}$,
\begin{align*}
\AAA_x^{i-1}[\kappa,C_1,\dots,C_{i-1},c_i,\dots,c_{w}]
	\;=&\; \AAA_x^i[\kappa,C_1,\dots,C_{i-1},\{c_i,a-c_i\},c_{i+1},\dots,c_{w}] \\
	& -\AAA_x^i[\kappa,C_1,\dots,C_{i-1},\{c_i\},c_{i+1},\dots,c_{w}] ,
\end{align*}
and in case $c_i \in \{\ahalf\} \cup \{ \ceil{\ahalf+\half},\dots,a-1\} \cup \{0\}$, 
$$\AAA_x^{i-1}[\kappa,C_1,\dots,C_{i-1},c_i,\dots,c_{w}]
	\;=\; \AAA_x^{i}[\kappa,C_1,\dots,C_{i-1},\{c_i\},c_{i+1},\dots,c_{w}] .$$
Eventually, we obtain that $\A_x[\kappa,f] = \AAA_x^0[\kappa,f(u_1),\dots,f(u_{w})]$
Analogously, to the first direction,
	we obtain entries $\A[\kappa,f]$ for all labelings $f:\beta(x)\to L$ in time $O( 2^{w} )$.
\end{proof}

The above correctness of computing the table entries
	for every node type,
	yields that $A_r[k,f]$, where $r$ is the root of $T$ and $f$ is the empty labeling,
	is $1$ if and only if there is an \dombarset{a} of size $k$.
Let the input treewidth decomposition have width~$t$.
For every node $x$, computing the up to $(n+1) \cdot a^{t+1}$ entries of $A_x$
	is possible in time $n^{O(1)} \cdot a^{w} \cdot O(2^{w} n w)$,
	particularly, by computing the join in the convolution tables $\AAA_x$ instead of the original tables $A_x$.
The overall run time then is $a^t \cdot n^{O(1)}$.
This concludes the proof of \cref{a:lemma:ds:classic}.

\newpage
\bibliography{literature}

\begin{thebibliography}{10}

\bibitem{AlaviBLN77}
Yousef Alavi, M.~Behzad, Linda~M. Lesniak-Foster, and E.~A. Nordhaus.
\newblock Total matchings and total coverings of graphs.
\newblock {\em Journal of Graph Theory}, 1(2):135--140, 1977.
\newblock URL: \url{http://dx.doi.org/10.1002/jgt.3190010209}, \href
  {https://doi.org/10.1002/jgt.3190010209} {\path{doi:10.1002/jgt.3190010209}}.

\bibitem{AlaviLWZ1992}
Yousef Alavi, Jiuqiang Liu, Jianfang Wang, and Zhongfu Zhang.
\newblock On total covers of graphs.
\newblock {\em Discret. Math.}, 100(1-3):229--233, 1992.
\newblock \href {https://doi.org/10.1016/0012-365X(92)90643-T}
  {\path{doi:10.1016/0012-365X(92)90643-T}}.

\bibitem{AlvaradoDR2015}
Jos{\'{e}}~D. Alvarado, Simone Dantas, and Dieter Rautenbach.
\newblock Distance k-domination, distance k-guarding, and distance k-vertex
  cover of maximal outerplanar graphs.
\newblock {\em Discret. Appl. Math.}, 194:154--159, 2015.
\newblock URL: \url{https://doi.org/10.1016/j.dam.2015.05.010}, \href
  {https://doi.org/10.1016/J.DAM.2015.05.010}
  {\path{doi:10.1016/J.DAM.2015.05.010}}.

\bibitem{bacso2019}
G{\'a}bor Bacs{\'o}, Daniel Lokshtanov, D{\'a}niel Marx, Marcin Pilipczuk,
  Zsolt Tuza, and Erik~Jan Van~Leeuwen.
\newblock Subexponential-time algorithms for maximum independent set in
  {$P_t$}-free and broom-free graphs.
\newblock {\em Algorithmica}, 81:421--438, 2019.

\bibitem{Bodlaender1993}
Hans~L. Bodlaender.
\newblock A tourist guide through treewidth.
\newblock {\em Acta Cybern.}, 11(1-2):1--21, 1993.
\newblock URL:
  \url{https://cyber.bibl.u-szeged.hu/index.php/actcybern/article/view/3417}.

\bibitem{BorradaileL16}
Glencora Borradaile and Hung Le.
\newblock Optimal dynamic program for r-domination problems over tree
  decompositions.
\newblock In Jiong Guo and Danny Hermelin, editors, {\em 11th International
  Symposium on Parameterized and Exact Computation, {IPEC} 2016, August 24-26,
  2016, Aarhus, Denmark}, volume~63 of {\em LIPIcs}, pages 8:1--8:23. Schloss
  Dagstuhl - Leibniz-Zentrum f{\"{u}}r Informatik, 2016.
\newblock \href {https://doi.org/10.4230/LIPIcs.IPEC.2016.8}
  {\path{doi:10.4230/LIPIcs.IPEC.2016.8}}.

\bibitem{bookParameterized}
Marek Cygan, Fedor~V. Fomin, Lukasz Kowalik, Daniel Lokshtanov, D{\'{a}}niel
  Marx, Marcin Pilipczuk, Michal Pilipczuk, and Saket Saurabh.
\newblock {\em Parameterized Algorithms}.
\newblock Springer, 2015.
\newblock \href {https://doi.org/10.1007/978-3-319-21275-3}
  {\path{doi:10.1007/978-3-319-21275-3}}.

\bibitem{DallardKM2021}
Cl{\'{e}}ment Dallard, Mirza Krbezlija, and Martin Milanic.
\newblock Vertex cover at distance on {H}-free graphs.
\newblock In Paola Flocchini and Lucia Moura, editors, {\em Combinatorial
  Algorithms - 32nd International Workshop, {IWOCA} 2021, Ottawa, ON, Canada,
  July 5-7, 2021, Proceedings}, volume 12757 of {\em Lecture Notes in Computer
  Science}, pages 237--251. Springer, 2021.
\newblock \href {https://doi.org/10.1007/978-3-030-79987-8\_17}
  {\path{doi:10.1007/978-3-030-79987-8\_17}}.

\bibitem{Dearing1974}
Perino~M. Dearing and Richard~L. Francis.
\newblock A minimax location problem on a network.
\newblock {\em Transportation Science}, 8(4):333--343, 1974.

\bibitem{DowneyF95}
Rodney~G. Downey and Michael~R. Fellows.
\newblock Fixed-parameter tractability and completeness {I:} basic results.
\newblock {\em {SIAM} J. Comput.}, 24(4):873--921, 1995.
\newblock \href {https://doi.org/10.1137/S0097539792228228}
  {\path{doi:10.1137/S0097539792228228}}.

\bibitem{DowneyFellows1995}
Rodney~G. Downey and Michael~R. Fellows.
\newblock Fixed-parameter tractability and completeness {II:} on completeness
  for {W[1]}.
\newblock {\em Theor. Comput. Sci.}, 141(1{\&}2):109--131, 1995.
\newblock \href {https://doi.org/10.1016/0304-3975(94)00097-3}
  {\path{doi:10.1016/0304-3975(94)00097-3}}.

\bibitem{DubloisLP21}
Louis Dublois, Michael Lampis, and Vangelis~T. Paschos.
\newblock New algorithms for mixed dominating set.
\newblock {\em Discret. Math. Theor. Comput. Sci.}, 23(1), 2021.
\newblock URL: \url{https://doi.org/10.46298/dmtcs.6824}, \href
  {https://doi.org/10.46298/DMTCS.6824} {\path{doi:10.46298/DMTCS.6824}}.

\bibitem{ErdosM1977}
Paul Erd{\H{o}}s and Amram Meir.
\newblock On total matching numbers and total covering numbers of complementary
  graphs.
\newblock {\em Discret. Math.}, 19(3):229--233, 1977.
\newblock \href {https://doi.org/10.1016/0012-365X(77)90102-9}
  {\path{doi:10.1016/0012-365X(77)90102-9}}.

\bibitem{EtoGM14}
Hiroshi Eto, Fengrui Guo, and Eiji Miyano.
\newblock Distance-\( d \) independent set problems for bipartite and chordal
  graphs.
\newblock {\em J. Comb. Optim.}, 27(1):88--99, 2014.
\newblock \href {https://doi.org/10.1007/s10878-012-9594-4}
  {\path{doi:10.1007/s10878-012-9594-4}}.

\bibitem{EtoILM17}
Hiroshi Eto, Takehiro Ito, Zhilong Liu, and Eiji Miyano.
\newblock Approximation algorithm for the distance-3 independent set problem on
  cubic graphs.
\newblock In Sheung{-}Hung Poon, Md.~Saidur Rahman, and Hsu{-}Chun Yen,
  editors, {\em {WALCOM:} Algorithms and Computation, 11th International
  Conference and Workshops, {WALCOM} 2017, Hsinchu, Taiwan, March 29-31, 2017,
  Proceedings}, volume 10167 of {\em Lecture Notes in Computer Science}, pages
  228--240. Springer, 2017.
\newblock \href {https://doi.org/10.1007/978-3-319-53925-6\_18}
  {\path{doi:10.1007/978-3-319-53925-6\_18}}.

\bibitem{FeldmannM20}
Andreas~E.\ Feldmann and D{\'{a}}niel Marx.
\newblock The parameterized hardness of the k-center problem in transportation
  networks.
\newblock {\em Algorithmica}, 82(7):1989--2005, 2020.
\newblock URL: \url{https://doi.org/10.1007/s00453-020-00683-w}, \href
  {https://doi.org/10.1007/S00453-020-00683-W}
  {\path{doi:10.1007/S00453-020-00683-W}}.

\bibitem{FreiGHHM2024}
Fabian Frei, Ahmed Ghazy, Tim~A. Hartmann, Florian H{\"{o}}rsch, and
  D{\'{a}}niel Marx.
\newblock From chinese postman to salesman and beyond: Shortest tour
  {\(\delta\)}-covering all points on all edges.
\newblock In Juli{\'{a}}n Mestre and Anthony Wirth, editors, {\em 35th
  International Symposium on Algorithms and Computation, {ISAAC} 2024, Sydney,
  Australia, December 8-11, 2024}, LIPIcs, pages 31:1--31:16. Schloss Dagstuhl
  - Leibniz-Zentrum f{\"{u}}r Informatik, 2024.
\newblock URL: \url{https://doi.org/10.4230/LIPIcs.ISAAC.2024.31}, \href
  {https://doi.org/10.4230/LIPICS.ISAAC.2024.31}
  {\path{doi:10.4230/LIPICS.ISAAC.2024.31}}.

\bibitem{GareyJohnson1979}
Michael~R. Garey and David~S. Johnson.
\newblock {\em Computers and Intractability: {A} Guide to the Theory of
  NP-Completeness}.
\newblock W. H. Freeman, 1979.

\bibitem{GrigorievHLW21}
Alexander Grigoriev, Tim~A. Hartmann, Stefan Lendl, and Gerhard~J. Woeginger.
\newblock Dispersing obnoxious facilities on a graph.
\newblock {\em Algorithmica}, 83(6):1734--1749, 2021.
\newblock \href {https://doi.org/10.1007/s00453-021-00800-3}
  {\path{doi:10.1007/s00453-021-00800-3}}.

\bibitem{thesis}
Tim~A. Hartmann.
\newblock {\em {F}acility location on graphs}.
\newblock Dissertation, RWTH Aachen University, Aachen, 2022.
\newblock URL: \url{https://publications.rwth-aachen.de/record/951030}, \href
  {https://doi.org/10.18154/RWTH-2023-01837}
  {\path{doi:10.18154/RWTH-2023-01837}}.

\bibitem{HartmannJanssen2024}
Tim~A. Hartmann and Tom Jan{\ss}en.
\newblock Approximating {\(\delta\)}-covering.
\newblock In Marcin Bienkowski and Matthias Englert, editors, {\em
  Approximation and Online Algorithms - 22nd International Workshop, {WAOA}
  2024, Egham, UK, September 5-6, 2024, Proceedings}, Lecture Notes in Computer
  Science, pages 61--75. Springer, 2024.
\newblock \href {https://doi.org/10.1007/978-3-031-81396-2\_5}
  {\path{doi:10.1007/978-3-031-81396-2\_5}}.

\bibitem{HartmannL22}
Tim~A. Hartmann and Stefan Lendl.
\newblock Dispersing obnoxious facilities on graphs by rounding distances.
\newblock In Stefan Szeider, Robert Ganian, and Alexandra Silva, editors, {\em
  47th International Symposium on Mathematical Foundations of Computer Science,
  {MFCS} 2022, August 22-26, 2022, Vienna, Austria}, volume 241 of {\em
  LIPIcs}, pages 55:1--55:14. Schloss Dagstuhl - Leibniz-Zentrum f{\"{u}}r
  Informatik, 2022.
\newblock URL: \url{https://doi.org/10.4230/LIPIcs.MFCS.2022.55}, \href
  {https://doi.org/10.4230/LIPICS.MFCS.2022.55}
  {\path{doi:10.4230/LIPICS.MFCS.2022.55}}.

\bibitem{HartmannLW22}
Tim~A. Hartmann, Stefan Lendl, and Gerhard~J. Woeginger.
\newblock Continuous facility location on graphs.
\newblock {\em Math. Program.}, 192(1):207--227, 2022.
\newblock \href {https://doi.org/10.1007/s10107-021-01646-x}
  {\path{doi:10.1007/s10107-021-01646-x}}.

\bibitem{JainJPS2017}
Pallavi Jain, Jayakrishnan Madathil, Fahad Panolan, and Abhishek Sahu.
\newblock Mixed dominating set: {A} parameterized perspective.
\newblock In Hans~L. Bodlaender and Gerhard~J. Woeginger, editors, {\em
  Graph-Theoretic Concepts in Computer Science - 43rd International Workshop,
  {WG} 2017, Eindhoven, The Netherlands, June 21-23, 2017, Revised Selected
  Papers}, volume 10520 of {\em Lecture Notes in Computer Science}, pages
  330--343. Springer, 2017.
\newblock \href {https://doi.org/10.1007/978-3-319-68705-6\_25}
  {\path{doi:10.1007/978-3-319-68705-6\_25}}.

\bibitem{KatsikarelisLP23}
Ioannis Katsikarelis, Michael Lampis, and Vangelis~T. Paschos.
\newblock Improved (in-)approximability bounds for d-scattered set.
\newblock {\em J. Graph Algorithms Appl.}, 27(3):219--238, 2023.
\newblock URL: \url{https://doi.org/10.7155/jgaa.00621}, \href
  {https://doi.org/10.7155/JGAA.00621} {\path{doi:10.7155/JGAA.00621}}.

\bibitem{KatsikarelisLP2022}
Ioannis Katsikarelis, Michael Lampis, and Vangelis~Th. Paschos.
\newblock Structurally parameterized d-scattered set.
\newblock {\em Discret. Appl. Math.}, 308:168--186, 2022.
\newblock \href {https://doi.org/10.1016/j.dam.2020.03.052}
  {\path{doi:10.1016/j.dam.2020.03.052}}.

\bibitem{KorhonenLokshtanov2023}
Tuukka Korhonen and Daniel Lokshtanov.
\newblock An improved parameterized algorithm for treewidth.
\newblock In Barna Saha and Rocco~A. Servedio, editors, {\em Proceedings of the
  55th Annual {ACM} Symposium on Theory of Computing, {STOC} 2023, Orlando, FL,
  USA, June 20-23, 2023}, pages 528--541. {ACM}, 2023.
\newblock \href {https://doi.org/10.1145/3564246.3585245}
  {\path{doi:10.1145/3564246.3585245}}.

\bibitem{Lampis2024pwseth}
Michael Lampis.
\newblock The primal pathwidth {SETH}.
\newblock In Yossi Azar and Debmalya Panigrahi, editors, {\em Proceedings of
  the 2025 Annual {ACM-SIAM} Symposium on Discrete Algorithms, {SODA} 2025, New
  Orleans, LA, USA, January 12-15, 2025}, pages 1494--1564. {SIAM}, 2025.
\newblock \href {https://doi.org/10.1137/1.9781611978322.47}
  {\path{doi:10.1137/1.9781611978322.47}}.

\bibitem{lewis2010vertex}
Jason Lewis, Stephen~T. Hedetniemi, Teresa~W. Haynes, and Gerd~H. Fricke.
\newblock Vertex-edge domination.
\newblock {\em Utilitas mathematica}, 81:193--213, 2010.

\bibitem{LokshtanovMS18}
Daniel Lokshtanov, D{\'{a}}niel Marx, and Saket Saurabh.
\newblock Known algorithms on graphs of bounded treewidth are probably optimal.
\newblock {\em {ACM} Trans. Algorithms}, 14(2):13:1--13:30, 2018.
\newblock \href {https://doi.org/10.1145/3170442} {\path{doi:10.1145/3170442}}.

\bibitem{MadathilPSS2019}
Jayakrishnan Madathil, Fahad Panolan, Abhishek Sahu, and Saket Saurabh.
\newblock On the complexity of mixed dominating set.
\newblock In Ren{\'{e}} van Bevern and Gregory Kucherov, editors, {\em Computer
  Science - Theory and Applications - 14th International Computer Science
  Symposium in Russia, {CSR} 2019, Novosibirsk, Russia, July 1-5, 2019,
  Proceedings}, volume 11532 of {\em Lecture Notes in Computer Science}, pages
  262--274. Springer, 2019.
\newblock \href {https://doi.org/10.1007/978-3-030-19955-5\_23}
  {\path{doi:10.1007/978-3-030-19955-5\_23}}.

\bibitem{Majumdar1992}
Aniket Majumdar.
\newblock {\em Neighborhood hypergraphs: A framework for covering and packing
  parameters in graphs.}
\newblock Dissertation, Clemson University, 1992.

\bibitem{MegiddoTamir1983}
Nimrod Megiddo and Arie Tamir.
\newblock New results on the complexity of $p$-center problems.
\newblock {\em {SIAM} J. Comput.}, 12(4):751--758, 1983.
\newblock \href {https://doi.org/10.1137/0212051} {\path{doi:10.1137/0212051}}.

\bibitem{Meir1978}
Amram Meir.
\newblock On total covering and matching of graphs.
\newblock {\em J. Comb. Theory {B}}, 24(2):164--168, 1978.
\newblock \href {https://doi.org/10.1016/0095-8956(78)90017-5}
  {\path{doi:10.1016/0095-8956(78)90017-5}}.

\bibitem{MontealegreT16}
Pedro Montealegre and Ioan Todinca.
\newblock On distance-d independent set and other problems in graphs with "few"
  minimal separators.
\newblock In Pinar Heggernes, editor, {\em Graph-Theoretic Concepts in Computer
  Science - 42nd International Workshop, {WG} 2016, Istanbul, Turkey, June
  22-24, 2016, Revised Selected Papers}, volume 9941 of {\em Lecture Notes in
  Computer Science}, pages 183--194, 2016.
\newblock \href {https://doi.org/10.1007/978-3-662-53536-3\_16}
  {\path{doi:10.1007/978-3-662-53536-3\_16}}.

\bibitem{Niedermeier06}
Rolf Niedermeier.
\newblock {\em Invitation to Fixed-Parameter Algorithms}.
\newblock Oxford University Press, 2006.
\newblock \href {https://doi.org/10.1093/ACPROF:OSO/9780198566076.001.0001}
  {\path{doi:10.1093/ACPROF:OSO/9780198566076.001.0001}}.

\bibitem{PeledSun1994}
Uri~N. Peled and Feng Sun.
\newblock Total matchings and total coverings of threshold graphs.
\newblock {\em Discret. Appl. Math.}, 49(1-3):325--330, 1994.
\newblock \href {https://doi.org/10.1016/0166-218X(94)90216-X}
  {\path{doi:10.1016/0166-218X(94)90216-X}}.

\bibitem{PilipczukSiebertz2021}
Michal Pilipczuk and Sebastian Siebertz.
\newblock Kernelization and approximation of distance-r independent sets on
  nowhere dense graphs.
\newblock {\em Eur. J. Comb.}, 94:103309, 2021.
\newblock URL: \url{https://doi.org/10.1016/j.ejc.2021.103309}, \href
  {https://doi.org/10.1016/J.EJC.2021.103309}
  {\path{doi:10.1016/J.EJC.2021.103309}}.

\bibitem{Shier1977}
Douglas~R. Shier.
\newblock A min-max theorem for $p$-center problems on a tree.
\newblock {\em Transportation Science}, 11(3):243--252, 1977.
\newblock URL: \url{http://www.jstor.org/stable/25767877}.

\bibitem{Tamir1987}
Arie Tamir.
\newblock On the solution value of the continuous ${p}$-center location problem
  on a graph.
\newblock {\em Math. Oper. Res.}, 12(2):340--349, 1987.
\newblock \href {https://doi.org/10.1287/moor.12.2.340}
  {\path{doi:10.1287/moor.12.2.340}}.

\bibitem{Tamir1991}
Arie Tamir.
\newblock Obnoxious facility location on graphs.
\newblock {\em {SIAM} J. Discret. Math.}, 4(4):550--567, 1991.
\newblock \href {https://doi.org/10.1137/0404048} {\path{doi:10.1137/0404048}}.

\bibitem{RooijBR09}
Johan M.~M. van Rooij, Hans~L. Bodlaender, and Peter Rossmanith.
\newblock Dynamic programming on tree decompositions using generalised fast
  subset convolution.
\newblock In Amos Fiat and Peter Sanders, editors, {\em Algorithms - {ESA}
  2009, 17th Annual European Symposium, Copenhagen, Denmark, September 7-9,
  2009. Proceedings}, volume 5757 of {\em Lecture Notes in Computer Science},
  pages 566--577. Springer, 2009.
\newblock \href {https://doi.org/10.1007/978-3-642-04128-0\_51}
  {\path{doi:10.1007/978-3-642-04128-0\_51}}.

\bibitem{XiaoS2020}
Mingyu Xiao and Zimo Sheng.
\newblock Improved parameterized algorithms and kernels for mixed domination.
\newblock {\em Theor. Comput. Sci.}, 815:109--120, 2020.
\newblock URL: \url{https://doi.org/10.1016/j.tcs.2020.02.014}, \href
  {https://doi.org/10.1016/J.TCS.2020.02.014}
  {\path{doi:10.1016/J.TCS.2020.02.014}}.

\bibitem{ZhaoKS2011}
Yancai Zhao, Liying Kang, and Moo~Young Sohn.
\newblock The algorithmic complexity of mixed domination in graphs.
\newblock {\em Theor. Comput. Sci.}, 412(22):2387--2392, 2011.
\newblock URL: \url{https://doi.org/10.1016/j.tcs.2011.01.029}, \href
  {https://doi.org/10.1016/J.TCS.2011.01.029}
  {\path{doi:10.1016/J.TCS.2011.01.029}}.

\bibitem{ZiemannZylinski20}
Rados\l{}aw Ziemann and Pawe{\l} {\.Z}yli{\'n}ski.
\newblock Vertex-edge domination in cubic graphs.
\newblock {\em Discret. Math.}, 343(11):112075, 2020.
\newblock URL: \url{https://doi.org/10.1016/j.disc.2020.112075}, \href
  {https://doi.org/10.1016/J.DISC.2020.112075}
  {\path{doi:10.1016/J.DISC.2020.112075}}.

\bibitem{zylinski2019}
Pawe{\l} {\.Z}yli{\'n}ski.
\newblock Vertex-edge domination in graphs.
\newblock {\em Aequationes mathematicae}, 93(4):735--742, 2019.

\end{thebibliography}

\end{document}